\numberwithin{equation}{section}
\theoremstyle{plain}
\newtheorem{thm}{Theorem}[section]
\newtheorem{lem}[thm]{Lemma}
\newtheorem{prop}[thm]{Proposition}
\theoremstyle{definition}
\newtheorem{Def}[thm]{Definition}
\theoremstyle{remark}
\newtheorem{rem}[thm]{Remark}
\newcommand{\ep}{\epsilon}
\newcommand{\tauac}{\acute{\tau}{}}
\newcommand{\gac}{\acute{g}{}}
\newcommand{\Qft}{\tilde{\mathfrak{Q}}{}}
\newcommand{\Jcch}{\check{\mathcal{J}}{}}
\newcommand{\chihu}{\underline{\hat{\chi}}{}}
\newcommand{\ghu}{\underline{\gh}{}}
\newcommand{\whu}{\underline{\wh}{}}
\newcommand{\vhu}{\underline{\vh}{}}
\newcommand{\zhu}{\underline{\zh}{}}
\newcommand{\hhu}{\underline{\hh}{}}
\newcommand{\Bhu}{\underline{\Bh}{}}
\newcommand{\Qhu}{\underline{\Qh}{}}
\newcommand{\Vhu}{\underline{\Vh}{}}
\newcommand{\Ghu}{\underline{\Gh}{}}
\newcommand{\Ttthu}{\underline{\hat{\Ttt}}{}}
\newcommand{\Gammahu}{\underline{\hat{\Gamma}}{}}
\newcommand{\Gammah}{\hat{\Gamma}{}}
\newcommand{\gchat}{\hat{\gc}{}}
\newcommand{\chih}{\hat{\chi}{}}
\newcommand{\taur}{\mathring{\tau}{}}
\newcommand{\chir}{\mathring{\chi}{}}
\newcommand{\taug}{\grave{\tau}{}}
\newcommand{\er}{\mathring{e}{}}
\newcommand{\gr}{\mathring{g}{}}
\newcommand{\Vr}{\mathring{V}{}}
\newcommand{\lr}{\mathring{l}{}}
\newcommand{\Jcr}{\mathring{\mathcal{J}}{}}
\newcommand{\vr}{\mathring{v}{}}
\newcommand{\rhor}{\mathring{\rho}{}}
\newcommand{\Kttt}{\tilde{\Ktt}{}}
\newcommand{\gttt}{\tilde{\gtt}{}}
\newcommand{\Nttt}{\tilde{\Ntt}{}}
\newcommand{\bttt}{\tilde{\btt}{}}
\newcommand{\ggr}{\grave{g}{}}
\newcommand{\taugr}{\grave{\tau}{}}
\newcommand{\udim}{\mc}
\newcommand{\Ttth}{\hat{\Ttt}{}}
\renewcommand{\emph}[1]{\textit{#1}}
\begin{document}

\title[Past stability of FLRW Einstein-Euler-scalar solutions]{Past stability of FLRW solutions to the Einstein-Euler-scalar field equations and their big bang singularities}



\begin{aug}
    \author{\fnms{Florian} \snm{Beyer}\ead[label=e1]{florian.beyer@otago.ac.nz}},
    \address{Dept of Mathematics and Statistics\\
      730 Cumberland St\\
      University of Otago, Dunedin 9016\\ New Zealand\\
      \printead{e1}}
    \and
    \author{\fnms{Todd A.} \snm{Oliynyk}\ead[label=e2]{todd.oliynyk@monash.edu}},
    \address{School of Mathematical Sciences\\
      9 Rainforest Walk\\
      Monash University, VIC 3800\\ Australia\\
             \printead{e2}}
\end{aug}

\begin{abstract}
\ We establish, in spacetime dimensions $n\geq 3$, the nonlinear stability in the contracting direction of Friedmann-Lema\^itre-Robertson-Walker (FLRW) solutions to the Einstein-Euler-scalar field equations with linear equations of state $P=c_s^2 \rho$ for sounds speeds $c_s$ satisfying $1/(n-1)<c_s^2 < 1$. We further show that nonlinear perturbations of the FLRW solutions are asymptotically pointwise Kasner and terminate in crushing, asymptotically velocity term dominated (AVTD) big bang singularities characterised by curvature blow-up.\newline
\begin{center}
   \textit{This work is dedicated to the memory of Robert Bartnik.} 
\end{center}
\end{abstract}

\maketitle

\section{Introduction}
\label{sec:introduction}

The mathematical definition of a \textit{cosmological spacetime} was introduced by Robert Bartnik in \cite{Bartnik:1988b}. This definition is one of many lasting contributions Robert made to the field of Mathematical Relativity \cite{Chrusciel_et_al:2021}, and it is particularly relevant to this article in which we analyse the nonlinear stability of a class of cosmological spacetimes containing big bang singularities.  Robert, throughout his career, touched and enriched the lives of many in the mathematical community. He was known for his unique creativity, profound intellect, sense of curiosity and love of learning. He will be missed by many to whom he was a friend, a colleague and a mentor. We dedicate this article to his memory.

Friedmann-Lema\^itre-Robertson-Walker (FLRW) spacetimes
and their perturbations are the foundation of the standard model of cosmology. However, until recently, the nonlinear stability in the contracting direction of the FLRW solutions was not well understood with the exception of perturbations within the class of homogeneous solutions. In the contracting direction,  the Penrose and Hawking singularity theorems \cite{hawkingLargeScaleStructure1973} guarantee that cosmological spacetimes, including nonlinear perturbations of FLRW solutions, will be geodesically incomplete for a large class of matter models and initial data sets, including highly anisotropic ones. The origin of the geodesics incompleteness is widely expected to be due to the
formation of curvature singularities, and it is an outstanding problem in mathematical cosmology to rigorously
establish the conditions under which this expectation is true and to understand the dynamical behaviour of
cosmological solutions near singularities.

The influential BKL-conjecture \cite{belinskii1970,lifshitz1963} posits that  cosmological singularities are \emph{generically} spacelike and oscillatory. While recent work on \emph{spikes} \cite{berger1993,coley2015, lim2008, lim2009, rendall2001} and \emph{weak null singularities} \cite{dafermos2017,luk2017} indicate that the BKL-conjecture is incomplete, it is still expected to be true under quite general conditions. However, it should be noted that the
only rigorous arguments supporting the BKL-conjecture are limited to the spatially homogeneous setting \cite{BeguinDutilleul:2023,beguin:2010,liebscher_et_al:2011,Ringstrom:2001}, while in the non-homogeneous setting, there are numerical studies that support the conjecture \cite{Andersson:2005, curtis2005, garfinkle2002,garfinkle2002a,garfinkle2004,garfinkle2007,Weaver:2001}.

Currently, there are no rigorous nonlinear stability results that apply to inhomogenous cosmologies with oscillatory spacelike singularities.  However, the situation improves considerably for cosmological spacetimes that exhibit \emph{asymptotically velocity term dominated} (AVTD) behaviour \cite{Eardley:1972,Isenberg:1990} near the singularity. By definition, AVTD singularities are a special type of big bang type singularities, see Section~\ref{sec:AVTDAPK} for details, that are spacelike and non-oscillatory.
AVTD behaviour has been shown to occur generically in classes of vacuum spacetimes with symmetries \cite{CIM1990,Isenberg:1990,Fournodavlos_et_al:2023,ringstrom2009a}, and for infinite dimensional families of cosmological spacetimes with prescribed asymptotics near the singularity in a variety of settings using Fuchsian methods \cite{ames2019,ames2013a,andersson2001,beyer2017,choquet-bruhat2006, choquet-bruhat2004,ChruscielKlinger:2015,Clausen2007,damour2002,Fournodavlos:2020,heinzle2012,isenberg1999,isenberg2002,kichenassamy1998,stahl2002}.

In recent years, remarkable progress has been made on rigorously establishing the past\footnote{We always choose our time orientation so that the contracting time direction corresponds to the past.} stability of FLRW solutions to the Einstein-\emph{scalar field equations} and their AVTD big bang singularities under \emph{generic perturbations without symmetries}. For these solutions, the minimally coupled scalar field is responsible for the resulting non-oscillatory AVTD dynamics. The first such FLRW big bang stability result was established in the seminal articles \cite{RodnianskiSpeck:2018b,RodnianskiSpeck:2018c}. It is worth noting that the role of the scalar field in \cite{RodnianskiSpeck:2018b,RodnianskiSpeck:2018c}  and the subsequent stability results \cite{FajmanUrban:2022,Fournodavlos_et_al:2023,Speck:2018} is, in four spacetime dimensions and without any symmetry assumptions\footnote{In high enough spacetime dimensions or under certain symmetry assumptions, the oscillatory behaviour near big bang singularities of solutions to vacuum Einstein equations is also suppressed; see \cite{demaret_non-oscillatory_1985,Fournodavlos_et_al:2023,RodnianskiSpeck:2021} for details}, to suppress the oscillatory behaviour of solutions near big bang singularities, which leads to AVTD behaviour. It should be noted that it is the AVTD nature of Einstein-scalar field big bang singularities that make their analysis analytically tractable in contrast to oscillatory singularities in the inhomogeneous setting.

The FLRW big bang stability results \cite{RodnianskiSpeck:2018b,RodnianskiSpeck:2018c}
have been significantly extended in the article \cite{Fournodavlos_et_al:2023} to apply to
nonlinear perturbations of the Kasner family of solutions solutions in the following settings and spacetime dimensions $n$:  Einstein-scalar field equations $(n\geq 4)$, the polarized $\mathbb{U}(1)$-symmetric vacuum Einstein equations $(n=4)$, and the vacuum Einstein equations $(n\geq 11)$. Remarkably, the big bang stability results established in \cite{Fournodavlos_et_al:2023} hold for  the full range of Kasner exponents where stable singularity formation is expected. We note also the articles 
\cite{ABIO:2022_Royal_Soc,ABIO:2022} where related Kasner big bang stability results are established in the polarised $\Tbb^2$-symmetric vacuum setting, the general framework developed by Ringstr\"{o}m for analysing cosmological spacetimes with big bang singularities \cite{Ringstrom:2021b,Ringstrom:2021a,Ringstrom:2022}, and the recent work \cite{Groeniger_et_al:2023} in which a large class of initial data is identified that greatly extends the data considered in \cite{Fournodavlos_et_al:2023}, and at the same time, leads to stable big bang formation for the Einstein-scalar field system with non-vanishing potentials.

One important question not answered by the big bang stability results \cite{FajmanUrban:2022,Fournodavlos_et_al:2023,RodnianskiSpeck:2018b,RodnianskiSpeck:2018c,Speck:2018} is that of local instability, namely, do local changes of the initial data on the initial hypersurfaces induce local changes on the big bang singular hypersurface. 
This can also be rephrased as a question of existence of particle horizons. The technical reason as to why the stability proofs from the articles \cite{FajmanUrban:2022,Fournodavlos_et_al:2023,RodnianskiSpeck:2018b,RodnianskiSpeck:2018c,Speck:2018} cannot directly answer this question is that they rely on foliating spacetime by spacelike hypersurfaces of constant mean curvature (CMC). In these articles, this foliation is used to define a time function via
$t=-(\text{tr}{K})^{-1}$ where $\text{tr}{K}$ is trace of the extrinsic curvature of hypersurfaces. The importance of this time function is that the singularity can be shown to occur uniformly along the hypersurface $t=0$. In this sense, this choice of time coordinate \textit{synchronizes} the singularity. This is important because it allows statements to be made about the behaviour of the physical fields as the singularity is approached, i.e.\ in the limit $t\searrow 0$, that are uniform across the whole singular surface. On the other hand, CMC foliations, by definition, are non-local. Because of this, any stability result that is derived using it will, a priori, be non-local in the sense that local changes in the initial data will lead to non-local changes in the solution at a later time. Without additional arguments, it would remain uncertain as to whether this non-locality is physical or pure gauge\footnote{In the sense that the non-locality is an artifact of the choice of time function that could be remedied, i.e.\ made local, by a different choice of time slicing.}.

To resolve the question of localisability, 
we introduced a new method for analysing the stability of big bang singularities for the Einstein-scalar field equations in the article \cite{BeyerOliynyk:2021} that is based on using the scalar field $\phi$ to define a time function $\tau$ via $\phi=\sqrt{\frac{n-1}{2(n-2)}}\ln(\tau)$ and a wave gauge to reduce the Einstein equations. The advantages of this method are twofold. First, because the gauge is hyperbolic, the method is inherently local, i.e. all fields propagate with a finite speed, and yields localisable results. Second, the method yields evolution equations, at least for initial data that is nearly FLRW on the initial hypersurface, that can be cast into a Fuchsian form for which it is possible prove existence of solutions globally to the past via an application of the existence theory for Fuchsian systems developed in the articles \cite{BOOS:2021,Oliynyk:CMP_2016}.
This new method was employed in \cite{BeyerOliynyk:2021} to establish the \textit{local (in space)} past stability of nonlinear perturbations of FLRW solutions to the Einstein-scalar field equations and their big bang singularities, which of course, implies a (global in space) past stability theorem similar to those established in \cite{FajmanUrban:2022,RodnianskiSpeck:2018b,RodnianskiSpeck:2018c,Speck:2018}. It is worth noting that the \emph{Fuchsian approach} to establishing the global existence of solutions to systems of hyperbolic equations is a very general method and has recently been employed to establish a variety of stability results in the following articles \cite{BeyerOliynyk:2020,FOW:2021,LeFlochWei:2021,LiuOliynyk:2018b,LiuOliynyk:2018a,LiuWei:2021,MarshallOliynyk:2023,Oliynyk:CMP_2016,Oliynyk:2021,OliynykOlvera:2021,Wei:2018}.

The main aim of this article is to investigate, using the approach developed in \cite{BeyerOliynyk:2021}, the nonlinear stability of FLRW big bang singularities where the gravitating matter includes a perfect fluid with linear equations of states $P=c_s^2 \rho$ in addition to a scalar field. Here, the speed of sound $c_s$ is considered a free parameter. It is expected heuristically as part of the so-called \emph{matter does not matter} hypothesis \cite{belinskii1970,lifshitz1963} that the perfect fluid should be negligible near the big bang provided the speed of sound $c_s$ is smaller than the speed of light.

Whether this is true in general or not is an open question however. On fixed Kasner backgrounds in spacetime dimension $n=4$, the stability of relativistic fluids with linear equations of states $P=c_s^2 \rho$ in the contracting direction was investigated by us in \cite{BeyerOliynyk:2020}. 
In that article, we established the past nonlinear stability of a large class of solutions, which includes perturbations of homogeneous solutions, in the neighborhood of Kasner big bang singularities for sound speeds satisfying $q_{max}<c_s^2<1$, where the lower bound $q_{max}$ equals the largest Kasner exponent and $q_{max}=1/3$ for the FLRW-Kasner spacetime (i.e.\ equality of the Kasner exponents) in four spacetime dimensions. In this article, we show that the fluid remains nonlinearly stable toward the big bang singularity for small perturbations of FLRW initial data in spacetime dimensions $n\geq 3$ over the sound speed range $1/(n-1)<c_s^2 < 1$, which coincides, for $n=4$, with the range from \cite{BeyerOliynyk:2020} when coupling to the Einstein-scalar field equations is included. 

\subsection{Conformal Einstein-Euler-scalar field equations}
The coupled Einstein-Euler-scalar field equations\footnote{See Section
  \ref{indexing} for our indexing conventions.} for a free scalar
field and a perfect fluid with a linear equation of state
\begin{equation*}
    P=c_s^2\rho,\quad c_s\in (0,1),
  \end{equation*}
are
\begin{equation}
  \label{eq:EFE.0}
  \Gb_{ij}=2(\Tb_{ij}^{\text{SF}}+ \Tb_{ij}^{\text{Fl}}),\quad
  \nablab^i \Tb^{\text{SF}}_{ij}=0,\quad \nablab^i
  \Tb^{\text{Fl}}_{ij}=0,
\end{equation}
where $\nablab_i$ and $\Gb_{ij}$ are the covariant derivative and the
Einstein tensor, respectively, of the physical metric $\gb_{ij}$ and 
\begin{equation}
  \label{Tb-ij-def}
  \begin{split}
  \Tb_{ij}^{\text{SF}}=&\nablab_i\phi\nablab_j \phi-\frac 12\gb_{ij}\nablab_k\phi\nablab^k\phi,\\
  \Tb_{ij}^{\text{Fl}}=&{P_0}\Bigl(\frac{1+c_s^2}{c_s^2} \vb^{-2}
  \Vb_i\Vb_j+\gb_{ij} \Bigr)\vb^{-(1+c_s^2)/c_s^2},
  \end{split}
\end{equation}
for $P_0>0$,
are the energy momentum tensors of the scalar field and the fluid,
respectively. Here, we use the
Frauendiener-Walton formalism \cite{frauendiener2003,walton2005} to represent the fluid by a non-normalised vector field $\Vb^i$, and we
employ the notation
  \begin{equation}
    \label{eq:18jsdkfjkdsjfksdf}
    \Vb_i=\gb_{ij}\Vb^j,\quad \vb^2=-\Vb_i\Vb^i.
  \end{equation} 
The divergence free condition on each energy momentum tensor in \eqref{eq:EFE.0} implies the scalar field and
fluid matter equations 
\begin{align}
\Box_{\gb} \phi &=0, \label{ESF.2}\\
  \label{eq:AAA1}
    \bar\att^i{}_{j k}\nablab_i \Vb^k&=0,
\end{align}
where $\Box_{\gb} =\gb^{ij}\nablab_i\nablab_j$ is the wave operator and
\begin{equation}
  \label{eq:AAA2}
  \bar\att^i{}_{jk}
  =\frac{3 c_s^2+1}{c_s^2} \frac{\Vb_j \Vb_k\Vb^i}{\vb^2} +\Vb^i \gb_{jk}
  +2{\gb^i}_{(k} \Vb_{j)}.
\end{equation}

For later use, we write the Einstein equations, see
\eqref{eq:EFE.0}, as
\begin{equation}
  \Rb_{ij}=2\nablab_i\phi\nablab_j \phi+\bar\Ttt_{ij},
\label{ESF.1}
\end{equation}
where $\Rb_{ij}$ is the Ricci tensor of the metric $\gb_{ij}$ and
\begin{equation}
    \label{bTtt-ij-def}
    \bar\Ttt_{ij}=2P_0\Bigl(\frac{1+c_s^2}{c_s^2} \vb^{-2} \Vb_i\Vb_j+\frac{1-c_s^2}{(n-2)c_s^2}\gb_{ij}\Bigr)\vb^{-(1+c_s^2)/c_s^2}.
  \end{equation}
Here, the dimensionless parameter $c_s$ is the \emph{speed of
  sound} and the positive constant $P_0>0$ has the dimension of pressure.  The 
physical fluid pressure $P$,
density $\rho$ and normalised fluid $n$-velocity $\ub^i$ can be calculated from $\Vb^i$ via the expressions
\begin{equation}
  \label{eq:physicsquantitiesfluid}
P=P_0 \vb^{-\frac {c_s^2+1}{c_s^2}},\quad
\rho=\frac{P_0}{c_s^2} \vb^{-\frac {c_s^2+1}{c_s^2}},\quad \ub^i=\frac{\Vb^i}{\vb}.
\end{equation}

Before we discuss the main results, we first reformulate the Einstein-Euler-scalar field equations  in a way that will be more favorable for the analysis carried out in this article. Following \cite{BeyerOliynyk:2021}, we replace the physical metric $\gb_{ij}$ with a \textit{conformal metric} $g_{ij}$ defined by
\begin{equation}\label{confmet}
\gb_{ij} = e^{2\Phi}g_{ij},
\end{equation}
where $\Phi$ is, for now, an unspecified scalar field, and for the remainder of the article, we assume that the spacetime dimension $n$ satisfies $n\geq 3$. Then, under the conformal transformation \eqref{confmet},
it is well known that the Ricci tensor transforms according to 
\begin{equation}\label{confRicci}
\Rb_{ij}=R_{ij}-(n-2)\nabla_i \nabla_j \Phi + (n-2)\nabla_i \Phi \nabla_j \Phi -(\Box_g \Phi + (n-2)|\nabla\Phi|^2_g)g_{ij}
\end{equation}
where $\nabla_i$ is the Levi-Civita connection of the conformal metric $g_{ij}$ and as above,
$\Box_g=g^{ij}\nabla_i\nabla_j$. For use below, we recall that the connection coefficents of the
metrics $\gb_{ij}$ and $g_{ij}$ are related by
\begin{equation}\label{confChrist}
\Gammab_{i}{}^k{}_j-\Gamma_{i}{}^k{}_j=g^{kl}(g_{il}\nabla_j \Phi + g_{jl}\nabla_i\Phi -g_{ij}\nabla_l\Phi).
\end{equation} Now, using \eqref{confRicci}, we can express the Einstein equations \eqref{ESF.1} as
\begin{equation}\label{confESFA}
  \begin{split}
-2R_{ij}=&-2(n-2)\nabla_i \nabla_j \Phi + 2(n-2)\nabla_i \Phi \nabla_j \Phi \\
&-2(\Box_g \Phi + (n-2)|\nabla\Phi|^2_g)g_{ij}-4\nabla_{i}\phi\nabla_{j} \phi-2{\bar\Ttt}_{ij}.
  \end{split}
\end{equation}
Also, by introducing a Lorentzian background metric $\gc_{ij}$ and letting $\Dc_i$ and $\gamma_{i}{}^k{}_{j}$ denote the associated Levi-Civita
connection and connection coefficients, we can write the scalar field equation \eqref{ESF.2} as
\begin{equation*}
\gb^{ij}\Dc_{i}\Dc_{j}\phi-\gb^{ij}(\Gammab_{i}{}^k{}_j-\Gamma_i{}^k{}_j+\Cc_{i}{}^k{}_{j})\Dc_{k}\phi = 0,
\end{equation*}
where
\begin{equation}\label{Ccdef}
\Cc_{i}{}^k{}_j := \Gamma_{i}{}^k{}_j-\gamma_{i}{}^k{}_j=\frac{1}{2}g^{k l}\bigl(\Dc_i g_{j l}+\Dc_j g_{i l}-\Dc_l g_{ij}\bigr).
\end{equation}
It is then not difficult to verify using \eqref{confmet} and \eqref{confChrist} that the scalar field equation can be expressed as
\begin{equation} \label{confESFB}
g^{ij}\Dc_{i}\Dc_{j}\phi= X^k\Dc_{k}\phi -(n-2)g^{ij}\Dc_{i}\Phi\Dc_{j}\phi,
\end{equation}
where
\begin{equation} \label{Xdef}
X^k := g^{ij}\Cc_{i}{}^k{}_j=\frac{1}{2}g^{ij}g^{k l}\bigl(2\Dc_i g_{j l}-\Dc_l g_{ij}\bigr),
\end{equation}
or equivalently as
\begin{equation} \label{confESFC}
\Box_g \phi = -(n-2)\nabla^i\Phi\nabla_i\phi.
\end{equation}
Note that, here and below, all indices are raised and lowered using the conformal metric, e.g.\  $\nabla^k\Phi = g^{kl}\nabla_l\Phi$.

We proceed by fixing the scalar field $\Phi$ in the conformal transformation \eqref{confmet} up to a constant scaling factor $\lambda$ by 
\begin{equation}\label{gaugefixA}
\Phi =\lambda \phi.
\end{equation}
We also replace the scalar field $\phi$ with a scalar field $\tau$ defined via
\begin{equation} \label{taudef}
 \tau=e^{-\alpha\phi}  \quad \Longleftrightarrow \quad \phi = -\frac{1}{\alpha}\ln(\tau),
\end{equation}
where
\begin{equation}\label{alphafix}
\alpha= -\frac{2-\lambda^2(n-2)}{\lambda (n-2)}.
\end{equation}
With this choice of $\Phi$, we note that
\begin{equation*}
    e^{2\Phi}=\tau^{-2\lambda/\alpha}=\tau^{\frac{2\lambda^2 (n-2)}{2-\lambda^2(n-2)}},
\end{equation*}
and, by \eqref{confESFC}, that  $\Phi$ satisfies the wave equation
\begin{equation} \label{Phi-wave}
\Box_g \Phi +(n-2)\nabla^i\Phi\nabla_i\Phi=0.
\end{equation}
Using \eqref{gaugefixA}-\eqref{alphafix} to replace $\Phi$ and $\phi$ with $\tau$ in \eqref{confESFA} and \eqref{confESFC}, we see, with the help of \eqref{Phi-wave}, that
\begin{equation} \label{confESFAa}
R_{ij}=\Bigl(1-{\frac{2-\lambda^2 (n-2)(n-1)}{2-\lambda^2(n-2)}}\Bigr){\tau^{-1}}\nabla_i \nabla_j \tau
{+{\bar\Ttt}_{ij}}
\end{equation}
and
\begin{equation}
  \Box_g \tau = {\frac{2-\lambda^2(n-2)(n-1)}{2-\lambda^2(n-2)} \tau^{-1}\nabla^i\tau\nabla_i\tau}.  \label{confESFAaa}
\end{equation}
Choosing now
\begin{equation} \label{lambdafix}
 \lambda = \sqrt{\frac{2}{(n-2)(n-1)}}
\end{equation}
and defining conformal fluid variables $V^i$ and $v^2$ via
\begin{equation}\label{eq:conffluidrel}
V^i=\Vb^i \AND v^2=-V_i V^i,
\end{equation}
respectively, 
we find, with our convention of using the conformal metric to raise and lower indices of conformal quantities while using the physical metric to raise and lower indices of physical variables, that
\begin{equation*}
  \Vb_i=\gb_{ij} \Vb^j=e^{2\Phi} g_{ij}V^j =e^{2\Phi} V_{i} \AND \vb^2=-\Vb_i\Vb^i=e^{2\Phi} v^2,
\end{equation*}
and note that the relations
\begin{equation}
  \label{eq:alphaByRtt}
  \frac{2-\lambda^2 (n-2)(n-1)}{2-\lambda^2(n-2)}=0,\quad
  {\alpha=-{\lambda (n-2)}} \AND {e^{2\Phi}=\tau^{\frac {2}{n-2}}}
\end{equation}
hold.
Using these relations and the conformal variables $\{g_{ij},V^i,\tau\}$, a short calculation shows that we can express the tensor \eqref{bTtt-ij-def} as
\begin{equation}
  \label{eq:Tttconfphys}
  \Ttt_{ij}:=\bar{\Ttt}_{ij}=2P_0\Bigl(\frac{1+c_s^2}{c_s^2} v^{-2} V_iV_j+\frac{1-c_s^2}{(n-2)c_s^2}g_{ij}\Bigr) \tau^{\frac{c_s^2-1}{c_s^2(n-2)}}v^{-\frac{(1+c_s^2)}{c_s^2}},
\end{equation}
and the conformal Einstein equations \eqref{confESFAa} and  scalar field equation \eqref{confESFAaa} as
\begin{gather} \label{confESFAaN}
R_{ij}={\tau^{-1}}\nabla_i \nabla_j \tau
{+\Ttt_{ij}}
\intertext{and}
\label{confESFCbN}
  \Box_g \tau =0,
\end{gather}
respectively. 
Furthermore, with the help of \eqref{confChrist}, it is not difficult to verify that the Euler equations \eqref{eq:AAA1} can be expressed in terms of the conformal variables $\{g_{ij},V^i,\tau\}$ as
\begin{equation}
  \label{eq:AAACF1}
  \att^i{}_{j k}\nabla_i V^k
    =- \frac{1}{n-2}\tau^{-1}\att^i{}_{j k}(\delta^{k}_{i}\nabla_l\tau
    + \delta^{k}_{l}\nabla_i\tau -g^{km}g_{il}\nabla_m\tau)V^l
\end{equation}
where
\begin{equation}
  \label{eq:AAACF2}
  \att^i{}_{jk}
  =\tau^{\frac{2}{(n-2)}}\bar\att^i{}_{jk}=\frac{3 c_s^2+1}{c_s^2} \frac{V_j V_k V^i}{v^2} +V^i g_{jk}
    +2{\delta^i}_{(k} V_{j)}.
\end{equation}

Noting that $\att^i{}_{jk}$ satisfies 
\begin{equation*}
\att_{ijk}=\att_{kij},  
\end{equation*}
we find that
\begin{equation*}
  \begin{split}
   \att^i{}_{j k}(\delta^{k}_{i}\nabla_l\tau
    + \delta^{k}_{l}\nabla_i\tau -g^{km}g_{il}\nabla_m\tau)V^l
    = &\att^k{}_{j k} V^i \nabla_i\tau
    + (\att_{ijk} -\att_{kj i})V^k \nabla^i\tau\\
    =& \frac{c_s^2(n-1)-1}{c_s^2}V_jV^i\nabla_i\tau.
  \end{split}
\end{equation*}
With the help of this identity, we see,
after rearranging, that the equations
\eqref{confESFAaN}-\eqref{confESFCbN} and \eqref{eq:AAACF1} can be expressed as
\begin{align}
  G_{ij}&= \tau^{-1}\nabla_i\nabla_j \tau+2T_{ij}^{\text{Fl}},
\label{confESFAb} \\
  \Box_g \tau &=0, \label{confESFCb}\\
                \att^i{}_{j k}\nabla_i V^k
              &                             
                =- \frac{c_s^2(n-1)-1}{c_s^2(n-2)}\tau^{-1} V_j V^i\nabla_i\tau, \label{confESAc}
\end{align}
where 
\begin{equation*}
    T_{ij}^{\text{Fl}}= 
   {P_0}\tau^{\frac{c_s^2-1}{c_s^2(n-2)}} v^{-\frac{1+c_s^2}{c_s^2}}\biggl(\frac{1+c_s^2}{c_s^2} v^{-2} V_iV_j+g_{ij}\biggr).
\end{equation*}
We will refer to these equations as the \textit{conformal Einstein-Euler-scalar field equations}. It follows from \eqref{confmet}, \eqref{gaugefixA}-\eqref{alphafix} and \eqref{lambdafix} that each solution $\{g_{ij},\tau, V^i\}$ of the conformal Einstein-scalar field equations yields a solution
\begin{equation}
  \label{eq:conf2phys}
  \biggl\{\gb_{ij}=\tau^{\frac{2}{n-2}} g_{ij},\quad\phi=\sqrt{\frac {n-1}{2(n-2)}}\ln(\tau),\quad \Vb^i=V^i\biggr\}
\end{equation}
of the \textit{physical Einstein-scalar field equations} \eqref{eq:EFE.0}. It is worth noting that since \eqref{eq:EFE.0} is invariant under the transformation $\phi\mapsto -\phi$, our sign convention for $\phi$ that follows from the choice of sign in \eqref{lambdafix} incurs no loss of generality.

\subsection{Explicit model solutions}

\subsubsection{Kasner-scalar field spacetimes}
\label{sec:Kasnerscalarfield}
In our conformal picture, the  \textit{Kasner-scalar field spacetimes}, which are solutions of the conformal Einstein-scalar field equations
\eqref{confESFAb}-\eqref{confESFCb} with $T_{ij}^{\text{Fl}}=0$,  are determined
by the conformal metric and scalar
field
\begin{equation}\label{gtau-Kasner}
\gbr = -t^{\rbr_0}dt\otimes dt + \sum_{\Lambda=1}^{n-1} t^{\rbr_\Lambda} dx^\Lambda \otimes dx^\Lambda
\AND
\taubr= t,
\end{equation}
respectively, which are defined on the spacetime manifold $M^{(K)}=\Rbb_{>0}\times \Tbb^{n-1}$; see Section \ref{indexing} for our coordinate and indexing conventions.
In the above expressions, the constants
$\rbr_\mu$ are called \emph{Kasner exponents} and are defined by
\begin{equation}
    \rbr_0 = \frac{1}{\Pbr}\sqrt{\frac{2(n-1)}{n-2}}-\frac{2(n-1)}{n-2} \AND
    \rbr_\Lambda =\frac{1}{\Pbr} \sqrt{\frac{2(n-1)}{n-2}}\qbr_\Lambda
    -\frac{2}{n-2}, \label{rmu-def}
\end{equation}
where $0<\Pbr\le \sqrt{(n-2)/(2(n-1))}$ and the $\qbr_\Lambda$ satisfy the \textit{Kasner relations} 
\begin{equation}\label{qLambda-def}
    \sum_{\Lambda=1}^{n-1}\qbr_\Lambda =1 \AND 
    \sum_{\Lambda=1}^{n-1}\qbr_\Lambda^2 = 1-2 \Pbr^2. 
\end{equation}
Using \eqref{gtau-Kasner}-\eqref{qLambda-def} as well as \eqref{confmet} and \eqref{gaugefixA}-\eqref{alphafix} to compute the curvature scalar invariants $\Rb_{\mu\nu}\Rb^{\mu\nu}$  and $\Rb=\gb^{\mu\nu}\Rb_{\mu\nu}$ of
the physical  metric $\gb_{\mu\nu}=t^{\frac {2}{n -2}}\gbr_{\mu\nu}$, it follows from the resulting expressions
\begin{equation*}
  \Rb_{\mu\nu}\Rb^{\mu\nu}=\left(\frac{n-1}{n-2}\right)^2 t^{-4 \frac{n-1}{n-2}-2\rbr_0} \AND
  \Rb=-\frac{n-1}{n-2} t^{-2 \frac{n-1}{n-2}-\rbr_0},
\end{equation*}
that the \textit{Kasner big bang singularity} occurs along
the spacelike hypersurface $\{0\}\times \Tbb^{n-1}$. 

Noting from \eqref{rmu-def} and \eqref{qLambda-def} that
\begin{equation}
  \label{eq:r0rLambda}
 \sum_{\Lambda=1}^{n-1} \rbr_\Lambda=\rbr_0,
\end{equation}
we observe from  a short calculation involving \eqref{gtau-Kasner} 
that
\begin{equation*}
\Box_{\gbr}x^\gamma = \frac{1}{\sqrt{|\det(\gbr_{\alpha\beta})|}}\del{\mu}\Bigl(\sqrt{|\det(\gbr_{\alpha\beta})|}\gbr^{\mu\nu}\del{\nu}x^\gamma\Bigr) = 0. 
\end{equation*}
This shows the  $(x^\mu)$ are \textit{wave coordinates} and that 
\begin{equation} \label{Kasner-wave-gauge}
\gbr^{\mu\nu}\Gammabr_{\mu\nu}^\gamma = 0
\end{equation}
is satisfied, where, here, $\Gammabr_{\mu\nu}^\gamma$ denotes the Christoffel symbols of the conformal Kasner metric $\gbr_{\mu\nu}$. 
We further note via a straightforward calculation that the Kasner-scalar field solutions \eqref{gtau-Kasner} 
satisfy
\begin{equation} \label{FLRW-Lag}
\frac{1}{|\breve{\nabla}\taubr|_{\gbr}^2}\breve{\nabla}{}^\mu \taubr=\delta^\mu_0.
\end{equation}

On the $t=const$-surfaces, the lapse $\Ntt$ and the Weingarten map induced  by the conformal Kasner-scalar field metric $\gbr_{\mu\nu}$ are
\begin{equation}
  \label{eq:KSFlapse2ndFF}
  \Ntt=t^{\frac{\rbr_0}{2}} \AND (\Ktt_\Lambda{}^\Omega)=\frac 12t^{-1-\frac{\rbr_0}{2}}\diag(\rbr_1,\ldots,\rbr_{n-1}),
\end{equation}
respectively. The conformal mean curvature is therefore
\begin{equation}
  \Ktt=\frac{\rbr_0}{2}t^{-1-\frac{\rbr_0}{2}},
\end{equation}
as a consequence of \eqref{eq:r0rLambda}, while the physical mean curvature can be shown to be
\begin{equation}
  \bar\Ktt=\Bigl(\frac{n-1}{n-2}+\frac{\rbr_0}{2}\Bigr)t^{-1-\frac 1{n-2}-\frac{\rbr_0}{2}}.
\end{equation}

When we express a Kasner-scalar field solution with respect to the time coordinate
\[\tb=\frac{1}{\frac{n-1}{n-2}+\frac{\rbr_0}2} t^{\frac{n-1}{n-2}+\frac{\rbr_0}2}\]
and appropriately rescale all the spatial coordinates $(x^\Lambda)$, that the physical solution \eqref{eq:conf2phys} corresponding to the conformal metric \eqref{gtau-Kasner} takes the more conventional form
\begin{equation}\label{gtau-Kasnerphys}
  \begin{split}
\overline{\gbr} &= -d\bar t\otimes d\bar t + \sum_{\Lambda=1}^{n-1}
{\bar t}^{2\qbr_\Lambda} d\xb^\Lambda \otimes d\xb^\Lambda,
\\
\breve\phi&= \Pbr\ln(\bar t)+\Pbr\ln\Bigl(\frac{n-1}{n-2}+\frac{\rbr_0}2\Bigr),
  \end{split}
\end{equation}
where $\Pbr$ and $\qbr_\Lambda$ are all related to $\rbr_0$ and $\rbr_\Lambda$ by \eqref{rmu-def}. This shows, in particular, that the constant $\Pbr$ can be interpreted as the \emph{asymptotic scalar field strength}. Noting that the case $\Pbr=0$ is excluded by \eqref{rmu-def}, it follows that the special case of vacuum Kasner solutions is not covered by our conformal representation of the Kasner-scalar field solutions.

Kasner spacetimes where the constants $q_\Lambda$ are all the same coincide with FLRW spacetimes. In this situation, we have by \eqref{qLambda-def} that
\begin{equation*}
    \qbr_\Lambda = \frac{1}{n-1}
\end{equation*}
and that 
\[|\Pbr|=\sqrt{\frac{n-2}{2(n-1)}},\]
which we note saturates the inequality  $|\Pbr|\le \sqrt{(n-2)/(2(n-1))}$. By \eqref{gtau-Kasner} -- \eqref{rmu-def}, we then deduce that $\rbr_0=\rbr_\Lambda=0$ and that the conformal Kasner-scalar field solution simplifies to
\begin{align}\label{gtau-FLRW}
\gbr = -dt\otimes dt + \delta_{\Lambda\Omega}dx^\Lambda \otimes dx^\Omega
\AND
\taubr= t.
\end{align}

\subsubsection{FLRW-Euler-scalar field solution}
\label{sec:FLRWEulerSFExplSol}
The Kasner-scalar field family of solutions considered in the previous section do not involve a fluid.
In this section, we allow for coupling to a fluid but limit our considerations to FLRW solutions. These solutions to the conformal Einstein-Euler-scalar field equations \eqref{eq:Tttconfphys}-\eqref{eq:AAACF2}, which we refer to as \textit{FLRW-Euler-scalar field solutions}, are defined by  
\begin{align}
    \label{eq:FLRWEulerSFExplSol1}
  \gbr&=-\omega^{2(n-1)}dt\otimes dt+\omega^2\sum_{\Lambda=1}^{n-1}dx^\Lambda\otimes dx^\Lambda,\\
  \label{eq:FLRWEulerSFExplSol2}
  \taubr&=t,\\
  \label{eq:FLRWEulerSFExplSol3}
    \Vbr&=V_*^0\, \omega^{-(1-c_s^2)(n-1)}\,t^{-(1-(n-1)c_s^2)/(n-2)}\del{t},
  \end{align}
  where
  \begin{equation}
    \label{eq:FLRWEulerSFExplSol4}
    \omega=\left(1-\frac{n-2}{n-1}\frac{P_0}{c_s^2} (V_*^0 )^{-\frac{1+c_s^2}{c_s^2}} t^{\frac{n-1}{n-2}(1-c_s^2)}\right)^{-2/((n-1)(1-c_s^2))}
  \end{equation}
and $c_s$, $P_0$ and $V_*^0$ are 
any constants satisfying  $c_s\in (0,1)$, $P_0>0$ and $V_*^0>0$.  Each of these solutions is well-defined for $t\in(0,t_0)$,  where $t_0>0$ depends of the choice of constants $c_s$, $P_0$ and $V_*^0$, and by \eqref{eq:physicsquantitiesfluid}, its fluid density is determined via
\[\rho=\frac{P_0}{c_s^2} (V_*^0 )^{-\frac{1+c_s^2}{c_s^2}} \omega^{-(n-1)(1+c_s^2)} t^{-\frac{n-1}{n-2}(1+c_s^2)}.\]
Observing that $\omega\rightarrow 1$ as $t\searrow 0$, we define
  \begin{equation}
    \label{eq:fluidBG1}
    \rho_*=\frac{P_0}{c_s^2} (V_*^0 )^{-\frac{1+c_s^2}{c_s^2}},
  \end{equation}
  which allows us to express  $\omega$ and $\rho$ as
  \begin{equation}
    \label{eq:fluidBG2}
    \begin{split}
    \omega&=\left(1-\frac{n-2}{n-1}\rho_* t^{\frac{n-1}{n-2}(1-c_s^2)}\right)^{-2/((n-1)(1-c_s^2))}\\
    \rho&=\rho_* \omega^{-(n-1)(1+c_s^2)}t^{-\frac{n-1}{n-2}(1+c_s^2)},
    \end{split}
  \end{equation}
respectively.

As can be easily verified from \eqref{eq:FLRWEulerSFExplSol1}-\eqref{eq:FLRWEulerSFExplSol2}, the FLRW-Euler-scalar field solutions satisfy the wave gauge condition \eqref{Kasner-wave-gauge} and \eqref{FLRW-Lag}. In terminology that will be introduced below, \eqref{FLRW-Lag} implies that the coordinates $(x^\mu)=(t,x^\Lambda)$ used to define the FLRW metric \eqref{gtau-FLRW} are \textit{Lagrangian}, while  \eqref{Kasner-wave-gauge} shows that these coordinates are \textit{wave coordinates}, i.e. $\Box_{\gbr}x^\mu =0$. Both of these gauge conditions play a pivotal role in our stability analysis.

Irrespective of the value of the parameter $V_*^0>0$ or equivalently $\rho_*\ge 0$, we observe from \eqref{eq:FLRWEulerSFExplSol1}-\eqref{eq:FLRWEulerSFExplSol2} and $\lim_{t\searrow 0}\omega=1$ that the metric and scalar field have the same limit as the FLRW-scalar field solution \eqref{gtau-FLRW} at $t=0$. This is a manifestation of the before-mentioned \emph{matter does not matter} hypothesis.


As discussed above, the main result of this paper is to establish the past nonlinear stability of FLRW solutions to the Einstein-Euler-scalar field equations and their big bang singularities for sound speeds satisfying  $1/(n-1)<c_s^2 < 1$. In addition to this, we verify that the ``matter does not matter'' assertion holds in the sense that we prove that nonlinear perturbations of FLRW solutions are asymptotic, in a suitable sense, to solutions of the Einstein-scalar field equations. 
\subsection{AVTD and asymptotic pointwise Kasner behaviour}
\label{sec:AVTDAPK}

In this article, we analyze solutions of the Einstein-Euler-scalar field system that are \emph{close} to one of the FLRW solutions from Section~\ref{sec:FLRWEulerSFExplSol} and for which the sound speed lies in the range $1/(n-1)<c_s^2<1$. Even though the spacetimes that are generated by our stability results contain fluids and are generically spatially inhomogeneous without any symmetries, they do retain some of the asymptotic properties that characterize the (non-fluid) spatially homogeneous Kasner-scalar field solutions from Section~\ref{sec:Kasnerscalarfield}.
For example, we show, see Theorem \ref{glob-stab-thm}.(d), that spatial derivative terms, the so-called \emph{velocity terms} \cite{Eardley:1972,Isenberg:1990},  that appear in the dynamical equations become negligible at $t=0$ in comparison to time derivative terms. This behavior is known as \emph{asymptotically velocity term dominated} (AVTD) and it implies that the dynamical equations can be approximated by ODEs close to the big bang.
In agreement with \cite{Fournodavlos_et_al:2023}, we define that a solution to the conformal Einstein-Euler-scalar field system satisfies the AVTD property provided it satisfies the \textit{velocity term dominated} (VTD) equations that are obtained, up to an error term that is integrable in time near $t=0$, from the main evolution system by removing all spatial derivative terms and by normalising the time derivative terms. Here, by \textit{normalising the time derivatives}, we mean that the evolution equations are put into Fuchsian form as discussed in Sections~\ref{sec:FuchsianForm}~and~\ref{sec:proof_globstab_Fuchsian}.

Given that spatial inhomogeneities of AVTD solutions become irrelevant at the big bang, it is, perhaps, not surprising that they behave locally like spatially homogeneous solutions. More precisely, we show, see Theorem \ref{glob-stab-thm}.(d), that the solutions generated from our stability result satisfy the following \emph{asymptotic pointwise Kasner} property just as in the non-fluid case \cite{BeyerOliynyk:2021}.

\begin{Def}
\label{def:APKasner}
Given a $C^2$-solution $(M=(0,t_0]\times U, g_{\mu\nu},\tau, V^\mu)$, $t_0>0$, of the conformal Einstein-Euler-scalar field equations,
where $U\subset \Tbb^{n-1}$ is open and  $(x^\mu)=(t,x^\Lambda)$ are coordinates on $M$ such that $t\in (0,t_0]$, $\tau=t$ and the $(x^\Lambda)$ are periodic coordinates on $\Tbb^{n-1}$, we say that the spacetime $(M, g_{\mu\nu},\tau, V^\mu)$ is \textit{asymptotically pointwise Kasner on $U$} provided there exists a continuous orthonormal frame $e_i=e_i^\mu\del{\mu}$ and a continuous spatial tensor field $\kf_{I}{}^J$ on $U$ such that the following hold:
\begin{enumerate}[(i)]
\item The spatial vector fields $e_I$ are tangential to the $t=const$-surfaces, i.e.\ $e_I=e^\Lambda_I \del{\Lambda}$, and
\begin{equation}
  \label{eq:asymptptwKasner2}
  \lim_{t\searrow 0} \bigl|2t\,\Ntt(t,x)\, \Ktt_{I}{}^J (t,x)-\kf_{I}{}^J(x)\bigr|=0
\end{equation}
for each $x\in U$, where $\Ntt$ is the lapse and $\Ktt_{I}{}^{J}$ is the Weingarten map induced on the $t=\mathrm{const}$-hypersurfaces by the conformal metric $g_{\mu\nu}$. 
\item The tensor field $\kf_I{}^J$ satisfies $\kf_I{}^I\geq 0$ and the \emph{Kasner relation}
\begin{equation}
  \label{eq:asymptptwKasner}
  (\kf_{I}{}^{I})^2 -
  \kf_{I}{}^J \kf_{J}{}^I 
  +4\kf_{I}{}^{I}=0
\end{equation}
everywhere on $U$. At each point $x\in U$, the symmetry of $\kf_I{}^J(x)$ guarantees that $\kf_I{}^J(x)$ has $n-1$ real eigenvalues, which we denote by $r_1(x),\ldots,r_{n-1}(x)$. We refer to these functions $r_1,\ldots,r_{n-1}$ on $U$ as the \emph{Kasner exponents}\footnote{Notice that it is more customary in the literature to call the quantities $q_1,\ldots, q_{n-1}$ defined in \eqref{eq:r0qIdef} \emph{Kasner exponents} and \eqref{eq:Kasnerrel} the \emph{Kasner relations}.}.
\end{enumerate}
\end{Def}
\noindent We refer the reader to \cite[\S1.3]{BeyerOliynyk:2021} for more information regarding the motivation and consequences of  Definition~\ref{def:APKasner}.

Given Kasner exponents $r_1,\ldots, r_{n-1}$ from the above definition, we define
\begin{equation}
  \label{eq:r0qIdef}
  r_0:=\sum_{\Lambda=1}^{n-1}r_\Lambda=\kf_{I}{}^{I},\quad 
  q_\Lambda:=\Ptt\sqrt{\frac{n-2}{2(n-1)}}\Bigl(r_\Lambda+\frac{2}{n-2}\Bigr),
\end{equation}
and
\begin{equation}
  \label{eq:pdef}
  \Ptt
:=\frac{\sqrt{{2(n-1)}(n-2)}}{{2(n-1)} +(n-2)\sum_{\Lambda=1}^{n-1}r_\Lambda},
\end{equation}
which we note is well-defined since $\kf_I{}^I=\sum_{\Lambda=1}^{n-1}r_\Lambda\geq 0$.
From these formulas, we deduce that the $q_\Lambda$ satisfy the ``standard'' Kasner relations
\begin{equation}\label{eq:Kasnerrel}
  \sum_{\Lambda=1}^{n-1}q_\Lambda =1 \AND 
  \sum_{\Lambda=1}^{n-1}q_\Lambda^2 = 1-2 \Ptt^2.
\end{equation}
As discussed in \cite[\S1.3]{BeyerOliynyk:2021}, the quantity $\Ptt$ can be interpreted as the asymptotic scalar field strength.

Assume now that there exist continuous positive functions  $\bfr$ and $\nu$ such that
\begin{equation}
  \label{eq:lapselimit}
  \Bigl|t^{-\frac{1}{2}r_0(x)}\Ntt (t,x)-\bfr(x)\Bigr|\lesssim t^{\nu(x)}
\end{equation}
for all $(t,x)\in (0,t_0]\times\Tbb^{n-1}$. This property is clearly satisfied by the Kasner-scalar field spacetimes with and without fluids, and we show that it continues to hold for the entire class of perturbations of the FLRW fluid solutions that are generated from our past stability result; see Theorem \ref{glob-stab-thm}.(b). Given \eqref{eq:lapselimit},  it then follows that solutions that satisfy the asymptotically pointwise Kasner
condition \eqref{eq:asymptptwKasner2} will behave pointwise in a manner that is similar to that of the  Kasner-scalar field solutions, c.f.~\eqref{eq:KSFlapse2ndFF}. Thus, the asymptotically pointwise Kasner condition provides the sense in which the pertrubed solutions behave asymptotically at each spatial point like a Kasner-scalar field spacetime.

The second fundamental form $\bar\Ktt_{\Lambda\Omega}$ induced on $t=const$-surfaces by the physical metric $\gb_{\mu\nu}$ is related to the one $\Ktt_{\Lambda\Omega}$ induced by the conformal metric $g_{\mu\nu}$ via
\[\bar\Ktt_\Lambda{}^{\Omega}=\frac 12 t^{-\frac {1}{n -2}} t^{-1}\Ntt^{-1}\Bigl(2 t\Ntt\Ktt _{\Lambda}{}^{\Omega}
    +\frac {2}{n-2}{\delta}_{\Lambda}{}^{\Omega}\Bigr).\]
Because of this, \eqref{eq:asymptptwKasner2} and \eqref{eq:lapselimit}  imply that both the mean curvatures associated with the physical and with the conformal metric diverge pointwise near $t=0$, except in the case of FLRW where $K_{\Lambda}{}^{\Lambda}$ vanishes while $\bar{K}_{\Lambda}{}^{\Lambda}$ diverges. Given suitable uniform bounds over the spatial domain $U$, which we prove hold for the perturbed FLRW solutions generated by our main stability result, see Theorem \ref{glob-stab-thm}.(d), asymptotically pointwise Kasner metrics will have \emph{crushing singularities} at $t=0$  in the language of \cite{eardley1979}.

\subsection{An informal statement of the main stability theorem}
The main result of this article is that we establish
the nonlinear stability in the contracting direction of perturbations of the FLRW solution \eqref{eq:FLRWEulerSFExplSol1}-\eqref{eq:FLRWEulerSFExplSol4} to the Einstein-Euler-scalar field equations in $n\geq 3$ spacetime dimensions and for sound speeds satisfying $1/(n-1)<c_s^2 < 1$; see Theorem~\ref{glob-stab-thm} for the precise statement of our stability result. We also show that the perturbed FLRW solutions are asymptotically pointwise Kasner in the sense of Definition~\ref{def:APKasner}, and that they terminate in a big bang singularity, which rigorously confirms the \emph{matter does not matter} paradigm for these solutions. 

An informal statement of our stability result is given below in the following theorem. However, before stating it, we first discuss \textit{synchronized initial data}. Initial data that is prescribed on the hypersurface $\Sigma_{t_0}=\{t_0\}\times \Tbb^{n-1}$, $t_0>0$ will be said to be \textit{synchronized} if  $\tau=t_0$ on $\Sigma_{t_0}$. The purpose of this synchronization condition is to ensure that the big bang singularity occurs at $\tau=0$; see the discussion in Section \ref{temp-synch} for details.  As noted in Remark \ref{synch-rem}, no generality is lost from restricting our attention to synchronized initial data.

\begin{thm}[Past global stability of the FLRW solution of the Einstein-Euler-scalar field equations]
Solutions $\{g_{ij},\tau, V^i\}$ of the conformal Einstein-Euler-scalar field equations that are 
generated from sufficiently differentiable, synchronized initial data imposed on $\Sigma_{t_0}=\{t_0\}\times \Tbb^{n-1}$ that is suitably close to FLRW initial data exist on the spacetime region $M \cong (0,t_0]\times \Tbb^{n-1}$ provided $c_s^2\in (1/(n-1),1)$. Moreover, these solutions are asymptotically pointwise Kasner, the fluid is asymptotically comoving\footnote{That is, the spatial fluid vector vanishes at $\tau=0$.},
and the corresponding physical solutions $\{\gb_{ij},\phi,\Vb^i\}$ of the Einstein-Euler-scalar field equations are past timelike geodesically incomplete, terminate at a crushing big bang singularity at $\tau=0$ that is characterised by curvature blow-up, and are $C^2$-inextendible through the $\tau=0$ boundary of $M$.
\end{thm}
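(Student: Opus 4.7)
The approach is to follow the Fuchsian strategy of \cite{BeyerOliynyk:2021}, extending it to the coupled fluid sector via the Frauendiener--Walton formulation \eqref{eq:AAA1}--\eqref{eq:AAA2}. First I would fix a gauge by imposing the wave-coordinate condition $\Box_g x^\mu = 0$ together with the choice of $\tau$ itself as the time coordinate; this enforces the Lagrangian identity \eqref{FLRW-Lag} that the FLRW background \eqref{eq:FLRWEulerSFExplSol1}--\eqref{eq:FLRWEulerSFExplSol2} satisfies, and is consistent at the initial slice by the synchronization hypothesis $\tau=t_0$ on $\Sigma_{t_0}$. In this gauge \eqref{confESFAb} becomes a quasilinear wave system for $g_{\mu\nu}$, the equation \eqref{confESFCb} reduces to the wave-gauge identity, and the fluid equation \eqref{confESAc} is quasilinear symmetric hyperbolic throughout the subluminal range $c_s \in (0,1)$. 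Constraint and gauge propagation for the coupled system is handled along standard lines, reducing the theorem to a Cauchy problem for a first-order quasilinear symmetric hyperbolic system.

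Next I would reformulate this Cauchy problem in Fuchsian form about the FLRW-Euler-scalar field background. The perturbation unknowns would consist of appropriate rescalings of the spatial metric, the Weingarten map $\Ktt_I{}^J$, the lapse, the shift, $v$, and the spatial components of $V^i$, chosen so that the anticipated asymptotic limits -- an asymptotic Kasner tensor $\kf_I{}^J$ together with a vanishing spatial fluid velocity representing asymptotic comovingness -- correspond to finite values at $t=0$. After setting $\tau=t$ and extracting the factor $\tau^{(c_s^2-1)/(c_s^2(n-2))}$ from \eqref{eq:Tttconfphys}, the evolution system takes the Fuchsian form
\begin{equation*}
B^0(t,x,u)\,\partial_t u + B^\Lambda(t,x,u)\,\partial_\Lambda u = \frac{1}{t}\,\mathcal{B}(t,x,u)\,u + F(t,x,u),
\end{equation*}
with $B^0$ symmetric positive, the $B^\Lambda$ symmetric, and $F$ integrable in $t$ near $0$; the upper bound $c_s^2<1$ is used precisely to control the singular power of $\tau$ in $F$ coming from $\Ttt_{ij}$.

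The main technical obstacle, and the step that fixes the lower end of the sound-speed range, is verifying the spectral hypothesis of the Fuchsian existence theorem of \cite{BOOS:2021,Oliynyk:CMP_2016}: the eigenvalues of the symmetrized singular operator $\mathcal{B}$ on the appropriate invariant block must have the correct sign. For the geometric block the analysis parallels \cite{BeyerOliynyk:2021}. For the fluid block, the decisive coefficient is $c_s^2(n-1)-1$ appearing in \eqref{confESAc}, and requiring this to be strictly positive yields exactly the lower bound $c_s^2 > 1/(n-1)$ stated in the theorem. Once the spectral conditions hold, the Fuchsian theorem produces a unique solution on $(0,t_0]\times\Tbb^{n-1}$ together with uniform Sobolev bounds in $x$ and pointwise limits of the rescaled unknowns at $t=0$.

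The geometric conclusions are then read off from these limits. The limit of $2tN\Ktt_I{}^J$ provides the continuous tensor $\kf_I{}^J$; the Hamiltonian constraint in the limit, where the fluid contribution drops out because its density is subleading, yields the Kasner relation \eqref{eq:asymptptwKasner} and gives the asymptotically pointwise Kasner property of Definition~\ref{def:APKasner}, while the Fuchsian bound on the spatial components of $V^i$ gives asymptotic comovingness. Curvature blow-up is obtained by computing $\Rb_{\mu\nu}\Rb^{\mu\nu}$ for the physical metric $\gb_{\mu\nu}=\tau^{2/(n-2)}g_{\mu\nu}$ using the Kasner asymptotics and \eqref{eq:lapselimit}; one recovers the same power divergence as in the Kasner-scalar field case, which immediately implies $C^2$-inextendibility through $\tau=0$. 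Finally, past timelike geodesic incompleteness follows from the integrability of the lapse $N(t,x)\sim \bfr(x)\, t^{r_0(x)/2}$ against $dt$ near $t=0$, valid for any sufficiently small perturbation of FLRW.
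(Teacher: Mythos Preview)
Your high-level strategy---Fuchsian reformulation about the FLRW background, then application of the existence theory from \cite{BOOS:2021,BeyerOliynyk:2020}, with the spectral condition on the singular operator singling out $c_s^2>1/(n-1)$ via the coefficient $c_s^2(n-1)-1$ in \eqref{confESAc}---is the paper's approach. But your description of the gauge and of the variables misses two technical ingredients that the paper relies on heavily, and without them the derivation of a usable Fuchsian system is not straightforward.

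First, the coordinate choice. You propose wave coordinates $\Box_g x^\mu=0$ together with $x^0=\tau$, and claim this enforces the Lagrangian identity \eqref{FLRW-Lag}. It does not: $\chi^\mu=\delta^\mu_0$ would additionally require the shift to vanish, which wave coordinates do not give. The paper instead keeps the wave gauge $X^k=0$ but works in \emph{Lagrangian} coordinates adapted to $\chi^\mu$; these are explicitly not wave coordinates (the background metric $\gc_{\mu\nu}=\partial_\mu l^\alpha\eta_{\alpha\beta}\partial_\nu l^\beta$ depends on the Lagrangian map $l$, which becomes an additional unknown). This separation is what allows the singularity to be synchronised at $t=0$ while retaining a hyperbolic reduction.

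Second, the Fuchsian variables. You propose ADM-type unknowns (spatial metric, Weingarten map, lapse, shift). The paper instead passes to a Fermi--Walker transported orthonormal frame and uses first and second frame derivatives $g_{ijk},g_{ijkl},\tau_{ij},\tau_{ijk}$ together with the rescaled fluid variable $W^k=\ftt^{-1}V^k$; the key combination $k_{IJ}=t\tilde\beta(g_{0IJ}-g_{I0J}-g_{J0I})$ plays the role of your Weingarten map and is the unique block that cannot be rescaled by a positive power of $t$. The Fermi--Walker frame (inspired by \cite{Fournodavlos_et_al:2023}) is what makes the singular matrix $\Ac$ block-triangular with the right sign structure; obtaining this structure directly from an ADM decomposition is not obvious. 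The paper also runs two parallel formulations---the Lagrangian coordinate system for local existence and continuation, the frame system for the Fuchsian estimates---and bootstraps between them to extend the solution to $t=0$.

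Two smaller points: the paper detects curvature blow-up through the scalar curvature $\bar R$ rather than $\bar R_{\mu\nu}\bar R^{\mu\nu}$, and obtains past geodesic incompleteness from Hawking's singularity theorem (via positivity of the mean curvature on $\Sigma_{t_0}$) rather than from lapse integrability.
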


The restriction $1/(n-1)<c_s^2<1$ on the sound speed has been observed in earlier studies \cite{beyer2017,BeyerOliynyk:2020} in the spacetime dimension $n=4$. In the terminology of  \cite{beyer2017}, the condition $1/(n-1)<c_s^2<1$ is referred to as the \emph{subcritical regime} and in that article solutions to the Einstein-Euler equations with a Gowdy symmetry\footnote{It is worth noting here that the Gowdy symmetry is responsible for the suppression of the oscillatory behavior of the fields near the big bang singularity, and explains why in this work the authors are able to establish the existence of solutions with monotone behaviour near the singularity without needing to couple the system to a scalar field.} are constructed by specifying asymptotic initial data on the big bang singularity and generating solution from this data by solving a Fuchsian singular initial value problem. The results of \cite{beyer2017} show that there exist families of solutions to the Einstein-Euler equations with Gowdy symmetry that have big bang singularities. The behavior of these solutions near the big bang singularity is similar to the solutions we obtain in this article from our stability theorem. Interestingly in  \cite{beyer2017}, families of solutions, under additional assumptions beyond Gowdy symmetry, are also constructed for the coupled Einstein-Euler equations and for the Euler equations on fixed Kasner backgrounds in the \textit{critical} and \textit{supercritical} regimes, respectively, that correspond to the sound speeds $c_s^2=1/(n-1)$ and $0<  c_s^2<1/(n-1)$, respectively. 
What is strongly suggested by the results of \cite{beyer2017} is that the asymptotics of solutions to the Einstein-Euler-scalar field equations for sound speeds satisfying $0\leq c_s^2< 1/(n-1)$ will be very different to that of the perturbed solutions consider in this article, which satisfy the subcritial condition $1/(n-1)<c_s^2 < 1$.  As we explain in more detail in Remark~\ref{rem:borderlineextension}, it is not obvious for Einstein-Euler-scalar field equations considered here how the dynamics might change in the critical $c_s^2=1/(n-1)$ or supercritical  $0\leq c_s^2< 1/(n-1)$ regimes. We plan to explore this question in future work.

\begin{rem} \label{local-stab-rem}
It is worth mentioning here that Theorem~\ref{glob-stab-thm} should be interpreted as a past global in space stability theorem since it requires that the initial data be specified on the  entire closed hypersurface $\Sigma_{t_0}=\{t_0\}\times\Tbb^{n-1}$. It is straightforward to establish a local in space version of Theorem~\ref{glob-stab-thm} by adapting the proof of Theorem 11.1 from \cite{BeyerOliynyk:2021}. Doing so would yield the existence of solutions to the Einstein-Euler-scalar field equations on truncated cones domains that (i) are generated from initial data that is sufficiently close to FLRW on an open subset of $\Sigma_{t_0}$, (ii) are asymptotically pointwise Kasner, and  (iii) terminate in a crushing AVTD big bang singularity characterised by curvature blow-up. 
We leave the details of the proof to the interested reader. 
\end{rem}

\subsection{Overview of the proof of Theorem~\ref{glob-stab-thm}\label{overview}} 
The proof our main stability result, Theorem~\ref{glob-stab-thm}, is based on an adaptation of the proof of Theorem~10.1 from
\cite{BeyerOliynyk:2021}. As in \cite{BeyerOliynyk:2021}, our proof of Theorem~\ref{glob-stab-thm} relies on two different formulations of the reduced Einstein-Euler-scalar field equations, which are used
for distinct purposes. 

The first formulation, given by \eqref{tconf-ford-C.1}-\eqref{tconf-ford-C.9} below, is used to establish the local-in-time existence and uniqueness of solutions
to the reduced conformal Einstein-scalar field equations in a \textit{Lagrangian coordinate system} $(x^\mu)$, which is adapted to the vector field $\chi^\mu = (|\nabla \tau|^2_g)^{-1}\nabla^\mu\tau$,
as well as a continuation principle for these solutions. See Proposition~\ref{lag-exist-prop} for the precise statement of the local-in-time existence, uniqueness and continuation result. 

Here, Lagrangian coordinates mean that in the coordinate system $(x^\mu)$ the vector field $\chi^\mu$ is trivialized, that is, $\chi^\mu =\delta^\mu_0$.
The precise definition of the Lagrangian coordinates $(x^\mu)$ can be found in  
Section \ref{Lag-coordinates}. For initial data that satisfies the gravitational and wave gauge  constraints, the system \eqref{tconf-ford-C.1}-\eqref{tconf-ford-C.9} propagates both of these constraints
and determines solutions of the conformal Einstein-scalar field equations. An important point regarding the wave gauge constraint
$\frac{1}{2}g^{\gamma \lambda}(2 \Dc_\mu g_{\nu\lambda}-\Dc_\lambda g_{\mu\nu}) = 0$
is that the covariant derivative $\Dc_\mu$  is not determined by a fixed Minkowski metric in the Lagrangian coordinates $(x^\mu)$, and consequently, the Lagrangian coordinates $(x^\mu)$ are \emph{not} wave coordinates, that is, generically $\Box_g x^\mu \neq 0$. Instead, the covariant derivative $\Dc_\mu$ is computed with respect to the flat metric
$\gc_{\mu\nu} = \del{\mu}l^\alpha \eta_{\alpha\beta} \del{\nu}l^\beta$
where the Lagrangian map $l^\mu(x)$ is determined by a solution of the system \eqref{tconf-ford-C.1}-\eqref{tconf-ford-C.9}; see Section \ref{Lag-coordinates} for details. The primary role of the Lagrangian coordinates $(x^\mu)$ is to synchronize the singularity. In these coordinates, the scalar field $\tau$ coincides with the time coordinate, that is,
$\tau = t:= x^0$; see Section \ref{temp-synch} for details.

While the system \eqref{tconf-ford-C.1}-\eqref{tconf-ford-C.9} is useful for establishing the local-in-time existence of solutions to the reduced conformal Einstein-scalar field equations and the propagation of the wave gauge constraint $\frac{1}{2}g^{\gamma \lambda}(2 \Dc_\mu g_{\nu\lambda}-\Dc_\lambda g_{\mu\nu}) = 0$,
it is not useful for establishing global-in-time estimates that can be used in conjunction with the continuation principle to show that solutions can be continued from some starting time $t_0>0$ all the way down to the big bang singularity at $t=0$. The system that we do use to establish global-in-time estimates is formulated in terms of a frame $e_i=e_i^\mu\del{\mu}$, the connection coefficients $\gamma_i{}^k{}_j$
of the flat background metric $\gc_{\mu\nu} = \del{\mu}l^\alpha \eta_{\alpha\beta} \del{\nu}l^\beta$ relative to the frame $e_i$, i.e. $D_{e_i}e_j = \gamma_{i}{}^k{}_j e_j$,
and suitable combinations of the metric, scalar and fluid fields and their derivatives:  
\begin{equation*} 
  \begin{split}
\{&g_{ijk}=\Dc_i g_{jk},g_{ijkl}=\Dc_{i}\Dc_j g_{kl},\tau_{ij}= \Dc_i\Dc_j \tau,\\ 
&\tau_{ijk}=\Dc_i \Dc_j\Dc_k \tau,W^k,\Ut^k_i=\Dc_i W^k\}, 
  \end{split}
\end{equation*}
where $g_{ij}=e_i^\mu g_{\mu\nu}e^\nu_j$
is the frame representation of the conformal metric,
$\tau=t$, and $W^k = \ftt^{-1}V^k$ with    $\ftt=\tau^{\frac{(n-1)c_s^2-1}{n-2}}\betat^{c_s^2}$ and $\betat=(-|\nabla\tau|^2_g)^{-\frac{1}{2}}$.

The first step toward deriving the second formulation of the reduced Einstein-Euler-scalar field equations that is used to derive global-in-time estimates is to fix the frame $e_i^\mu$, which we do in Section \ref{Fermi}, by first setting
$e_0^\mu = (-|\chi|_g^2)^{-\frac{1}{2}}\chi^\mu$, where $\chi^\mu=\delta^\mu_0$ since we are using Lagrangian coordinates $(x^\mu)$. The spatial frame vectors $e_I^\mu$ are then determined by using Fermi-Walker transport, which is defined by
\begin{equation*}
\nabla_{e_0}e_J = -\frac{g(\nabla_{e_0}e_0,e_J)}{g(e_0,e_0)} e_0,
\end{equation*}
to propagate initial data $e_I^\mu|_{t=t_0}=\er^\mu_I$ that is chosen so that the frame is orthonormal at $t=t_0$. 
The orthonormality of the frame is preserved by Fermi-Walker transport, which implies, in particular, that the frame metric satisfies 
$g_{ij}=\eta_{ij}$. Once the frame is determined via Fermi-Walker transport, then the back ground connection coefficients $\gamma_i{}^k{}_j$ are determined via a system of evolution equations derived from the vanishing of the background curvature and identities derived from the Fermi-Walker transport equation; see Section \ref{Fermi} for details. We note that our use of a Fermi-Walker transported spatial frame was inspired by the work of \cite{Fournodavlos_et_al:2023} in which Fermi-Walker transported spatial frames played an essential role in the proof of the stability results established there. 

With the frame fixed, a first order formulation of the reduced conformal Einstein-Euler-scalar fields equations in terms of the frame variables 
\begin{align*}
\kt&=(\kt_{IJ}) :=(g_{0IJ}-g_{I0J}-g_{J0I}),
\\
\betat &=(-|\nabla\tau|^2_g)^{-\frac{1}{2}},\\
\ellt&=(\ellt_{IjK}) := (g_{IjK}), &&  \\
\mt&=(\mt_{I})  := (g_{00M}),\\
\tau &= (\tau_{ij}), \\
\gt&=(\gt_{Ijkl}) := (g_{Ijkl}),\\    
  \taut&=(\taut_{Ijk}) := (\tau_{Ijk}), \label{taut-def}\\
  W&=(W^i),\\
  \Utt&=(\Utt^k_Q),
\intertext{and}
\psit&=(\psit_I{}^k{}_J):=(\gamma_I{}^k{}_J)  
\end{align*}
is derived in Sections \ref{sec:FirstOrderForm} and \ref{sec:chvar1}, see in particular, equations \eqref{for-G.1.S2}-\eqref{for-Euler.2.2}.
The metric combination $\kt_{ij}$, which it is related to the second fundamental form of the conformal metric for the $t=const$-hypersurfaces, c.f.\eqref{Ktt-def},
plays a pivotal role in our analysis.
The property that distinguishes $\kt_{ij}$, as far as the analysis is concerned, is we have \textit{no freedom} to rescale the normalized version
\begin{equation*}
k_{IJ} = t \betat \kt_{IJ}
\end{equation*}
by any power of $t$.
Our stability proof relies on showing that $k_{IJ}$ remains bounded as $t\searrow 0$, and in fact, we show that $2k_{IJ}$ converges as $t\searrow 0$ to a, in general, non-vanishing symmetric matrix $\kf_{IJ}$ satisfying $\kf_I{}^I\geq 0$ and $(\kf_I{}^I)^2-\kf_I{}^J\kf_J{}^I+4\kf_I{}^I=0$. On the other hand, there is slack in the remaining variables  in the sense that we can rescale them by certain positive powers of $t$. This freedom to rescale these variables is essential to our stability proof.

To complete the derivation of the second formulation of the reduced Ein\-stein-Eu\-ler-sca\-lar field equations, we introduce, in Section \ref{sec:chvar2}, the rescaled variables 
\begin{align*}
k&=(k_{IJ}) :=(t\betat \kt_{IJ}),
\\
  \beta &=  t^{\ep_0}\betat,\\
 \check\beta&=t^{\ep_3+(1+c_s^2)\ep_0}\beta^{-(1+c_s^2)},\\
\ell&=(\ell_{IjK}):= (t^{\ep_1}\ellt_{IjK}), \\
m&=(m_{I}) := (t^{\ep_1}\mt_I), \\
\xi&=(\xi_{ij}) := (t^{\ep_1-\ep_0}\tau_{ij}), \\
\psi &=(\psi_I{}^k{}_J) :=  (t^{\ep_1}\psit_I{}^k{}_J),\\
f&=(f_I^\Lambda) := (t^{\ep_2} e^\Lambda_I), \\
\gac&=(\gac_{Ijkl}) := (t^{1+\ep_1} \betat \gt_{Ijkl}), \\
\tauac &=(\tauac_{Ijk}):= (t^{\ep_0+2 \ep_1} \taut_{Ijk}),\\
  W&=(W^k),\\
  \acute{\Utt} &=(t^{\ep_4} \Utt^s_Q), 
\end{align*}
where the $\ep_0,\ldots,e_4$ are constants.
Expressing the first order system \eqref{for-G.1.S2}-\eqref{for-Euler.2.2} in terms of these rescaled variables, see Section \ref{sec:Fuch-form}, yields a Fuchsian system of equations of the form
\begin{equation} \label{Fuch-eqn-intro}
    A^0 \del{t}u +\frac{1}{t^{\ep_0+\ep_1}}A^\Lambda \del{\lambda}u = \frac{1}{t}\Ac \Pbb u + F,
\end{equation}
where
\begin{align*}
u = \bigl(k_{LM},m_M,\ell_{R0M},\ell_{RLM},\xi_{rl},&\beta,f^\Lambda_I,\psi_I{}^k{}_{J},\tauac_{Qjl}, \gac_{Qjlm}, \check\beta, W^s,{\acute{\Utt}^s_Q}\bigr)^{\tr} \\
& - \bigl(0,0,0,0,0, t^{\ep_0}, t^{\ep_2}\delta_{I}^{\Lambda},0,0,0,t^{\ep_3},V_*^0\delta_0^j,0\bigr)^{\tr}
\end{align*}
and $\Pbb$
is the projection matrix 
\begin{multline*}
 \Pbb = \diag\Bigl(0,\delta_{\Mt}^{ M},\delta_{\Rt}^{R}\delta_{\Mt}^{M},\delta_{\Rt}^{R} \delta_{\Lt}^{ L} \delta_{\Mt}^{M},\delta_{\rt}^{r}\delta_{\lt}^{l},1,\delta_{\It}^{I}\delta^{\Lambdat}_{\Lambda},\delta_{\It}^{I}\delta^{\kt}_{k} \delta_{\Jt}^{J},\\\delta_{\Qt}^{Q}\delta_{\jt}^{j}\delta_{\lt}^{l},\delta_{\Qt}^{Q}\delta_{\jt}^{j}\delta_{\lt}^{l} \delta_{\mt}^{m},\delta_{\tilde J J}\delta_{\jt}^{\Jt}\delta_j^J , \delta_{\tilde j j} \delta^{\Qt Q}\Bigr).
 \end{multline*}

The purpose of introducing the $t$-power weights determined by the constants $\ep_0,\ldots,e_4$ in the variables above is to obtain a matrix $\Ac$ where the eigenvalues of $\frac{1}{2}(\Ac+\Ac^{\tr})\Pbb$ are all non-negative. This is essential for our existence proof. At the same time, we have to ensure that no singular terms worse than $t^{-1}$ appear in \eqref{Fuch-eqn-intro}. Both of these requirements  
necessitates choosing the constants $\ep_0,\ldots,e_4$ and the square of the sound speed $c_s^2$ to satisfy the inequalities  \eqref{eq:epscond.N} and $\frac{1}{n-1}<c_s^2 <1$; see Remark \ref{rem:borderlineextension}, Lemma \ref{lem:sourceterm} and Lemma \ref{lem:posdef1} for details.  
It is also worth noting that the zero in the first diagonal component of $\Pbb$, which corresponds to the zero eigenvalue block of $\frac{1}{2}(\Ac+\Ac^{\tr})\Pbb$, is responsible for the convergence as $t\searrow 0$ of $2k_{IJ}$ to a, generally non-vanishing, matrix $\kf_{IJ}$. On the other hand, remaining eigenvalues of $\frac{1}{2}(\Ac+\Ac^{\tr})\Pbb$, which are all positive, lead to power law decay, i.e. $t^a$ with $a>0$, for the other variables where the decay rates\footnote{That is the $a$'s where there is a different $a$ for each of the different groups of variables.} are determined by the  eigenvalues.

The virtue of the Fuchsian formulation \eqref{Fuch-eqn-intro} is that we can appeal to the existence theory developed in the articles\footnote{The actual existence theory we apply is from \cite{BeyerOliynyk:2020}, which a slight generalization of the existence theory from \cite{BOOS:2021}. } \cite{BeyerOliynyk:2020,BOOS:2021} to conclude, for suitably small choice of initial data $u_0$ at $t=t_0>0$, that there exist a unique solution of \eqref{Fuch-eqn-intro} that is defined all the way down to $t=0$ and satisfies $u|_{t=t_0}=u_0$. The Fuchsian existence theory also yields energy and decay estimates that provide uniform control over the behaviour of solutions in the limit $t\searrow 0$. The precise statement of the global existence result for the Fuchsian equation \eqref{Fuch-eqn-intro} is given in Proposition \ref{prop:globalstability}.

On one hand, Proposition \ref{prop:globalstability} yields the existence of a unique solution on $(0,t_0]\times \Tbb^{n-1}$ to the Fuchsian equation \eqref{Fuch-eqn-intro} generated from initial data\footnote{Here, the initial data for \eqref{Fuch-eqn-intro} is assumed to be derived from initial data for the reduced conformal Einstein-scalar equations that satisfies the gravitational and wave gauge constraint equations.} $u|_{t=t_0} = u_0$ that is sufficiently close to FLRW initial data.
On the other hand, this same initial data generates, by 
Proposition \ref{lag-exist-prop}, a local-in-time solution to the the system 
\eqref{tconf-ford-C.1}-\eqref{tconf-ford-C.9} that, after solving the Fermi-Walker transport equations for the spatial frame fields, determines a solution of the Fuchsian equation \eqref{Fuch-eqn-intro}. By uniqueness, these two solutions must be the same. The energy estimates from Proposition \ref{prop:globalstability} then allows us to conclude via the continuation principle from Proposition \ref{lag-exist-prop} that the solution $u$ of the Fuchsian equation determines a solution of the conformal Einstein-scalar field equations on $(0,t_0]\times \Tbb^{n-1}$. Asymptotic properties of the solution to the conformal Einstein-scalar field equations are then deduced from the energy and decay estimates for $u$ from Proposition \ref{prop:globalstability}. This completes the overview of the major steps involved in our proof of the past global nonlinear stability of small perturbations of the FLRW solutions to the conformal Einstein-Euler-scalar field equations.
The precise statement of this result is presented in Theorem \ref{glob-stab-thm} and the proof can be found in Section  \ref{sec:proof_globstab}.  

\section{Preliminaries\label{prelim}}



\subsection{Coordinates, frames and indexing conventions\label{indexing}}
In the article, we will consider $n$-dimensional spacetime manifolds of the form
\begin{equation} \label{Mt1t0-def}
    M_{t_1,t_0}= (t_1,t_0]\times \Tbb^{n-1},
\end{equation}
where $t_0>0$, $0<t_1<t_0$, and $\Tbb^{n-1}$ is
the $(n-1)$-torus defined by
\begin{equation} \label{Tbb-def}
    \Tbb^{n-1} = [-L,L]^{n-1}/\sim
\end{equation} 
with $\sim$ the equivalence relation obtained
from identifying the sides of the box $[-L,L]^{n-1}\subset \Rbb^{n-1}$. On $M_{t_1,t_0}$,
we will always employ coordinates $(x^\mu)=(x^0,x^\Lambda)$
where the $(x^\Lambda)$ are periodic spatial coordinates
on $\Tbb^{n-1}$ and $x^0$ is a time coordinate
on the interval $(t_1,t_0]$. Lower case Greek letters, e.g. $\mu,\nu,\gamma$, will run from $0$ to $n-1$ and be used to label spacetime coordinate indices while upper case Greek letters, e.g. $\Lambda,\Omega,\Gamma$, will run from $1$ to $n-1$ and label spatial coordinate indices. Partial derivative with respect to the coordinates $(x^\mu)$ will be denoted by $\del{\mu} = \frac{\del{}\;}{\del{}x^\mu}$.
We will often use $t$ to denote the time coordinate $x^0$, that is, $t=x^0$, and use the notion $\del{t} = \del{0}$ for the partial derivative with respect to the coordinate $x^0$.

We will use frames $e_j= e_j^\mu \del{\mu}$ 
throughout this article. Lower case Latin letter, e.g. $i,j,k$, will be used to label frame indices and they will run from $0$ to $n-1$ while spatial frame indices will be labelled by upper case Latin letter, e.g. $I,J,K$, that run from $1$ to $n-1$.

\subsection{Inner-products and matrices}
Throughout this article, we denote the Euclidean inner-product by $\ipe{\xi}{\zeta} = \xi^{\tr} \zeta$, $\xi,\zeta \in \Rbb^N$, and use
$|\xi| = \sqrt{\ipe{\xi}{\xi}}$
to denote the Euclidean norm. The set of $N\times N$ matrices is denoted by  $\Mbb{N}$, and we  use $\Sbb{N}$ to denote the subspace of symmetric $N\times N$-matrices. 

Given $A\in \Mbb{N}$, we define
the operator norm $|A|_{\op}$ of $A$ via
\begin{equation*}
   |A|_{\op} = \sup_{\xi\in \Rbb^N_\times} \frac{|A\xi|}{|\xi|},
\end{equation*}
where $\Rbb^N_\times = \Rbb^N\setminus\{0\}$.
For any $A,B\in \Mbb{N}$, we will also employ the notation
\begin{equation*}
    A\leq B \quad \Longleftrightarrow \quad \xi^{\tr}A\xi \leq \xi^{\tr}B\xi, \quad \forall \; \xi \in \Rbb^N.
\end{equation*}

\subsection{Sobolev spaces and extension operators\label{Sobolev}}
The $W^{k,p}$, $k\in \Zbb_{\geq 0}$, norm of a map $u\in C^\infty(U,\Rbb^N)$ with $U\subset \Tbb^{n-1}$ open is defined by
\begin{equation*}
\norm{u}_{W^{k,p}(U)} = \begin{cases} \begin{displaystyle}\biggl( \sum_{0\leq |\Ic|\leq k} \int_U |D^{\Ic} u|^p \, d^{n-1} x\biggl)^{\frac{1}{p}}  \end{displaystyle} & \text{if $1\leq p < \infty $} \\
 \begin{displaystyle} \max_{0\leq |\Ic| \leq k}\sup_{x\in U}|D^{\Ic} u(x)|  \end{displaystyle} & \text{if $p=\infty$}
\end{cases},
\end{equation*}
where $\Ic=(\Ic_1,\ldots,\Ic_{n-1})\in \Zbb_{\geq 0}^{n-1}$ denotes a multi-index and we write
$D^\Ic = \del{1}^{\Ic_1}\del{2}^{\Ic_2}\cdots\del{n-1}^{\Ic_{n-1}}$.
The Sobolev space $W^{k,p}(U,\Rbb^N)$ is then defined to be the completion of $C^\infty(U,\Rbb^N)$ with respect to the norm
$\norm{\cdot}_{W^{k,p}(U)}$. When $N=1$ or the dimension $N$ is clear from the context, we will simplify notation and write $W^{k,p}(U)$ instead of $W^{k,p}(U,\Rbb^N)$, and we will employ the standard notation $H^k(U,\Rbb^N)=W^{k,2}(U,\Rbb^N)$ throughout.


\subsection{Constants and inequalities}
We use the standard notation $a \lesssim b$
for inequalities of the form
$a \leq Cb$
in situations where the precise value or dependence on other quantities of the constant $C$ is not required.
On the other hand, when the dependence of the constant on other inequalities needs to be specified, for
example if the constant depends on the norm $\norm{u}_{L^\infty}$, we use the notation
$C=C(\norm{u}_{L^\infty})$.
Constants of this type will always be non-negative, non-decreasing, continuous functions of their argument.

We will also employ the order notation from \cite[\S 2.4]{BOOS:2021}. Since we are working with trivial bundles, we can define this notation as follows: Given maps
\begin{equation*}
  \begin{split}
f&\in C^0\bigl((0,t_0],C^\infty(B_R(\Rbb^n)\times B_R(\Rbb^m),\Rbb^p)\bigr),
\\
g&\in C^0\bigl((0,t_0],C^\infty(B_{R}(\Rbb^m),\Rbb^q)\bigr),
  \end{split}
\end{equation*}
where $t_0,R>0$ are positive constants,
we say that
\begin{equation*}
f(t,w,v) = \Ordc(g(t,v))  
\end{equation*}
if there exist a $\Rt \in (0,R)$ and a map
\begin{equation*}
\ft \in C^0\bigl((0,t_0],C^\infty\bigl(B_{\Rt}(\Rbb^n)\times B_{\Rt}(\Rbb^m),L(\Rbb^q,\Rbb^p)\bigr)\bigr)
\end{equation*}
such that
\begin{gather*}
f(t,w,v) = \tilde{f}(t,w,v)g(t,v),\quad
|\ft(t,w,v)| \leq 1 \AND
|D^s_{w,v}\ft(t,w,v)| \lesssim 1
\end{gather*}
for all $(t,w,v) \in  (0,t_0] \times B_{\Rt}(\Rbb^n) \times B_{\Rt}(\Rbb^m)$ and $s\geq 1$. For situations, where we want to bound $f(t,w,v)$ by $g(t,v)$ up to an undetermined constant of proportionality,
we define
\begin{equation*}
f(t,w,v) = \Ord(g(t,v))  
\end{equation*}
if there exist a $\Rt \in (0,R)$ and a map
\begin{equation*}
\ft \in C^0\bigl((0,t_0],C^\infty(B_R(\Rbb^n)\times B_R(\Rbb^m),L(\Rbb^q,\Rbb^p))\bigr)
\end{equation*}
such that 
\begin{gather*}
f(t,w,v) = \tilde{f}(t,w,v)g(t,v)\AND
|D^s_{w,v}\ft(t,w,v)| \lesssim 1
\end{gather*}
for all $(t,w,v) \in  (0,t_0] \times B_{\Rt}(\Rbb^n) \times B_{\Rt}(\Rbb^m)$ and $s\geq 0$.

\subsection{Curvature}
The curvature of tensor $\Rc_{ijk}{}^l$ of the background metric $\gc_{ij}$ is defined via
\begin{equation}\label{comm-Rc}
  [\Dc_i,\Dc_j]\alpha_k = \Rc_{ijk}{}^l\alpha_l
\end{equation}
for arbitrary $1$-forms $\alpha_l$.
This definition along with $\Rc_{ik}=\Rc_{ijk}{}^j$ for
the Ricci tensor fixes the curvature conventions that will be employed for all curvature tensors appearing in this article.   

\section{Reduced conformal field equations} In order to establish the existence of solutions to the conformal Einstein-Euler-scalar field equations, we need to replace the conformal Einstein equations \eqref{confESFAb} or equivalently \eqref{confESFAaN} with a gauge reduced version. In the following, we employ a (conformal) wave gauge defined by the constraint
\begin{equation}\label{wave-gauge}
X^k =0,
\end{equation}
where  $X^k$ is given above by \eqref{Xdef},
and consider the wave gauge reduced equations 
\begin{equation}\label{confESFF}
  \begin{split}
  -2R_{ij}+2\nabla_{(i} X_{j)}
  &=-\frac{2}{\tau}\nabla_i \nabla_j \tau
  -2\Ttt_{ij}\\
  &\oset{\eqref{Ccdef}}{=}-\frac{2}{\tau}\bigl(
  \Dc_i \Dc_j \tau - \Cc_i{}^k{}_j\Dc_k \tau  \bigr)
  {-2\Ttt_{ij}},
  \end{split}
\end{equation}
which we will refer to as the \textit{reduced conformal Einstein equations}. Recall that $\Ttt_{ij}$ is defined by \eqref{eq:Tttconfphys}.

For the moment, we \emph{assume} that the wave gauge constraint \eqref{wave-gauge} holds. Because we establish below in Proposition \ref{prop-ES-constr} that this wave gauge constraint propagates, we lose nothing by making this assumption. Now, by \eqref{Ccdef}, \eqref{Xdef} and \eqref{wave-gauge}, we observe that the conformal scalar field equation \eqref{confESFCb}
can be expressed as
\begin{equation} \label{confESFG}
  g^{ij}\Dc_{i}\Dc_{j}\tau=0.
\end{equation}
Further, using \eqref{Ccdef}, \eqref{eq:conffluidrel} and \eqref{eq:AAACF2}, we note that the conformal Euler equations \eqref{confESAc} can be expressed as
\begin{equation}
  \label{confESFH}
  \att^i{}_{j k}\Dc_i V^k
  =\Gt_{jsl}V^s V^l
\end{equation}
where 
\begin{equation}
  \label{eq:DefGjsl}
  \begin{split}
  \Gt_{jsl}=&\tau^{-1}
  \frac{(n-1)c_s^2-1}{c_s^2(n-2)} g_{j(s} \Dc_{l)}\tau \\    
  &- \frac{1}{2}\Bigl(
  \frac{3 c_s^2+1}{c_s^2} \frac{V^q V^p}{v^2}+g^{pq}\Bigr)  g_{j(s}\Dc_{l)} g_{p q} 
  -  \Dc_{(l} g_{s) j}.
  \end{split}
\end{equation}

Gathering \eqref{confESFF}, \eqref{confESFG} and \eqref{confESFH} together, we have
\begin{align}
-2R_{ij}+2\nabla_{(i} X_{j)}& =-\frac{2}{\tau}\bigl(
  \Dc_i \Dc_j \tau - \Cc_i{}^k{}_j\Dc_k \tau  \bigr)
  {-2\Ttt_{ij}}, 
\label{confESFFa}\\
  g^{ij}\Dc_{i}\Dc_{j}\tau    &=0, \label{confESFGa}\\
  \label{confESFHa}
  \att^i{}_{j k}\Dc_i V^k
  &=\Gt_{jsl}V^s V^l.
\end{align}
We will refer to these equations as the \textit{reduced conformal Einstein-Euler-scalar field equations}. For use below, we recall that the reduced Ricci tensor can be expressed as
\begin{equation} \label{red-Ricci}
-2R_{ij}+2\nabla_{(i} X_{j)}=g^{kl}\Dc_k \Dc_l g_{ij} + Q_{ij}
+2g^{kl}g_{m(i}\Rc_{j)kl}{}^m
\end{equation}
where 
\begin{equation}\label{Q-def}
  \begin{split}
Q_{ij} = \frac{1}{2}g^{kl}g^{mn}\Bigl(&\Dc_i g_{mk} \Dc_j g_{n l}
+2 \Dc_{n}g_{il}\Dc_{k}g_{jm} - 2\Dc_{l}g_{in} \Dc_{k}g_{jm}\\
&-2 \Dc_{l}g_{in}\Dc_j g_{mk} -2 \Dc_{i}g_{mk}\Dc_{l}g_{jn}\Bigr),
  \end{split}
\end{equation}
and, as above, $\Rc_{ijk}{}^l$ denotes the curvature tensor of the background metric $\gc_{ij}$. By differentiating \eqref{confESFGa} and employing the commutator formula
\[\Dc_k\Dc_i \Dc_j\tau-\Dc_i\Dc_j \Dc_k \tau
=\Rc_{kij}{}^l\Dc_l\tau,\]
we also note that
\begin{align} \label{confESFI} 
  g^{ij}\Dc_i\Dc_{j} \Dc_{k}\tau
  =& g^{il}g^{jm}\Dc_kg_{lm}  \Dc_{i} \Dc_{j}\tau
     -g^{ij}\Rc_{kij}{}^l \Dc_l \tau.
\end{align}

\section{Choice of background metric}
The background metric $\gc_{ij}$ is thus far arbitrary. Since the conformal FLRW metric in \eqref{eq:FLRWEulerSFExplSol1} -- \eqref{eq:FLRWEulerSFExplSol4} is flat in leading order at $t=0$ and we are interested in nonlinear perturbations of this solution, we are motivated to restrict our attention to background metrics that are flat, which by definition, means that the curvature tensor vanishes, that is,
\begin{equation}\label{curvature}
\Rc_{ijk}{}^l = 0.
\end{equation}
By the commutator formula \eqref{comm-Rc}, the vanishing of the curvature implies that
\begin{equation}\label{commutator}
[\Dc_i,\Dc_j] =0.
\end{equation}

\section{Local existence and continuation in Lagrangian coordinates}
\label{sec:locexist_cont}

Thus far, we have expressed the reduced conformal Einstein-Euler-scalar field equations \eqref{confESFFa}-\eqref{confESFHa} in an arbitrary frame. We now turn to establishing a local-in-time existence and uniqueness result for solutions to these equations along with a continuation principle. We do so following the same approach as in \cite[\S5]{BeyerOliynyk:2021}, which involves formulating the conformal Einstein-Euler-scalar field equations as a first order system of equations in a coordinate frame and then solving it in Lagrangian coordinates. Because the modifications required to adapt the local-in-time existence and uniqueness theory from \cite[\S5]{BeyerOliynyk:2021} for the Einstein-scalar field equations  to allow for coupling with the Euler equations are straightforward, most of the proofs in this section will be omitted and we refer the interested reader to \cite[\S5]{BeyerOliynyk:2021} for the details.

Following \cite[\S5]{BeyerOliynyk:2021}, we fix the coordinate frame by introducing coordinates
$(\xh^\mu)=(\xh^0,\xh^\Lambda)$ on a spacetime $M_{t_1,t_0}$ of the form
\eqref{Mt1t0-def},
and we assume that the components of the flat background metric $\gc$ in this coordinate system, denoted $\gchat_{\mu\nu}$, are given by
\begin{equation}\label{gct-def}
    \gchat_{\mu\nu}=\eta_{\mu\nu}:= -\delta_\mu^0\delta_\nu^0 + \delta_\mu^\Lambda \delta_\nu^\Gamma \delta_{\Gamma\Lambda}.
\end{equation}
In this coordinate frame, the Levi-Civita connection of the background metric coincides with partial differentiation with respect to the coordinates $(\xh^\mu)$, that is,
$\hat{\Dc}_\mu = \delh{\mu}$. 
Using this, we find, with the help of \eqref{Ccdef}, \eqref{eq:Tttconfphys},  \eqref{eq:DefGjsl}, \eqref{red-Ricci} and \eqref{curvature}, that the reduced conformal Einstein-Euler-scalar field equations \eqref{confESFFa}-\eqref{confESFHa} are given in the coordinates $(\xh^\mu)$ by
\begin{align}
\gh^{\alpha\beta}\delh{\alpha}\delh{\beta}\gh_{\mu\nu} + \Qh_{\mu\nu} &= -\frac{2}{\tauh}\bigl( \delh{\mu}\delh{\nu}\tauh - \Gammah^\gamma_{\mu\nu} \delh{\gamma}\tauh \bigr)-2\Ttth_{\mu\nu}, \label{tconfESF-A.1}\\
\gh^{\alpha\beta}\delh{\alpha}\delh{\beta}\tauh &= 0,  \label{tconfESF-A.2}\\
{\hat\att}^\gamma{}_{\mu\nu}\delh{\gamma}\Vh^\nu &= \Gh_{\mu\nu\gamma}\Vh^\nu \Vh^\gamma,
\label{tconfESF-A.3}
\end{align}
where $\tauh$ denotes the scalar field $\tau$ viewed as a function of the coordinates $(\xh^\mu)$, $\gh_{\mu\nu}$ are the components of the conformal metric $g$ with respect to the coordinates $(\xh^\mu)$, $\Gammah^\gamma_{\mu\nu} =
     \frac{1}{2}\gh^{\gamma\lambda}\bigl(\delh{\mu}\gh_{\nu\lambda} +\delh{\nu}\gh_{\mu\lambda}-\delh{\lambda}\gh_{\mu\nu}\bigr)$
are the Christoffel symbols of $\gh_{\mu\nu}$,
\begin{equation}
  \label{eq:defQh}
  \begin{split}
\Qh_{\mu \nu} = \frac{1}{2}\gh^{\alpha \beta}\gh^{\sigma \delta}\bigl(&\delh{\mu} \gh_{\sigma \alpha} \delh{\nu} \gh_{\delta \beta}
+2 \delh{\delta}\gh_{\mu \beta}\delh{\alpha}\gh_{\nu \sigma} - 2\delh{\beta}\gh_{\mu \delta} \delh{\alpha}\gh_{\nu \sigma}\\
&-2 \delh{\beta}\gh_{\mu \delta}\delh{\nu} \gh_{\sigma\alpha} -2 \delh{\mu }\gh_{\sigma\alpha}\delh{\beta}\gh_{\nu \delta}\bigr), 
  \end{split}
\end{equation}
\begin{equation} \label{eq:Ttth-def}
\Ttth_{\mu\nu}=2P_0\Bigl(\frac{1+c_s^2}{c_s^2} \Vh^{-2} \Vh_\mu\Vh_\nu+\frac{1-c_s^2}{(n-2)c_s^2}\gh_{\mu\nu}\Bigr) \tauh^{\frac{c_s^2-1}{c_s^2(n-2)}}\Vh^{-\frac{(1+c_s^2)}{c_s^2}},
\end{equation}
with $\Vh^2 =-\gh_{\mu\nu}\Vh^\mu \Vh^\nu$,
\begin{equation} \label{eq:Ah-def}
    {\hat\att}^\gamma{}_{\mu \nu}
  =\frac{3 c_s^2+1}{c_s^2} \frac{\Vh_\mu \Vh_\nu \Vh^\gamma}{\Vh^2} +\Vh^\gamma \gh_{\mu \nu}
    +2{\delta^\gamma}_{(\nu} \Vh_{\mu)}
\end{equation}
and
\begin{equation} \label{eq:Gh-def}
  \begin{split}
\Gh_{\mu \nu \gamma} = &-\tauh^{-1}
  \frac{(n-1)c_s^2-1}{c_s^2(n-2)} \gh_{\mu(\nu} \delh{\gamma)}\tauh    \\ 
  &- \frac{1}{2}\Bigl(
  \frac{3 c_s^2+1}{c_s^2} \frac{\Vh^\alpha \Vh^\beta}{\Vh^2}+\gh^{\alpha\beta}\Bigr)  \gh_{\mu(\nu}\delh{\gamma)} \gh_{\alpha \beta} 
  -  \delh{(\gamma} \gh_{\nu) \mu}.
  \end{split}
\end{equation}
We further observe from \eqref{Ccdef} and \eqref{Xdef} that the coordinate components of the wave gauge vector field $X$, denoted $\Xh^\gamma$, are given by
\begin{equation} \label{Xt-rep}
    \Xh^\gamma = \gh^{\mu\nu}\Gammah^\gamma_{\mu\nu}.
\end{equation}

\subsection{Initial data and constraint propagation}
On the initial hypersurface 
\begin{equation*}
\Sigma_{t_0}=\{t_0\}\times\Tbb^{n-1},
\end{equation*}
we specify the following initial data for the reduced conformal Einstein-Euler-scalar field  equations \eqref{tconfESF-A.1}-\eqref{tconfESF-A.3}:
\begin{align}
    \gh_{\mu\nu}\bigl|_{\Sigma_{t_0}} &= \gr_{\mu\nu}, \label{gt-idata} \\
    \delh{0}\gh_{\mu\nu}\bigl|_{\Sigma_{t_0}} &= \grave{g}_{\mu\nu}, \label{dt-gt-idata} \\
    \tauh\bigl|_{\Sigma_{t_0}} &= \taur, \label{taut-idata}\\
    \delh{0}\tauh \bigl|_{\Sigma_{t_0}} &= \taug,\label{dt-taut-idata}\\
    \Vh^\mu  \bigl|_{\Sigma_{t_0}} &= \Vr^\mu.
        \label{Vh-idata}
\end{align}
Since we want solutions of the reduced conformal Einstein-Euler-scalar field  equations \eqref{tconfESF-A.1}-\eqref{tconfESF-A.3} to also satisfy the conformal Einstein-Euler-scalar field equation \eqref{confESFAb}-\eqref{confESAc},
the initial data will need to be chosen so that it satisfies the constraint equations
\begin{align}
   \nh_\mu\Bigr( \Gh^{\mu\nu} - \frac{1}{\tauh}\nablah^\mu\nablah^\nu \tauh-2 \hat T^{\text{Fl}}{}^{\mu\nu}\Bigr)\Bigl|_{\Sigma_{t_0}} &=0, \!\!\quad \text{(gravitational constraints)}
  \label{grav-constr}\\
  \Xh^\mu\bigl|_{\Sigma_{t_0}} &=0, \!\!\quad \text{(wave gauge constraints)}
  \label{wave-constr}
\end{align}
where $\Gh^{\mu\nu}$ and $\nablah_\mu$ are the Einstein tensor and Levi-Civita connection of the conformal metric $\gh_{\mu\nu}$, respectively, and $\nh = d\xh^0$ (i.e. $\nh_\mu = \delta^0_\mu)$. As established in  
Proposition \ref{prop-ES-constr} below, solutions of the reduced conformal Einstein-Euler-scalar field equations that are generated from initial data satisfying both of these constraint equations will also solve the conformal Einstein-Euler-scalar field equations.

\begin{rem} \label{idata-rem}
The \emph{geometric initial data} on $\Sigma_{t_0}$ is  $\{\gtt,\Ktt,\taur,\taugr,\Vr\}$
where $\gtt=\gtt_{\Lambda\Omega}d\xh^\Lambda \otimes d\xh^\Omega$ is the \textit{spatial metric} and $\Ktt=\Ktt_{\Lambda\Omega}d\xh^\Lambda \otimes d\xh^\Omega$ is the \textit{second fundamental form}, which are determined from the initial data $\{\gr_{\mu\nu}, \grave{g}_{\mu\nu}, \taur, \taug,\Vr^\mu\}$ via
\begin{equation}
  \label{gtt-def1}
  \gtt_{\Lambda\Omega} = \gr_{\Lambda\Omega} \AND \Ktt_{\Lambda\Omega} = \frac{1}{2\Ntt}(\ggr_{\Lambda\Omega}-2\Dtt_{(\Lambda} \btt_{\Omega)}),
\end{equation}
respectively. Here, 
\begin{equation*}
  \btt_\Lambda = \gr_{0\Lambda} \AND \Ntt^2 = -\gr_{00}+ \btt^\Lambda\btt_\Lambda
\end{equation*}
define the \textit{shift}
$\btt=\btt_\Lambda d\xh^\Lambda$ and  \textit{lapse} $\Ntt$, respectively,  $\Dtt_\Lambda$
denotes the Levi-Civita connection of the spatial metric $\gtt_{\Lambda\Omega}$, and we have used the inverse metric $\gtt^{\Lambda\Omega}$ of $\gtt_{\Lambda\Omega}$ to raise indices, e.g.\ $\btt^\Lambda = \gtt^{\Lambda\Omega}\btt_\Omega$. The importance of the geometric initial data is that it represents the physical part (i.e. non-gauge) of the initial data. Moreover, the gravitational constraint equations \eqref{grav-constr} can be formulated entirely in terms of the geometric initial data. On the other hand, it is always possible for a given choice of geometric initial data $\{\gtt,\Ktt,\taur,\taugr,\Vr\}$ to choose the remaining initial data so that the wave gauge constraints \eqref{wave-constr} are satisfied; see \cite[Rem.~5.1]{BeyerOliynyk:2021} for details.
\end{rem}

\begin{prop} \label{prop-ES-constr}
Suppose $\gh_{\mu\nu},\tauh\in C^3(M_{t_1,t_0})$
and $\Vh^\mu \in  C^1(M_{t_1,t_0})$ solve the reduced conformal Einstein-Euler-scalar field equations \eqref{tconfESF-A.1}-\eqref{tconfESF-A.3} and the constraints \eqref{grav-constr}-\eqref{wave-constr}, and let
\begin{equation*}
\hat T^{\text{Fl}}_{\mu\nu}= 
   {P_0}\tauh^{\frac{(c_s^2-1)}{c_s^2(n-2)}} \Vh^{-\frac{1+c_s^2}{c_s^2}}\biggl(\frac{1+c_s^2}{c_s^2} \Vh^{-2} \Vh_\mu\Vh_\nu+\gh_{\mu\nu}\biggr).
\end{equation*}
Then  $\gh_{\mu\nu},\tauh$ and $\Vh^\mu$ satisfy the conformal Einstein-Euler-scalar field equations 
\begin{equation} \label{tconfESF-Aa}
  \begin{split}
\Gh_{\mu\nu}&= \tauh^{-1}\nablah_\mu\nablah_\nu \tauh+2 \hat T^{\text{Fl}}_{\mu\nu}, \quad 
  \Box_{\gh} \tauh =0, \\
                {\hat\att}^\gamma{}_{\mu\nu }\nablah_\gamma \Vh^\nu
                &=- \frac{c_s^2(n-1)-1}{c_s^2(n-2)}\tauh^{-1} \Vh_\mu \Vh^\gamma\nablah_\gamma\tauh, 
  \end{split}
\end{equation}
and the wave gauge constraint $\Xh^\mu = 0$
in $M_{t_1,t_0}$.
\end{prop}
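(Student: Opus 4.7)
The proof follows the classical constraint propagation scheme, so the plan is to reduce everything to a homogeneous linear wave system for $\Xh^\mu$ with vanishing Cauchy data. First I would verify the algebraic equivalence: once $\Xh^\mu\equiv 0$, the reduced Ricci equation \eqref{tconfESF-A.1} together with the scalar field equation \eqref{tconfESF-A.2} implies the Einstein form \eqref{tconfESF-Aa}. This is a direct trace-reversal computation using $\Gh_{\mu\nu}=\Rh_{\mu\nu}-\tfrac12\gh_{\mu\nu}\Rh$, the vanishing of $\Box_{\gh}\tauh$, and the algebraic identity $\Ttth_{\mu\nu}-\tfrac12\gh_{\mu\nu}\Ttth^\alpha{}_\alpha=2\hat T^{\mathrm{Fl}}_{\mu\nu}$, which is built into the definitions \eqref{eq:Tttconfphys} and $T^{\mathrm{Fl}}_{ij}$. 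The Euler part is immediate, since \eqref{tconfESF-A.3} is just the coordinate form of \eqref{confESFHa}, and the identity derived just above \eqref{confESFAb} recasts \eqref{confESFHa} as the stated conformal Euler equation.

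Next I would derive the evolution system for $\Xh^\mu$. The key observation is that the reduced equation \eqref{tconfESF-A.1} is equivalent to $\Rh_{\mu\nu}-\nablah_{(\mu}\Xh_{\nu)}=\tauh^{-1}\nablah_\mu\nablah_\nu\tauh+\Ttth_{\mu\nu}$, so, after trace-reversal and using $\Box_{\gh}\tauh=0$ (which follows from \eqref{tconfESF-A.2} together with $\Xh^\mu|_{\Sigma_{t_0}}=0$ not being needed here because $g^{ij}\Dc_i\Dc_j\tau$ and $\Box_g\tau$ differ by $X^k\Dc_k\tau$), one obtains
\[
\Gh_{\mu\nu}-\bigl(\nablah_{(\mu}\Xh_{\nu)}-\tfrac12\gh_{\mu\nu}\nablah_\alpha\Xh^\alpha\bigr)=\tauh^{-1}\nablah_\mu\nablah_\nu\tauh+2\hat T^{\mathrm{Fl}}_{\mu\nu}-\tfrac12\gh_{\mu\nu}\bigl(\Xh^\alpha\nablah_\alpha\tauh/\tauh\bigr)+\ldots,
\]
with the error denoted by $\ldots$ controlled by $\Xh$ alone. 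Taking the divergence and using the contracted Bianchi identity $\nablah^\mu\Gh_{\mu\nu}=0$ produces on the left a second-order operator whose principal part is $-\tfrac12\Box_{\gh}\Xh_\nu$. On the right I would use the Euler equation \eqref{tconfESF-A.3} to compute $\nablah^\mu\hat T^{\mathrm{Fl}}_{\mu\nu}$; the structure of $\hat\att$ and $\Gh_{\mu\nu\gamma}$ is tailored precisely so that this divergence is expressible in terms of $\nablah\tauh$ and $\Xh$ alone. Combined with $\nablah^\mu(\tauh^{-1}\nablah_\mu\nablah_\nu\tauh)$, which simplifies after one commutation using $[\nablah_\mu,\nablah_\nu]\nablah^\mu\tauh=\Rh_\nu{}^\mu\nablah_\mu\tauh$ and the reduced Einstein equation, all explicit matter terms cancel and one obtains a homogeneous linear hyperbolic system
\[
\Box_{\gh}\Xh_\nu+\Bh_\nu{}^\alpha\Xh_\alpha+\Ch_\nu{}^{\alpha\beta}\nablah_\alpha\Xh_\beta=0,
\]
with coefficients determined by $\gh_{\mu\nu}$, $\tauh$, $\Vh^\mu$ and their first derivatives.

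Third, I would check that the Cauchy data for this system vanish on $\Sigma_{t_0}$. The wave gauge constraint \eqref{wave-constr} gives $\Xh^\mu|_{\Sigma_{t_0}}=0$ directly, and hence all \emph{spatial} derivatives of $\Xh^\mu$ vanish there as well. To get $\delh{0}\Xh^\mu|_{\Sigma_{t_0}}=0$, I would evaluate the $\mu 0$ components of the reduced Einstein equation on $\Sigma_{t_0}$ and subtract the gravitational constraints \eqref{grav-constr}: the difference is exactly $\nablah^{(\mu}\Xh^{0)}-\tfrac12\gh^{\mu 0}\nablah_\alpha\Xh^\alpha$ evaluated at $t_0$, and once all spatial derivatives of $\Xh$ are zero this is a nondegenerate algebraic system in $\delh{0}\Xh^\mu|_{\Sigma_{t_0}}$ that forces it to vanish (nondegeneracy here uses that $\gh^{00}\neq 0$, i.e.~$\Sigma_{t_0}$ is spacelike for $\gh$, which is part of the standing hypothesis on admissible initial data).

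Finally, standard uniqueness for the linear hyperbolic system above yields $\Xh^\mu\equiv 0$ on $M_{t_1,t_0}$, at which point the algebraic equivalence from the first step gives \eqref{tconfESF-Aa}. The main obstacle is the computation in the second step: one must verify that the Euler equation \eqref{tconfESF-A.3} produces exactly the divergence identity for $\hat T^{\mathrm{Fl}}$ that is needed to cancel the inhomogeneous matter contribution in $\nablah^\mu(\mathrm{RHS})$. This cancellation is not an accident but reflects the fact that the Frauendiener--Walton formulation of the Euler system is equivalent to $\nablab^i\Tb^{\mathrm{Fl}}_{ij}=0$ together with the scalar field equation, which is how the system was designed in \eqref{eq:EFE.0}–\eqref{eq:AAA2}; once this identity is in hand the remaining steps are routine.
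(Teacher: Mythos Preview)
Your proposal is correct and follows essentially the same route as the paper: compute $\nablah^\mu\hat T^{\mathrm{Fl}}_{\mu\nu}$ via the Euler equation \eqref{tconfESF-A.3}, trace-reverse the reduced Ricci equation to an Einstein-tensor form, take a divergence and invoke the contracted Bianchi identity to obtain a linear wave equation for $\Xh^\mu$, then use the constraints \eqref{grav-constr}--\eqref{wave-constr} to kill the Cauchy data and conclude $\Xh^\mu\equiv 0$ by uniqueness. One small wording slip: before $\Xh^\mu$ is shown to vanish you should use $\Box_{\gh}\tauh=-\Xh^\alpha\nablah_\alpha\tauh$ (as the paper does) rather than $\Box_{\gh}\tauh=0$, though your displayed equation already carries the corresponding $\Xh$-term, so the argument is unaffected.
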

\begin{proof}
By a straightforward calculation, we find that the divergence of the tensor $\hat T^{\text{Fl}}_{\mu\nu}$ is given by
\begin{equation*}
\nablah^\mu \hat T^{\text{Fl}}_{\mu\nu} =    \frac{(c_s^2-1)}{c_s^2(n-2)}\tauh^{-1}\nablah^\mu\tauh \hat T^{\text{Fl}}_{\mu\nu}+  \frac{1+c_s^2}{2c_s^2}P_0\tauh^{\frac{(c_s^2-1)}{c_s^2(n-2)}} \Vh^{-\frac{3c_s^2+1}{c_s^2}}{\hat\att}^\gamma{}_{\nu \mu}\nablah_\gamma \Vh^\mu.
\end{equation*}
With the help of \eqref{tconfESF-A.3}, i.e. 
\begin{equation} \label{prop-ES-constr4}
{\hat\att}^\gamma{}_{\mu\nu }\nablah_\gamma \Vh^\nu=- \frac{c_s^2(n-1)-1}{c_s^2(n-2)}\tau^{-1} \Vh_\mu \Vh^\gamma\nablah_\gamma\tauh,
\end{equation} 
this becomes
\begin{align*}
    \nablah^\mu \hat T^{\text{Fl}}_{\mu\nu}=&    
    \frac{(c_s^2-1)}{c_s^2(n-2)}\tauh^{-1}\nablah^\mu\tauh \hat T^{\text{Fl}}_{\mu\nu}\\
    &- \frac{1+c_s^2}{2c_s^2}P_0\tauh^{\frac{(c_s^2-1)}{c_s^2(n-2)}} \Vh^{-\frac{3c_s^2+1}{c_s^2}}\frac{c_s^2(n-1)-1}{c_s^2(n-2)}\tauh^{-1} \Vh_\nu \Vh^\gamma\nablah_\gamma\tauh\\
    =& -\tauh^{-1}  {P_0}\tauh^{\frac{(c_s^2-1)}{c_s^2(n-2)}} \Vh^{-\frac{1+c_s^2}{c_s^2}}\biggl(\frac{1+c_s^2}{c_s^2} \Vh^{-2} \Vh_\mu\Vh_\nu+\frac{1-c^2_s}{c_s^2(n-2)}\gh_{\mu\nu}\biggr)\nablah^\mu \tauh\\
    \oset{\eqref{eq:Ttth-def}}{=}& -\frac12\tauh^{-1}\nablah^\mu\tauh\hat{\Ttt}_{\mu\nu}.
\end{align*}
From this identity and \eqref{tconfESF-A.2}, i.e.\ 
\begin{equation} \label{prop-ES-constr3}
    \Box_{\gh}\tauh =-\Xh^\mu \nablah_\mu\tauh,
\end{equation}
we observe that
\begin{align}
    &\nablah_\mu\bigl(\tauh^{-1}\nablah^\mu\nablah^\nu \tauh + 2 \hat T^{\text{Fl}}{}^{\mu\nu}\bigr)\\
    &= -\tauh^{-2}\nablah_\mu\tauh \nablah^\mu\nablah^\nu \tauh + \tauh^{-1}\nablah_\mu \nablah^\mu\nablah^\nu  \tauh-\tauh^{-1}\nabla_\mu\tauh\hat{\Ttt}{}^{\mu\nu}\notag\\
    &=\tauh^{-1} \nablah^\nu \nablah_\mu \nablah^\mu \tauh +\tauh^{-1}\Rh^{\nu\mu}\nablah_\mu\tauh
    -\tauh^{-1}\nabla_\mu\tauh\bigl(\tauh^{-1}\nablah^\mu\nablah^\nu \tauh+\hat{\Ttt}{}^{\mu\nu}\bigr)\notag \\
    &=
   -\tauh^{-1} \nablah^\nu(\Xh^\mu \nablah_\mu\tauh)
    -\tauh^{-1}\nabla_\mu\tauh\bigl(-R^{\mu\nu}+\tauh^{-1}\nablah^\mu\nablah^\nu \tauh+\hat{\Ttt}{}^{\mu\nu}\bigr)\label{prop-ES-constr1}.
\end{align}

Next, expressing \eqref{tconfESF-A.1} as
\begin{equation} \label{prop-ES-constr2}
    -2\Rh_{\mu\nu} + 2\nablah_{(\mu}\Xh_{\nu)} = -2\tauh^{-1}\nablah_\mu\nablah_\nu \tauh-2\hat{\Ttt}_{\mu\nu},
\end{equation}
we have, after rearranging, that 
\begin{equation*}
    \Gh^{\mu\nu}=\nablah^{(\mu}\Xh^{\nu)} -\frac{1}{2}\nablah_\alpha \Xh^\alpha \gh^{\mu\nu}+\frac{1}{2}\Xh^\alpha \nablah_\alpha \ln(\tauh) \gh^{\mu\nu}+\tauh^{-1}\nablah^\mu\nablat^\nu \tauh + 2 \hat T^{\text{Fl}}{}^{\mu\nu},
\end{equation*}
where in deriving this we have again used
\eqref{prop-ES-constr3}. Taking the divergence of this expression, we find with the help of the second contracted Bianchi identity $\nablah_\mu \Gh^{\mu\nu}=0$ and \eqref{prop-ES-constr1}-\eqref{prop-ES-constr2}, that 
\begin{equation*}
  \begin{split}
0=&\nablah_\mu\nablah^{(\mu}\Xh^{\nu)} -\frac{1}{2}\nablah^\nu\nablah_\mu \Xh^\mu +\frac{1}{2}\nabla^\nu\bigl(\Xh^\mu \nablah_\mu \ln(\tauh)\bigr)  -\tauh^{-1} \nablah^\nu(\Xh^\mu \nablah_\mu\tauh) \\
&-\tauh^{-1}\nabla_\mu\tauh \nablah^{(\mu}\Xh^{\nu)}. 
  \end{split}
\end{equation*}
Re-expressing this as
\begin{equation*}
  \begin{split}
0=&\frac{1}{2}\Box_{\gh}\Xh^\nu+\frac{1}{2}\Rh^\nu{}_\mu \Xh^\mu +\frac{1}{2}\nabla^\nu\bigl(\Xh^\mu \nablah_\mu \ln(\tauh)\bigr)  -\tauh^{-1} \nablah^\nu(\Xh^\mu \nablah_\mu\tauh) \\
&-\tauh^{-1}\nabla_\mu\tauh \nablah^{(\mu}\Xh^{\nu)},
  \end{split}
\end{equation*}
we see that $\Xh^\mu$ satisfies a linear wave equation on $M_{t_0,t_1}$. 
Since the constraints \eqref{grav-constr}-\eqref{wave-constr} imply by a well known argument, e.g. see \cite[\S 10.2]{Wald:1994}, that $\Xh^\mu$ and $\delh{0}\Xh^\mu$ vanish in $\Sigma_{t_0}$, we conclude from the uniqueness of solutions to linear wave equations that $\Xh^\mu$ must vanish in $M_{t_1,t_0}$. By \eqref{prop-ES-constr4},  \eqref{prop-ES-constr3} and \eqref{prop-ES-constr2}, it then follows that the triple $\{\gh_{\mu\nu},\tauh,\Vh^\mu\}$ solves the conformal Einstein-Euler-scalar field equations \eqref{tconfESF-Aa} in $M_{t_1,t_0}$, which completes the proof.
\end{proof}

\subsection{First order formulation\label{fof}}
Following \cite[\S5.3]{BeyerOliynyk:2021}, we introduce first order variables
\begin{gather}
    \hh_{\beta\mu\nu} = \delh{\beta}\gh_{\mu\nu}, \quad
    \vh_{\mu} = \delh{\mu}\tauh \AND
    \wh_{\mu\nu} = \delh{\mu}\delh{\nu}\tauh, \label{tvars}
\end{gather}
and define a vector field $\chih^\mu$ via
\begin{equation}
  \chih^\mu = \frac{1}{|\vh|_{\gh}^2} \vh^{\mu}, \label{chit-def}
\end{equation}
which we note satisfies $\chih(\tauh)=1$.
We will always assume that $\vh^\mu$ is timelike (i.e.\ $|\vh|_{\gh}^2 <0$) in order to ensure that $\hat{\chi}^\mu$ remains well defined and timelike. 

Carrying out the same computations as in \cite[\S5.3]{BeyerOliynyk:2021}, it is straightforward to verify that the conformal Einstein-Euler-scalar field equations \eqref{tconfESF-A.1}-\eqref{tconfESF-A.3} can be cast in the first order form
\begin{align}
   \Bh^{\lambda\beta\alpha} \delh{\alpha}\hh_{\beta\mu\nu}&= \chih^\lambda \Bigr(\Qh_{\mu\nu}+\frac{2}{\tauh}\bigl( \wh_{(\mu \nu)} - \Gammah^\gamma_{\mu\nu} \vh_\gamma\bigl)+2\Ttth_{\mu\nu} \Bigr), \label{tconf-ford-B.1}\\
     \Bh^{\lambda\beta\alpha}\delh{\alpha}\wh_{\beta\mu}
     & =-\chih^\lambda\gh^{\alpha \sigma}\gh^{\beta\delta}\hh_{\mu\sigma\delta}\wh_{\alpha\beta}, \label{tconf-ford-B.2}\\
      \Bh^{\lambda\beta\alpha}\delh{\alpha}\zh_{\beta} &=0, \label{tconf-ford-B.3} \\
      \chih^\alpha \delh{\alpha}\gh_{\mu\nu}&= \chih^\alpha \hh_{\alpha\mu\nu}, \label{tconf-ford-B.4}\\
     \chih^\alpha \delh{\alpha} \vh_{\mu}&= \chih^\alpha \wh_{\alpha\mu},  \label{tconf-ford-B.5}\\
     \chih^\alpha \delh{\alpha} \tauh&= \chih^\alpha \zh_{\alpha},  \label{tconf-ford-B.6}\\
     {\hat\att}^\gamma{}_{\mu\nu}\delh{\gamma}\Vh^\nu &= \Gh_{\mu\nu\gamma}\Vh^\nu \Vh^\gamma, \label{tconf-ford-B.7}
\end{align}
where 
\begin{equation}\label{Bt-def}
 \Bh^{\lambda\beta\alpha}=    -\chih^\lambda\gh^{\beta\alpha} -\chih^\beta \gh^{\lambda\alpha} 
  + \gh^{\lambda\beta}\chih^\alpha
\end{equation}
and $\zh_\mu$ should be interpreted as being the derivative of $\tauh$, i.e. 
\begin{equation} \label{zh-def}
    \zh_\mu = \delh{\mu}\tauh.
\end{equation}

\subsection{Lagrangian coordinates\label{Lag-coordinates}}
Following \cite{BeyerOliynyk:2021}, we introduce Lagrangian coordinates $(x^\mu)$ adapted to the vector field $\chih^\alpha$ and consider a transformed version of the system \eqref{tconf-ford-B.1}-\eqref{tconf-ford-B.7}. The primary purpose of doing so is that the Lagrangian coordinates allow us to use the scalar field $\tau$ as a time coordinate, which synchronizes the singularity; see Section \ref{temp-synch} below for details.

The Lagrangian coordinates $(x^\mu)$ are defined via the map
\begin{equation} \label{lemBa.1a}
\xh^\mu = l^\mu(x) := \Gc_{x^0-t_0}^\mu(t_0,x^\Lambda), \quad \forall\, (x^0,x^\Lambda)\in M_{t_1,t_0},
\end{equation}
where $\Gc_s(\xh^\lambda) = (\Gc^\mu_s(\xh^\lambda))$
denotes the flow map of $\chih^\mu$, i.e. 
\begin{equation*}
\frac{d\;}{ds} \Gc^\mu_s(\xh^\lambda) = \chih^\mu(\Gc_s(\xh^\lambda)) \AND
\Gc^\mu_{0}(\xh^\lambda) = \xh^\mu.
\end{equation*}
We note that the map $\ell^\mu$ defines a diffeomorphism
\begin{equation*}
    l \: : \: M_{t_1,t_0} \longrightarrow l(M_{t_1,t_0})\subset M_{-\infty,t_0}
\end{equation*}
that satisfies $l(\Sigma_{t_0})=\Sigma_{t_0}$ so long as the vector field $\chih^\mu$ does not vanish and remains sufficiently regular. 
In line with our coordinate conventions, we often use 
$t=x^0$
to denote the Lagrangian time coordinate. 

\begin{rem}
As we show below, by formulating the reduced conformal Ein\-stein-sca\-lar field equations in Lagrangian coordinates, the Lagrangian map $l^\mu$ becomes an additional unknown field that needs to be solved for. The local-in-time existence theory developed in Proposition \ref{lag-exist-prop} will then guarantee that $l^\mu$ exists and is well-defined on a spacetime region of the form $M_{t_1,t_0}$ for $t_1$ sufficiently close to $t_0$. Moreover, the continuation principle from Proposition \ref{lag-exist-prop} will ensure that $l^\mu$ can be extended to domains of the form  $M_{t_1^*,t_0}$ with $t_1^*<t_1$ provided the full solution to the reduced conformal Einstein-scalar field equations in Lagrangian coordinates satisfies appropriate bounds. In this way, the local-in-time existence and continuation theory from Proposition \ref{lag-exist-prop} determines the domain of definition of the Lagrangian map $l^\mu$. 
\end{rem}

By definition, $l^\mu$ solves 
the IVP
\begin{align}
    \del{0}l^\mu &= \chihu^\mu, \label{l-ev.1} \\
    l^\mu(t_0,x^\Lambda) &= \delta^\mu_0 t_0 +\delta^\mu_\Lambda x^\Lambda, \label{l-ev.2} 
\end{align}
where, here and below, we use the notation
\begin{equation}\label{fu-def}
    \underline{f}=f\circ l
\end{equation}
to denote the pull-back of scalars by the Lagrangian map $l$.
In the following, symbols without a ``hat'' will denote the geometric pull-back by the Lagrangian map $l$. For example,
\begin{equation}\label{chi-lag}
\chi^\mu = \Jcch_\nu^\mu \chihu^\nu \oset{\eqref{l-ev.1}}{=} \Jcch_\nu^\mu \Jc_0^\nu =\delta^\mu_0,
\end{equation}
\begin{equation} \label{tau-lag}
\tau = \underline{\tauh},
\end{equation}
\begin{equation} \label{g-lag}
    g_{\mu\nu} = \Jc^\alpha_\nu \Jc^\beta_\mu\ghu_{\alpha\beta},
\end{equation}
and
\begin{equation} \label{V-lag}
V^\mu = \Jcch^\mu_\nu \Vhu^\nu,
\end{equation}
where
\begin{equation} \label{Jc-def}
\Jc^\mu_\nu = \del{\nu}l^\mu
\end{equation}
is the Jacobian matrix of the map $l^\mu$ and 
\begin{equation} \label{Jcch-def}
    (\Jcch^\mu_\nu):= (\Jc^\mu_\nu)^{-1}
\end{equation}    
is its inverse. 

Recalling from \cite[\S5.4]{BeyerOliynyk:2021} that 
the Jacobian matrix $\Jc^\mu_\nu$ satisfies the
IVP \begin{align}
    \del{0}\Jc^\mu_\nu &= \Jc^\lambda_\nu \Jsc_\lambda^\mu, \label{J-ev.1} \\
    \Jc^\mu_\nu(t_0,x^\Lambda) &= \delta^0_\nu \chih^\mu(t_0,x^\Lambda)+\delta_\nu^\Lambda\delta^\mu_\Lambda, \label{J-ev.2} 
\end{align}
where
\begin{equation} \label{Jsc-def}
  \begin{split}
    \Jsc_\lambda^\mu =\frac{1}{|\vhu|^2_{\ghu}} \biggl(&
   \ghu^{\mu\sigma}\whu_{\lambda\sigma}-\ghu^{\mu\tau}\ghu^{\sigma\omega} \hhu_{\lambda\tau\omega}\vhu_\sigma \\
   &-\frac{1}{|\vhu|^2_{\ghu}}\bigl(-\ghu^{\alpha\tau}\ghu^{\beta\omega} \hhu_{\lambda\tau\omega}\vhu_\alpha\vhu_\beta+2\ghu^{\alpha\beta}\vhu_\alpha \wh_{\lambda\beta}\bigr)\ghu^{\sigma\mu}\vhu_\sigma \biggr),
  \end{split}
\end{equation}
we see from transforming  \eqref{tconf-ford-B.1}-\eqref{tconf-ford-B.7} into Lagrangian coordinates and combining it with 
\eqref{l-ev.1} and \eqref{J-ev.1} that the system
\begin{align}
   \Bhu^{\lambda\beta\alpha} \Jcch_\alpha^\gamma \del{\gamma}\hhu_{\beta\mu\nu}&= \Jc_0^\lambda \Bigr(\Qhu_{\mu\nu}+\frac{2}{\tau}\bigl( \whu_{(\mu \nu)} - \Gammahu^\gamma_{\mu\nu} \vhu_\gamma\bigl)+2\Ttthu_{\mu\nu} \Bigr), \label{tconf-ford-C.1}\\
     \Bhu^{\lambda\beta\alpha} \Jcch_\alpha^\gamma \del{\gamma}\whu_{\beta\mu},
     &=-\Jc^\lambda_0\ghu^{\alpha \sigma}\ghu^{\beta\delta}\hhu_{\mu\sigma\delta}\whu_{\alpha\beta}, \label{tconf-ford-C.2}\\
      \Bhu^{\lambda\beta\alpha} \Jcch_\alpha^\gamma \del{\gamma}\zhu_{\beta} &=0, \label{tconf-ford-C.3} \\
       \del{0}\ghu_{\mu\nu}&= \Jc_0^\alpha \hhu_{\alpha\mu\nu}, \label{tconf-ford-C.4}\\
     \del{0} \vhu_{\mu}&= \Jc_0^\alpha \whu_{\alpha\mu},  \label{tconf-ford-C.5}\\
     \del{0} \tau&= \Jc^\alpha_0\zhu_{\alpha},  \label{tconf-ford-C.6}\\
    {\underline{\hat\att}}^\alpha_{\mu\nu}\Jcch^\gamma_\alpha \del{\gamma} \Vhu^\nu &= \Ghu_{\mu\nu\gamma}\Vhu^\nu\Vhu^\gamma,\label{tconf-ford-C.7}\\
     \del{0} \Jc^\mu_\nu &=  \Jc^\lambda_\nu \Jsc^\mu_\lambda,\label{tconf-ford-C.8}\\
    \del{0}l^\mu &= \chihu^\mu, \label{tconf-ford-C.9}\end{align}
defines a first order Lagrangian formulation of the reduced conformal Ein\-stein-Eu\-ler-sca\-lar field equations in the variables $$\{\hhu_{\beta\mu\nu},\whu_{\beta\mu},\zhu_{\beta},\ghu_{\mu\nu},\vhu_\mu,\tau,\Vhu^\mu,\Jc^\mu_\nu,l^\mu\}.$$ It is worth noting that compared to the Lagrangian system considered in \cite[\S5.4]{BeyerOliynyk:2021}, the differences are the additional term $2\Ttthu_{\mu\nu}$ in \eqref{tconf-ford-C.1} and equation \eqref{tconf-ford-C.7} for the Lagrangian fluid variable $\Vhu^\mu$. As far as establishing the local-in-time existence of solutions, the term $2\Ttthu_{\mu\nu}$ in \eqref{tconf-ford-C.1} causes no additional difficulties as it is a non-principal term that depends smoothly on the Lagrangian variables. Furthermore, the evolution equations \eqref{tconf-ford-C.7} is straightforward to handle since it is manifestly symmetric hyperbolic as long as the Lagrangian coordinates do not breakdown (i.e.\ $\Jc^\mu_\nu$ remains non-singular) and the vector field $\Vhu^\mu$ remains timelike. 

\subsection{Lagrangian initial data}
With the help of \eqref{tvars}, \eqref{zh-def}, \eqref{l-ev.2}, \eqref{Jc-def} and \eqref{J-ev.2}, we observe that the reduced conformal Einstein-Euler-scalar field initial data
\eqref{gt-idata}-\eqref{Vh-idata} generates the following
initial data for the Lagrangian representation \eqref{tconf-ford-B.1}-\eqref{tconf-ford-B.7}:
\begin{align}
    l^\mu \bigl|_{\Sigma_{t_0}} &= \lr^\mu, \label{l-idata}\\
    \Jc^\mu_\nu \bigl|_{\Sigma_{t_0}} &= \Jcr^\mu_\nu, \label{Jc-idata}\\
    \Vhu^\mu\bigl|_{\Sigma_{t_0}} &= \Vr^\mu, \label{Vhu-idata}\\
    \tau\bigl|_{\Sigma_{t_0}} &= \taur, \label{tauh-idata} \\ 
    \vhu_\mu\bigl|_{\Sigma_{t_0}} &= \delta^0_\mu \taug +\delta_\mu^\Lambda \del{\Lambda} \taur, \label{vhu-idata} \\
    \whu_{\Lambda\Omega}\bigl|_{\Sigma_{t_0}} &= \del{\Lambda}\del{\Omega} \taur, \label{whu-idata-1}\\
    \whu_{0\Omega}\bigl|_{\Sigma_{t_0}} &= \del{\Omega} \taug, \label{whu-idata-2}\\
     \whu_{\Lambda 0}\bigl|_{\Sigma_{t_0}} &= \del{\Lambda} \taug, \label{whu-idata-3}\\
      \whu_{0 0}\bigl|_{\Sigma_{t_0}} &= -\frac{1}{\gr^{00}}\bigl(2\gr^{0\Lambda}\del{\Lambda}\taug+
      \gr^{\Lambda\Omega}\del{\Lambda}\del{\Omega}\taur \bigr), \label{whu-idata-4}\\
    \zhu_\mu\bigl|_{\Sigma_{t_0}} &= \delta^0_\mu \taug +\delta_\mu^\Lambda \del{\Lambda} \taur, \label{zhu-idata} \\
    \ghu_{\mu\nu}\bigl|_{\Sigma_{t_0}} &= \gr_{\mu\nu}, \label{ghu-idata} \\
    \hhu_{\alpha\mu\nu}\bigl|_{\Sigma_{t_0}} &= \delta_\alpha^0 \grave{g}_{\mu\nu}+\delta_\alpha^\Lambda\del{\Lambda}\gr_{\mu\nu}, \label{hhu-idata}
\intertext{where}
    \lr^\mu &= \delta^\mu_0 t_0 + \delta^\mu_\Lambda x^\Lambda, \label{lr-def} \\
    \vr^\mu &= \gr^{\mu\nu}(\delta^0_\nu \taug +\delta_\nu^\Lambda\del{\Lambda}\taur),\label{vr-def}\\
    \chir^\mu &= \frac{1}{|\vr|_{\gr}^2}\vr^{\mu}, \label{chir-def}\\
    \Jcr^\mu_\nu &= \delta^0_\nu \chir^\mu + \delta^\Lambda_\nu \delta^\mu_\Lambda, \label{Jcr-def}
\end{align}
and 
\begin{equation}
    (\Jcr^{-1})^\mu_\nu = \frac{1}{\chir^0}\delta^0_\nu(\delta^\mu_0 -\delta^{\mu}_\Lambda \chir^\Lambda) + \delta^\Lambda_\nu \delta^\mu_\Lambda. \label{Jcr-inv}
\end{equation}

\begin{rem}\label{rem-Lag-idata}
On the initial hypersurface $\Sigma_{t_0}=\{t_0\}\times \Tbb^{n-1}$,  our choice of initial data
implies that 
\begin{equation} \label{V-idata}
    V^\mu\bigl|_{\Sigma_{t_0}} = (\Jcr^{-1})^\mu_\nu \Vr^\nu,
\end{equation}
\begin{equation} \label{dt-tau-idata}
    \del{0}\tau\bigl|_{\Sigma_{t_0}}  = \chir^\mu \vr_\mu =1, 
\end{equation}
\begin{equation}\label{g-idata}
 g_{\mu\nu} \bigl|_{\Sigma_{t_0}}  
 = \Jcr^\alpha_\mu \gr_{\alpha\beta}\Jcr^\beta_\nu,
\end{equation}
\begin{equation} \label{dt-g-idata}
    \del{0}g_{\mu\nu} \bigl|_{\Sigma_{t_0}} = 
     \Jcr^\gamma_0\Jcr^\mu_\alpha \Jcr^\beta_\nu \bigl(\delta_\gamma^0 \grave{g}_{\alpha\beta}+\delta_\gamma^\Lambda\del{\Lambda}\gr_{\alpha\beta}\bigr) +  \gr_{\alpha\beta}\bigl(
     \Jcr_\mu^\lambda \mathring{\Jsc}{}^\alpha_\lambda\Jcr^\beta_\nu + \Jcr^\alpha_\mu \Jcr_\nu^\lambda \mathring{\Jsc}{}^\beta_\lambda\bigr)
\end{equation}
and
\begin{equation} \label{chi-idata}
    \chi^\mu \bigl|_{\Sigma_{t_0}} 
    \overset{\eqref{chi-lag}}{=} \delta^\mu_0,
\end{equation}
where $\mathring{\Jsc}{}^\mu_\nu= \Jsc^\mu_\nu|_{\Sigma_{t_0}}$.
\end{rem}

\begin{rem} \label{FLRW-idata-rem-A}
By \eqref{Kasner-wave-gauge} and \eqref{gtau-FLRW}, it is clear that the FLRW solutions \eqref{eq:FLRWEulerSFExplSol1}- \eqref{eq:FLRWEulerSFExplSol4} 
determine the initial data
\begin{equation*}
  \begin{split}
    &\bigl\{g_{\mu\nu}\bigl|_{\Sigma_{t_0}},\del{0}g_{\mu\nu}\bigl|_{\Sigma_{t_0}}, \tau\bigl|_{\Sigma_{t_0}},\del{0}\tau\bigl|_{\Sigma_{t_0}},V^\mu\bigl|_{\Sigma_{t_0}}\bigr\}_{\textrm{FLRW}}\\
    =&
    \bigl\{\breve g_{\mu\nu}(t_0),\del{t}\breve g_{\mu\nu}(t_0),t_0,1,\breve V^\mu(t_0)\bigr\}
  \end{split}
\end{equation*}
on $\Sigma_{t_0}$ and this initial data satisfies both the gravitational and wave gauge constraints. Furthermore, since these solutions satisfy \eqref{FLRW-Lag}, they are already in the Lagrangian representation.
\end{rem}

\subsection{Local-in-time existence}
We are now ready to state, in the following proposition, a local-in-time existence and uniqueness result and continuation principle for solutions of the system \eqref{tconf-ford-C.1}-\eqref{tconf-ford-C.9}. We omit the proof since it is essentially the same as the proof of Proposition 5.5 from  
\cite{BeyerOliynyk:2021}. 

\begin{prop} \label{lag-exist-prop}
Suppose $k>(n-1)/2+1$, $t_0>0$, and that the initial data $\Vr^\mu\in H^k(\Tbb^{n-1},\Rbb^n)$, $\taur\in H^{k+2}(\Tbb^{n-1})$, $\taugr\in H^{k+1}(\Tbb^{n-1})$, $\gr_{\mu\nu}\in H^{k+1}(\Tbb^{n-1},\Sbb{n})$ and $\ggr_{\mu\nu}\in H^{k}(\Tbb^{n-1},\Sbb{n})$ is chosen so that 
the inequalities $|\Vr|_{\gr}^2<0$, $\det(\gr_{\mu\nu})<0$ and $|\vr|_{\gr}^2 <0$
are satisfied where $\vr^\mu$ is defined by \eqref{vr-def}.
Then there exists a $t_1<t_0$ and a unique solution
\begin{equation}
  \label{eq:Wreg}
\Wsc \in \bigcap_{j=0}^{k}C^j\bigl((t_1,t_0], H^{k-j}(\Tbb^{n-1})\bigr),
\end{equation}
where
\begin{equation}
  \label{eq:Wdef}
    \Wsc=(\hhu_{\beta\mu\nu},\whu_{\beta\nu}, \zhu_\beta,\ghu_{\mu\nu},\vhu_\mu,\tau,\Vhu^\mu,\Jc^\mu_\nu,\ell^\mu ),
\end{equation}
on $M_{t_1,t_0}$, to the IVP consisting of the evolution
equations \eqref{tconf-ford-C.1}-\eqref{tconf-ford-C.9} and
the initial conditions \eqref{l-idata}-\eqref{hhu-idata}. Moreover, the following properties hold:
\begin{enumerate}[(a)]
\item Letting $\Wsc_0 = \Wsc|_{\Sigma_{t_0}} \in H^{k}(\Tbb^{n-1})$
denote the initial data, there exists for each $t_*\in (t_1,t_0)$ a $\delta>0$ such that if ${\tilde\Wsc}_0 \in  H^{k}(\Tbb^{n-1})$ satisfies $\norm{{\tilde\Wsc}_0-\Wsc_0}_{H^k(\Tbb^{n-1})}<\delta$,    
then there exists a unique solution 
${\tilde\Wsc} \in \bigcap_{j=0}^{k}C^j\bigl((t_*,t_0], H^{k-j}(\Tbb^{n-1})\bigr)$
of the evolution equations \eqref{tconf-ford-C.1}-\eqref{tconf-ford-C.9} on $M_{t_*,t_0}$  that agrees with the initial data ${\tilde\Wsc}_0$ on the initial hypersurface $\Sigma_{t_0}$. 
\item The relations 
\begin{equation} \label{eq:Lag-constraints}
  \begin{split}
\del{\alpha}\ghu_{\mu\nu}&=\Jc^\beta_\alpha \hhu_{\beta\mu\nu},\quad  \del{\alpha}\vhu_{\mu}=\Jc^\beta_\alpha \whu_{\beta\mu}, \quad \del{\alpha}\tau=\Jc^\beta_\alpha \zhu_{\beta},\\
\vhu_\mu&=\zhu_\mu,\quad \Jc^\mu_\nu = \del{\nu} l^\mu,    
  \end{split}
\end{equation}
hold in $M_{t_1,t_0}$.
\item The triple $\{g_{\mu\nu}=\Jc_\mu^\alpha\ghu_{\alpha\beta}
\Jc_{\nu}^\beta,\tau,V^\mu = \Jcch^\mu_\nu \Vhu^\nu\}$
determines a solution 
of the reduced conformal Einstein-Euler-scalar field equations
\begin{equation} \label{lag-redeqns}
  \begin{split}
    -2R_{\mu\nu}+2\nabla_{(\mu} X_{\nu)}&=-\frac{2}{\tau}\nabla_\mu \nabla_\nu \tau-2\Ttt_{ij}, \quad g^{\alpha\beta}\Dc_\alpha\Dc_\beta \tau=0,\\ 
    \att^i{}_{j k}\nabla_i V^k
                &=- \frac{c_s^2(n-1)-1}{c_s^2(n-2)}\tau^{-1} V_j V^i\nabla_i\tau,
  \end{split}
\end{equation}
on $M_{t_1,t_0}$ that satisfies the initial conditions \eqref{V-idata}, \eqref{tauh-idata} and \eqref{dt-tau-idata}-\eqref{dt-g-idata}, where 
\begin{equation*}
  \begin{split}
    \Ttt_{ij}&=2P_0\Bigl(\frac{1+c_s^2}{c_s^2} V^{-2} V_iV_j+\frac{1-c_s^2}{(n-2)c_s^2}g_{ij}\Bigr) \tau^{\frac{c_s^2-1}{c_s^2(n-2)}}V^{-\frac{(1+c_s^2)}{c_s^2}},\\
    X^\gamma &= \frac{1}{2}g^{\mu\nu}g^{\gamma\lambda}
    (2\Dc_{\mu}g_{\nu\lambda}-\Dc_\lambda g_{\mu\nu}),
  \end{split}
\end{equation*}
and $\Dc_\mu$ is the Levi-Civita connection of the flat metric
$\gc_{\mu\nu}=\Jc_{\mu}^\alpha \eta_{\alpha\beta} \Jc_{\nu}^\beta$
on $M_{t_1,t_0}$. 
\item The scalar field $\tau$ is given by
\begin{equation} \label{tau-synch} 
    \tau = t-t_0 + \taur  
\end{equation}
in $M_{t_1,t_0}$ while the vector field
\begin{equation} \label{chi-def}
    \chi^\mu = \frac{1}{|\nabla\tau|^2_g}\nabla^\mu\tau 
\end{equation}
satisfies 
\begin{equation} \label{Lagrangian}
\chi^\mu = \delta^\mu_0
\end{equation}
in $M_{t_1,t_0}$.
\item If the initial data $\{\gr_{\mu\nu},\ggr_{\mu\nu},\taur,\taugr,\Vr^\mu\}$ also satisfies the
constraint equations \eqref{grav-constr}-\eqref{wave-constr}
on $\Sigma_{t_0}$, then the triple $\{g_{\mu\nu},\tau,V^\mu\}$ solves
the conformal Einstein-Euler-scalar field equations
\begin{equation} 
  \begin{split}
G^{\mu\nu} &= \frac{1}{\tau}\nabla^\mu\nabla^\nu \tau + 2 T^{\text{Fl},\mu\nu}, \quad
  \Box_{g}\tau = 0, \\
  \att^\mu{}_{\nu \rho}\nabla_\mu V^\rho
                &=- \frac{c_s^2(n-1)-1}{c_s^2(n-2)}\tau^{-1} V_\nu V^\mu\nabla_\mu\tau,  
  \end{split}
  \label{lag-confeqns}
\end{equation}
and satisfies the wave gauge constraint 
\begin{equation} \label{lag-wave-gauge}
    X^\gamma =\frac{1}{2}g^{\mu\nu}g^{\gamma\lambda}(2\Dc_\mu g_{\nu\lambda}-\Dc_{\lambda}g_{\mu\nu} )=0 
\end{equation}
in $M_{t_1,t_0}$.
\item If 
  \begin{equation}
    \label{eq:cont_crit1}
\max\biggl\{\sup_{M_{t_1,t_0}}\!\det(g_{\mu\nu}), \sup_{M_{t_1,t_0}}\!|\nabla\tau|_{g}^2,\sup_{M_{t_1,t_0}}\!|V|_{g}^2\biggr\} <0
\end{equation}
and
\begin{align}
   \sup_{t_1<t<t_0}\Bigl(&\norm{g_{\mu\nu}(t)}_{W^{2,\infty}(\Tbb^{n-1})}+\norm{\del{t}g_{\mu\nu}(t)}_{W^{1,\infty}(\Tbb^{n-1})}\notag\\ &+\norm{V^\mu(t)}_{W^{1,\infty}(\Tbb^{n-1})+\norm{\Dc_\nu \chi^\lambda(t)}_{W^{2,\infty}(\Tbb^{n-1})}}\notag\\
   &+\norm{\del{t}(\Dc_\nu \chi^\lambda)(t)}_{W^{1,\infty}(\Tbb^{n-1})}\Bigr)<\infty, 
  \label{eq:cont_crit2}
\end{align}
then there exists a $t_1^*<t_1$ such that the solution $\Wsc$ can be uniquely continued to the time interval $(t_1^*,t_0]$.
\end{enumerate}
\end{prop}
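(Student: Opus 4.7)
The plan is to recast the IVP \eqref{tconf-ford-C.1}--\eqref{tconf-ford-C.9} with initial data \eqref{l-idata}--\eqref{hhu-idata} as a quasilinear symmetric hyperbolic system on $M_{t_1,t_0}$, invoke Kato-type well-posedness theory for local existence, uniqueness and continuous dependence, and then derive (b)--(f) through constraint propagation, the diffeomorphism property of the Lagrangian map $l^\mu$, and standard energy estimates. The non-degeneracy assumptions $|\vr|_{\gr}^2<0$, $\det\gr_{\mu\nu}<0$, $|\Vr|_{\gr}^2<0$ on $\Sigma_{t_0}$ guarantee by continuity that $\chih^\mu$ and $\Vh^\mu$ stay timelike and $\Jc^\mu_\nu$ stays invertible on a short time interval around $t_0$, which is what we need to engage the hyperbolic machinery.

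First I would verify symmetric hyperbolicity of the full system. The principal part of \eqref{tconf-ford-C.1}--\eqref{tconf-ford-C.3} is $\Bhu^{\lambda\beta\alpha}\Jcch^\gamma_\alpha$; by \eqref{Bt-def}, $\Bhu^{\lambda\beta\alpha}$ is symmetric in $(\lambda,\beta)$ and pairing with the timelike covector dual to $\chih$ produces a positive-definite block on the unknowns $(\hhu,\whu,\zhu)$. The fluid equation \eqref{tconf-ford-C.7} is symmetric hyperbolic by the Frauendiener--Walton formalism, since $\hat\att^\gamma{}_{\mu\nu}$ in \eqref{eq:Ah-def} becomes symmetric after lowering the first index with $\gh$ and its contraction with a timelike covector is positive-definite along timelike $\Vh$. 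The remaining equations \eqref{tconf-ford-C.4}--\eqref{tconf-ford-C.6} and \eqref{tconf-ford-C.8}--\eqref{tconf-ford-C.9} are pure transport along $\chih$ and append trivially. Every non-principal coefficient depends smoothly on $\Wsc$, on $\tau^{\pm 1}$, and on the reciprocal of the timelike norms of $\vhu$ and $\Vhu$, all of which stay bounded on any short subinterval $(t_1,t_0]$ where the timelike and non-degeneracy properties persist. Standard theory then yields a unique solution in the regularity class \eqref{eq:Wreg} together with the continuous dependence in (a).

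Next I would prove the constraint propagation (b). The differences
\begin{equation*}
  \del{\alpha}\ghu_{\mu\nu}-\Jc^\beta_\alpha\hhu_{\beta\mu\nu},\;\; \del{\alpha}\vhu_\mu-\Jc^\beta_\alpha\whu_{\beta\mu},\;\; \del{\alpha}\tau-\Jc^\beta_\alpha\zhu_\beta,\;\; \vhu_\mu-\zhu_\mu,\;\; \Jc^\mu_\nu-\del{\nu}l^\mu
\end{equation*}
satisfy a linear homogeneous transport system along $\chih$, obtained by differentiating \eqref{tconf-ford-C.4}--\eqref{tconf-ford-C.6} and \eqref{tconf-ford-C.9} and combining with \eqref{tconf-ford-C.1}--\eqref{tconf-ford-C.3} and \eqref{tconf-ford-C.8} via $[\del{0},\del{\alpha}]=0$. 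They vanish on $\Sigma_{t_0}$ by \eqref{l-idata}--\eqref{hhu-idata}, and ODE uniqueness extinguishes them on all of $M_{t_1,t_0}$. With (b) in hand, (c) is direct: $\gc_{\mu\nu}=\Jc^\alpha_\mu\eta_{\alpha\beta}\Jc^\beta_\nu$ is the pull-back of \eqref{gct-def} under the diffeomorphism $l^\mu$, hence flat and satisfying \eqref{curvature}, so the pull-back identities \eqref{chi-lag}--\eqref{V-lag} turn \eqref{tconf-ford-C.1}--\eqref{tconf-ford-C.3} and \eqref{tconf-ford-C.7} into the system \eqref{lag-redeqns}. For (d), \eqref{chi-lag} gives $\chi^\mu=\delta^\mu_0$ directly, while \eqref{tconf-ford-C.6} combined with the pulled-back identity $\chihu^\mu\vhu_\mu=1$ arising from \eqref{chit-def} yields $\del{0}\tau=1$, which integrates against \eqref{tauh-idata} to produce \eqref{tau-synch}. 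Property (e) follows immediately from Proposition \ref{prop-ES-constr}.

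The continuation principle (f) comes from the usual symmetric hyperbolic energy estimate: $\norm{\Wsc(t)}_{H^k(\Tbb^{n-1})}$ obeys a Gr\"onwall inequality whose coefficient is controlled by the $L^\infty$-norms of $\Wsc$ and low-order derivatives, and the hypotheses \eqref{eq:cont_crit1}--\eqref{eq:cont_crit2} supply exactly these bounds while preventing the symmetrizer from degenerating, so the solution admits a unique $H^k$-extension past $t_1$. The hardest aspect of the whole argument is not conceptual but organisational: one must confirm that the fluid additions — the source $2\Ttthu_{\mu\nu}$ in \eqref{tconf-ford-C.1} and the entire Euler block \eqref{tconf-ford-C.7} — mesh with the \cite[\S5]{BeyerOliynyk:2021} symmetrizer and Moser-type estimates without introducing new singular terms in $t$ or obstructions to $H^k$-regularity. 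Since $\Ttthu_{\mu\nu}$ depends smoothly on its arguments and on $\tau^{(c_s^2-1)/(c_s^2(n-2))}$, bounded for $\tau$ away from $0$, and since the Frauendiener--Walton formalism has well-documented symmetric hyperbolic compatibility with Einstein's equations, this amounts to a direct verification rather than any genuine analytic obstruction.
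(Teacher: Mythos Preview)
Your proposal is correct and matches the paper's approach exactly: the paper omits the proof, referring to Proposition~5.5 of \cite{BeyerOliynyk:2021} and noting (just before the proposition) that the only new ingredients are the non-principal fluid source $2\Ttthu_{\mu\nu}$ and the manifestly symmetric hyperbolic Euler block \eqref{tconf-ford-C.7}, neither of which obstructs the symmetric-hyperbolic local well-posedness, constraint propagation, or continuation arguments you outline. Your final paragraph on the fluid additions meshing with the \cite{BeyerOliynyk:2021} framework is precisely the observation the paper makes.
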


\subsection{Temporal synchronization of the singularity\label{temp-synch}}
The temporal synchronization of a big bang singularity requires the introduction of a time coordinate whose level set at a particular time, say $0$, coincides with the spacelike singular hypersurface that defines the singularity. In \cite{BeyerOliynyk:2021}, the scalar field $\tau$ was employed as a time coordinate to synchronize the big bang singularity at $\tau=0$ for the Einstein-scalar field equations. Here, we again use the scalar field as a time coordinate $\tau$ to synchronize the singularity at $\tau=0$, and it is this use of $\tau$ as time that is the main motivation for the use of Lagrangian coordinates in the previous section. As is clear from Proposition \ref{lag-exist-prop}.(d), $\tau$ will coincide with the Lagrangian time $t$ if and only if the initial data $\taur=t_0$ on the initial hypersurface $t=t_0$.  Now, in general, we cannot assume that $\taur $ is constant on the initial hypersurface if we want our results to apply to an open set of geometric initial data. To remedy this, we proceed as in   \cite{BeyerOliynyk:2021}; namely, 
if $\tau$ is not constant on the initial hypersurface $\Sigma_{t_0}$, but is close to constant, say $\tau = t_0 + \rhor$ in $\Sigma_{t_0}$ with $\rhor$ a sufficiently small function, then we evolve $\tau$ for a short amount of time to obtain a solution $\{g_{\mu\nu},\tau,V^\mu\}$ of the conformal Einstein-Euler-scalar field equations on $M_{t_1,t_0}$ for some $t_1<t_0$ with $t_1$ close to $t_0$. We then find a level surface of $\tau^{-1}(t^*_0)$
for some $t^*_0\in (t_1,t_0)$ that satisfies $\tau^{-1}(t^*_0) \subset (t_1,t_0)\times \Tbb^{n-1}$ and $\tau^{-1}(t^*_0)\cong \Tbb^{n-1}$. 
By replacing $\Sigma_{t_0}$ with $\tau^{-1}(t^*_0)$, we obtain a hypersurface $\tau^{-1}(t^*_0)\cong \Tbb^{n-1}$ on which
$\tau$ is constant as desired. This construction is made precise in the following proposition. We omit the proof since it follows from a straightforward modification of the proof of Proposition 5.6 from  \cite{BeyerOliynyk:2021}.

\begin{prop} \label{synch-prop}
Suppose $k>(n-1)/2+1$, $t_0>0$, the initial data $\taur=t_0+\rhor$, $\rhor\in H^{k+2}(\Tbb^{n-1})$, $\taugr\in H^{k+1}(\Tbb^{n-1})$, $\Vr^\mu\in H^k(\Tbb^{n-1},\Rbb^n)$, $\gr_{\mu\nu}\in H^{k+1}(\Tbb^{n-1},\Sbb{n})$ and $\ggr_{\mu\nu}\in H^{k}(\Tbb^{n-1},\Sbb{n})$ is chosen so that 
the inequalities $\det(\gr_{\mu\nu})<0$, $|\Vr|_{\gr}^2<0$ and $|\vr|_{\gr}^2 <0$ hold
and the constraint equations \eqref{grav-constr}-\eqref{wave-constr} are satisfied, and
let $\{\gtt_{\Lambda\Gamma},\Ktt_{\Lambda\Gamma},\taur,\taugr,\Vr^{\mu}\}$
denote the geometric initial data on $\Sigma_{t_0}=\{t_0\}\times \Tbb^{n-1}$ that is determined from the initial data $\{\gr_{\mu\nu},\ggr_{\mu\nu},\taur=t_0+\rhor,\taugr,\Vr^\mu\}$ via \eqref{gtt-def1}.
Then for any $\tilde{\delta}>0$, there exists a $\delta>0$ and times $t^*_1<t^*_0<t_0$ such that if $\norm{\rhor}_{H^{k+2}(\Tbb^{n-1})}<\delta$, then:
\begin{enumerate}[(a)]
    \item The solution
$\Wsc$ from Proposition \ref{lag-exist-prop} to the IVP consisting of the evolution
equations \eqref{tconf-ford-C.1}-\eqref{tconf-ford-C.9} and
the initial conditions \eqref{l-idata}-\eqref{hhu-idata} exists on $M_{t_1^*,t_0}$.
\item The triple $\{g_{\mu\nu}=\Jc_\mu^\alpha\ghu_{\alpha\beta}\Jc_\nu^\beta,\tau,V^\mu = \Jcch^\mu_\nu \Vhu^\nu\}$
determined from the solution $\Wsc$ via \eqref{eq:Wdef} defines a solution to the conformal Einstein-Euler-scalar field equations \eqref{lag-confeqns}.
\item The map
\begin{equation*}
    \Psi \: : \: (t_1^*,t_0)\times \Tbb^{n-1}  \longrightarrow  \Rbb \times \Tbb^{n-1}\: :\: (t,x)\longmapsto (\ttl,\xt) = (t+\rhor(x),x)
\end{equation*}
defines a diffeomorphism onto its image and the push-forward
$ \{ \gt_{\mu\nu} = (\Psi_*g)_{\mu\nu},\taut=\Psi_*\tau,\Vt^\mu = (\Psi_* V)^\mu\}$ of the solution $\{g_{\mu\nu},\tau,V^\mu\}$ by this map 
determines geometric initial data 
$\{\gttt_{\Lambda\Sigma},\Kttt_{\Lambda\Sigma},\mathring{\taut},\grave{\taut}, \mathring{\Vt}{}^\mu\}$ on the hypersurface $\Sigma_{t^*_0}=\{t^*_0\}\times\Tbb^{n-1}$ satisfying $\mathring{\taut}=t^*_0$ and
\begin{align*}
&\norm{\gttt_{\Lambda\Sigma}-\gtt_{\Lambda\Sigma}}_{H^{k+1}(\Tbb^{n-1})}+\norm{\Kttt_{\Lambda\Sigma} -\Ktt_{\Lambda\Sigma}}_{H^{k}(\Tbb^{n-1})}\\
&\!\quad +\norm{t^*_0-\taur}_{H^{k+2}(\Tbb^{n-1})}+\norm{\grave{\taut}-\taugr}_{H^{k+1}(\Tbb^{n-1})}
+\norm{\mathring{\Vt}{}^\mu-\Vr^\mu}_{H^{k}(\Tbb^{n-1})}<\tilde{\delta}.    
\end{align*}
\end{enumerate}
\end{prop}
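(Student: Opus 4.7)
The plan is to combine the local-in-time existence result from Proposition \ref{lag-exist-prop} with a small spacetime coordinate change that relocates the initial hypersurface onto a level set of $\tau$. For $\delta$ sufficiently small, the data $\{\gr_{\mu\nu},\ggr_{\mu\nu},t_0+\rhor,\taugr,\Vr^\mu\}$ satisfies the hypotheses of Proposition~\ref{lag-exist-prop}: the strict non-degeneracies $\det(\gr_{\mu\nu})<0$, $|\Vr|_{\gr}^2<0$, $|\vr|_{\gr}^2<0$ persist under small perturbations, and Sobolev embedding on $\Tbb^{n-1}$ gives pointwise control of $\rhor$. Applying the proposition produces the solution $\Wsc$ on some $M_{t_1^*,t_0}$, and part (d) of the same proposition immediately yields the explicit identity
\begin{equation*}
\tau(t,x)=t-t_0+\taur(x)=t+\rhor(x),
\end{equation*}
so each level set $\tau^{-1}(c)$ is the graph $\{t=c-\rhor(x)\}$ over $\Tbb^{n-1}$.

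Next, I would pick $t_0^*\in(t_1^*,t_0)$ so that both $t_0-t_0^*$ and $t_0^*-t_1^*$ exceed $\norm{\rhor}_{L^\infty}$, which is possible once $\delta$ is small because $\norm{\rhor}_{L^\infty}\lesssim\norm{\rhor}_{H^{k+2}}<\delta$; this guarantees $\tau^{-1}(t_0^*)\subset M_{t_1^*,t_0}$. The map $\Psi(t,x)=(t+\rhor(x),x)$ has upper-triangular unipotent Jacobian, is globally injective, and is therefore a diffeomorphism onto its image. A direct computation gives
\begin{equation*}
\taut(\ttl,\xt)=\tau(\Psi^{-1}(\ttl,\xt))=\tau(\ttl-\rhor(\xt),\xt)=\ttl,
\end{equation*}
so on $\Sigma_{t_0^*}$ the pushed-forward scalar field $\taut$ is identically $t_0^*$, delivering the synchronization asserted in part (c). Parts (a) and (b) are then immediate consequences of the local existence and of the fact that the reduced equations and constraints are diffeomorphism covariant.

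Finally, I would estimate the five differences $\gttt-\gtt$, $\Kttt-\Ktt$, $\mathring{\taut}-\taur=t_0^*-t_0-\rhor$, $\grave{\taut}-\taugr$ and $\mathring{\Vt}{}^\mu-\Vr^\mu$ in their respective Sobolev norms. Two sources of error enter: the temporal evolution from $t_0$ to $t_0^*$, controlled by the $C^j((t_1^*,t_0],H^{k-j}(\Tbb^{n-1}))$ regularity of $\Wsc$ together with $|t_0-t_0^*|\lesssim\delta$; and the spatial pullback by $\Psi$, which acts by composition with the map $x\mapsto(\ttl-\rhor(x),x)$ and differs from the identity by an amount controlled by $\norm{\rhor}_{H^{k+2}}$ through standard Sobolev composition and multiplier estimates on $\Tbb^{n-1}$. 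Combining these with the defining formulas \eqref{gtt-def1} for $\{\gtt,\Ktt\}$ and their pushed-forward analogues yields the bound by $\tilde\delta$ after a further shrinking of $\delta$.

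The main obstacle is the derivative count in the last step: $\Kttt-\Ktt$ is required in $H^k$, while $\Ktt$ already contains one spatial derivative of $\gr$ and $\ggr$, and $\Psi$ contributes first derivatives of $\rhor$, so the composition estimate consumes up to $k+2$ derivatives of $\rhor$, which is exactly the regularity furnished by the hypothesis $\rhor\in H^{k+2}$. Once this highest-order estimate is in place, the corresponding bounds for the scalar and fluid quantities follow by the same argument at strictly lower regularity, so no further difficulty arises.
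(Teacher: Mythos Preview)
Your approach is correct and follows the natural argument; the paper itself omits the proof, referring the reader to the proof of Proposition~5.6 in \cite{BeyerOliynyk:2021}, which your sketch essentially reproduces (with the straightforward addition of the fluid variable $V^\mu$). Your identification of the derivative count for $\Kttt-\Ktt$ as the top-order constraint, saturating the hypothesis $\rhor\in H^{k+2}$, is exactly the point one has to check.
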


\begin{rem} \label{synch-rem}
Given the geometric initial data 
$\{\gttt_{\Lambda\Sigma},\Kttt_{\Lambda\Sigma},\mathring{\taut},\grave{\taut},\mathring{\Vt}{}^\mu\}$ on $\Sigma_{t^*_0}$ from Proposition \ref{synch-prop}, we can always solve the wave gauge constraint on $\Sigma_{t^*_0}$ by an appropriate choice of the free initial data\footnote{It is worthwhile noting that this choice of free initial data will, in general, be different from the lapse-shift pair computed from restricting the conformal Einstein-Euler-scalar field solution $\{\gt_{\mu\nu},\taut,\Vt^\mu\}$ from Proposition \ref{synch-prop} to $\Sigma_{t^*_0}$.} $\{\Nttt,\bttt_\Lambda,\dot{\Nttt},\dot{\bttt}{}_\Lambda\}$. Because of this, we lose no generality, as far as our stability results are concerned, by assuming that the initial data \eqref{gt-idata}-\eqref{Vh-idata} satisfies the gravitational and wave gauge constraint equations
\eqref{grav-constr}-\eqref{wave-constr}
along with the synchronization condition $\taur = t_0$ on the initial hypersurface $\Sigma_{t_0}$,
which by \eqref{tau-synch} implies that
\begin{equation} \label{tau-fix}
\tau = t \quad \text{in $M_{t_1,t_0}$}.
\end{equation}
\end{rem}

\section{A Fermi-Walker transported frame \label{Fermi}}

In the calculations carried out in this section, we will assume that $\{g_{\mu\nu},\tau,V^\mu\}$ is a solution, which is guaranteed to exist by Proposition \ref{lag-exist-prop}, of the conformal Einstein-scalar field equations \eqref{lag-confeqns} in the Lagragian coordinates $(x^\mu)$ that satisfies the wave gauge constraint \eqref{lag-wave-gauge} along with the slicing condition \eqref{tau-fix}. A difficulty with the Lagrangian coordinate representation is that it is not suitable for obtaining estimates that are well behaved near the big bang singularity, which is located at $t=0$ in these coordinates. In order to obtain estimates that are well behaved in the limit $t\searrow 0$, we will instead use a frame representation of the conformal metric given by \begin{equation*}
g_{ij} = e_i^\mu g_{\mu\nu} e^\nu_j,
\end{equation*}
where $e_i = e^\mu_i \del{\mu}$ is a frame that is to be determined following the same approach as in \cite[\S6]{BeyerOliynyk:2021}, which, in turn, was inspired by  \cite{Fournodavlos_et_al:2023}. In the following, we take all frame indices as being expressed relative to this frame, and so in particular, the frame components
$e_i^j$ of the frame vector fields $e_i$ are
\begin{equation} \label{frame-def}
e_i^j = \delta^j_i.
\end{equation}

To proceed, we need to fix the frame. We do this by first fixing $e_0$ via
\begin{equation}\label{e0-def}
e_0 = (-|\chi|_g^2)^{-\frac{1}{2}}\chi \oset{\eqref{chi-def}}{=}-\betat \nabla \tau
\end{equation}
where $\betat$ is defined by
\begin{equation}\label{alpha-def}
    \betat = (-|\nabla\tau|^2_g)^{-\frac{1}{2}},
  \end{equation}
  which means that the coordinate components satisfy
  \begin{equation}\label{e0-mu}
e_0^\mu = \betat^{-1}\chi^\mu =\betat^{-1}\delta^\mu_0.
\end{equation}

It follows that $e_0$ is normalized according to
\begin{equation*} 
g_{00} = g(e_0,e_0) =-1.
\end{equation*}
We then complete $e_0$ to a frame by propagating the spatial vector fields $e_J$ using Fermi-Walker transport, which is defined by
\begin{equation} \label{Fermi-A}
\nabla_{e_0}e_J = -\frac{g(\nabla_{e_0}e_0,e_J)}{g(e_0,e_0)} e_0.
\end{equation}
As demonstrated in \cite{BeyerOliynyk:2021}, this, after an appropriate choice of the spatial frame on the initial hypersurface, yields a frame that is  orthonormal with respect to the conformal metric, that is,
\begin{equation}\label{orthog}
g_{ij}=g(e_i,e_j) = \eta_{ij}:= -\delta_i^0\delta_j^0+ \delta_i^I\delta_j^J\delta_{IJ}.
\end{equation}

By definition, the connection coefficients $\Gamma_i{}^k{}_j$ of the metric $g_{ij}$ and $\gamma_i{}^k{}_j$ of the metric $\gc_{ij}$ are determined by
\begin{equation} \label{Gamma-def}
\nabla_{e_i} e_j = \Gamma_{i}{}^k{}_{j} e_k \AND \Dc_{e_i} e_j = \gamma_{i}{}^k{}_{j} e_k,
\end{equation}
respectively.
These connection coefficients are related by \eqref{Ccdef}, and is was established in \cite{BeyerOliynyk:2021} that
the connection coefficients 
$\gamma_{0}{}^k{}_j$, $\gamma_I{}^0{}_0$ and $\gamma_I{}^K{}_0$ can be expressed as
\begin{align}
\gamma_{0}{}^k{}_j  &
=
-\betat\delta^i_0\bigl(\delta^0_jg^{lk}+\delta^l_j\delta^k_0\bigr)
(\Dc_{i}\Dc_l\tau-\Cc_i{}^p{}_l\Dc_p\tau)  -\Cc_{0}{}^k{}_j,\label{gamma-0kj}
\end{align}
\begin{equation}\label{gamma-I00-IK0}
  \gamma_I{}^0{}_0 = \frac{1}{2}\delta^i_I\delta_0^j\delta^k_0\Dc_i g_{jk} \AND
   \gamma_I{}^K{}_0 =-  \delta^{Kl} \delta_I^i\delta^j_0\Dc_i g_{jl}+ \eta^{KL}\gamma_I{}^0{}_L,
\end{equation}
respectively.

In the following, we view \eqref{gamma-0kj} and \eqref{gamma-I00-IK0} as determining the background
connection coefficients $\gamma_{0}{}^k{}_j$, $ \gamma_I{}^0{}_0$
and  $\gamma_I{}^K{}_0$.  The remaining background connection coefficients
$\gamma_{I}{}^k{}_J$ will be determined by a transport equation. To derive the transport equation, we recall that the background curvature is determined via
\begin{equation*}
\Rc_{ijk}{}^l = e_{j}(\gamma_i{}^l{}_k)-e_{i}(\gamma_j{}^l{}_k)+\gamma_i{}^m{}_k\gamma_j{}^l{}_m-\gamma_j{}^m{}_k\gamma_i{}^l{}_m-(\gamma_j{}^m{}_i-\gamma_i{}^m{}_j)\gamma_m{}^l{}_k \oset{\eqref{curvature}}{=}0,
\end{equation*}
which yields, see \cite{BeyerOliynyk:2021} for details,
\begin{equation}\label{gamma-Ikj-A}
 \del{t}\gamma_I{}^k{}_J=\betat\bigl(e_I(\gamma_0{}^k{}_J)-\gamma_I{}^l{}_J\gamma_0{}^k{}_l+\gamma_0{}^l{}_J\gamma_I{}^k{}_l+(\gamma_0{}^l{}_I-\gamma_I{}^l{}_0)\gamma_l{}^k{}_J\bigr).
\end{equation}
From \eqref{e0-mu} and
\begin{equation} \label{e0I-fix}
    e_I^0=e_I(t) = e_I(\tau) = g(e_I,\nabla \tau) = \betat^{-1}g(e_0,e_I) = 0,
\end{equation}
we observe that the evolution of the spatial frame $e_I$ is governed by
\begin{equation} \label{[e0,eI]-C}
\del{t}e_I^\Lambda=\betat (\gamma_0{}^J{}_I - \gamma_{I}{}^J{}_0)
e^\Lambda_J.
\end{equation}
Finally, for use below, we note that the derivative of \eqref{alpha-def} is
given by 
\begin{equation}\label{grad-alpha}
    e_i(\betat)=\nabla_i\betat =\betat^3
    \nabla^k\tau \nabla_i\nabla_k \tau =
    -\betat^2 \delta^k_0 (\Dc_i\Dc_k\tau
    -\Cc_i{}^l{}_k \Dc_l\tau)
\end{equation}
where in deriving the third equality we 
used \eqref{Ccdef}, \eqref{frame-def} and \eqref{e0-def}. 

\section{First order form}
\label{sec:FirstOrderForm}

We now turn to the construction of a frame representation for the reduced conformal Einstein-Euler-scalar field equations in first order form. Our ultimate goal is to derive a Fuchsian formulation of the evolution equations, which can be used to establish the existence of solutions up to the big bang singularity at $t=0$. The derivations presented in
this section follow closely those from 
 \cite[\S7]{BeyerOliynyk:2021}. 
 
\subsection{Primary fields}
Following \cite[\S7]{BeyerOliynyk:2021}, 
we begin by deriving a set of first order evolution equations for the \emph{primary fields}
\begin{equation}\label{p-fields}
g_{ijk}=\Dc_i g_{jk}, \quad \tau_{ij}=\Dc_i\Dc_j \tau \AND W^k=\frac{1}{\ftt} V^k,
\end{equation}
where 
\begin{equation}
  \label{eq:Eulerfchoice}
    \ftt=\tau^{\frac{(n-1)c_s^2-1}{n-2}}\betat^{c_s^2}.
\end{equation}
Once that is done, we then derive evolution equations for the differentiated fields
\begin{equation}\label{d-fields}
    g_{ijkl}=\Dc_ig_{jkl},\quad \tau_{ijk}=\Dc_i\tau_{jk} \AND \Utt^s_q=\Dc_q W^s.
\end{equation}
The above choice for $\ftt$ turns out to be the optimal choice for our
analysis and is also consistent with the explicit model solution \eqref{eq:FLRWEulerSFExplSol1} -- \eqref{eq:FLRWEulerSFExplSol3} for which $\betat=a^{n-1}$ and $\del{t}=\betat e_0$ and  therefore $W^0=V_*^0=const$ and $W^I=0$.

 In terms of the variables \eqref{p-fields} 
the tensor $\Ttt_{ij}$ defined by \eqref{eq:Tttconfphys} becomes
\begin{equation}
  \label{eq:TttconfphysW}
  \Ttt_{ij}=2P_0\Bigl(\frac{1+c_s^2}{c_s^2} \frac{W_iW_j}{w^2}+\frac{1-c_s^2}{(n-2)c_s^2}g_{ij}\Bigr) \tau^{-\frac{(n-3)+(n-1)c_s^2}{n-2}}\betat^{-(1+c_s^2)} w^{-\frac{1+c_s^2}{c_s^2}}
\end{equation}
where
\begin{equation}
  \label{eq:defw1234}
w^2=-W_iW^i=\ftt^{-2} v^2.
\end{equation}
Using \eqref{red-Ricci} and \eqref{tau-fix}, we express
\eqref{confESFF} as
\begin{equation}\label{confESSFJ}
g^{kj}\Dc_k \Dc_j g_{lm} =-\frac{2}{t}\bigl(
\Dc_l \Dc_m \tau - \Cc_l{}^p{}_m\Dc_p \tau  \bigr)- Q_{lm}
{-2\Ttt_{lm}},
\end{equation}
where by \eqref{Ccdef},  \eqref{frame-def}, \eqref{e0-def} and \eqref{orthog}, we
note that
\begin{equation}\label{grad-tau}
    \Dc_i\tau= -\betat^{-1}g_{ij}e^j_0 = \betat^{-1}\delta_i^0 
\end{equation}
and
\begin{equation}\label{Cc-dt}
\Cc_l{}^p{}_m\Dc_p \tau 
=-\frac{1}{2}\betat^{-1}e_0^j\bigl(\Dc_l g_{jm}
+\Dc_m g_{jl} - \Dc_j g_{lm}\bigr).
\end{equation}
Multiplying \eqref{confESSFJ} by $-e_0^i$ yields
\begin{equation} \label{for-A}
-e^i_0 g^{kj}\Dc_k \Dc_j g_{lm} =\frac{2}{t}e^i_0
\bigl(
\Dc_l \Dc_m \tau - \Cc_l{}^p{}_m\Dc_p \tau  \bigr) + e^i_0 Q_{lm}
{+2e^i_0\Ttt_{lm}}.
\end{equation}
But $-g^{ik}e^j_0\Dc_k\Dc_j g_{lm} + g^{ij}e^k_0 \Dc_k \Dc_j g_{lm} 
=0$ by \eqref{commutator},
and so, adding this to \eqref{for-A} yields with the help of \eqref{frame-def}, \eqref{orthog} and \eqref{Cc-dt}, the first order formulation of the frame formulation of the reduced conformal Einstein equations given by
\begin{equation}\label{for-B}
  \begin{split}
B^{ijk}\Dc_kg_{jlm} =&
\frac{1}{t}\betat^{-1}\delta^i_0\delta^j_0(g_{ljm}+g_{mjl}-g_{jlm})\\
&+\frac{2}{t}\delta^i_0\tau_{lm}+ \delta^i_0 Q_{lm}{+2\delta^i_0\Ttt_{lm}}
  \end{split}
\end{equation}
where
\begin{align}
B^{ijk} &= -\delta^i_0 \eta^{jk}-\delta^j_0\eta^{ik}+ \eta^{ij}\delta^k_0.\label{B-def}
\end{align}
Using similar arguments, it is not difficult to verify that the conformal scalar field equation
\eqref{confESFI} can be written in first order
form as 
\begin{align}\label{for-D}
  B^{ijk}\Dc_k \tau_{jl} =& -\delta^i_0 \eta^{pr}\eta^{qs}g_{lrs}  \tau_{pq}.
\end{align}

To proceed, we view \eqref{for-B} and \eqref{for-D} as transport equations for $g_{jlm}$ and  $\tau_{jl}$ by expressing them as
\begin{equation}
  \begin{split}
    \del{t}g_{rlm} =&
                 \frac{1}{t}\delta^0_r\delta^j_0(g_{ljm}+g_{mjl}-g_{jlm})
                 +\frac{2}{t}\delta_r^0\betat
                 \tau_{lm}\\
                 &-\delta_{ri}B^{ijK}\betat g_{Kjlm}+ \delta_{ri}\betat Q^i_{lm}{+2 \delta_{ri}\delta^i_0 \betat\Ttt_{lm}}
  \end{split}
  \label{for-F.1}
\end{equation}
  {and}
  \begin{equation}
                   \del{t}\tau_{rl} =
                                       \delta_{ri}\betat J^i_l-\delta_{ri}B^{ijK}\betat\tau_{Kjl},
 \label{for-F.2}
\end{equation}
respectively, where in deriving these expression we have employed \eqref{e0-mu}, \eqref{d-fields} and
\eqref{B-def} and set
\begin{align}
  Q^i_{lm} &= \delta^i_0 Q_{lm}+\delta^{ij}\bigl(  \gamma_0{}^p{}_j  g_{plm} +\gamma_0{}^p{}_l  g_{jpm} +\gamma_0{}^p{}_m  g_{jlp}\bigr)
             \label{Qilm-def}
\intertext{and}
             J^i_l&=-\delta^i_0\eta^{kp}\eta^{jq}g_{lpq}\tau_{kj}+\delta^{ij}\bigl(  \gamma_0{}^p{}_j  \tau_{pl} +\gamma_0{}^p{}_l  \tau_{jp}\bigr).\label{J-def}
\end{align}

\begin{rem}
As in \cite{BeyerOliynyk:2021}, we do not employ equation \eqref{for-F.1} with $l=m=0$ in any of our subsequent arguments. This is possible because the wave gauge condition \eqref{wave-gauge} allows us to express $g_{i00}$ in terms of the other metric variables $g_{ilM}$. To see why this is the case, we can use \eqref{Xdef}, \eqref{orthog}, and \eqref{p-fields} to rewrite the wave gauge constraint \eqref{wave-gauge} as 
\begin{equation*}
  \begin{split}
2X_0 &= -g_{000}+\delta^{JK}(2g_{JK0}-g_{0JK})=0, \\
2 X_I &= -(2g_{00I}-g_{I00})+\delta^{JK}(2g_{JKI}-g_{IJK})=0.
  \end{split}
\end{equation*}
After rearranging and utilizing the symmetry $g_{IJ0}=g_{I0J}$, this can be expressed as
\begin{equation} \label{gi00}
  \begin{split}
g_{000}&= -\delta^{JK}(g_{0JK}-2g_{J0K}),\\ 
g_{I00}&= 2g_{00I}-\delta^{JK}(2g_{JKI}-g_{IJK}),
  \end{split}
\end{equation}
which confirms the assertion.
\end{rem}

Separating \eqref{for-F.1} into the components $(r,l,m)=(0,0,M)$ and $(r,l,m)=(R,0,M)$, $(r,l,m)=(0,L,M)$ and
$(r,l,m)=(R,L,M)$, we obtain,
with the help of \eqref{gi00}, the equations
\begin{align}
  \del{t}g_{00M} =& \frac{1}{t}\bigl(2g_{00M}-\delta^{IJ}(2g_{IJM}-g_{MIJ})\bigr)
                   +\frac{2}{t}\betat\tau_{0M} - B^{0jK}\betat g_{Kj0M}\notag\\
&+\betat Q^0_{0M}{+2 \betat\Ttt_{0M}}, \label{for-G.1}\\
\del{t}g_{R0M} =& -\delta_{RI}\betat B^{IjK}g_{Kj0M}+\delta_{RI}\betat Q^I_{0M}, \label{for-G.2}\\
  \del{t}g_{0LM} =& \frac{1}{t}(g_{L0M}+g_{M0L}-g_{0LM})
                   +\frac{2}{t}\betat\tau_{LM}\notag\\
                   &- B^{0jK}\betat g_{KjLM}
                   +\betat Q^0_{LM}
  {+2 \betat\Ttt_{LM}}\label{for-G.3}
\intertext{and}
\del{t}g_{RLM} =& -\delta_{RI}\betat B^{IjK}g_{KjLM}+ \delta_{RI}\betat Q^I_{LM}. \label{for-G.4}
\end{align}
Rather than using \eqref{for-G.3}, we follow \cite{BeyerOliynyk:2021} and use \eqref{for-G.2} to rewrite it as
\begin{align}
\del{t}(g_{0LM}-g_{L0M}-g_{M0L}) = & -\frac{1}{t}(g_{0LM}-g_{L0M}-g_{M0L})\notag\\
&+\frac{2}{t}\betat\tau_{LM}+S_{LM} {+2 \betat\Ttt_{LM}}\label{for-H}
\end{align}
where
\begin{align}
    S_{LM}=&-\betat B^{0jK}g_{KjLM}
+\betat Q^0_{LM} +\delta_{MI}\betat B^{IjK}g_{Kj0L} -\delta_{MI}\betat Q^I_{0L}\notag\\
&+\delta_{LI}\betat B^{IjK}g_{Kj0M}-\delta_{LI}\betat Q^I_{0M}.\label{S-def}
\end{align}
The metric combination $g_{0LM}-g_{L0M}-g_{M0L}$, which appears above in \eqref{for-H},
plays a dominant role in our analysis, and
it is related to the second fundamental form $\Ktt_{IJ}$ associated to the $t=const$-hypersurfaces and the conformal metric $g$ via the formula
\begin{equation}\label{Ktt-def}
\Ktt_{IJ} = \nabla_K \ntt_{(J}\delta_{I)}^K
  {=} \frac{1}{2}(g_{0IJ}-g_{I0J}-g_{J0I}) +\gamma_{(I}{}^0{}_{J)}.
\end{equation}
Here, $\ntt_i= -(-|\nabla t|_g^2)^{-\frac{1}{2}} e_i(t)$ is the conormal to the $t=const$-hypersurfaces
and in deriving the second equality in the above formula, we
employed \eqref{Ccdef} and the identity
$\nabla_I n_J = g(\nabla_{e_I} e_0, e_J) = -g(e_0,\nabla_{e_I}e_J) =\Gamma_{I}{}^{0}{}_J$, which holds by \eqref{alpha-def}, \eqref{e0-mu}
and \eqref{orthog}.

For use below, we observe by  \eqref{grad-alpha}, \eqref{Cc-dt} and
\eqref{gi00} that $\Dc_q\betat$  can be expressed as
\begin{align}
\Dc_q\betat 
=&-\delta_q^0\Bigl(\betat^{2}\tau_{00}-\frac{1}{2}\betat \delta^{JK}(g_{0JK}-2g_{J0K}) \Bigr)\notag\\
&-\delta_q^P\Bigl(\betat^{2}\tau_{P0}+\frac{1}{2}\betat\bigl(2g_{00P}-\delta^{JK}(2g_{JKP}-g_{PJK})\bigr)\Bigr).
\label{grad-alpha-A}
\end{align}
Now, in terms of the variables $W^s$ defined in \eqref{p-fields}, the conformal Euler equations \eqref{confESFHa} take the form
\begin{equation}
  \label{confESFHa.1}
  {a^i}_{j k}\Dc_i W^k =G_{jsl}W^sW^l
\end{equation}
where
\begin{align}
  \label{confESFHa.2}
   &a^i{}_{jk} :=\frac 1{\ftt} \att^i{}_{jk}
    =\frac{3 c_s^2+1}{c_s^2} \frac{W_j W_k W^i}{w^2} +W^i g_{jk}
                +2{\delta^i}_{(k} W_{j)},\\
  &G_{jsl}:=\Gt_{jsl}+\Bigl(\frac{c_s^2+1}{c_s^2} \delta^p_{(l} g_{s)j}-\delta^p_jg_{sl}\Bigr) \frac{\Dc_p \ftt}{\ftt}\notag\\
  =&\betat^{-1}t^{-1}
       \frac{(n-1)c_s^2-1}{n-2} \Bigl(       
       \delta_{(l}^0  \eta_{s)j}-\delta_j^0\eta_{sl}
             \Bigr) \notag\\   
  &- \frac{1}{2}\Bigl(
  \frac{3 c_s^2+1}{c_s^2} \frac{W^q W^p}{w^2}+\eta^{pq}\Bigr)  \eta_{j(s} g_{l) p q} 
             -  g_{(l s) j}\notag\\
              &+\betat^{-1}\Bigl(\frac{c_s^2+1}{c_s^2} \delta^0_{(l} \eta_{s)j}-\delta^0_j\eta_{sl}\Bigr) \Bigl(
       \frac{c_s^2}{2} \betat \delta^{JK}(g_{0JK}-2g_{J0K})  -c_s^2\betat^{2}\tau_{00}\Bigr)\notag\\
  &-c_s^2\Bigl(\frac{c_s^2+1}{c_s^2} \delta^P_{(l} \eta_{s)j}-\delta^P_j\eta_{sl}\Bigr)\betat\tau_{P0}\notag\\
  &-c_s^2\Bigl(\frac{c_s^2+1}{c_s^2} \delta^P_{(l} \eta_{s)j}-\delta^P_j\eta_{sl}\Bigr)\frac{1}{2}\bigl(2g_{00P}-\delta^{JK}(2g_{JKP}-g_{PJK})\Bigr),
  \label{confESFHa.3}
\end{align}
and in deriving the above formulas for
$a^i{}_{jk}$ and $G_{jsl}$, we employed 
\eqref{eq:AAACF2}, \eqref{eq:DefGjsl}, \eqref{orthog}, \eqref{p-fields},
\eqref{grad-tau}, \eqref{grad-alpha-A} and 
\begin{align*}
    \frac{\Dc_i \ftt}{\ftt}\oset{\eqref{eq:Eulerfchoice}}{=}&
                        \frac{(n-1)c_s^2-1}{n-2}\frac{\Dc_i\tau}{\tau} +c_s^2\frac{\Dc_i\betat}{\betat}.
\end{align*}
Setting $i=0$ in \eqref{confESFHa.2}, we observe that
\begin{align}
      a^0{}_{jk}=&\delta_j^{0}\delta_k^{0}\Bigl(\frac{1}{c_s^2} 
      +\frac{3 c_s^2+1}{c_s^2} \frac{W^IW_I}{w^2}\Bigr) W^{0}\notag\\
      &+2\delta_{(j}^{0}\delta_{k)}^{K}\Bigl(-\frac{2 c_s^2+1}{c_s^2} -\frac{3 c_s^2+1}{c_s^2} \frac{W^IW_I}{w^2}\Bigr) W_{K}\notag\\
      &+\delta_j^{J}\delta_k^{K}\Bigl(
      \delta_{JK}
      +\frac{3 c_s^2+1}{c_s^2} \frac{W_{J} W_{K}}{w^2}\Bigr)    W^0,
      \label{eq:B0fluid}
\end{align}
  where in deriving this we have used
  \begin{equation}
    \label{eq:WQdr}
    \frac{(W^0)^2}{w^2}=1+\frac{\delta_{IJ}W^IW^J}{w^2}=1+\frac{W^IW_I}{w^2},
  \end{equation}
  which is a consequence of \eqref{orthog} and \eqref{eq:defw1234}. Similarly, setting $i=I$ in \eqref{confESFHa.2} yields
  \begin{align}
    a^I{}_{jk}=&\delta_j^{0}\delta_k^{0}\Bigl(
    \frac{2 c_s^2+1}{c_s^2} 
    +\frac{3 c_s^2+1}{c_s^2} \frac{W^LW_L}{w^2}
    \Bigr)W^I\notag\\
    &-2\delta_{(j}^{0}\delta_{k)}^{K}\Bigl(
    {\delta^I}_{K}+
    \frac{3 c_s^2+1}{c_s^2} \frac{W^I W_{K}}{w^2}    
    \Bigr) W^{0} \notag\\
    &+{\delta^J}_{j} {\delta ^{K}}_{k}\Bigl(
    \delta^{LI}\delta_{JK}
    +2{\delta^I}_{(K}{\delta^L}_{J)}+
    \frac{3 c_s^2+1}{c_s^2} \frac{W_{J} W_{K}}{w^2}\delta^{LI}
    \Bigr)W_L.
    \label{eq:BIfluid}
  \end{align}

\subsection{Differentiated fields}
We now turn to deriving evolution equations for the differentiated fields \eqref{d-fields}. Here again, we closely follow the approach taken in \cite{BeyerOliynyk:2021}. Applying $\Dc_q$ to \eqref{confESSFJ}, we find with the
help of \eqref{commutator}, \eqref{tau-fix}, \eqref{frame-def}, \eqref{orthog}, \eqref{p-fields}-\eqref{d-fields} and \eqref{grad-tau}-\eqref{Cc-dt} that
\begin{align}
  g^{kj}\Dc_k \Dc_q \Dc_j g_{lm}
  =&-\frac{1}{t}\betat^{-1} e_0^j (\Dc_q\Dc_l g_{jm}
  +\Dc_q\Dc_m g_{jl}-\Dc_q\Dc_j g_{lm})\notag\\
  &-\frac{2}{t}\Dc_q\Dc_l\Dc_m\tau -P_{qlm}{-2 \Dc_q\Ttt_{lm}},\label{confESSFL}
\end{align}
where
\begin{align}
P_{qlm}=&  \frac{1}{t}\bigl(\Dc_q(\betat^{-1}) \delta_0^j+\betat^{-1}  \gamma_q{}^j{}_0\bigr)  (g_{ljm}+g_{mjl}-
g_{jlm}) - \eta^{kr}\eta^{js} g_{qrs}g_{kjlm} \notag\\
 &+ \Dc_q Q_{lm}
+\biggl(- \frac{1}{t^2}\betat^{-1} \delta_0^j (g_{ljm}+g_{mjl}-
g_{jlm})-\frac{2}{t^2}
 \tau_{lm}\biggr)\betat^{-1}\delta_q^0 \label{P-def}
\end{align}
and in deriving \eqref{P-def} we have used
\begin{equation} \label{De0}
 \Dc_q e^j_0=\Dc_q \delta^j_0 = \gamma_q{}^j{}_0.
\end{equation}
In \eqref{P-def}, $\Dc_q(\betat^{-1})=-\betat^{-2}\Dc_q(\betat)$ is to
be computed using \eqref{grad-alpha-A}.

Next, by \eqref{commutator}, we have
\begin{equation*}
-g^{ik}e^j_0\Dc_k\Dc_q\Dc_j g_{lm} + g^{ij}e^k_0 \Dc_k \Dc_q \Dc_j g_{lm} =0,
\end{equation*}
and so adding this to \eqref{confESSFL} yields the
first order equation
\begin{align}
  B^{ijk}\betat\Dc_k g_{qjlm}
  =& \frac{1}{t} \delta_0^i\delta_0^j  (g_{qljm}+ g_{qmjl}-g_{qjlm})\notag\\
  &+\frac{2}{t}\delta^i_0\betat\tau_{qlm}+ \delta^i_0 \betat P_{qlm}{   +2 \delta_0^i \betat\Dc_q\Ttt_{lm}},
  \label{for-I}
\end{align}
for the differentiated fields $g_{qjlm}$, where $B^{ijk}$ is as defined above by \eqref{B-def}. 
Similarly, applying $\Dc_q$ to the scalar field equation \eqref{confESFI} yields the first
order equation
\begin{align} 
  B^{ijk}\betat\Dc_k\tau_{qjl}
  =&  \betat K^i_{ql}\label{for-J}
\end{align}
where
\begin{align} 
K^i_{ql}=& \delta^{i}_0\bigl(
-\eta^{jr}\eta^{ks}g_{qrs}\tau_{kjl}
-\bigl(\eta^{jr}\eta^{ks}g_{qlrs}-2\eta^{jm}\eta^{rn}\eta^{ks}g_{qmn}
g_{lrs}\bigr)\tau_{jk}\notag\\ 
&- \eta^{jr}\eta^{ks}g_{lrs}\tau_{qjk}\bigr).
\label{K-def}
\end{align}

\begin{rem}
As in \cite{BeyerOliynyk:2021}, we do not use the $q=0$ component of the equations
\eqref{for-I} and \eqref{for-J}. Instead, we obtain $g_{0jlm}$ and
$\tau_{0jl}$ from the equations \eqref{for-B} and \eqref{for-D}. Indeed, a straightforward calculation reveals that
\begin{align}
  g_{0rlm} =& \frac{1}{t}\betat^{-1}\delta_r^0\delta_0^j(g_{ljm}+g_{mjl}-g_{jlm})
             +\frac{2}{t}\delta_r^0\tau_{lm} - \delta_{ri}B^{ijK}g_{Kjlm}
             \notag\\
             &+\delta_r^0 Q_{lm}+2 \delta^0_r\Ttt_{lm}\label{for-K.1}
\intertext{and}
             \tau_{0rl} =&
                          -\delta_{ri}B^{ijK}\tau_{Kjl}
                           -\delta_r^0 \eta^{jp}\eta^{kq} g_{lpq} \tau_{kj}. \label{for-K.2}
\end{align}
\end{rem}

To complete the derivation of evolution equations for the differentiated fields, 
we apply $\Dc_q$ to the conformal Euler equations \eqref{confESFHa.1} to get
\begin{equation}
  \label{eq:EulerDer}
  {a^i}_{j k}\Dc_i \Dc_q W^k =\Dc_q(G_{jsl}W^sW^l)-\Dc_q{a^i}_{j k}\Dc_i W^k,
\end{equation}
where in deriving this we have employed \eqref{commutator}. From the definition \eqref{d-fields} of $U^s_q$ and the conformal Euler equations \eqref{confESFHa.1}, we observe that
\begin{equation}
  \label{eq:Um0}
  \Utt^m_0=\Dc_0W^m=-(a^{0})^{-1}{}^{mj}a^I{}_{jk}\Utt^k_I+(a^{0})^{-1}{}^{mj}G_{jsl}W^sW^l
\end{equation}
where $(a^{0})^{-1}{}^{mj}$ is the inverse of the matrix $a^0{}_{jk}$, that is,
\begin{equation}
  \label{eq:defa0inv}
  (a^{0})^{-1}{}^{mj}a^0{}_{jk}=\delta^m_k.
\end{equation}
This inverse is well-defined and depends smoothly on $W^s$ provided $w^2=-W_s W^s >0$.
Due to the relation \eqref{eq:Um0}, we only
need to consider the differentiated fluid variables $U^s_Q$. Setting $q=Q$ in \eqref{eq:EulerDer} and exploiting the relations \eqref{frame-def}, \eqref{Gamma-def} and \eqref{eq:Um0}, we arrive at the following evolution
equations for the differentiataed fluid variables
$U^k_Q$:
\begin{align}  
  {a^i}_{j k}\Dc_i \Utt_Q^k
  =&\Dc_Q(G_{jsl}W^sW^l)
     -(\Dc_Q{a^p}_{j k}-{a^i}_{j k}\gamma_i{}^p{}_Q) \Utt_p ^k\notag\\
  =&\Dc_Q(G_{jsl}W^sW^l)
     +\bigl(\Dc_Q{a^0}_{j m}-{a^0}_{j m}\gamma_0{}^0{}_Q \notag \\
     &-{a^I}_{j m}\gamma_I{}^0{}_Q\bigr) (a^{0})^{-1}{}^{mn}\Bigl( a^I{}_{nk}\Utt^k_I-G_{nsl}W^sW^l\Bigr) \notag \\
     &-\bigl(\Dc_Q{a^P}_{j k}-{a^0}_{j k}\gamma_0{}^P{}_Q -{a^I}_{j k}\gamma_I{}^P{}_Q\bigr) \Utt_P ^k. \label{eq:EulerDer2}
\end{align}
The quantity $\Dc_Q(G_{jsl}W^sW^l)$ in \eqref{eq:EulerDer} is understood as
\begin{equation}
  \label{eq:DcGWW}
  \Dc_Q(G_{jsl}W^sW^l)=\Dc_QG_{jsl}W^sW^l+G_{jsl} {{\Utt}}_Q^sW^l+G_{jsl}W^s{{\Utt}}_Q^l.
\end{equation}
We note that $\Dc_QG_{jsl}$ can be expressed
in terms of the variables \eqref{p-fields} and \eqref{d-fields} by differentiating \eqref{confESFHa.3}. Similarly,  $\Dc_Q{a^i}_{j k}$ can be expressed in terms of the variables \eqref{p-fields} and \eqref{d-fields} by differentiating \eqref{confESFHa.3}.

\subsection{Frame and connection coefficient transport equations}
As observed in \cite[\S7]{BeyerOliynyk:2021}, 
 $\betat$ satisfies the evolution equation
\begin{equation}\label{for-L}
\del{t}\betat 
=-\betat^{3}\tau_{00}+\frac{1}{2}\betat^2 \delta^{JK}(g_{0JK}-2g_{J0K}),
\end{equation} 
which is be obtained from setting $q=0$ in \eqref{grad-alpha-A} and using \eqref{e0-mu}.
We also observe from \eqref{Ccdef}, \eqref{frame-def}, \eqref{orthog}, \eqref{gamma-0kj},
 \eqref{p-fields}-\eqref{d-fields} and \eqref{Cc-dt}
that the connection coefficients
$\gamma_0{}^k{}_j$ can be expressed as
\begin{equation}\label{gamma-0kj-A}
    \gamma{}_0{}^k{}_j
    =-(\delta^0_j\eta^{kl}+\delta^l_j\delta^k_0)\Bigl(\betat\tau_{0l}+\frac{1}{2}g_{l00} \Bigr)-\frac{1}{2}\eta^{kl}(g_{0jl}+g_{j0l}-g_{l0j}).
\end{equation}
Setting $(k,j)=(0,0)$, $(k,j)=(K,j)$, $(k,j)=(0,J)$
and $(k,j)=(K,J)$
in the above formula for $\gamma_0{}^k{}_j$ gives
\begin{align}
     \gamma{}_0{}^0{}_0 =&
     \frac{1}{2}g_{000}\oset{\eqref{gi00}}{=}-\frac{1}{2}g^{PQ}(g_{0PQ}-2g_{P0Q}),\label{gamma-000}\\
     \gamma{}_0{}^K{}_0 =& -\delta^{KL}(\betat \tau_{0L}+g_{00L}), \label{gamma-0K0}\\
     \gamma_0{}^0{}_J=&-\betat \tau_{0J}, \label{gamma-00J}
     \intertext{and}
      \gamma_0{}^K{}_J=& -\frac{1}{2}\delta^{KL}(g_{0JL}+g_{J0L}-g_{L0J}). \label{gamma-0KJ}
\end{align}
We further observe from \eqref{gamma-I00-IK0}, \eqref{p-fields} and \eqref{gi00} that the connection coefficients $\gamma_I{}^0{}_0$ and $\gamma_{I}{}^J{}_0$ can be expressed as
\begin{align}
    \gamma_I{}^0{}_0 &= g_{00I}-\frac{1}{2}\delta^{JK}(2 g_{JKI}-g_{IJK}) \label{gamma-I00}
    \intertext{and}
    \gamma_I{}^J{}_0 &= -\delta^{JK}g_{I0K} + \delta^{JK}\gamma_{I}{}^0{}_K, \label{gamma-IJ0}
\end{align}
respectively. 
We note here that the above expressions for the connection coefficients allows us to write the differentiated conformal Euler equations \eqref{eq:EulerDer2} as
\begin{align}
  &{a^i}_{j k}\Dc_i \Utt_Q^k\notag\\
  =&\Dc_Q(G_{jsl}W^sW^l)\notag\\
     &+\bigl(\Dc_Q{a^0}_{j m}+{a^0}_{j m}\betat \tau_{0Q} -{a^I}_{j m}\gamma_I{}^0{}_Q\bigr) (a^{0})^{-1}{}^{mn}\Bigl( a^I{}_{nk}\Utt^k_I-G_{nsl}W^sW^l\Bigr)\notag\\
     &-\bigl(\Dc_Q{a^P}_{j k}+\frac{1}{2}{a^0}_{j k}\delta^{PL}(g_{0QL}+g_{Q0L}-g_{L0Q}) -{a^I}_{j k}\gamma_I{}^P{}_Q\bigr) \Utt_P ^k.
     \label{eq:EulerDer3}
\end{align}

Next, applying $e_I$ to \eqref{gamma-0kj-A}, we note that
\begin{align}
    e_I(\gamma{}_0{}^k{}_j)
    =&-(\delta^0_j\eta^{kl}+\delta^l_j\delta^k_0)\Bigl( e_I(\betat)\tau_{0l}+ \betat e_I(\tau_{0l})+\frac{1}{2}e_I(g_{l00}) \Bigr)\notag\\
    &-\frac{1}{2}\eta^{kl}(e_I(g_{0jl})+e_I(g_{j0l})-e_I(g_{l0j})).
    \label{gamma-0kj-B}
\end{align}
With the help of \eqref{grad-alpha-A}, \eqref{gamma-000}-\eqref{gamma-0kj-B},   
it then follows from \eqref{gamma-Ikj-A} and
\eqref{[e0,eI]-C} that frame components $e^\Lambda_I$ and the connection coefficients
$\gamma_I{}^k{}_J$ satisfy the evolution
equations 
\begin{align}
\del{t}e_I^\Lambda=&-\betat \Bigl(\frac{1}{2}\delta^{JK}(g_{0IK}-g_{I0K}-g_{K0I}) + \delta^{JK}\gamma_{I}{}^0{}_K \Bigr)
e^\Lambda_J \label{for-M.1}
\end{align}
{and}
\begin{align}
\del{t}\gamma_I{}^k{}_J=
-\betat\Bigl(&\delta^k_0\Bigl(\betat e_I(\tau_{0J})+\frac{1}{2}e_I(g_{J00})\Bigr)\notag\\
&+\frac{1}{2}\eta^{kl}\bigl(e_I(g_{0Jl})+e_I(g_{J0l})-e_I(g_{l0J})\bigr)\Bigr)+L_I{}^k{}_J, \label{for-M.2}
\end{align}
respectively,
where
\begin{align}
    L_I{}^k{}_J = \betat\Bigl(\delta^k_0\Bigl(&\betat^2 \tau_{I0}+\frac{1}{2}\betat \bigl(2g_{00I}-\delta^{KL}(2g_{KLI}-g_{IKL})\bigr)\Bigr)\tau_{0J}\label{L-def}\\ &-\gamma_I{}^l{}_J\gamma_0{}^k{}_l+\gamma_0{}^l{}_J\gamma_I{}^k{}_l+ (\gamma_0{}^l{}_I-\gamma_I{}^l{}_0)\gamma_l{}^k{}_J\Bigr). \notag
\end{align}

\section{Nonlinear decompositions}
\label{sec:NonlinDecomp}

In this section, we state and prove a number of lemmas that characterize the structure of the
nonlinear terms appearing in the first order equations that were derived in the previous section. In order to state the lemmas, we first
make the following definitions:
\begin{align}
\kt&=(\kt_{IJ}) :=(g_{0IJ}-g_{I0J}-g_{J0I}),
\label{kt-def} \\
\ellt&=(\ellt_{IjK}) := (g_{IjK}), &&  \label{ellt-def}\\
\mt&=(\mt_{I})  := (g_{00M}), \label{mt-def}\\
\tau &= (\tau_{ij}), \label{tau-def}\\
\gt&=(\gt_{Ijkl}) := (g_{Ijkl}), \label{gt-def}\\    
  \taut&=(\taut_{Ijk}) := (\tau_{Ijk}), \label{taut-def}\\
  W&=(W^i), \label{W-def}\\
  \Ttt&=(\Ttt_{ij}) \label{Ttt-def}\\
\intertext{and}
\psit&=(\psit_I{}^k{}_J):=(\gamma_I{}^k{}_J).   \label{psit-def}
\end{align}
Even though we are using $\tau$ in \eqref{tau-def} to denote
the collection of derivatives $\tau_{ij}=\Dc_i\Dc_j\tau$, no ambiguities will arise due to the slicing condition \eqref{tau-fix} that allows us to use the coordinate time $t$ to denote the scalar field $\tau$.

We employ the $*$-notation from \cite[\S8]{BeyerOliynyk:2021} for multilinear maps. This notation is useful for analyzing multilinear maps where it is not necessary to keep track of the constant coefficients that define the maps. For example,
$\kt*\ellt$ denotes tensor fields of the form
\begin{align*}
[\kt*\ellt]_{ij}= C_{ij}^{KLMpQ}\kt_{KL}\ellt_{MpQ}
\end{align*}
where the coefficients $C_{ij}^{KLMpQ}$ are constants. We also use the notation
\begin{equation*}
    (1+\betat)\kt*\ellt= \kt*\ellt +\betat (\kt*\ellt)
\end{equation*}
where on the right hand side the two $\kt*\ellt$ terms correspond, in general, to distinct bilinear maps, e.g.
\begin{equation*}
    [(1+\betat)\kt*\ellt]_{ij}:=  C_{ij}^{KLMpQ}\kt_{KL}\ellt_{MpQ}+\betat  \tilde{C}_{ij}^{KLMpQ}\kt_{KL}\ellt_{MpQ}.
\end{equation*}
More generally, the $*$ will function as a product that distributes over sums of terms where the terms
$\lambda \kt*\ellt$, $\lambda \in \Rbb\setminus\{0\}$, and $\ellt *\kt$ are identified. For example,
\begin{gather*}
    \ellt*(\kt+\mt)= (\kt+\mt)*\ellt= \ellt*\kt+\ellt*\mt, \\
    (\mt+\psit)*(\ellt+\kt) = \mt*\ellt + \mt*\kt + \psit*\ellt + \psit*\kt 
\intertext{and}
    (\kt+\ellt)*(\kt+\ellt)= \kt*\kt + \kt*\ellt + \ellt*\ellt.
\end{gather*}

\subsection{Structure lemmas}
The proofs of the first four lemmas, Lemmas~\ref{Qij-lem}, \ref{Qijk-lem}, \ref{lem-cov-exp} and \ref{lem-JK-exp}, can be found in \cite[\S8]{BeyerOliynyk:2021}. 
The final lemma, Lemma~\ref{lem-P-exp}, is a slight extension of \cite[Lemma~8.4]{BeyerOliynyk:2021} that accounts for the additional terms that appear due to fluid coupling.

\begin{lem} \label{Qij-lem}
\begin{equation}
  \begin{split}
Q_{00}&=  -\frac{3}{2}\bigl(\delta^{RS}\kt_{RS}\bigr)^2 + \frac{1}{2} \delta^{PQ}\delta^{RS}\kt_{PR} \kt_{QS}+
\Qf, \quad
Q_{0M}= \Qf_M, \\
Q_{LM}&= \delta^{RS}\kt_{LR} \kt_{MS}+
\Qf_{LM}\label{Qij-lem.1}
  \end{split}
\end{equation}
where  $\Qf = (\kt+\ellt+\mt)*(\ellt+\mt)$.
\end{lem}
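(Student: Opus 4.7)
The plan is a direct computation. Substituting $g^{kl}=\eta^{kl}$ (frame orthonormality) into the definition \eqref{Q-def}, each $Q_{ij}$ becomes a sum of bilinear contractions in the components $g_{abc}=\Dc_a g_{bc}$, which we expand in terms of $\kt$, $\ellt$, $\mt$. The key preliminary observation is to classify which components of $g_{abc}$ carry a $\kt$-contribution. Directly from \eqref{kt-def}--\eqref{mt-def} one has $g_{IJK}=\ellt_{IJK}$ and $g_{I0K}=\ellt_{I0K}$ (pure $\ellt$), $g_{00I}=g_{0I0}=\mt_I$ (pure $\mt$), and $g_{0IJ}=\kt_{IJ}+\ellt_{I0J}+\ellt_{J0I}$ (mixed). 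The wave-gauge identities \eqref{gi00} give $g_{I00}=2\mt_I-\delta^{JK}(2\ellt_{JKI}-\ellt_{IJK})$, pure $\mt+\ellt$; and a one-line computation exploiting the symmetry of $\delta^{JK}$ yields $g_{000}=-\delta^{JK}\kt_{JK}$, pure $\kt$. Hence only $g_{0IJ}$ and $g_{000}$ carry $\kt$, and every other bilinear combination lands in $\Qf=(\kt+\ellt+\mt)*(\ellt+\mt)$.

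With this decomposition in hand, I would expand each of the five summands in \eqref{Q-def} for the three index combinations $(i,j)=(0,0),(0,M),(L,M)$. The factor $\eta^{kl}\eta^{mn}$ vanishes unless each pair $(k,l)$ and $(m,n)$ is either entirely zero or entirely spatial, so only four blocks of indices need to be considered per term. For $Q_{0M}$, exactly one factor in every summand of \eqref{Q-def} has its first index fixed to the spatial value $M$ (namely $g_{Mnl}$, $g_{kMm}$, $g_{Mmk}$, or $g_{lMn}$), and such a factor carries no $\kt$; a short index check confirms that the only way to get $\kt$ in the other factor (via $g_{000}$) forces an index block incompatible with the non-vanishing blocks of $\eta^{kl}\eta^{mn}$, so no $\kt*\kt$ product survives and $Q_{0M}=\Qf_M$.

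For $Q_{LM}$, a term-by-term inspection shows that the only surviving $\kt*\kt$ contribution comes from the third summand $-2\eta^{kl}\eta^{mn}g_{lLn}g_{kMm}$ evaluated at $k=l=0$ and $m,n$ both spatial; extracting the $\kt$-parts of $g_{0LN}$ and $g_{0MM'}$ and using $\eta^{00}=-1$ together with $\eta^{mn}=\delta^{M'N}$ produces exactly $\delta^{RS}\kt_{LR}\kt_{MS}$. For $Q_{00}$, both factors in all five summands can carry $\kt$, and the task reduces to summing two types of contribution: the all-spatial block of the first summand, which yields $\tfrac{1}{2}\delta^{PQ}\delta^{RS}\kt_{PR}\kt_{QS}$ after relabelling of indices; and the all-zero block of all five summands, in which every $g_{abc}$ collapses to $g_{000}=-\delta^{JK}\kt_{JK}$, producing $g_{000}^2=(\delta^{RS}\kt_{RS})^2$ with coefficients $\tfrac{1}{2}, +1, -1, -2$ from summands 1, 2, 3, and the combined 4+5, respectively. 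These sum to $\tfrac{1}{2}+1-1-2=-\tfrac{3}{2}$, matching the stated coefficient.

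I expect the main obstacle to be the bookkeeping of the $g_{000}$-contribution in $Q_{00}$: one might initially hope the wave-gauge relation removes $g_{000}$ entirely, but the two $\ellt$-pieces in \eqref{gi00} cancel under the symmetric contraction, leaving an irreducible $\kt$-content which must then be carried through all five summands of \eqref{Q-def} and combined with the correct relative signs with the all-spatial $\kt*\kt$ piece of the first summand. Everything else is routine index tracking once the classification of $g_{abc}$-components into their $\kt$, $\ellt$, $\mt$ parts is in place.
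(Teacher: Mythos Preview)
Your proposal is correct and follows the same direct-computation approach that the paper defers to \cite{BeyerOliynyk:2021}: classify which $g_{abc}$ carry $\kt$-content (namely $g_{000}=-\delta^{JK}\kt_{JK}$ and $g_{0IJ}=\kt_{IJ}+\ellt_{I0J}+\ellt_{J0I}$), then track the $\kt*\kt$ contributions through the five summands of \eqref{Q-def} block by block in $\eta^{kl}\eta^{mn}$. Your bookkeeping of the coefficients in the all-zero block of $Q_{00}$ and the identification of the single surviving $\kt*\kt$ term in $Q_{LM}$ are both accurate.
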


\begin{lem}\label{Qijk-lem}
\begin{equation}
  \begin{split}
Q^0_{0M}&= \Qft_M, \quad Q^I_{0M}=\Qft^I_M, \quad
Q^I_{LM}=  \Qft^I_{LM},\\
Q^0_{LM}&= -\frac{1}{2}\delta^{RS}\kt_{RS}\kt_{LM}+ \Qft^0_{LM} \label{Qijk-lem.1}
  \end{split}
\end{equation}
where $\Qft =\betat(\kt+\ellt+\mt)*\tau +(\kt+\ellt+\mt)*(\ellt+\mt)$.
\end{lem}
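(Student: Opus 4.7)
The plan is to substitute the explicit expressions for the connection coefficients $\gamma_0{}^k{}_j$ given by \eqref{gamma-000}--\eqref{gamma-0KJ} into the definition \eqref{Qilm-def} of $Q^i_{lm}$, use the wave gauge identities \eqref{gi00} to eliminate $g_{000}$ and $g_{I00}$ in favour of $\mt$ and $\ellt$, and use the relation $g_{0LM} = \kt_{LM} + \ellt_{L0M} + \ellt_{M0L}$ (from the definition \eqref{kt-def} of $\kt$) to replace every $g_{0LM}$ by a combination of $\kt$ and $\ellt$. The four cases $(i,l,m)=(0,0,M),(I,0,M),(I,L,M),(0,L,M)$ are then treated by direct expansion, and each resulting product is classified into the $*$-schema. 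Lemma~\ref{Qij-lem} is used to handle the $\delta^i_0 Q_{lm}$ contribution when $i=0$.

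For the three cases $Q^0_{0M}$, $Q^I_{0M}$ and $Q^I_{LM}$, every term arising after substitution has the form $(\gamma_0)\cdot g_{\cdots}$ where the available slots are either $(0,M)$ or $(L,M)$ with $l,m\neq 0$ treated via $\ellt$ and $g_{00M}=\mt_M$. A quick inspection shows that whenever the connection factor $\gamma_0{}^p{}_j$ carries time indices only, it produces $\betat\tau$ or $\mt$ factors (via \eqref{gamma-000}, \eqref{gamma-0K0}, \eqref{gamma-00J}) and multiplies a metric-derivative that is $\ellt$ or $\mt$; when the connection factor has a spatial lower index $J$, namely $\gamma_0{}^K{}_J$ from \eqref{gamma-0KJ}, it carries one factor of $\kt$ or $\ellt$ and multiplies either $\ellt$ or $\mt$ (noting that the $g_{0lm}$ case is only reached when $(l,m)$ contains a $0$, in which case the identity $g_{0L0}=g_{00L}=\mt_L$ keeps us in the $\ellt+\mt$ sector). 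For $Q_{0M}$ and $Q_{LM}$ the contribution $\delta^{RS}\kt_{LR}\kt_{MS}$ appearing in Lemma~\ref{Qij-lem} plays no role because it multiplies the Kronecker $\delta^I_0=0$. Hence every term falls into $\betat(\kt+\ellt+\mt)*\tau + (\kt+\ellt+\mt)*(\ellt+\mt)=\Qft$, which proves the first three identities.

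The remaining case $Q^0_{LM}$ is the only one containing a genuine $\kt\!*\!\kt$ contribution, and extracting it requires the one non-trivial computation of the lemma. Three distinct sources contribute terms quadratic in $\kt$: first, from $Q_{LM}$ via Lemma~\ref{Qij-lem} we collect $\delta^{RS}\kt_{LR}\kt_{MS}$; second, from $\gamma_0{}^0{}_0\, g_{0LM}$ with $\gamma_0{}^0{}_0 = -\tfrac{1}{2}\delta^{PQ}\kt_{PQ}$ (the $\ellt$ piece drops out by the symmetry of $\delta^{PQ}$) and $g_{0LM}|_{\kt} = \kt_{LM}$ we collect $-\tfrac{1}{2}\delta^{RS}\kt_{RS}\kt_{LM}$; and third, from $\gamma_0{}^P{}_L\, g_{0PM}$ and $\gamma_0{}^P{}_M\, g_{0LP}$, whose $\kt$-parts give by \eqref{gamma-0KJ} and the symmetry $\kt_{LM}=\kt_{ML}$ a total contribution of $-\delta^{RS}\kt_{LR}\kt_{MS}$. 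The first and third sources cancel exactly, leaving $-\tfrac{1}{2}\delta^{RS}\kt_{RS}\kt_{LM}$ as the sole $\kt\!*\!\kt$ part of $Q^0_{LM}$. All other terms are, by the same classification used in the easy cases, absorbed into $\Qft^0_{LM}$.

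The main obstacle is precisely the cancellation just described: it would fail for any other bilinear contraction of $\kt$ with itself, so a careless bookkeeping would leave a residual $\kt\!*\!\kt$ term outside the $\Qft$-class and invalidate the statement. Apart from this one piece of algebra, the proof is otherwise a systematic case analysis built out of Lemma~\ref{Qij-lem}, the explicit formulas \eqref{gamma-000}--\eqref{gamma-0KJ}, the wave gauge \eqref{gi00}, and the definitions \eqref{kt-def}--\eqref{psit-def}.
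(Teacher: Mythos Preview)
Your proposal is correct. The paper does not give a self-contained proof of this lemma but refers to \cite[\S8]{BeyerOliynyk:2021}; the approach there is exactly the direct expansion you carry out, using the explicit formulas \eqref{gamma-000}--\eqref{gamma-0KJ} for $\gamma_0{}^k{}_j$, the wave-gauge relations \eqref{gi00}, the identity $g_{0LM}=\kt_{LM}+\ellt_{L0M}+\ellt_{M0L}$, and Lemma~\ref{Qij-lem}. Your identification of the three $\kt*\kt$ sources in $Q^0_{LM}$ and the cancellation between the $\delta^{RS}\kt_{LR}\kt_{MS}$ term from $Q_{LM}$ and the two $\gamma_0{}^K{}_{(L}g_{|0K|M)}$ terms is the key computation, and you have it right.
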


\begin{lem} \label{lem-cov-exp}
\begin{align}
     e_I(\tau_{jk})&=\tau_{Ijk} + \tf_{Ijk}, \label{lem-cov-exp.1}\\
     e_I(g_{jkl})&= g_{Ijkl}+\gf_{Ijkl},
     \label{lem-cov-exp.2}
     \intertext{and}
     L_I{}^k{}_J&=\betat\Lf_I{}^k{}_J\label{lem-cov-exp.3}
\end{align}
where
\begin{equation} \label{gf-def}
\tf= (\psit+\ellt+\mt)*\tau, \quad \gf= (\psit+\ellt+\mt)*(\kt+\ellt +\mt) 
\end{equation}
and
\begin{equation} \label{Lf-def}
     \Lf=  (\betat\tau+\psit+\kt+\ellt +\mt)*(\betat \tau+\psit+\ellt+\mt).
\end{equation}
\end{lem}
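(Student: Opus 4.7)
The plan is to reduce each identity to a bookkeeping computation tracking which of the structural variables $\psit$, $\kt$, $\ellt$, $\mt$, $\tau$ appear as factors, using the formulas derived in Section~\ref{sec:FirstOrderForm}.

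For parts (a) and (b), I would start from the standard frame identity relating $e_I$ to the background covariant derivative $\Dc_I$: for any covariant tensor with frame components $T_{j_1\cdots j_r}$,
\[
e_I(T_{j_1\cdots j_r}) = \Dc_I T_{j_1\cdots j_r} + \sum_{a=1}^r \gamma_I{}^p{}_{j_a}\, T_{j_1\cdots p \cdots j_r}.
\]
Applying this with $T=\tau_{jk}$ and with $T = g_{jkl}$, and invoking the definitions $\tau_{Ijk}=\Dc_I\tau_{jk}$ and $g_{Ijkl}=\Dc_I g_{jkl}$ from \eqref{d-fields}, reduces parts (a) and (b) to the identifications $\gamma_I{}^p{}_j\tau_{pk}+\gamma_I{}^p{}_k\tau_{jp} = \tf_{Ijk}$ and $\gamma_I{}^p{}_j g_{pkl}+(\text{permutations}) = \gf_{Ijkl}$. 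Both identifications then follow from two structural facts: (i) every component of $\gamma_I{}^p{}_q$ lies in $\mathrm{span}\{\psit,\ellt,\mt\}$, and (ii) every component of $g_{jkl}$ lies in $\mathrm{span}\{\kt,\ellt,\mt\}$. For (i): \eqref{psit-def} gives $\gamma_I{}^K{}_J=\psit_I{}^K{}_J$ and $\gamma_I{}^0{}_J=\psit_I{}^0{}_J$, while \eqref{gamma-I00}-\eqref{gamma-IJ0} express $\gamma_I{}^0{}_0$ and $\gamma_I{}^J{}_0$ as linear combinations of $\mt$, $\ellt$, and $\psit$. For (ii): $g_{IjK}=\ellt_{IjK}$ and $g_{00M}=\mt_M$ by definition, the remaining time-spatial-spatial component is recovered from $g_{0JK} = \kt_{JK} + \ellt_{J0K}+\ellt_{K0J}$, and $g_{000}$, $g_{I00}$ are expressible in $\ellt$ and $\mt$ via the wave-gauge relations \eqref{gi00}.

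For part (c), I would work term by term through the defining formula \eqref{L-def} of $L_I{}^k{}_J$, factor out the overall $\betat$, and exhibit the residue $\Lf_I{}^k{}_J$ as a sum of bilinear products of the claimed type. The explicit terms $\delta^k_0\betat^2\tau_{I0}\tau_{0J}$ and $\delta^k_0\frac{1}{2}\betat(2g_{00I}-\delta^{KL}(2g_{KLI}-g_{IKL}))\tau_{0J}$ contribute $(\betat\tau)*(\betat\tau)$ and $(\mt+\ellt)*(\betat\tau)$ respectively, after substituting $g_{00I}=\mt_I$ and $g_{IKL}=\ellt_{IKL}$. For the three remaining summands built from products of connection coefficients, formulas \eqref{gamma-000}-\eqref{gamma-0KJ} show that every component of $\gamma_0{}^k{}_l$ lies in $\mathrm{span}\{\betat\tau,\kt,\ellt,\mt\}$, while by step (i) every component of $\gamma_I{}^k{}_l$ lies in $\mathrm{span}\{\psit,\ellt,\mt\}$. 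Using the symmetry $A*B = B*A$ of the $*$-notation, I would always place the $\kt$-bearing factor (necessarily a $\gamma_0$-component) on the left, yielding $\Lf = (\betat\tau+\psit+\kt+\ellt+\mt)*(\betat\tau+\psit+\ellt+\mt)$.

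The main bookkeeping obstacle sits in the last summand $(\gamma_0{}^l{}_I - \gamma_I{}^l{}_0)\gamma_l{}^k{}_J$ of \eqref{L-def}, where for $l=0$ both factors are $\gamma_0$-components and one must verify that no $\kt*\kt$ term arises. A direct inspection using \eqref{gamma-000}-\eqref{gamma-00J} shows that $\gamma_0{}^0{}_I\in\mathrm{span}\{\betat\tau\}$ (no $\kt$), and that whenever $\gamma_0{}^k{}_J$ on the right carries $\kt$ (i.e. when $k$ is spatial), its partner factor on the left belongs to $\mathrm{span}\{\betat\tau,\mt,\ellt\}$ and carries no $\kt$. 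The asymmetry between the two factors in the claimed product precisely encodes the absence of these would-be $\kt*\kt$ terms, completing the verification.
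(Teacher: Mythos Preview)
Your approach is correct and is the natural direct verification; the paper itself defers the proof to \cite[\S8]{BeyerOliynyk:2021}, and what you have written is essentially the computation one would expect there. One small imprecision: you claim $g_{000}$ is expressible in $\ellt$ and $\mt$ via \eqref{gi00}, but in fact $g_{000}=-\delta^{JK}(g_{0JK}-2g_{J0K})=-\delta^{JK}\kt_{JK}$ lies in $\mathrm{span}\{\kt\}$; this does not affect your argument since claim~(ii) only requires $g_{jkl}\in\mathrm{span}\{\kt,\ellt,\mt\}$, which still holds.
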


We note that \eqref{lem-cov-exp.1}-\eqref{lem-cov-exp.2} allows us to express the evolution equation
\eqref{for-M.2} for the background connection coefficients $\gamma_I{}^k{}_J$
as 
\begin{align}
    \del{t}\gamma_I{}^k{}_J=&
-\delta^k_0\Bigl(\betat^2 \tau_{I0J}+\frac{1}{2}\betat g_{IJ00}\Bigr)-\frac{1}{2}\eta^{kl}\bigl(\betat g_{I0Jl}+\betat g_{IJ0l}-\betat g_{Il0J}\bigr)\notag\\
&+\betat \Lf_I{}^k{}_J\label{for-N},
\end{align}
where $\Lf$ is of the form \eqref{Lf-def}.
Using \eqref{kt-def} and \eqref{psit-def}, we also note that the transport equations
\eqref{for-L} and \eqref{for-M.1} can be written as
\begin{align}
\del{t}\betat &= -\betat^3 \tau_{00}+\frac{1}{2} \delta^{JK}\betat^2\kt_{JK} \label{for-O.1}
\intertext{and}
\del{t}e^\Lambda_I &= - \Bigl(\frac{1}{2}\delta^{JL}\betat\kt_{IL}+\delta^{JK}\betat\psit_{I}{}^0{}_K\Bigr)e^\Lambda_J, \label{for-O.2}
\end{align}
respectively. For use below, we record that
$ e_I(\betat)$ can, using \eqref{grad-alpha-A} and
\eqref{ellt-def}-\eqref{mt-def}, be expressed as
\begin{equation} \label{grad-alpha-B}
    e_I(\betat)= -\betat^2 \tau_{I0}-\frac{1}{2}\betat\bigl( 2 \mt_I -\delta^{JK}(2\ellt_{JKI}-\ellt_{IJK})\bigr).
  \end{equation}

  \begin{lem}\label{lem-JK-exp}
\begin{equation*}
  \betat J^j_l=   \bigl[(\betat^2 \tau + \betat\kt +\betat\ellt +\betat \mt)*\tau\bigr]^j_l,\quad
  \betat K^i_{Ql} = \Kf^i_{Ql},
\end{equation*}
where
\begin{equation*}
  \Kf = (\betat\kt+\betat\ellt+\betat\mt)*\taut + \betat\gt*\tau + (\ellt+\mt)*(\betat\kt+\betat\ellt+\betat\mt)*\tau.
\end{equation*}
\end{lem}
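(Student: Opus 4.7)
The plan is to prove both identities by direct substitution: replace each of the auxiliary quantities appearing in \eqref{J-def} and \eqref{K-def} by its decomposition into the primary variables $\kt,\ellt,\mt,\taut,\gt,\tau$, and then verify that every resulting term fits into one of the schematic summands allowed by the $*$-notation. The key inputs are:\ the definitions \eqref{kt-def}--\eqref{psit-def} together with the wave-gauge identities \eqref{gi00}, which express every component of $g_{ijk}$ as a linear combination of components of $(\kt,\ellt,\mt)$; the formulas \eqref{gamma-000}--\eqref{gamma-0KJ}, which show that each component of $\gamma_0{}^k{}_j$ is the sum of a $\betat\tau$-piece and a linear piece in $(\kt,\ellt,\mt)$; and \eqref{for-K.2}, which allows us to rewrite $\tau_{0jl}$ whenever it appears.

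For $\betat J^j_l$, the first term $-\delta^j_0\eta^{kp}\eta^{iq} g_{lpq}\tau_{ki}$ is linear in $g_{lpq}$ and $\tau_{ki}$, so after the substitution $g_{lpq}\leadsto \kt+\ellt+\mt$ it produces a contribution of the schematic form $\betat(\kt+\ellt+\mt)*\tau$. In the second term $\delta^{ji}(\gamma_0{}^p{}_i\tau_{pl}+\gamma_0{}^p{}_l\tau_{ip})$, substituting the decomposition $\gamma_0 \sim \betat\tau + (\kt+\ellt+\mt)$ and multiplying by $\betat$ yields $\bigl(\betat^2\tau+\betat(\kt+\ellt+\mt)\bigr)*\tau$. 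Summing gives precisely the claimed form for $\betat J^j_l$.

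For $\betat K^i_{Ql}$, the overall prefactor $\delta^i_0$ restricts attention to $i=0$, and I treat the three summands in \eqref{K-def} separately. In $-\eta^{jr}\eta^{ks}g_{Qrs}\tau_{kjl}$ the factor $g_{Qrs}$ is $\ellt_{Qrs}$ since $Q$ is spatial; the factor $\tau_{kjl}$ equals $\taut_{Kjl}$ whenever $k=K$, and for $k=0$ gets replaced via \eqref{for-K.2} by a sum of a $\taut$-term and a $(\kt+\ellt+\mt)*\tau$-term, so multiplying by $\betat$ produces contributions that lie in both $(\betat\kt+\betat\ellt+\betat\mt)*\taut$ and $(\ellt+\mt)*(\betat\kt+\betat\ellt+\betat\mt)*\tau$. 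The summand $-\eta^{jr}\eta^{ks}g_{Qlrs}\tau_{jk}$ matches $\betat\gt*\tau$ because $g_{Qlrs}=\gt_{Qlrs}$ when the first index is spatial. The summand $2\eta^{jm}\eta^{rn}\eta^{ks}g_{Qmn}g_{lrs}\tau_{jk}$ factorises into $g_{Qmn}=\ellt_{Qmn}$ (with $g_{Q00}$ handled by \eqref{gi00}, contributing the $\mt$-piece of the outer factor) times $g_{lrs}\sim \kt+\ellt+\mt$ times $\tau$, yielding $(\ellt+\mt)*(\betat\kt+\betat\ellt+\betat\mt)*\tau$. Finally, $-\eta^{jr}\eta^{ks}g_{lrs}\tau_{Qjk}$ has $\tau_{Qjk}=\taut_{Qjk}$ and $g_{lrs}$ of the form $\kt+\ellt+\mt$, contributing to $(\betat\kt+\betat\ellt+\betat\mt)*\taut$.

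The main (mild) obstacle will be bookkeeping the components $g_{I00}$, $g_{000}$, and $\tau_{0jl}$, which are \emph{not} among the basic variables and must be eliminated using the wave-gauge identities \eqref{gi00} and the auxiliary relation \eqref{for-K.2}. These substitutions split a few terms into additional cross-products, but because the $*$-notation is distributive and absorbs constant coefficients, every such product is already listed in the schematic form of $\Kf$; in particular, the absence of a $\kt$-contribution in the outer factor of the third summand of $\Kf$ is explained by the fact that $g_{Qmn}$ never couples to $\kt$ when $Q$ is spatial, since $\kt$ requires a leading time index. Once these substitutions are made, the verification reduces to a routine inspection of index placements.
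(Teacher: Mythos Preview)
Your proposal is correct and follows exactly the direct-substitution strategy one expects here: use \eqref{gi00} to express every $g_{ijk}$ as a constant-coefficient linear combination of $(\kt,\ellt,\mt)$, use \eqref{gamma-000}--\eqref{gamma-0KJ} to decompose $\gamma_0{}^p{}_j$ as $\betat\tau+(\kt+\ellt+\mt)$, and use \eqref{for-K.2} to eliminate $\tau_{0jl}$ in favour of $\taut$ and a $(\kt+\ellt+\mt)*\tau$ term. This is precisely the argument the paper has in mind; the paper omits the proof and simply cites \cite[\S8]{BeyerOliynyk:2021}, where the same direct verification is carried out.

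One minor wording point: you write that ``$g_{Qrs}$ is $\ellt_{Qrs}$ since $Q$ is spatial'', but the case $r=s=0$ forces you through \eqref{gi00}, so $g_{Qrs}$ is of type $\ellt+\mt$ rather than pure $\ellt$. You clearly know this (you invoke it later for the third summand), and since the target $(\betat\kt+\betat\ellt+\betat\mt)*\taut$ already contains an $\mt$ slot, nothing is affected; just tighten the phrasing so the first summand of $\Kf$ is handled consistently.
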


\begin{lem}\label{lem-P-exp}
\begin{align}
   \betat P_{Qlm} =& \Pf_{Qlm}{+\left[\betat (\ellt+\mt)*\Ttt\right]_{Qlm}},\label{lem-P-exp.1} \\
   \delta^q_Q \betat \Dc_k g_{qjlm} =&  \delta_k^0\del{t}g_{Qjlm} +\delta_k^K\betat e_K(g_{Qjlm})+ \Gf_{Qkjlm}\notag\\
    &+\left[\betat(\betat \tau +\psit) *\Ttt\right]_{Qkjlm}, \label{lem-P-exp.2}
   \intertext{and}
   \delta^q_Q \betat \Dc_k \tau_{qjl}=&
                                       \delta_k^0\del{t}\tau_{Qjl} +\delta_k^K\betat e_K(\tau_{Qjl})+ \Tf_{Qkjl}\label{lem-P-exp.3} 
\end{align}
where
\begin{align}
\Pf =& \frac{1}{t}(\betat\tau + \psit + \ellt +\mt)*(\kt+\ellt+\mt) \notag\\
&+ \betat(\kt+\ellt+\mt) *\bigl( \gt+
      (\ellt+\mt)*(\kt+\ellt+\mt)\bigl), \label{Pf-def}\\
    \Gf=&(\betat \tau +\psit) *\Bigl( \frac{1}{t}\bigl(\betat \tau+\kt+\ellt+\mt)  
     +\betat(\kt+\ellt+\mt)*(\kt+\ellt+\mt)\Bigr) \notag\\
     &\qquad
     +\betat \bigl(\betat \tau+\psit+\kt+\ellt+\mt)*\gt  
    \label{Gf-def}
    \intertext{and}
    \Tf=&\betat (\betat \tau+\psit+ \kt+ \ellt +\mt)*\taut\notag\\
         &+\betat (\betat \tau+\psit +\ellt +\mt)*(\kt + \ellt +\mt)*\tau.\label{Tf-def}
\end{align}
\end{lem}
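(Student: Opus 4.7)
The identities in this lemma are purely algebraic and are proved by expanding the definitions of $P_{qlm}$, $\Dc_k g_{qjlm}$, and $\Dc_k \tau_{qjl}$ in terms of frame derivatives and connection coefficients, substituting the structural formulas for $\gamma_0{}^k{}_j$, $\gamma_I{}^k{}_j$ (for $k=0$), $\Dc_q\betat$, and the wave-gauge-reduced $g_{i00}$ that were derived in the preceding sections, and then collecting terms into the schematic $*$-forms defining $\Pf$, $\Gf$, and $\Tf$. Items (2) and (3) are the natural generalizations of Lemma~8.4 of \cite{BeyerOliynyk:2021}, and item (3) is in fact identical to its scalar-field counterpart and follows by the same argument, since the reduced scalar field equation \eqref{confESFGa} is unaffected by fluid coupling. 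The new content is the appearance of the schematic $\Ttt$-summands in items (1) and (2), which arise whenever the reduced Einstein equation \eqref{confESSFJ} is invoked to eliminate occurrences of the second-order quantity $g^{ab}\Dc_a\Dc_b g_{lm}$; the $-2\Ttt_{lm}$ source in \eqref{confESSFJ} then generates precisely the stated contributions.

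For item (1), the plan is to specialize the definition \eqref{P-def} of $P_{qlm}$ to $q=Q$ (which kills the final $\betat^{-1}\delta_q^0$ piece), multiply by $\betat$, and handle each of the remaining contributions in turn. The factor $\Dc_Q(\betat^{-1}) = -\betat^{-2}\Dc_Q\betat$ is rewritten using \eqref{grad-alpha-B}, while $\gamma_Q{}^j{}_0$ is read off from \eqref{gamma-000}--\eqref{gamma-0KJ}; the two $\frac{1}{t}$-brackets then collapse into the $\frac{1}{t}(\betat\tau+\psit+\ellt+\mt)*(\kt+\ellt+\mt)$ summand of $\Pf$. The contraction $-\eta^{kr}\eta^{js}g_{Qrs}g_{kjlm}$ contributes to the $\betat(\kt+\ellt+\mt)*\gt$ piece of $\Pf$, with the wave-gauge identity \eqref{gi00} reducing the relevant components of $g_{Qrs}$ to the $\ellt+\mt$ family. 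The derivative $\Dc_Q Q_{lm}$ is expanded with the aid of Lemma~\ref{Qij-lem}, yielding a $\betat(\kt+\ellt+\mt)*\gt$ piece together with a cubic $\betat(\kt+\ellt+\mt)*(\ellt+\mt)*(\kt+\ellt+\mt)$ piece, both absorbed into $\Pf$. The $\Ttt$-summand arises upon rewriting the true inverse-metric derivative via $\Dc_Q g^{ab} = -g^{ka}g^{lb}g_{Qkl}$ and then using \eqref{confESSFJ} to eliminate the resulting second-order expression $g^{ab}\Dc_a\Dc_b g_{lm}$, since wave-gauge reduction of $g_{Qkl}$ (for spatial $Q$) leaves a factor in $\ellt + \mt$, producing the asserted $[\betat(\ellt+\mt)*\Ttt]_{Qlm}$.

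For items (2) and (3), the plan is to unfold the covariant derivatives $\Dc_k g_{qjlm} = e_k(g_{qjlm}) - \gamma_k{}^i{}_q g_{ijlm} - \gamma_k{}^i{}_j g_{qilm} - \gamma_k{}^i{}_l g_{qjim} - \gamma_k{}^i{}_m g_{qjli}$ and analogously for $\Dc_k \tau_{qjl}$, specialize to $q=Q$, and split $k$ into its temporal part ($k=0$, using $e_0 = \betat^{-1}\del{t}$) and spatial part ($k=K$, giving $\betat e_K$). The first two terms on the RHS of each identity are then the principal derivative parts, and the remaining connection-correction terms are rewritten via \eqref{gamma-000}--\eqref{gamma-0KJ} and \eqref{gamma-I00}--\eqref{gamma-IJ0} and collected into the schematic forms of $\Gf$ and $\Tf$, respectively. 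For item (2), the $\Ttt$-summand arises from an analogous substitution via \eqref{confESSFJ} when a $\betat^2\tau$ factor generated by $\Dc_q\betat$ inside the $\gamma_0$-corrections is traded for an expression involving $\Ttt$; no such trade occurs in (3). The main obstacle is the careful bookkeeping required to ensure that each resulting $\Ttt$ factor carries the precise schematic weight stated in the lemma -- $\betat(\ellt+\mt)$ in (1) and $\betat(\betat\tau+\psit)$ in (2), with no extraneous $\kt$ or $\gt$ factors -- which requires systematic use of \eqref{gi00} to eliminate every occurrence of $g_{Q00}$ and $g_{I00}$ before any $\Ttt$-contraction is performed.
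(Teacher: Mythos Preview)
Your overall architecture is right, and item (3) is indeed identical to the scalar-field case. But your identification of where the $\Ttt$-contributions in (1) and (2) come from is incorrect, and the mechanisms you describe would not produce the stated schematic weights.

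In item (1), there is no ``true inverse-metric derivative'' in $P_{Qlm}$ and no contracted expression $g^{ab}\Dc_a\Dc_b g_{lm}$ to eliminate; the definition \eqref{P-def} is already purely first-order. The point you have missed is that the term $-\eta^{kr}\eta^{js}g_{Qrs}g_{kjlm}$ contains, for $k=0$, the component $g_{0jlm}$, and this is \emph{not} a $\gt$-variable (recall $\gt = (g_{Ijkl})$ has a spatial first index only). To close the system one must express $g_{0jlm}$ via the first-order Einstein equation \eqref{for-K.1}, which in its expanded form \eqref{g0rlm-exp} carries the term $2\delta^0_r\Ttt_{lm}$. The factor multiplying $g_{00lm}$ is $g_{Q00}$, which by the wave-gauge identity \eqref{gi00} lies in $\mt+\ellt$; this is what gives $[\betat(\ellt+\mt)*\Ttt]_{Qlm}$.

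In item (2), there is no $\Dc_q\betat$ in the covariant-derivative expansion of $\Dc_k g_{Qjlm}$, and nothing is ``traded''. The correct mechanism is again the appearance of $g_{0jlm}$: expanding $\Dc_k g_{Qjlm}$ in connection coefficients produces the correction $-\betat\gamma_k{}^0{}_Q\, g_{0jlm}$, and substituting \eqref{g0rlm-exp} for $g_{0jlm}$ yields the $\Ttt$ piece. The prefactor $\gamma_k{}^0{}_Q$ is $-\betat\tau_{0Q}$ for $k=0$ (by \eqref{gamma-00J}) and $\psit_K{}^0{}_Q$ for $k=K$, whence the weight $\betat(\betat\tau+\psit)$. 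No such $g_{0jlm}$ term arises in (3) because the analogous correction there involves $\tau_{0jl}$, and \eqref{for-K.2} has no fluid source.
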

\begin{proof}
Differentiating \eqref{Q-def}, we find, by \eqref{orthog} and \eqref{p-fields}-\eqref{d-fields}, that
\begin{equation} \label{Dq-Qij}
    \Dc_q Q_{ij} =\Qtt^1_{qij}+ \Qtt^2_{qij}
\end{equation}
where 
\begin{align*}
\Qtt^{1}_{qij} = \frac{1}{2}\eta^{kl}\eta^{mn}\Bigl(&g_{qimk} g_{jn l}+g_{imk}g_{qjnl}
+2 g_{qnil}g_{kjm} +2g_{nil}g_{qkjm}\\ 
&- 2g_{qlin} g_{kjm}- 2g_{lin}g_{qkjm}
-2g_{qlin}g_{jmk}-2g_{lin}g_{qjmk}\\
& -2 g_{qimk}g_{ljn}-2g_{imk}g_{qljn}\Bigr)
\end{align*}
and
\begin{align*}
    \Qtt^2_{qij} =  -\frac{1}{2}(\eta^{kr}\eta^{ls}g^{mn}+& \eta^{kl} \eta^{mr}\eta^{ns}\bigr)g_{qrs}\Bigl(g_{imk}g_{jn l}
+2 g_{nil}g_{kjm} \\
&- 2g_{lin}g_{kjm}
-2g_{lin}g_{jmk} -2g_{imk}g_{ljn}\Bigr).
\end{align*}
Setting $q=Q$ in \eqref{Dq-Qij} gives
\begin{equation} \label{Dq-exp}
    \delta^q_Q\Dc_q Q_{ij} =\bigl[(\kt+\ellt +\mt)*\bigl( \gt+(\ellt+\mt)*(\kt+\ellt +\mt) \bigr)\bigr]_{Qij},
\end{equation}
where in deriving this we have employed  \eqref{gi00} and \eqref{kt-def}-\eqref{gt-def}.

To proceed, we let
\begin{equation*}
    \Ptt_{Qlm} =
 \frac{1}{t}\bigl(e_Q(\betat^{-1}) \delta_0^j+\betat^{-1}  \gamma_Q{}^j{}_0\bigr)  (g_{ljm}+g_{mjl}-
g_{jlm}) -\eta^{kr}\eta^{js}g_{Qrs}g_{kjlm},
\end{equation*}
which we note, using \eqref{gi00} and \eqref{ellt-def}-\eqref{mt-def}, can be expressed as
\begin{align}
    \Ptt_{Qlm} =&
 \frac{1}{t}\bigl(e_Q(\betat^{-1})+\betat^{-1}  \gamma_Q{}^0{}_0\bigr)  (g_{l0m}+g_{m0l}-
g_{0lm})\notag\\
 &+\frac{1}{t}\betat^{-1}  \gamma_Q{}^J{}_0 (g_{lJm}+g_{mJl}-
g_{Jlm})\notag \\
& -\bigl(2\mt_Q-\delta^{RS}(2\ellt_{RSQ}-\ellt_{QRS})\bigr)g_{00lm}+\eta^{JS} \ellt_{Q0S}g_{0Jlm}\notag\\
&- \delta^{KR}\eta^{js} \ellt_{QsR}g_{Kjlm}.  \label{Ptt-def}
\end{align}
Now, by \eqref{gi00}, \eqref{gamma-I00}-\eqref{gamma-IJ0},  \eqref{kt-def}-\eqref{mt-def}, \eqref{psit-def} and  
\eqref{grad-alpha-B}, we have
\begin{align}
&\frac{1}{t}\bigl(e_Q(\betat^{-1})+\betat^{-1}  \gamma_Q{}^0{}_0\bigr)  (g_{l0m}+g_{m0l}-
g_{0lm})\notag\\
&+\frac{1}{t}\betat^{-1}  \gamma_Q{}^J{}_0 (g_{lJm}+g_{mJl}-
g_{Jlm})\notag \\
=& \frac{1}{t}\betat^{-1}[(\betat \tau+\psit+\ellt+\mt)*(\kt+\ellt+\mt)]_{Qlm}, \label{Ptt-exp-A}
\end{align}
while it is clear from \eqref{for-K.1} and \eqref{Qij-lem.1} that
\begin{align} 
  g_{0rlm} =& \frac{1}{t}\betat^{-1}\delta_r^0(g_{l0m}+g_{m0l}-g_{0lm})
               +\frac{2}{t}\delta_r^0\tau_{lm} - \delta_{ri}B^{ijK}g_{Kjlm}
                           \notag   \\
& +[(\kt+\ellt+\mt)*(\kt+\ellt+\mt)]_{rlm}{+2 \delta^0_r\Ttt_{lm}}. \label{g0rlm-exp}
\end{align}
By \eqref{Dq-exp}-\eqref{g0rlm-exp}, we then, with the help of the definitions \eqref{kt-def}-\eqref{mt-def}, deduce from \eqref{P-def} the validity of the expansion \eqref{lem-P-exp.1}.

Next, by the definition of the covariant derivative $\Dc_q$, we have
\begin{align*}
    \delta^q_Q \betat \Dc_k g_{qjlm} =& \betat e_k(g_{Qjlm})-\betat \gamma_k{}^p{}_{Q} g_{pjlm}
    -\betat \gamma_k{}^p{}_{j} g_{Qplm}\\
     &-\betat \gamma_k{}^p{}_{l} g_{Qjpm}
      -\betat \gamma_k{}^p{}_{m} g_{Qjlp}\\
     =& \delta_k^0\del{t}g_{Qjlm} +\delta_k^K\betat e_K(g_{Qjlm}) -\betat \gamma_k{}^0{}_{Q} g_{0jlm}\\
    &\!\!\!\!-\Bigl(\betat \gamma_k{}^P{}_{Q} g_{Pjlm}
    +\betat \gamma_k{}^p{}_{j} g_{Qplm}
     +\betat \gamma_k{}^p{}_{l} g_{Qjpm}
      +\betat \gamma_k{}^p{}_{m} g_{Qjlp}\Bigr),
\end{align*}
where in deriving the second equality we used \eqref{e0-mu}. The expansion \eqref{lem-P-exp.2} is then a straightforward consequence of \eqref{gamma-000}-\eqref{gamma-0KJ}, \eqref{g0rlm-exp} and the definitions \eqref{kt-def}-\eqref{psit-def}. Furthermore, by employing similar arguments, it is not difficult, with the help of \eqref{for-K.2}, to verify that the expansion \eqref{lem-P-exp.3} also holds. 
\end{proof}

\section{Fuchsian form}
\label{sec:FuchsianForm}
We collect together the following evolution equations from Section \ref{sec:FirstOrderForm}: 
\begin{align}
  \del{t}g_{00M} =& \frac{1}{t}\bigl(2g_{00M}-\delta^{IJ}(2g_{IJM}-g_{MIJ})\bigr)+\frac{2}{t}\betat\tau_{0M}  \notag\\
  &- B^{0jK}\betat g_{Kj0M}+\betat Q^0_{0M}+2 \betat{\Ttt}_{0M},\label{for-G.1.S}\\
\del{t}g_{R0M} =& -\delta_{RI} B^{IjK}\betat g_{Kj0M}+\delta_{RI}\betat Q^I_{0M}, \label{for-G.2.S}\\
  \del{t}(g_{0LM}-g_{L0M}-g_{M0L}) =&  -\frac{1}{t}(g_{0LM}-g_{L0M}-g_{M0L}) +\frac{2}{t}\betat\tau_{(LM)}\notag\\
  &+S_{(LM)}{+2 \betat{\Ttt}_{(LM)}}, \label{for-H.S}\\
\del{t}g_{RLM} =& -\delta_{RI} B^{IjK}\betat g_{KjLM}+
                 \delta_{RI}\betat Q^I_{LM}, \label{for-G.4.S}\\
\del{t}\tau_{rl} =&
 \delta_{ri}\betat J^i_l-B^{ijK}\betat\tau_{Kj(l}\delta_{r)i}
                   , \label{for-F.2.S}\\
\label{for-I.S}
B^{ijk}\betat\Dc_k g_{qjlm} =& \frac{1}{t}
                               \delta_0^i\delta_0^j  (g_{qljm}+
                               g_{qmjl}-g_{qjlm})
                               \notag\\                 
      &+\frac{2}{t}\delta^i_0\betat\tau_{q(lm)}+ \delta^i_0 \betat P_{q(lm)}{+2 \delta^i_0 \betat\Dc_q{\Ttt}_{lm}},\\
  B^{ijk}\betat\Dc_k\tau_{qjl} =&  \betat K^i_{ql}, \label{for-J.S}\\   
\label{for-L.S}
\del{t}\betat 
  =&-\betat^{3}\tau_{00}+\frac{1}{2}\betat^2 \delta^{JK}(g_{0JK}-2g_{J0K}),\\
\del{t}e_I^\Lambda=&-\betat \Bigl(\frac{1}{2}\delta^{JK}(g_{0IK}-g_{I0K}-g_{K0I})\notag\\ 
&\qquad+ \delta^{JK}\gamma_{I}{}^0{}_K \Bigr)
e^\Lambda_J, \label{for-M.1.S}\\
\del{t}\gamma_I{}^k{}_J=&
                          -\betat\Bigl(\delta^k_0\Bigl(\betat e_I(\tau_{0J})+\frac{1}{2}e_I(g_{J00})\Bigr)\notag \\
                          &+\frac{1}{2}\eta^{kl}\bigl(e_I(g_{0Jl})+e_I(g_{J0l})-e_I(g_{l0J})\bigr)\Bigr) \notag\\
                  &+L_I{}^k{}_J, \label{for-M.2.S}\\
  {a^i}_{j k}\betat\Dc_i W^k =&\betat G_{jsl}W^sW^l, \label{for-Euler.1}\\
  {a^i}_{j k}\betat\Dc_i {\Utt}^k_Q=&\betat \Dc_Q(G_{jsl}W^sW^l)\notag\\
  &+\betat\bigl(\Dc_Q{a^0}_{j m}+{a^0}_{j m}\betat \tau_{0Q} \notag \\
  &\!\quad-{a^I}_{j m}\gamma_I{}^0{}_Q\bigr) (a^{0})^{-1}{}^{mn}\Bigl( a^I{}_{nk}\Utt^k_I-G_{nsl}W^sW^l\Bigr)\notag\\
  &-\betat\Bigl(\Dc_Q{a^P}_{j k}
  +\frac{1}{2}{a^0}_{j k}\delta^{PL}(g_{0QL}+g_{Q0L}-g_{L0Q})\notag\\
  &\!\quad-{a^I}_{j k}\gamma_I{}^P{}_Q\Bigr) \Utt_P ^k,   \label{for-Euler.2}
\end{align}
where
\begin{align}
B^{ijk} &= -\delta^i_0 \eta^{jk}-\delta^j_0\eta^{ik}+ \eta^{ij}\delta^k_0,\label{B-def.S}
\end{align}
$Q^i_{lm}$ is determined by \eqref{Q-def} and \eqref{Qilm-def}, $S_{LM}$ and $J^i_l$ are given
by \eqref{S-def} and \eqref{J-def}, respectively, $g_{000}$ and $g_{I00}$ are determined by \eqref{gi00}, $\tau_i=-\betat^{-1}g_{0i}$ according to \eqref{e0-def}, $P_{qlm}$ is determined by \eqref{P-def} and \eqref{grad-alpha-A}, $K^i_{ql}$ is given by \eqref{K-def}, $g_{0rlm}$ and $\tau_{0rl}$ are determined by \eqref{for-K.1}-\eqref{for-K.2}, $L_I{}^k{}_J$ is given by \eqref{L-def}, the background connection coefficients  $\gamma{}_0{}^0{}_0$, $\gamma{}_0{}^K{}_0$, $\gamma_0{}^0{}_J$, $\gamma_0{}^K{}_J$, $\gamma_I{}^0{}_0$ and $\gamma_I{}^J{}_0$  are determined by \eqref{gamma-000}-\eqref{gamma-IJ0}, $a^i{}_{jk}$ is defined by \eqref{confESFHa.2}, $ (a^{0})^{-1}{}^{mp}$ is determined by \eqref{eq:defa0inv}, $G_{jsl}$ is defined by \eqref{confESFHa.3}, $\Ttt_{ij}$ in given by \eqref{eq:TttconfphysW}, and $\Dc_Q(G_{jsl}W^sW^l)$ and $D_Q{a^i}_{j k}$ are the maps that are to be understood as per the discussion below \eqref{eq:DcGWW}. 

Together, the equations \eqref{for-G.1.S}-\eqref{for-Euler.2} 
comprise our first order formulation
of the frame representation of the reduced conformal Einstein-Euler-scalar field equations. In this section, we transform these equations into a Fuchsian form 
that is suitable for establishing the existence of solutions all the way to the singularity.  

 \begin{rem}
\label{rem:symmetry}
 We will not always assume that solutions of the system \eqref{for-G.1.S}-\eqref{for-Euler.2} arise from solutions of the conformal Einstein-Euler-scalar field equations. We will also consider general solutions of this system. For these general solutions, we will require that $g_{0LM}-g_{L0M}-g_{M0L}$
 and $g_{qjlm}$ are symmetric in $L,M$ and $l,m$, respectively. This condition would naturally hold for solutions derived from the conformal Einstein-Euler-scalar field equations. To ensure this symmetry holds for general solutions, we have symmetrized equations \eqref{for-H.S} and \eqref{for-I.S} in the indices $L,M$ and $l,m$, respectively. This symmetrization guarantees that the solutions to \eqref{for-G.1.S}-\eqref{for-Euler.2} will exhibit the desired symmetry provided the initial data also satisfies this condition.
 \end{rem}

 \subsection{Initial data\label{frame-idata}}
 Before continuing on with transforming the system \eqref{for-G.1.S}-\eqref{for-Euler.2} into Fuchsian form, we will, in this section, describe how initial data $\{\gr_{\mu\nu},\ggr_{\mu\nu},\taur=t_0,\taugr,\Vr^\mu\}$ for the reduced conformal Einstein-Euler-scalar field equations, see \eqref{gt-idata}-\eqref{Vh-idata}, along with a choice of spatial frame initial data
\begin{equation} \label{frame-idata-A}
    e_I^\mu\bigl|_{\Sigma_{t_0}} = \er^\mu_I
\end{equation}
determines initial data for the first order system \eqref{for-G.1.S}-\eqref{for-Euler.2}. In the following, we will not, a priori, assume that the initial data $\{\gr_{\mu\nu},\ggr_{\mu\nu},\taur=t_0,\taugr,\Vr^\mu\}$
satisfies either the gravitational or wave gauge constraints, but it will, or course, be necessary to do so in order to formulate and prove the main past stability result.

We  know from Remark \ref{rem-Lag-idata} that the initial data set $\{\gr_{\mu\nu},\ggr_{\mu\nu},\taur=t_0,\taugr,\Vr^\mu\}$ determines a corresponding Lagrangian initial data set
\begin{equation} \label{Lag-idata-set-A}
\bigl\{g_{\mu\nu}\bigl|_{\Sigma_{t_0}},\del{0}g_{\mu\nu}\bigl|_{\Sigma_{t_0}}, \tau\bigl|_{\Sigma_{t_0}}=t_0,\del{0}\tau\bigl|_{\Sigma_{t_0}}=1, V^{\mu}\bigl|_{\Sigma_{t_0}}\bigr\}
\end{equation}
via \eqref{l-idata}--\eqref{chi-idata}. This initial data, with the help of the reduced conformal Einstein-Euler-scalar field equations \eqref{lag-redeqns}  and first derivatives thereof evaluated on $\Sigma_{t_0}$, uniquely determines the following higher order partial derivatives on the initial hypersurface:
\begin{equation} \label{Lag-idata-set-B}
  \begin{split}
\bigl\{&\del{\beta}g_{\mu\nu}\bigl|_{\Sigma_{t_0}},\del{\alpha}\del{\beta}g_{\mu\nu}\bigl|_{\Sigma_{t_0}},\del{\alpha}\tau\bigl|_{\Sigma_{t_0}}=\delta^0_\alpha,\del{\alpha}\del{\beta}\tau\bigl|_{\Sigma_{t_0}},\\
 &\qquad\del{\alpha}\del{\beta}\del{\gamma}\tau\bigl|_{\Sigma_{t_0}}, \del{\alpha}V^{\mu}\bigl|_{\Sigma_{t_0}}\bigr\}.
  \end{split}
\end{equation}
We further observe that the flat metric $\gc_{\mu\nu}=\del{\mu}l^\alpha \eta_{\alpha\beta}\del{\nu} l^\beta$ and its first and second order partial derivatives on the initial hypersurface, that is,
\begin{equation} \label{Lag-idata-set-C}
\bigl\{\gc_{\mu\nu}\bigl|_{\Sigma_{t_0}},\del{\alpha}\gc_{\mu\nu}\bigl|_{\Sigma_{t_0}},\del{\alpha}\del{\beta}\gc_{\mu\nu}\bigl|_{\Sigma_{t_0}} \bigr\},
\end{equation}
are uniquely determined from the initial data set $\{\gr_{\mu\nu},\ggr_{\mu\nu},\taur=t_0,\taugr\}$ by
\eqref{gct-def}, \eqref{Jc-def}, \eqref{vr-def}-\eqref{Jcr-def} and the
reduced conformal Einstein-scalar field equations, especially \eqref{J-ev.1}, and first derivatives thereof.
Taken together, \eqref{Lag-idata-set-A}-\eqref{Lag-idata-set-C} determine the following geometric fields on $\Sigma_{t_0}$:
\begin{equation} \label{Lag-idata-set-D}
  \begin{split}
\bigl\{&g_{\mu\nu}\bigl|_{\Sigma_{t_0}},\Dc_{\beta}g_{\mu\nu}\bigl|_{\Sigma_{t_0}},\Dc_{\alpha}\Dc_{\beta}g_{\mu\nu}\bigl|_{\Sigma_{t_0}},\tau\bigl|_{\Sigma_{t_0}}=t_0,\Dc_{\alpha}\tau\bigl|_{\Sigma_{t_0}}=\delta^0_\alpha,\\
&\quad\Dc_{\alpha}\Dc_{\beta}\tau\bigl|_{\Sigma_{t_0}},\Dc_{\alpha}\Dc_{\beta}\Dc_{\gamma}\tau\bigl|_{\Sigma_{t_0}}, V^{\mu}\bigl|_{\Sigma_{t_0}}, \Dc_{\alpha}V^{\mu}\bigl|_{\Sigma_{t_0}} \bigr\}.
  \end{split}
\end{equation}

Next, by \eqref{chi-idata} and \eqref{e0-def}, the coordinate components of the frame vector $e_0$ on the initial hypersurface are given by 
\begin{equation} \label{frame-idata-E}
e_0^\mu \bigl|_{\Sigma_{t_0}} = (-g_{00}|_{\Sigma_{t_0}} )^{-\frac{1}{2}}\delta^\mu_0,   
\end{equation}
while by \eqref{e0-def} we have 
\begin{equation} \label{frame-idata-F}
\betat\bigl|_{\Sigma_{t_0}} = (- g^{00}\bigl|_{\Sigma_{t_0}})^{-\frac{1}{2}}.
\end{equation}
Using \eqref{frame-idata-E}, we can employ a Gram-Schmidt algorithm to select (non-unique) spatial frame initial data \eqref{frame-idata-A} so that that the frame metric
$g_{ij} = e_i^\mu g_{\mu\nu} e^\nu_j$
satisfies
\begin{equation*}
    g_{ij}\bigl|_{\Sigma_{t_0}}=\eta_{ij}
\end{equation*}
on $\Sigma_{t_0}$. We note also that the initial data $\omega^j_\nu \bigl|_{\Sigma_{t_0}}$ for the coordinate components of the dual frame $\omega^j$ is determined by
\begin{equation*}
  \omega^j_\nu \bigl|_{\Sigma_{t_0}} e_i^\nu \bigl|_{\Sigma_{t_0}}=\delta^j_i.
\end{equation*}
Then, with the help of the relations
\begin{gather*}
\Dc_i g_{jk} = e_i^\beta e_j^\mu e_k^\nu \Dc_{\beta}g_{\mu\nu}, \quad \Dc_i\Dc_j g_{kl} = e_i^\alpha e_j^\beta e_k^\mu e_l^\nu \Dc_{\alpha}\Dc_{\beta}g_{\mu\nu}, \\
\Dc_i \Dc_{j}\tau = e_i^\beta e_j^\beta \Dc_{\beta}\Dc_{\beta}g_{\mu\nu}, \quad \Dc_i\Dc_j \Dc_k \tau = e_i^\alpha e_j^\beta e_k^\gamma \Dc_{\alpha}\Dc_{\beta}\Dc_{\gamma}\tau,\\
V^i=\omega^i_\mu V^\mu,\quad \Dc_i V^j=\omega^j_\beta e_i^\alpha\Dc_\alpha V^\beta,
\end{gather*}
it follows from the definitions  \eqref{p-fields}-\eqref{eq:Eulerfchoice} that \eqref{frame-idata-A} and \eqref{Lag-idata-set-D} determine the fields
$g_{ijk}$, $g_{ijkl}$, $\tau_{ij}$, $\tau_{ijk}$, $W^i$ and $\Utt^s_q$ on
the initial hypersurface, that is,
\begin{equation}\label{frame-idata-H}
    \bigl\{g_{ijk}\bigl|_{\Sigma_{t_0}},g_{ijkl}\bigl|_{\Sigma_{t_0}},\tau_{ij}\bigl|_{\Sigma_{t_0}},\tau_{ijk}\bigl|_{\Sigma_{t_0}}, W^i \bigl|_{\Sigma_{t_0}}, \Utt^s_q \bigl|_{\Sigma_{t_0}}\bigr\}.
\end{equation}

Using the fact that the frame $e_j^\mu$ is orthonormal with respect to the metric $g_{\mu\nu}$, it follows from a straightforward calculation that 
$\Gamma_{I}{}^0{}_J|_{\Sigma|_{t_0}} = \Ktt_{\Lambda\Omega}\er_I^\Lambda \er_J^\Omega$,
where $\Ktt=\Ktt_{\Lambda\Omega}dx^\Lambda\otimes dx^\Omega$ is the second fundamental form, c.f.\ \eqref{gtt-def1}, determined from the initial data $\{g_{\mu\nu}|_{\Sigma_{t_0}},\del{0}g_{\mu\nu}|_{\Sigma_{t_0}}\}$. From this expression and \eqref{Ccdef}, we deduce that
\begin{equation} \label{frame-idata-J}
    \gamma_I{}^0{}_J\bigl|_{\Sigma_{t_0}} = \Ktt_{\Lambda\Omega}\er_I^\Lambda \er_J^\Omega +\frac{1}{2}(g_{IJ0}+g_{JI0}-g_{0IJ})\bigl|_{\Sigma_{t_0}}.
\end{equation}
We also note that the connection coefficients 
$\Gamma_I{}^K{}_J\bigl|_{\Sigma_{t_0}}$ on the initial hypersurface are determined completely by the spatial orthonormal frame $\ett_I = \er^\Lambda_I \del{\Lambda}$
and the spatial metric
$\gtt=\gr_{\Lambda\Omega}dx^\Lambda \otimes dx^\Omega$, c.f.\ \eqref{gtt-def1}, on
$\Sigma_{t_0}$ according to
$\Gamma_I{}^K{}_J|_{\Sigma_{t_0}} = \delta^{KL}\gtt(\Dtt_{\ett_I}\ett_J,\ett_L)$,
where $\Dtt$ is the Levi-Civita connection of $\gtt$.
We can then use this to determine 
$\gamma_I{}^K{}_J\bigl|_{\Sigma_{t_0}}$ via
\eqref{Ccdef} to get
\begin{equation}  \label{frame-idata-K}
    \gamma_I{}^K{}_J \bigl|_{\Sigma_{t_0}} = \delta^{KL}\gtt(\Dtt_{\ett_I}\ett_J,\ett_L)
    - \frac{1}{2}\delta^{KL}(g_{IJL}+g_{JIL}-g_{LIJ})\bigl|_{\Sigma_{t_0}}.
\end{equation}
Together, \eqref{frame-idata-A}, \eqref{frame-idata-F},  \eqref{frame-idata-H},  \eqref{frame-idata-J} and \eqref{frame-idata-K} determine initial data on $\Sigma_{t_0}$
for the system \eqref{for-G.1.S}-\eqref{for-Euler.2}.

\subsection{Change of variables}
\label{sec:chvar1}

Using the variable definitions \eqref{kt-def}-\eqref{psit-def} along with \eqref{S-def} and \eqref{gamma-000}-\eqref{gamma-IJ0}, we can express the first order system \eqref{for-G.1.S}-\eqref{for-Euler.2} as
\begin{align}
  \del{t}\mt_{M} =& \frac{1}{t}\bigl(2 \mt_{M}-\delta^{IJ}(2 \ellt_{IJM}-\ellt_{MIJ})\bigr)\notag \\
  &+\frac{2}{t}\betat\tau_{0M} - B^{0jK}\betat \gt_{Kj0M}
                    +\betat \Qft_M
  +2 \betat{\Ttt}_{0M},  \label{for-G.1.S2}\\
\del{t}\ellt_{R0M} =& -\delta_{RI} B^{IjK}\betat\gt_{Kj0M}+\delta_{RI}\betat \Qft^I_M, \label{for-G.2.S2}\\
\del{t}\kt_{LM} =&  -\frac{1}{t}\kt_{LM} +\frac{2}{t}\betat\tau_{(LM)}- B^{0jK}\betat\gt_{Kj(LM)}
-\frac{1}{2}\delta^{RS}\betat\kt_{RS}\kt_{LM}\notag \\
&
+ \betat\Qft^0_{(LM)} + 2B^{IjK}\betat g_{Kj0(M}\delta_{L)I} -2\betat \Qft^I_{(L} \delta_{M)I}
 {+2 \betat{\Ttt}_{(LM)}}, \label{for-H.S2}\\
\del{t}\ellt_{RLM} =& -\delta_{RI} B^{IjK}\betat\gt_{KjLM}+
                 \delta_{RI}\betat \Qft^I_{LM}, \label{for-G.4.S2}\\
\del{t}\tau_{rl} =&
 \delta_{ri}\betat J^i_l-B^{ijK}\betat\taut_{Kj(l}\delta_{r)i}
                   , \label{for-F.2.S2}
\end{align}
\begin{align}
\delta^{ij}\del{t}\gt_{Qjlm} +B^{ijK}\betat e_K^\Lambda\partial_\Lambda\gt_{Qjlm}&=\frac{1}{t} \delta_0^i\delta_0^j  (\gt_{Qljm}+ \gt_{Qmjl}-\gt_{Qjlm})\notag \\
&+\frac{2}{t}\delta^i_0
 \betat\taut_{Q(lm)}+ \delta^i_0 \Pf_{Q(lm)}-B^{ijk}\Gf_{Qkj(lm)}\notag \\
& +2\delta_0^i\betat\Dc_Q\Ttt_{lm} \notag -B^{ijk}\left[\betat(\betat \tau +\psit) * \Ttt\right]_{Qkjlm}\notag \\
 &+\delta^i_0\left[\betat (\ellt+\mt)*
  \Ttt \right]_{Qlm},\label{for-I.S2}\\
  \delta^{ij}\del{t}\taut_{Qjl} +B^{ijK}\betat e_K^\Lambda\partial_\Lambda \taut_{Qjl}=&  \betat K^i_{Ql}-B^{ijk}\Tf_{Qkjl} \label{for-J.S2}
\end{align}
\begin{align}
  \del{t}\betat =& -\betat^3 \tau_{00}+\frac{1}{2} \delta^{JK}\betat^2\kt_{JK}, \label{for-O.1.S2}\\
\del{t}e^\Lambda_I =& - \Bigl(\frac{1}{2}\delta^{JL}\betat\kt_{IL}+\delta^{JK}\betat\psit_{I}{}^0{}_K\Bigr)e^\Lambda_J,\label{for-M.1.S2}\\
\del{t}\psit_I{}^k{}_J=&
                         -\delta^k_0\Bigl(\betat^2 \taut_{I0J}+\frac{1}{2}\betat \gt_{IJ00}\Bigr)\notag \\
                         &-\frac{1}{2}\eta^{kl}\bigl(\betat \gt_{I0Jl}+\betat \gt_{IJ0l}-\betat \gt_{Il0J}\bigr)
                  +\betat \Lf_I{}^k{}_J,\label{for-N.S2}
\end{align}
\begin{align}
  {a^0}_{j k}\del{t} W^k+{a^I}_{j k}\betat e_I^\Lambda\del{\Lambda} W^k =&\betat G_{jsl}W^sW^l- \betat {a^i}_{j k}\gamma_i{}^k{}_lW^l, \label{for-Euler.1.2}\\
  {a^0}_{j k}\del{t} {\Utt}^k_Q
  +{a^I}_{j k}\betat e_I^\Lambda\del{\Lambda} {\Utt}^k_Q
  =&\Dc_Q(\betat G_{jsl}W^sW^l)-\Dc_Q\betat G_{jsl}W^sW^l \notag \\
  &+\betat (a^{0})^{-1}{}^{mn}\bigl(\Dc_Q{a^0}_{j m}+2{a^0}_{j m}\betat \tau_{0Q}
                   \notag \\
                   &-2{a^I}_{j m}\psit_I{}^0{}_Q\bigr) \Bigl( a^I{}_{nk}\Utt^k_I-G_{nsl}W^sW^l\Bigr)\notag\\
                  &-\betat\bigl(\Dc_Q{a^P}_{j k}
                    +\frac{1}{2}{a^0}_{j k}\delta^{PL}\kt_{QL}
                    \notag \\
                    &+{a^0}_{j k}\delta^{PL}\ellt_{Q0L}
                    -{a^I}_{j k}\psit_I{}^P{}_Q\bigr) \Utt_P ^k \notag \\
                    &-\betat{a^i}_{j k}\gamma_i{}^k{}_l {\Utt}^l_Q+\betat{a^i}_{j k}\gamma_i{}^L{}_Q {\Utt}^k_L.\label{for-Euler.2.2}
\end{align}
Here, $\Lf_I{}^k{}_J$ is given by \eqref{lem-cov-exp.3}, $\Pf_{Qlm}$, $\Gf_{Qkjlm}$, and $\Tf_{Qkjl}$ are given by \eqref{lem-P-exp.1}-\eqref{lem-P-exp.3}, and $\Qft_M$, $\Qft^I_M$, $\Qft^0_{LM}$ and $\Qft^I_{LM}$ are determined by \eqref{Qijk-lem.1}, and we note by \eqref{B-def.S} that
\begin{equation*}
  B^{ij0}=\delta^{ij}, \quad
B^{ijK} = -\delta^i_0 \eta^{jK}-\delta^j_0\eta^{iK}.
\end{equation*}
In deriving these equations, we have employed
\eqref{confESFHa.2} , \eqref{confESFHa.3} and \eqref{eq:defa0inv},  and
have repeatedly used \eqref{eq:Um0} to express ${\Utt}^k_0$ in terms of the other variables and thereby remove it from the system of equations.

Expressing $G_{jsl}$ in \eqref{for-Euler.1.2} in terms of the variables in \eqref{kt-def}-\eqref{psit-def}, we find, using \eqref{gi00},  \eqref{confESFHa.3} and \eqref{eq:WQdr} while retaining some of the original variable definitions in order to prevent the expression from becoming too lengthy, that
\begin{align}
G_{jsl}
  =&\betat^{-1}t^{-1}
       \frac{(n-1)c_s^2-1}{n-2} \Bigl(       
       \delta_{(l}^0  \eta_{s)j}-\delta_j^0\eta_{sl}
       \Bigr)\notag\\
    &+\frac 12\Bigl(\frac{c_s^4+2c_s^2+1}{c_s^2} \delta_{(l}^0\eta_{s)j}-c_s^2\delta_j^0\eta_{sl}+2 \delta_s^{0}\delta_j^{0}\delta_{l}^{0}\Bigr)
        \delta^{JK}\kt_{JK} \notag \\
        &- \delta_{(l}^{0}\delta_{s)}^{S}\delta_j^{J}\kt_{SJ}
        +\frac{1}{2}     
        \frac{3 c_s^2+1}{c_s^2} \frac{W^SW_S}{w^{2}}\eta_{j(s}\delta_{l)}^{0} \delta^{JK}\kt_{JK}   \notag\\
        &- \frac{1}{2}     
      \frac{3 c_s^2+1}{c_s^2} \frac{W^Q W^P}{w^{2}} \kt_{P Q}\delta_{(l}^0\eta_{s)j}       - \delta_{(l}^{0}\delta_{s)}^{S}\delta_j^{J}(\ellt_{S0J}+\ellt_{J0S})
  \notag \\
  &- \Bigl(\frac{3 c_s^2+1}{c_s^2} \frac{W^Q W^P}{w^{2}}+g^{PQ}\Bigr) \ellt_{P0Q}\delta_{(l}^0\eta_{s)j}- \frac{1}{2}     
        \Bigl(\frac{3 c_s^2+1}{c_s^2} \frac{W^Q W^P}{w^{2}} \notag \\
        &+g^{PQ}\Bigr) \ellt_{LP Q}\delta_{(l}^L\eta_{s)j}- 2\delta_{l}^{0}\delta_{(s}^{0}\delta_{j)}^{J}\mt_{J}
        - \delta_{(l}^{L}\delta_{s)}^{s'}\delta_j^{j'}g_{Ls'j'} \notag \\
    &- \frac{1}{2}     
        \Bigl(\frac{3 c_s^2+1}{c_s^2} \frac{(W^0)^2}{w^{2}}-1\Bigr) g_{L 0 0}\delta_{(l}^{L}\eta_{s)j}- \frac{3 c_s^2+1}{c_s^2} \frac{W^0 W^P}{w^{2}} \eta_{j(s} g_{l)0 P}
     \notag \\
     &-c_s^2\Bigl(\frac{c_s^2+1}{c_s^2} \delta_{(l}^0  \eta_{s)j}-\delta_j^0\eta_{sl}\Bigr)
    \betat \tau_{00}-c_s^2\Bigl(\frac{c_s^2+1}{c_s^2} \delta_{(l}^P  \eta_{s)j}-\delta_j^P\eta_{sl}\Bigr)\Bigl(\betat\tau_{P0}\notag \\
    &+\frac{1}{2}\bigl(2\mt_{P}-\delta^{JK}(2\ellt_{JKP}-\ellt_{PJK})\bigr)\Bigr).  \label{eq:Gjsl.2}
  \end{align}
Additionally, we can use \eqref{gamma-000}-\eqref{gamma-IJ0}, \eqref{kt-def}-\eqref{psit-def} and \eqref{eq:B0fluid} to show that
\begin{align}
  {a^i}_{j k}\gamma_i{}^k{}_l
  =& -\frac 12\delta_j^{0}\Bigl(\frac{1}{c_s^2} 
      +\frac{3 c_s^2+1}{c_s^2} \frac{W^IW_I}{w^2}\Bigr) W^{0}\delta^{PQ} \kt_{PQ}\delta_l^0 \notag \\
      &+\frac 12\delta_{j}^{K}\Bigl(\frac{2 c_s^2+1}{c_s^2} +\frac{3 c_s^2+1}{c_s^2} \frac{W^IW_I}{w^2}\Bigr) W_{K}\delta^{PQ} \kt_{PQ}\delta_l^0\notag\\
   &+\frac{1}{2}\delta_{j}^{0}\Bigl(\frac{2 c_s^2+1}{c_s^2} +\frac{3 c_s^2+1}{c_s^2} \frac{W^IW_I}{w^2}\Bigr) W_{K}\delta^{KP}\kt_{LP}\delta_l^L
     \notag \\
     &- \frac{1}{2}\delta_j^{J}\Bigl(
      \delta_{JK}
      +\frac{3 c_s^2+1}{c_s^2} \frac{W_{J} W_{K}}{w^2}\Bigr)    W^0\delta^{KP}\kt_{LP}\delta_l^L\notag\\
    &-{a^0}_{j K}\delta^{KP}\ellt_{P0L}\delta_l^L-{a^0}_{j 0} \betat \tau_{0L}\delta_l^L
    -{a^0}_{jK}\delta^{KP}(\betat \tau_{0P}+\mt_{P})\delta_l^0    
    \notag \\
    &+{a^I}_{j 0}(\mt_{I}-\frac{1}{2}\delta^{JK}(2 \ellt_{JKI}-\ellt_{IJK}) )\delta_l^0+{a^I}_{j 0}\psit_I{}^0{}_L\delta_l^L \notag\\
    &-{a^I}_{j K} (\delta^{KP}\ellt_{I0P} - \delta^{KP}\psit_{I}{}^0{}_P)\delta_l^0
      +{a^I}_{j K}\psit_I{}^K{}_L\delta_l^L,  \label{eq:fluidgamma.1}
 \intertext{and}
  {a^i}_{j k}\gamma_i{}^L{}_Q
   &=-\frac{1}{2}\delta_j^{0}\delta_k^{0}\Bigl(\frac{1}{c_s^2} 
      +\frac{3 c_s^2+1}{c_s^2} \frac{W^IW_I}{w^2}\Bigr) W^{0}\delta^{LP}\kt_{QP}\notag \\
      &+\delta_{(j}^{0}\delta_{k)}^{K}\Bigl(\frac{2 c_s^2+1}{c_s^2} +\frac{3 c_s^2+1}{c_s^2} \frac{W^IW_I}{w^2}\Bigr) W_{K}\delta^{LP}\kt_{QP}\notag\\
      &-\frac{1}{2}\delta_j^{J}\delta_k^{K}\Bigl(
      \delta_{JK}
      +\frac{3 c_s^2+1}{c_s^2} \frac{W_{J} W_{K}}{w^2}\Bigr)    W^0 \delta^{LP}\kt_{QP}
     \notag \\
     &-{a^0}_{j k}\delta^{LP}\ellt_{Q0P} +{a^I}_{j k}\psit_I{}^L{}_Q. \label{eq:fluidgamma.3}
\end{align}

\begin{rem} \label{rem:borderlineextension}
 Inspecting \eqref{eq:Gjsl.2} reveals that it is unlikely that our results can be extended to values of $c_s^2$ outside the regime $(1/(n-1),1)$ including the interesting borderline case $c_s^2=1/(n-1)$. The reason for this as follows. As we will see below, the first three lines of \eqref{eq:Gjsl.2} together with certain terms from \eqref{eq:fluidgamma.1} and \eqref{eq:fluidgamma.3} constitute the main singular time-non-integrable terms that govern the dynamics of the fluid near $t=0$. If $c_s^2$ is lies in the interval $(1/(n-1),1)$, then the first term in \eqref{eq:Gjsl.2} is strictly positive and therefore dominates the terms in lines two and three, which, being proportional to $\kt_{JK}$,  are small near the FLRW solution. This will be critical for our past stability proof. Since the latter terms have no definite signs, however, this will not work  if $c_s^2=1/(n-1)$, in which case the first term in \eqref{eq:Gjsl.2} vanishes and therefore cannot be used to control the other terms, and the problem becomes even worse if $c_s^2<1/(n-1)$. 
\end{rem}

As observed in \cite {BeyerOliynyk:2021}, it is necessary to weight the second derivatives of the metric, i.e.\ $\gt_{Qjlm}$, with $\betat$. 
Using \eqref{grad-alpha-A} and \eqref{for-O.1.S2}, it is
not difficult to see that we can write the evolution
equation \eqref{for-I.S2} in terms of the weighted derivatives $\betat \gt_{Qjlm}$
as
\begin{multline}
  \delta^{ij}\del{t}(\betat \gt_{Qjlm}) +B^{ijK}\betat e_K^\Lambda\partial_\Lambda(\betat \gt_{Qjlm}) \\
  = \frac{1}{t} \delta_0^i\delta_0^j  (\betat \gt_{Qljm}+ \betat \gt_{Qmjl}-\betat \gt_{Qjlm}) +\frac{1}{2}\delta^{ij} \delta^{JK}\betat\kt_{JK}\betat\gt_{Qjlm} \\
  +\Hf^i_{Qlm} {+2\delta_0^i\betat^2\Dc_Q\Ttt_{lm}+\left[\betat^2 (\betat \tau +\psit+\ellt+\mt)*
 {\Ttt} \right]^i_{Qlm}}\label{for-I.S2a}
\end{multline}
where we have set
\begin{align}
    \Hf^i_{Qlm}=& -\delta^{ij}\betat^2 \tau_{00}\betat\gt_{Qjlm} + B^{ijK}\Bigl(-\betat^2 \tau_{K0}-\frac{1}{2}\betat\bigl( 2 \mt_K  
   \notag\\
   &-\delta^{JI}(2\ellt_{JIK}
   -\ellt_{KJI})\bigr)\Bigr)\betat \gt_{Qjlm} \notag \\
   &+\frac{2}{t}\delta^i_0
\betat^2\taut_{Q(lm)} 
 + \delta^i_0 \betat\Pf_{Q(lm)}-B^{ijk}\betat\Gf_{Qkj(lm)}.
 \label{Hf-def}
\end{align}
As we note in \cite {BeyerOliynyk:2021}, we will also need to replace $\kt_{LM}$ with the $\betat$
weighted variables $\betat\kt_{LM}$. With the help of \eqref{for-O.1.S2}, we can write the evolution equation
\eqref{for-H.S2} in terms of $\betat\kt_{LM}$ as
\begin{align}
    \del{t}(\betat\kt_{LM}) =& -\frac{1}{t}\betat\kt_{LM}-\betat^2 \tau_{00} \betat\kt_{LM}  +\frac{2}{t}\betat^2\tau_{(LM)}- \betat B^{0jK}\betat\gt_{Kj(LM)} \notag \\
    &+2\betat B^{IjK}\betat\gt_{Kj0(L}\delta_{M)I} 
      +\Mf_{LM} {+2 \betat^2{\Ttt}_{(LM)}} \label{for-H.S2a}
\end{align}
where
\begin{align} 
    \Mf_{LM} =&  \betat^2\Qft^0_{(LM)}  -2\betat^2 \Qft^I_{(L}\delta_{M)I}.
    \label{Kf-def}
\end{align}
For use below, we note from Lemmas \ref{Qijk-lem} and
\ref{lem-P-exp} that \eqref{Hf-def} and \eqref{Kf-def} can
be expressed using the $*$-notation as
\begin{align*}
    \Hf =&   (\betat^2 \tau+ \betat\psit+\betat\kt+ \betat \ellt +\betat \mt)*\betat\gt  +\frac{1}{t}\betat^2\taut \notag \\
& + (\betat\tau +\psit+\ellt+\mt) *
(\betat\kt+\betat\ellt+\betat\mt)*(\betat\kt+\betat\ellt+\betat\mt) \notag\\
& +\frac{1}{t}(\betat \tau +\psit+\ellt+\mt) *(\betat\kt+\betat\ellt+\betat\mt)+\frac{1}{t}(\betat \tau +\psit) *\betat^2 \tau   
    \intertext{and}
    \Mf &= (\betat\kt+\betat\ellt+\betat\mt)*(\betat^2\tau+\betat\ellt+\betat\mt),
\end{align*}
respectively.

\subsection{Rescaled variables}
\label{sec:chvar2}
The next step in the transformation to Fuchsian form involves the introduction of the following rescaled variables:
\begin{align}
k&=(k_{IJ}) :=(t\betat \kt_{IJ})
\label{k-def}\\
  \beta &=  t^{\ep_0}\betat, \label{beta-def}\\
\ell&=(\ell_{IjK}):= (t^{\ep_1}\ellt_{IjK}), \label{ell-def}\\
m&=(m_{I}) := (t^{\ep_1}\mt_I), \label{m-def} \\
\xi&=(\xi_{ij}) := (t^{\ep_1-\ep_0}\tau_{ij}), \label{xi-def}\\
\psi &=(\psi_I{}^k{}_J) :=  (t^{\ep_1}\psit_I{}^k{}_J),\label{psi-def}\\
f&=(f_I^\Lambda) := (t^{\ep_2} e^\Lambda_I), \label{f-def}\\
\gac&=(\gac_{Ijkl}) := (t^{1+\ep_1} \betat \gt_{Ijkl}), \label{gac-def}\\
  \tauac &=(\tauac_{Ijk}):= (t^{\ep_0+2 \ep_1} \taut_{Ijk}), \label{tauac-def}\\
  \acute{\Utt} &=(t^{\ep_4} \Utt^s_Q), \label{Uac-def}
\end{align}
while we continue to use the non-rescaled variable $W$, see \eqref{W-def} and \eqref{p-fields}. At this point, the constants  $\ep_0$, $\ep_1$, $\ep_2$, $\ep_4>0$ are arbitrary numbers satisfying
\begin{equation}
  \label{eq:epscond.1}
  0<\ep_0<\ep_1, \quad 3\ep_0+\ep_1<1, \quad 0<\ep_2, \quad \ep_0+\ep_2<1,
  \quad \ep_4>0,
\end{equation}
which we note imply that $\ep_0+\ep_1<3\ep_0+\ep_1<1$. 
For reasons that become evident in the proof of
Lemma~\ref{lem:sourceterm} below, we now also define
\begin{equation}
  \label{betacheck-defN}
  \check\beta=t^{\ep_3+(1+c_s^2)\ep_0}\beta^{-(1+c_s^2)}
\end{equation}
for $\ep_3>0$
and add $\check\beta$ to the list of variables, which, together with \eqref{k-def}-\eqref{Uac-def} and $W$, then takes the form\footnote{In line with Remark~\ref{rem:symmetry}, we always assume that $U$ is defined with $k_{LM}=k_{(LM)}$ and $\gac_{Qjlm}=\gac_{Qj(lm)}$.}
\begin{multline}
U =\bigl(k_{LM},m_M,\ell_{R0M},\ell_{RLM},\xi_{rl},\beta,  \\
f^\Lambda_I,\psi_I{}^k{}_{J},\tauac_{Qjl}, \gac_{Qjlm}, \check\beta, W^s,{\acute{\Utt}^s_Q}\bigr)^{\tr}.\label{U-def}
\end{multline}
The new quantity $\check\beta$ satisfies the evolution equation
\begin{align}
  \del{t}\check\beta=&(\ep_3+(1+c_s^2)\ep_0)t^{-1}\check\beta-(1+c_s^2)\beta^{-1}\check\beta\del{t}\beta
   \notag \\
   =& t^{-1}\Bigl(\ep_3 -\frac{1}{2} (1+c_s^2)\delta^{JK}k_{JK}\Bigr)\check\beta +(1+c_s^2) t^{-\ep_0-\ep_1}\beta^2 \xi_{00}\check\beta. \label{beta-hat-ev}
\end{align}

In all appearances of the undifferentiated energy momentum tensor $\Ttt_{ij}$ in the equations \eqref{for-G.1.S2}-\eqref{for-Euler.2.2}, we replace $\betat^{-(1+c_s^2)}$, which appears in $\Ttt_{ij}$, see \eqref{eq:TttconfphysW},
with $t^{-\ep_3}\check\beta$. Similarly, in the derivatives $\Dc_Q \Tt_{ij}$ of the energy momentum tensor appearing in
\eqref{for-G.1.S2}-\eqref{for-Euler.2.2}, we replace $\Dc_Q(\betat^{-(1+c_s^2)})$ with $-(1+c_s^2) t^{-\ep_3}\check\beta\betat^{-1}\Dc_Q\betat$ by exploiting \eqref{grad-alpha-A} and we replace any remaining $\betat^{-(1+c_s^2)}$ terms with $t^{-\ep_3}\check\beta$. 

As discussed in the introduction, the purpose of using $t$-powers with so far unspecified exponents $\ep_1$, \ldots, $\ep_4$  to rescale the variables as in \eqref{k-def}-\eqref{Uac-def} and  \eqref{betacheck-defN} is to generate positivity among the $t^{-1}$-terms in the resulting evolution equations below. This turns out to be crucial for the Fuchsian analysis at the core of the proof of our main result.

\subsection{Fuchsian formulation\label{sec:Fuch-form}}
It is now straightforward to verify from
the first order equations \eqref{for-G.1.S2}-\eqref{for-Euler.2.2}, \eqref{for-I.S2a}, \eqref{for-H.S2a} and \eqref{beta-hat-ev} that $U$ satisfies the following symmetric hyperbolic Fuchsian equation:
\begin{equation} \label{Fuch-ev-A}
A^0\del{t}U + \frac{1}{t^{\ep_0+\ep_2}}A^\Lambda \del{\Lambda} U =
\frac{1}{t}\Ac\Pbb U + F
\end{equation}
where
\begin{align} \label{A0-def}
    A^0 &=\small \diag\bigl(A^0_G,A^0_F\bigr),\\
  A_G^0& =\small \diag\bigl(\delta^{\Lt L}\delta^{\Mt M},\delta^{\Mt M},\delta^{\Rt R}\delta^{\Mt M},\delta^{\Rt R} \delta^{\Lt L} \delta^{\Mt M},\notag \\
  &\hspace{1.8cm}\delta^{\rt r}\delta^{\lt l},1,\delta^{\It I}\delta_{\Lambdat \Lambda},\delta^{\It I}\delta_{\kt k} \delta^{\Jt J},\delta^{\Qt Q}\delta^{\jt j}\delta^{\lt l},\delta^{\Qt Q}\delta^{\jt j}\delta^{\lt l} \delta^{\mt m}\bigr),  \label{A0G-def} \\
  \label{A0F-def}
  A_F^0 &=\small \diag\bigl(1, a^0_{\jt j}, a^0_{\jt j}\delta^{\Qt Q}\bigr),\\
  \label{ALambda-def}
  A^\Lambda &= \diag\bigl( A^\Lambda_G,A^\Lambda_F\bigr),\\
  \label{ALambdaG-def}
  A_G^\Lambda &= \diag\bigl( 0,0,0,0,0,0,0,0,\delta^{\Qt Q}\delta^{\lt l}B^{\jt j K}\beta f^\Lambda_K,\delta^{\Qt Q}\delta^{\lt l}\delta^{\mt m}B^{\jt j K}\beta f^\Lambda_K\bigr),\\
  \label{ALambdaF-def}
  A_F^\Lambda &= \diag\bigl(0, {a^I}_{\jt j}\beta f_I^\Lambda, {a^I}_{\jt j}\beta f_I^\Lambda \delta^{\Qt Q}\bigr),\\
  \label{Pbb-def}
  \Pbb &= \diag\Bigl(0,\delta_{\Mt}^{M},\delta_{\Rt}^{R}\delta_{\Mt}^{M},\delta_{\Rt}^{R} \delta_{\Lt}^{ L} \delta_{\Mt}^{M},\delta_{\rt}^{r}\delta_{\lt}^{l},1,\delta_{\It}^{I}\delta^{\Lambdat}_{\Lambda},\notag \\
  &\hspace{1.8cm}\delta_{\It}^{I}\delta^{\kt}_{k}\delta_{\Jt}^{J},\delta_{\Qt}^{Q}\delta_{\jt}^{j}\delta_{\lt}^{l},\delta_{\Qt}^{Q}\delta_{\jt}^{j}\delta_{\lt}^{l} \delta_{\mt}^{m},\delta_{\tilde J J}\delta_{\jt}^{\Jt}\delta_j^J , \delta_{\tilde j j} \delta^{\Qt Q}\Bigr),\\
  \label{Ac-def}
  \Ac &= \diag\bigl( \Ac_G, \Ac_F\bigr),\\
  \label{AcG-def}
    \Ac_G &= 
 \small{ \begin{pmatrix}
    \Ac_{1\,1} & 0 & 0 & 0 & 0 & 0 & 0 & 0 & 0  & 0 \\
    0 &\Ac_{2\,2} & 0 & \Ac_{2\,4} & 0 & 0 & 0 & 0 & 0  & \Ac_{2\,10} \\
    0 & 0 & \Ac_{3\,3} & 0 & 0 & 0 & 0 & 0 & 0  & \Ac_{3\,10} \\
    0 & 0 & 0 & \Ac_{4\,4} & 0 & 0 & 0 & 0 & 0  & \Ac_{4\,10}\\
    0 & 0 & 0 & 0 & \Ac_{5\,5} & 0 & 0 & 0 & 0  & 0 \\
    0 & 0 & 0 & 0 & 0 & \Ac_{6\,6} & 0 & 0 & 0  & 0 \\
    0 & 0 & 0 & 0 & 0 & 0 & \Ac_{7\,7} & 0 & 0  & 0 \\
    0 & 0 & 0 & 0 & 0 & 0 & 0 & \Ac_{8\,8} & 0  & \Ac_{8\,10} \\
    0 & 0 & 0 & 0 & 0 & 0 & 0 & 0 &\Ac_{9\,9} & 0 \\
    0 & 0 & 0 & 0 & 0 & 0 & 0 & 0 & 0  & \Ac_{10\,10}    
  \end{pmatrix}},\\
  \label{AcF-def}
  \Ac_F &= \diag\bigl(\Ac_{11\,11}, \Ac_{12\,12}, \Ac_{13\,13}\bigr),
\end{align}
the non-zero diagonal and off-diagonal 
blocks of $\Ac$ are given by
\begin{gather}
    \Ac_{1\,1}=\delta^{\Lt L}\delta^{\Mt M},\quad \Ac_{2\,2}=(2+\ep_1)\delta^{\Mt M}, \quad
    \Ac_{3\, 3} = \ep_1\delta^{\Rt R}\delta^{\Mt M}, \label{Ac-diag-1}\\
    \Ac_{4\, 4} =  \ep_1\delta^{\Rt R}\delta^{\Lt L}\delta^{\Mt M}, \quad  
    \Ac_{5\, 5} = (\ep_1-\ep_0)\delta^{\rt r}\delta^{\lt l},\quad
     \Ac_{6\, 6} =  \ep_0, \label{Ac-diag-2}\\
     \Ac_{7\, 7} = \ep_2\delta^{\It I}\delta_{\Lambdat\Lambda},\quad
     \Ac_{8\, 8} = \ep_1\delta^{\It I}\delta_{\kt k}\delta^{\Jt J},\quad
     \Ac_{9\, 9} = (\ep_0+2\ep_1)\delta^{\Qt Q}\delta^{\jt j} \delta^{\lt l}, \label{Ac-diag-3},
     \end{gather}
     \begin{multline}
     \Ac_{10\, 10} = (1+\ep_1)\delta^{\Qt Q}\delta^{\jt j} \delta^{\lt l}\delta^{\mt m}\\
     + \delta^{\Qt Q}\delta^{\jt}_0\bigl( \delta_0^l\delta^{\lt j}\delta^{\mt m}+\delta_0^l \delta^{\lt m}\delta^{\mt j}-\delta_0^j \delta^{\lt l}\delta^{\mt m}\bigr),\quad \Ac_{11\, 11} =  \ep_3, \label{Ac-diag-4}
     \end{multline}
\begin{gather}
     \Ac_{12\, 12}=\frac{(n-1)c_s^2-1}{n-2}W^0 \bigl(c_s^{-2}\delta_{\jt}^0\delta_j^0+\delta_{\Jt J}\delta_{\jt}^{\Jt}\delta_{j}^J\bigr), \label{Ac-diag-5}\\
     A_{13\,13}= \Bigr(\ep_4 a^0_{s\tilde j} +
  W^0\frac{(n-1)c_s^2-1}{n-2} \Bigl(   
  \eta_{s\jt}+\delta_{s}^0  \delta_{\jt}^0
  \Bigr)\Bigr)\delta^{\Qt Q}, \label{Ac-diag-6}
\end{gather}
and
\begin{gather}
    \Ac_{2\,4} = \delta^{\Mt R}\delta^{LM}-2\delta^{RL}\delta^{\Mt M}, \quad 
    \Ac_{2\, 10}=-B^{0jQ}\delta_0^l\delta^{\Mt m}, \label{Ac-offdiag-1} \\
    \Ac_{3\, 10}=-B^{\Rt jQ}\delta^l_0\delta^{\Mt m}, \quad 
    \Ac_{4\, 10} =  -B^{\Rt jQ}\delta^{\Lt l}\delta^{\Mt m}, \label{Ac-offdiag-2} \\
     \Ac_{8\, 10} =
     \frac{1}{2}\delta^{\It Q}\bigl(-\delta_{\kt}^0\delta^{\Jt j}\delta^l_0\delta^m_0-\delta_{\kt k}\bigl(\delta^j_0\eta^{k m}\delta^{\Jt l}+\eta^{km}\delta^{\Jt j}\delta^l_0-\eta^{kj}\delta^{\Jt m}\delta^l_0\bigr)\bigr), \label{Ac-offdiag-3}
\end{gather}
respectively,
and
\begin{equation}\label{F-def}
F=\bigl(F_1,F_2,F_3,F_4,F_5,F_6,F_7,F_8,F_9,F_{10},F_{11},F_{12},F_{13}\bigr)^{\tr}
\end{equation}
with
\begin{align}
F_1 =&\delta^{\Lt L}\delta^{\Mt M}\Bigl({t^{-1}\Rtt k_{LM}} -t^{- \ep_0-\ep_1}\beta^2 \xi_{00} k_{LM} \notag \\
&+2t^{- \ep_0-\ep_1}\beta^2\xi_{(LM)}-t^{-\ep_0-\ep_1} \beta B^{0jK}\gac_{Kj(LM)} \notag\\
    &+t^{-\ep_0-\ep_1}2\beta B^{IjK}\gac_{Kj0(L}\delta_{M)I}  
+t\Mf_{LM}{+2 t^{1-2\ep_0}\beta^2{\Ttt}_{(LM)}}\Bigr), \label{F1-def}\\
F_2 =& t^{-1}2\delta^{\Mt M}\beta\xi_{0M}
+ \delta^{\Mt M}(t^{\ep_1}\betat\Qft_M{+2 t^{\ep_1-\ep_0}\beta{\Ttt}_{0M}}), \label{F2-def}\\
F_3 =& \delta^{\Mt M}t^{\ep_1}\betat \Qft^{\Rt}_M, \label{F3-def}\\
F_4 =& \delta^{\Lt L}\delta^{\Mt M}t^{\ep_1}\betat \Qft^{\Rt}_{LM}, \label{F4-def}\\
F_5 =& \delta^{\lt l}
       t^{\ep_1-\ep_0}\betat J^{\rt}_l-t^{-\ep_1-3\ep_0}\delta^{\lt l}\delta^{\rt r} B^{i jK} \beta\tauac_{Kj(l}\delta_{r)i}
      ,\label{F5-def}\\
 F_6 =& t^{-1}\frac{1}{2}\delta^{JK}k_{JK}\beta -t^{-\ep_0-\ep_1}\beta^3 \xi_{00},\label{F6-def}\\
F_7 =& t^{-1}\delta^{\It I}\delta_{\Lambdat \Lambda}\Bigl(- \frac{1}{2}\delta^{JL}k_{IL}-\delta^{JK} t^{1-\ep_0-\ep_1}\beta\psi_{I}{}^0{}_K\Bigr)f^\Lambda_J,\label{F7-def}\\
F_8 =&
-\delta^{\It I}\delta_{\kt}^0\delta^{\Jt J}t^{-\ep_1-3\ep_0}\beta^2 \tauac_{I0J}+ \delta^{\It I}\delta_{\kt k} \delta^{\Jt J}t^{\ep_1}\betat \Lf_I{}^k{}_J,\label{F8-def}\\
  F_9=& \delta^{\Qt Q}\delta^{\lt l}\bigl(t^{\ep_0+2 \ep_1}\betat K^{\jt}_{Ql}-B^{\jt jk}t^{\ep_0+2 \ep_1}\Tf_{Qkjl}\bigr) \label{F9-10}\\
  F_{10}=& \delta^{\Qt Q}\delta^{\lt l}\delta^{\mt m} t^{1+\ep_1}\Hf^{\jt}_{Qlm}
  +\frac{1}{2t} \delta^{j\jt} \delta^{\Qt Q}\delta^{\lt l}\delta^{\mt m} \delta^{JK}k_{JK}\gac_{Qjlm}\label{F10-def}\\ 
  &+2\delta^{\Qt Q}\delta^{\lt l}\delta^{\mt m} t^{1+\ep_1-2\ep_0}\delta_0^i\beta^2\Dc_Q\Ttt_{lm}
 \notag \\
 &+\delta^{\Qt Q}\delta^{\lt l}\delta^{\mt m} t^{1-2\ep_0}\left[\beta^2 (\beta \xi +\psi+l+m)*
    {\Ttt} \right]^i_{Qlm},\notag\\
  F_{11}=& t^{-1}\Bigl(-\frac{1+c_s^2}{2} \delta^{JK}k_{JK}+(1+c_s^2) \betat^2 t\tau_{00}\Bigr) \hat\beta,\label{F11-def}\\
  F_{12}=& \delta_{\jt} ^j \Bigl(\betat G_{jsl}W^sW^l- \betat {a^i}_{j k}\gamma_i{}^k{}_lW^l\Bigr)
           \notag \\
           & -t^{-1}\frac{(n-1)c_s^2-1}{n-2} W^0\delta_{\jt}^{\Jt}\delta_l^J\delta_{\Jt J}W^l,\label{F12-def}\\
  F_{13}=& \delta_{\jt} ^j \delta^{\Qt Q} \Bigl(
           t^{\ep_4} \Dc_Q(\betat G_{jsl}W^sW^l)-t^{-1} 
  \frac{(n-1)c_s^2-1}{n-2} \Bigl(       
  \eta_{sj}+\delta_{s}^0  \delta_{j}^0
  \Bigr)\,{\acute{\Utt}}^s_Q\,W^0
          \notag \\
          &-t^{\ep_4} \betat^{-1}\Dc_Q\betat \betat G_{jsl}W^sW^l +\bigl(\Dc_Q{a^0}_{j m}+2t^{-\ep_1}{a^0}_{j m}\beta \xi_{0Q}
                    \notag \\
        &-2t^{-\ep_1}{a^I}_{j m}\psi_I{}^0{}_Q\bigr) (a^{0})^{-1}{}^{mn}\Bigl( t^{-\ep_0}\beta a^I{}_{nk}\acute{\Utt}^k_I-t^{\ep_4}\betat G_{nsl}W^sW^l\Bigr)\notag\\
                  &-\bigl(t^{-\ep_0}\beta \Dc_Q{a^P}_{j k}
                    +\frac{1}{2} t^{-1}{a^0}_{j k}\delta^{PL}k_{QL}
                    +t^{-\ep_0-\ep_1}\beta{a^0}_{j k}\delta^{PL}\ell_{Q0L}
                    \notag \\
                    &-t^{-\ep_0-\ep_1}\beta {a^I}_{j k}\psi_I{}^P{}_Q\bigr) \acute{\Utt}_P ^k-\betat{a^i}_{j k}\gamma_i{}^k{}_l {\acute{\Utt}}^l_Q
       +\betat {a^i}_{j k}\gamma_i{}^L{}_Q {\acute{\Utt}}^k_L\Bigr).\label{F13-def}
\end{align}
In the above expressions, we recall the $G_{jsl}$, $a^i{}_{jk}\gamma_i{}^k{}_l$ and $a^i{}_{jk}\gamma_i{}^L{}_Q$ are determined by  \eqref{eq:Gjsl.2}, \eqref{eq:fluidgamma.1} and \eqref{eq:fluidgamma.3}, respectively. 
We also point out that the kernel of the matrix $\Pbb$ in \eqref{Pbb-def} is spanned by the variables $k_{IJ}$ and $W^0$; this property of $\Pbb$ will play a decisive role in our past stability proof.

\subsection{FLRW background solution}

In this article, we are interested in nonlinear perturbations of the FLRW solution \eqref{eq:FLRWEulerSFExplSol1}-\eqref{eq:FLRWEulerSFExplSol4} of the conformal
 Einstein-Euler-scalar field system that are parameterised by $(P_0,c_s^2,V_*^0)\in (0,\infty)\times (0,1)\times (0,\infty)$. For this family of solutions, the variables \eqref{k-def}-\eqref{Uac-def} take the form 
\begin{align}
  \label{eq:UbreveFirst}
  \breve k_{IJ}&=2\frac{t \omega'}{\omega},\quad
  \breve\beta =t^{\ep_0} \omega^{n-1},\quad
  \breve\ell_{IjK}=0,\quad
  \breve m_I=0,\\
  \breve\xi_{ij}&=(n-1)^2\frac{t^{\ep_1-\ep_0} \omega'}{\omega}\omega^{-2(n-1)}\delta_i^0\delta_j^0,\\
  \breve\psi_I{}^k{}_J &=0,\quad
  \breve f_I^\Lambda=t^{\ep_2} \omega^{-1} \delta_I^\Lambda,\quad
                         \breve\gac_{Ijkl}=0,\\
  \label{eq:UbreveLast}
  \breve\tauac_{Ijk}&=0,\quad
  \breve W^s=V^0_*\delta_0^s,\quad
  \breve{\acute{\Utt}}_Q^s =0,
\end{align}
where $\ep_0$, \ldots, $\ep_4$, are constants, which for now may be chosen arbitrarily,
$\omega$ is defined by \eqref{eq:FLRWEulerSFExplSol4}, and $\omega'=\frac{d\omega}{dt}$. It can be verified that the corresponding vector $\breve U$ defined by \eqref{U-def} constitutes a solution of the Fuchsian system \eqref{Fuch-ev-A} for any $t>0$ for which the function $\omega$ is well-defined.

\begin{lem}
  \label{lem:bgpertsol}
Suppose $P_0>0$, $V_*^0>0$, $c_s^2\in (0,1)$, $\ep_0$, \ldots, $\ep_4$ satisfy
  \begin{equation}
    \label{eq:epsiloncond.0}
    \begin{split}
    0<\ep_0<\frac{1-\ep_1}3,\quad 1- &\frac{n-1}{2(n-2)}(1-c_s^2)<\ep_1<1, \\
    0<\ep_2,\quad 0&<\ep_3,\quad 0<\ep_4,
    \end{split}
  \end{equation}
  and let $\breve U$ be the solution of the Fuchsian system \eqref{Fuch-ev-A} that 
  is defined by \eqref{eq:UbreveFirst}-\eqref{eq:UbreveLast} and \eqref{U-def}, which corresponds to the FLRW solution of the conformal Einstein-Euler-scalar field system. Then there exists a positive constant $c>0$ such that
  \begin{equation}
    \label{eq:Ubgdef}
    \mathring U(t)
    =\bigl(0,0,0,0,0, t^{\ep_0}, t^{\ep_2}\delta_{I}^{\Lambda},0,0,0,t^{\ep_3},V_*^0\delta_0^j,0\bigr)
  \end{equation}
  satisfies
  \begin{equation}\label{Ubr-Ur}
    \breve U(t)-\mathring U(t)=\Ord(t^c)
\end{equation} as $t\searrow 0$.
\end{lem}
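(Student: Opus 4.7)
\smallskip

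\noindent\textbf{Proof plan.} The plan is to compare $\breve U(t)$ and $\mathring U(t)$ component by component, and to extract in each case a positive power of $t$ that dominates the remainder. The key observation is that, setting $\alpha=\frac{n-1}{n-2}(1-c_s^2)>0$, $\kappa=\frac{n-2}{n-1}\rho_*$ and $p=\frac{2}{(n-1)(1-c_s^2)}$, the conformal scale factor $\omega$ defined by \eqref{eq:FLRWEulerSFExplSol4} takes the form $\omega(t)=(1-\kappa t^{\alpha})^{-p}$. Since $z\mapsto (1-\kappa z)^{-p}$ is real-analytic at $z=0$, any smooth function of $\omega$, and in particular $\omega^q$ for any $q\in\mathbb{R}$, is a smooth function of $t^{\alpha}$ near $t=0$ and admits the expansion $\omega^q=1+\Ord(t^{\alpha})$. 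Differentiating gives $\omega'/\omega=p\kappa\alpha\,t^{\alpha-1}(1-\kappa t^{\alpha})^{-1}=\Ord(t^{\alpha-1})$, so that $t\,\omega'/\omega=\Ord(t^{\alpha})$.

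With these two facts in hand, the componentwise differences read off directly from \eqref{eq:UbreveFirst}--\eqref{eq:UbreveLast} and \eqref{eq:Ubgdef}. All components of $\breve U-\mathring U$ corresponding to $\ell_{IjK}$, $m_I$, $\psi_I{}^k{}_J$, $\gac_{Ijkl}$, $\tauac_{Ijk}$, $\acute\Utt^s_Q$ and $W^s$ vanish identically. For the remaining blocks one gets
\begin{align*}
\breve k_{IJ}&=2t\omega'/\omega=\Ord(t^{\alpha}),\\
\breve\beta-t^{\ep_0}&=t^{\ep_0}(\omega^{n-1}-1)=\Ord(t^{\ep_0+\alpha}),\\
\breve f_{I}^{\Lambda}-t^{\ep_2}\delta_I^{\Lambda}&=t^{\ep_2}(\omega^{-1}-1)\delta_I^{\Lambda}=\Ord(t^{\ep_2+\alpha}),\\
\breve{\check\beta}-t^{\ep_3}&=t^{\ep_3}\bigl(\omega^{-(n-1)(1+c_s^2)}-1\bigr)=\Ord(t^{\ep_3+\alpha}),\\
\breve\xi_{ij}&=(n-1)^2\,t^{\ep_1-\ep_0}\frac{\omega'}{\omega}\,\omega^{-2(n-1)}\delta_i^0\delta_j^0=\Ord\bigl(t^{\ep_1-\ep_0+\alpha-1}\bigr).
\end{align*}
Setting $c$ equal to the minimum of the five exponents appearing on the right-hand sides, the desired bound $\breve U(t)-\mathring U(t)=\Ord(t^{c})$ will follow as soon as every one of these exponents is strictly positive.

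The only exponent whose positivity is not immediate is $\ep_1-\ep_0+\alpha-1$, and verifying it is the one delicate step of the proof. Here the hypothesis \eqref{eq:epsiloncond.0} is used in a sharp way: the lower bound $\ep_1>1-\tfrac{\alpha}{2}$ gives $\ep_1+\alpha-1>\tfrac{\alpha}{2}$, while $\ep_0<\tfrac{1-\ep_1}{3}<\tfrac{\alpha}{6}$, so combining the two yields
\[
\ep_1-\ep_0+\alpha-1\;>\;\frac{\alpha}{2}-\frac{\alpha}{6}\;=\;\frac{\alpha}{3}\;>\;0.
\]
Thus $c:=\min\{\alpha,\ep_0+\alpha,\ep_2+\alpha,\ep_3+\alpha,\ep_1-\ep_0+\alpha-1\}>0$ works. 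Finally, because each nonzero component, divided by the corresponding power of $t$, is a smooth function of $t^{\alpha}$ that is bounded on any sufficiently small interval $(0,t_0]$, the difference fits into the $\Ord$ framework of Section~\ref{prelim} in the trivial (no $w,v$) setting, completing the proof.
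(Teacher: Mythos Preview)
Your proof is correct and follows essentially the same approach as the paper: both use the expansions $\omega=1+\Ord(t^{\alpha})$ and $\omega'=\Ord(t^{\alpha-1})$ with $\alpha=\frac{n-1}{n-2}(1-c_s^2)$, and both reduce the only nontrivial step to verifying $\ep_1-\ep_0+\alpha-1>\alpha/3>0$ from \eqref{eq:epsiloncond.0}. Your treatment is simply more explicit in listing the componentwise differences (including the $\check\beta$ block, which the paper leaves implicit), but the argument is the same.
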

\noindent The significance of this lemma is that it will allow us to interpret
\begin{equation} \label{u-def}
  u=U-\mathring U
\end{equation}
as a nonlinear perturbation of a FLRW solution \eqref{eq:FLRWEulerSFExplSol1}-\eqref{eq:FLRWEulerSFExplSol4} near $t=0$; this property will play an important role in our stability proof.
The proof of the above lemma is straightforward and is easily seen to follow from the inequalities 
\begin{multline*} \frac{(n-1)(1-c_s^2)}{n-2}-1+\ep_1-\ep_0
  >\frac{(n-1)(1-c_s^2)}{n-2}-1+\ep_1-\frac{1-\ep_1}3
 \\
 =\frac{(n-1)(1-c_s^2)}{n-2}-\frac 43 +\frac 43\ep_1
  >\frac{(n-1)(1-c_s^2)}{n-2}-\frac 43\\
  +\frac 43\Bigl(1- \frac{n-1}{2(n-2)}(1-c_s^2)\Bigr)
=\frac{(n-1)(1-c_s^2)}{3(n-2)}>0,
\end{multline*}
which hold by \eqref{eq:epsiloncond.0}, the expressions \eqref{eq:UbreveFirst}-\eqref{eq:UbreveLast}, and (see \eqref{eq:FLRWEulerSFExplSol4}) the expansions
\begin{equation}
  \label{eq:FLRWexpansions}
  \omega=1+\Ord\bigl(t^{\frac{n-1}{n-2}(1-c_s^2)}\bigr) \AND
  \omega'=\Ord\bigl(t^{\frac{n-1}{n-2}(1-c_s^2)-1}\bigr).
\end{equation}

In the following arguments, the positive constants $V_*^0$ and $P_0$ that appear in the FLRW solution \eqref{eq:FLRWEulerSFExplSol1}-\eqref{eq:FLRWEulerSFExplSol4} can be chosen arbitrarily, but will be considered as fixed to some particular value throughout. We will also use
\begin{equation*}
\mc = n^4+2 n^3-3 n^2+2n+2 
\end{equation*}
to denote the dimension of the vector $u$.

Now, it is not difficult, using the definitions \eqref{A0-def}-\eqref{F13-def} and \eqref{eq:Ubgdef}-\eqref{u-def}, to verify that the system \eqref{Fuch-ev-A} can be expressed as
\begin{equation} \label{Fuch-ev-A2}
A^0(u)\del{t}u + \frac{1}{t^{\ep_0+\ep_2}}A^\Lambda(t,u) \del{\Lambda} u =
\frac{1}{t}\Ac(u)\Pbb u + \tilde{\Ftt}(t)+ \Ftt(t,u)
\end{equation}
where $A^0(u)$ and $\Ac(u)$ denote $A^0(\mathring U(t)+u)$ and $\Ac(\mathring U(t)+u)$, respectively\footnote{This notation makes sense because $A^0(\mathring U(t)+u)$ and $\Ac(\mathring U(t)+u)$ do not depend on $t$ as can be easily verified from \eqref{A0-def}-\eqref{Ac-def} and \eqref{eq:Ubgdef}.},
$A^\Lambda(t,u)$ denotes $A^\Lambda(\mathring U(t)+u)$, and the source terms $\Ftt(t,u)$ and $\tilde{\Ftt}(t)$ are defined by
  \begin{equation}
    \label{eq:Ftt}
    \Ftt(t,u)=F(t,\mathring U(t)+u)-F(t,\mathring U(t))
  \end{equation}
  and
\begin{equation}
\tilde{\Ftt}(t) 
=\begin{pmatrix}2\delta^{\Lt L}\delta^{\Mt M} P_0\frac{1-c_s^2}{(n-2)c_s^2} t^{-1+\frac{(n-1)(1-c_s^2)}{n-2}} (V_*^0)^{-(1+c_s^2)/c_s^2}\delta_{LM}\\ 0\\\vdots \\ 0
\end{pmatrix},\label{eq:Ftttilde}
\end{equation}
   respectively. In obtaining \eqref{eq:Ftttilde}, we have employed \eqref{eq:TttconfphysW}, \eqref{eq:Gjsl.2}-\eqref{eq:fluidgamma.1} and  \eqref{eq:Ubgdef}. It is important to note that the approximate solution $\mathring U$, defined above by \eqref{eq:Ubgdef}, satisfies 
   \[A^0(\mathring U(t)+u)\del{t}\mathring U(t)-\frac 1t \Ac(\mathring U(t)+u)\Pbb \mathring U(t)=0.\]

 The system \eqref{Fuch-ev-A2} is the Fuchsian formulation of the reduced conformal Einstein-Euler-scalar field that we will use below to establish the past stability of the FLRW solutions  \eqref{eq:FLRWEulerSFExplSol1}-\eqref{eq:FLRWEulerSFExplSol4} and their big-bang singularities.   

\subsection{Source term expansion}
\label{sec:sourcetermexpansion}
In this section, we establish a number of properties of the source term maps $\Ftt$ and $\tilde\Ftt$ defined above by \eqref{eq:Ftt}-\eqref{eq:Ftttilde} that will be needed below in the proof of Theorem \ref{glob-stab-thm}.

\begin{lem}
  \label{lem:sourceterm}
  Suppose $V_*^0>0$, $P_0>0$, $T_0>0$, $c_s^2\in (0,1)$, and  $\ep_0,\ldots, \ep_4,\tilde{\ep}$ are constants that satisfy the inequalities
  \begin{equation}
    \label{eq:epscond.N}
    \begin{split}
      1- \frac{n-1}{2(n-2)}(1-c_s^2)<\ep_1<1,\\
      \ep_1<\ep_4<1,\\
      0<\ep_0<\min\{1-\ep_4,(1-\ep_1)/3\},\\
      0<\ep_2<1-\ep_0,\\
      0<\ep_3<  \frac{n-1}{3(n-2)}(1-c_s^2),
    \end{split}
  \end{equation}  
and
\begin{multline}
  \max\biggl\{1 -\Bigl(\frac{n-1}{n-2}(1-c_s^2)-\ep_3+\ep_1-\ep_0-1\Bigr),3\ep_0+\ep_1, \\
  1-\ep_0,1-\ep_2,1-(\ep_4-\ep_1), \ep_0+\ep_4\biggr\}\leq \tilde{\ep}<1.   \label{eq:epstildecond}
\end{multline}
  Then 
  there exists a constant $R>0$ and maps  
  \begin{gather}
    \label{eq:kdkjdsfkjsd309}
    H\in
    C^0\bigl([0,T_0],C^\infty\bigl(B_R(\Rbb^{\udim}),\Rbb^{\udim}\bigr)\bigr),\\
    \label{eq:kdkjdsfkjsd309.2}
    \Htt \in C^\infty\bigl(B_R(\Rbb^{\udim}),\Mbb{\udim}\bigr),\\
    \label{eq:kdkjdsfkjsd309.3}
  \hat\Htt \in C^\infty\bigl(B_R(\Rbb^{\udim}),
  \Rbb^{\udim}\bigr)\bigr),
\end{gather}
satisfying 
  \begin{gather}    
    \label{eq:SourcetermVan2}
    \Htt(0) = 0,\quad H(t,0)=0,\\
    \label{eq:Gtprops}
    [\Htt,\Pbb]=0 \AND \hat\Htt=\Ord(\Pbb u\otimes\Pbb u),
  \end{gather}
  for all $u\in B_R(\Rbb^{\udim})$ and $t\in (0,T_0]$,
  such that $\Ftt(t,u)$ can be expressed as
  \begin{equation}
    \label{eq:SourcetermVan1}
    \Ftt(t,u)=\frac{1}{t}\Htt(u)\Pbb u+\frac 1t\Pbb^{\perp}
    \hat\Htt(u)+\frac{1}{t^{\tilde\epsilon}} H(t,u).
  \end{equation}
\end{lem}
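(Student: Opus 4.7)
My plan is to decompose $\Ftt(t,u) = F(t,\mathring U(t)+u)-F(t,\mathring U(t))$ block-by-block, examining each of the thirteen components $F_1,\dots,F_{13}$ defined in \eqref{F1-def}-\eqref{F13-def} in turn. For each term, I need to identify (i) its overall $t$-power after substituting $U = \mathring U + u$, and (ii) whether it lies in the range or kernel of $\Pbb$, i.e.\ whether it depends on the kernel components $(k_{IJ}, W^0)$ of $u$ or on the orthogonal components $\Pbb u$. The exponent conditions \eqref{eq:epscond.N} have been crafted precisely so that this classification is clean.

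First, I would isolate all terms in $\Ftt$ that are genuinely of order $t^{-1}$. These arise from: the explicit $t^{-1}$-factors in $F_1$, $F_2$, $F_5$, $F_6$, $F_7$, $F_{10}$, $F_{11}$, $F_{12}$, $F_{13}$; from the $\tfrac{1}{t}$-terms inside $\Mf_{LM}$, $J^i_l$, $K^i_{ql}$, $\Lf$, $\Pf$, $\Hf^i$, $\Tf$ (via Lemmas~\ref{Qij-lem}-\ref{lem-P-exp}); and crucially from the fluid source $\Ttt_{lm}$ and its derivative, which contain the factor $\tau^{-\frac{(n-3)+(n-1)c_s^2}{n-2}}\betat^{-(1+c_s^2)} = t^{-\frac{(n-3)+(n-1)c_s^2}{n-2}}\cdot t^{-\ep_3}\check\beta\cdot t^{-(1+c_s^2)\ep_0}$ (via the rescalings \eqref{beta-def} and \eqref{betacheck-defN}). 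Then I would verify that every genuinely $t^{-1}$-singular contribution is either (a) a smooth function of $u$ multiplied by a component of $\Pbb u$, which goes into $\Htt(u)\Pbb u$, or (b) quadratic in $\Pbb u$ components, which goes into $\hat\Htt(u)$. All remaining terms should carry a strictly larger $t$-power, and hence can be bounded by $t^{-\tilde\ep}$ for the specified range of $\tilde\ep$, and packaged into $H(t,u)$.

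The key technical observations that make the classification work are: (i) $k_{IJ}$ and $W^0$ are the \emph{only} kernel components of $\Pbb$ (see the definition \eqref{Pbb-def}), so a term is ``kernel-directional'' iff it appears in the $F_1$ or $F_{12}$ blocks; (ii) in $F_1$, the constant-in-$u$ part coming from the background $\Ttt$-contribution is precisely what $\tilde\Ftt$ in \eqref{eq:Ftttilde} absorbs, so after forming $\Ftt = F(t,\mathring U+u)-F(t,\mathring U)$, the $F_1$-block contains no constant-in-$u$ singular term and every remaining $t^{-1}$-term in $F_1$ is either linear in $k_{LM}$ multiplied by a smooth function of $\Pbb u$ (hence quadratic in $\Pbb u$ if the coefficient vanishes at $u=0$, giving the $\hat\Htt$ piece via $\Pbb^\perp$) or contains a factor of $\Pbb u$; (iii) in $F_{12}$, the fluid term $\betat G_{jsl}W^sW^l$ has a leading $t^{-1}$-part coming from the first line of \eqref{eq:Gjsl.2}, which after using $W^0=V_*^0+\Pbb u$-components and subtracting the explicit $V_*^0$-normalisation yields terms in the desired form. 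The conditions $1-\tfrac{n-1}{2(n-2)}(1-c_s^2)<\ep_1$ and $\ep_3<\tfrac{n-1}{3(n-2)}(1-c_s^2)$ ensure that the genuine $\Ttt$-contributions in $F_1$, $F_2$, $F_{10}$ carry a net positive power of $t$ (after factoring out the $\check\beta$-structure), and $\ep_4>\ep_1$ and $\ep_0+\ep_4<1$ ensure the same for $F_{13}$.

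The main obstacle will be the fluid blocks $F_{12}$ and $F_{13}$. The quantity $G_{jsl}$ in \eqref{eq:Gjsl.2} has the first line proportional to $t^{-1}\betat^{-1} = t^{-1-\ep_0}\beta^{-1}$, and when paired with $W^sW^l$ this produces a potentially supercritical $t^{-1-\ep_0}$-singularity that \emph{does not} trivially belong to the decomposition. The crucial cancellation is that the last line in \eqref{F12-def} (and its analogue in \eqref{F13-def}) was introduced precisely to subtract off this leading singular part on the FLRW background $W^0=V_*^0$. After that subtraction, what remains is either proportional to $\Pbb u$ (producing $\Htt(u)\Pbb u$) or at least quadratic in $W^0-V_*^0 \in \Pbb^\perp u$ (producing the $\hat\Htt$ part), or is non-singular at the $t^{-1}$-level. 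Making this cancellation explicit, together with tracking how $\Dc_Q(\betat G_{jsl}W^sW^l)$ interacts with $\acute\Utt$ in $F_{13}$, is where the bulk of the bookkeeping lies; once it is done, the smoothness and vanishing properties \eqref{eq:SourcetermVan2}-\eqref{eq:Gtprops} follow from the block structure of the decomposition and the fact that all coefficient functions depend smoothly on $u$ in a ball around $0$ (using $|\mathring U|$ bounded on $(0,T_0]$ for the components that matter, and $W^0$ bounded away from $0$ near the FLRW value $V_*^0$).
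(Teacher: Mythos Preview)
Your overall strategy—block-by-block decomposition of $\Ftt$, separating genuinely $t^{-1}$-singular contributions from those bounded by $t^{-\tilde{\ep}}$, and sorting the singular ones according to the $\Pbb/\Pbb^\perp$ structure—matches the paper's approach. The paper formalises the bookkeeping via the schematic expansion \eqref{F-expansion} and the auxiliary Lemmas~\ref{lem:TIsourceterm} and~\ref{lem:DerivTerm}; the second of these gives a systematic chain-rule expansion of the $\Dc_Q(\cdot)$ terms appearing in $F_{10}$ and $F_{13}$ and is the main device for tracking $t$-powers in those blocks, which your plan would need.

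There is, however, a genuine error in your classification of the $\hat\Htt$ piece. You write that after the subtraction in $F_{12}$ the remainder is ``at least quadratic in $W^0-V_*^0\in\Pbb^\perp u$ (producing the $\hat\Htt$ part)''. This is backwards: the lemma requires $\hat\Htt=\Ord(\Pbb u\otimes\Pbb u)$, i.e.\ the $t^{-1}$-terms landing in the $\Pbb^\perp$-direction (the $k_{IJ}$ and $W^0$ equations) must be at least quadratic in the \emph{decaying} variables $\Pbb u$ (which include the spatial $W^J$, $m$, $\ell$, $\xi$, $\psi$, $\check\beta$, $\acute\Utt$, \ldots), not in the kernel variables $k_{IJ},\,W^0-V_*^0$. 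The paper's computation \eqref{eq:F12TNI} shows exactly this: the $\delta_{\jt}^0$-part of the $t^{-1}$ source in $F_{12}$ is built from $W^QW_Q$ and $k_{PQ}W^PW^Q$, both at least quadratic in the spatial $W^Q\in\Pbb u$. Your version (quadratic in $\Pbb^\perp u$) would violate the coefficient hypotheses of the Fuchsian existence theorem, since the kernel variables do not decay and a $t^{-1}$-source quadratic in them is not time-integrable. The same inversion appears in your description of $F_1$. A smaller point: there is no $t^{-1-\ep_0}$ supercritical singularity in $F_{12}$, because the explicit $\betat$ prefactor there cancels the $\betat^{-1}$ in the first line of $G_{jsl}$, leaving exactly $t^{-1}$; the subtraction in the last line of \eqref{F12-def} is not removing a supercritical term but rather moving the $W^0W^J$ cross-term (linear in $\Pbb u$ with nonzero coefficient at $u=0$) into $\Ac_{12\,12}$ so that $\Htt(0)=0$ holds.
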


Before considering the proof of the above lemma, we first make some observations that are easily verified and will provide a strategy for proving the lemma.  We begin by noting that the conditions \eqref{eq:epscond.N} on the constants $c_s^2$ and $\ep_0,\ldots,\ep_4$ imply that $\ep_0,\ldots,\ep_4$ satisfy \eqref{eq:epscond.1} and that there exists a constant $\tilde\ep$ satisfying \eqref{eq:epstildecond}. In particular,
it is straightforward to verify that the following choice of constants
  \begin{equation}
  \label{eq:epsfluidchoice}
  \ep_0=\frac A8,\quad \ep_1=1-\frac A2+\sigma,\quad \ep_2=\frac 12-\frac A{16},\quad \ep_3=\frac A8,\quad \ep_4=1-\frac {5A}{16},
\end{equation}
where
\begin{equation}
  \label{eq:defA}
  A=\frac{n-1}{n-2}(1-c_s^2)
\end{equation}
and $\sigma>0$ is chosen sufficiently small,
satisfies \eqref{eq:epscond.N} for any $n\ge 3$ and $c_s^2\in (0,1)$. Indeed, this is the choice we employ in the proof of Theorem~\ref{glob-stab-thm} as discussed in Section~\ref{sec:proof_globstab}, and we  note, with this choice, that  \eqref{eq:epstildecond} simplifies to
\begin{equation*}
  1-\frac A8 <\tilde\epsilon<1.
\end{equation*}

In order to prove Lemma~\ref{lem:sourceterm}, it is useful to observe that any component of the map $F$, see \eqref{F-def} and \eqref{F1-def}-\eqref{F13-def}, can be expanded in the schematic form 
\begin{equation}\label{F-expansion}
  \begin{split}
    \mathtt f=\sum_{\ell=1}^{N^{(0)}}& t^{-1}f^{(0)}_\ell\Bigl(U, w^{-1}, (W^0)^{-1}, w^{-\frac{1+c_s^2}{c_s^2}}, (\det (a^0{}_{ij}))^{-1}\Bigr)\\
    &+\sum_{\ell=1}^{N^{(1)}} t^{-\sigma_\ell}f^{(1)}_\ell\Bigl(U, w^{-1}, (W^0)^{-1}, w^{-\frac{1+c_s^2}{c_s^2}}, (\det a^0{}_{ij})^{-1}\Bigr),
  \end{split}
\end{equation}
where $N^{(0)}$ and $N^{(1)}$ are positive integers, each exponent $\sigma_\ell$ depends on $n$, $c_s^2$ and $\ep_1$,\ldots, $\ep_4$, and where each map $f_\ell^{(0)}$ and $f_\ell^{(1)}$ is a product of a constant, non-negative integer powers  of $w^{-1}$, $(W^0)^{-1}$, $w^{-\frac{1+c_s^2}{c_s^2}}$, $(\det(a^0{}_{ij}))^{-1}$, and components of $U$.
We will refer to terms of the form $t^{-\sigma_\ell}f^{(1)}_{\ell}$ where $\sigma_\ell <1$ as \emph{time-integrable} and
terms of the form $t^{-1}f^{(0)}_\ell$ as \emph{time-non-integrable}.

For use below, we note from \eqref{U-def}, \eqref{eq:Ubgdef} and \eqref{u-def} that $W^i$
can be expressed as\footnote{Here and in the following, subscripts, e.g. $u_\lambda$, on any one of the vectors $U$, $\mathring{U}$ or $u$ refer to the position in the slots/blocks of these vectors. So $u_{13}$ would refer to the last slot of $u$.} 
\begin{equation} \label{Wi-exp}
    \quad W^i=(\mathring U(t)+u)_{12}^i,
  \end{equation}
and from \eqref{eq:B0fluid}, \eqref{Pbb-def}, \eqref {u-def} and \eqref{eq:Ubgdef} that
\begin{equation}
  \label{eq:a0exp}
  a^0{}_{jk}(u)=(V_*^0+u_{12}^0)\Bigl(\frac{1}{c_s^2}\delta_j^{0}\delta_k^{0} 
  +\delta_{JK}\delta_j^{J}\delta_k^{K}
  \Bigr)
 +\Ord(\Pbb u)
\end{equation}
and
\begin{equation}
  \label{eq:deta0exp}
  \det(a^0{}_{jk})=\frac 1{c_s^2}(V_*^0+u_{12}^0)
 +\Ord(\Pbb u).
\end{equation}
The following lemma, which will be needed in the proof of Lemma \ref{lem:sourceterm}, will be used to analyse the  time-integral component of \eqref{F-expansion}. We omit the proof since it is straightforward to establish using \eqref{eq:defw1234}, \eqref{Wi-exp} and \eqref{eq:deta0exp}.
\begin{lem}
  \label{lem:TIsourceterm}
  Suppose $T_0>0$, $V_*^0>0$, $\ep_0,\ep_2,\ep_3>0$ and let
  \begin{equation*}
    \mathtt f(t,U,z_1,z_2,z_3,z_4)=\sum_{\ell=1}^{N} t^{-\sigma_\ell}f_\ell(U, z_1, z_2,z_3,z_4)
  \end{equation*}
  where $N$ is a non-negative integer, $\sigma_\ell\in\Rbb$, $\ell=1,\ldots,N$, and each map $f_\ell$ is a product of non-negative integer powers of $z_1,\ldots,z_4$ and components of $U$. Then the exists a constant $R>0$ such that the map
  \begin{equation*}
    \mathtt\ft(t,u) = \mathtt f\Bigl(t, \mathring U(t)+u, w^{-1}, (W^0)^{(-1)}, w^{-\frac{1+c_s^2}{c_s^2}}, (\det (a^0{}_{ij}))^{-1}\Bigr)
  \end{equation*}
  is well defined for $(t,u)\in (0,T_0]\times B_R(\Rbb^{\udim})$ and
  can be expanded as
  \begin{equation*}
    \mathtt\ft(t,u)=\sum_{\ell=1}^{\tilde N} t^{-\tilde\sigma_\ell}\tilde f_\ell(u)
  \end{equation*}
  for some positive integer $\tilde N$, constants $\tilde\sigma_\ell$, $\ell=1,\ldots,\Nt$, and maps
 $\tilde f^{(1)}_\ell$, $\ell=1,\ldots,\Nt$ that are smooth on  $B_R(\Rbb^{\udim})$. Moreover, 
  \begin{equation*}
    \max_{\ell\in\{1,\ldots,\tilde N\}} \tilde\sigma_\ell = \max_{\ell\in\{1,\ldots, N\}} \sigma_\ell,
  \end{equation*}
  and the map
  \begin{equation*}
    h(t,u)=t^{\tilde\ep}\mathtt\ft(t,u)
  \end{equation*}
  defines an element of $C^0\bigl([0,T_0],C^\infty\bigl(B_R(\Rbb^{\udim})\bigr)\bigr)$ provided
  \begin{equation*}
    \tilde\ep\ge \max_{\ell\in\{1,\ldots, N^{(1)}\}} \sigma_\ell.
  \end{equation*}
\end{lem}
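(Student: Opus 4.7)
The plan is to prove the lemma by a direct bookkeeping argument that combines the smoothness of the nonlinear substitutions $z_i \mapsto w^{-1}, (W^0)^{-1}, w^{-(1+c_s^2)/c_s^2}, (\det a^0{}_{ij})^{-1}$ near $u=0$ with a term-by-term multinomial expansion of each monomial $f_\ell$ after substituting $U = \mathring U(t)+u$.

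First, I will establish that the four substitutions are smooth functions of $u$ on a ball $B_R(\Rbb^\udim)$ for some $R>0$. By \eqref{Wi-exp}, we have $W^0=V_*^0+u_{12}^0$, which stays bounded away from zero for $|u|<R$ with $R<V_*^0$; by \eqref{eq:defw1234}, $w^2 = -\eta_{ij}W^iW^j+\Ord(\Pbb u) = (V_*^0+u_{12}^0)^2+\Ord(\Pbb u)$ after contracting indices via $\eta_{ij}$, so $w^2$ is strictly positive near $u=0$ and $w^{-1}$, $w^{-(1+c_s^2)/c_s^2}$ are smooth there; and by \eqref{eq:deta0exp}, $\det(a^0{}_{ij}) = c_s^{-2}(V_*^0+u_{12}^0)+\Ord(\Pbb u)$ is likewise bounded below for small $u$, so $(\det a^0{}_{ij})^{-1}$ is smooth. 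This shrinks $R$ if necessary but produces honest smooth maps $u \mapsto z_i(u)$ on $B_R$.

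Next, since each $f_\ell$ is by assumption a product of non-negative integer powers of the $z_i$ and of components of $U$, applying the multinomial theorem componentwise to each factor $(\mathring U(t)_i+u_i)^{\alpha_i}$ in $f_\ell(\mathring U(t)+u, z_1(u),\ldots, z_4(u))$ produces a finite sum of terms of the form $t^a g_{\ell,a}(u)$, where $a$ is a non-negative integer linear combination of the exponents $\ep_0, \ep_2, \ep_3$ that appear in the nontrivial slots of $\mathring U(t)$ (see \eqref{eq:Ubgdef}), and $g_{\ell,a}\in C^\infty(B_R)$. Multiplying by $t^{-\sigma_\ell}$ and summing over $\ell$ regroups $\mathtt{\ft}(t,u)$ into the claimed form $\sum_\ell t^{-\tilde\sigma_\ell}\tilde f_\ell(u)$ with finitely many exponents $\tilde\sigma_\ell = \sigma_{\ell'} - a$ for some $\ell'$ and some $a\ge 0$. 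In particular $\tilde\sigma_\ell\le \max_{\ell'}\sigma_{\ell'}$, while the contribution with $a=0$, obtained by selecting the $u_i^{\alpha_i}$ branch in every binomial factor, yields a term with exponent precisely $\sigma_\ell$ and smooth $u$-dependent coefficient, so the maximum is attained: $\max_\ell \tilde\sigma_\ell = \max_\ell \sigma_\ell$.

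For the continuity statement, once $\mathtt\ft$ is written as a finite sum $\sum_\ell t^{-\tilde\sigma_\ell}\tilde f_\ell(u)$, we have $h(t,u) = \sum_\ell t^{\tilde\ep - \tilde\sigma_\ell}\tilde f_\ell(u)$. Each $\tilde f_\ell \in C^\infty(B_R)$ and its derivatives are uniformly bounded on any smaller closed ball; if $\tilde\ep\ge \max_\ell \tilde\sigma_\ell = \max_\ell \sigma_\ell$, every scalar factor $t^{\tilde\ep-\tilde\sigma_\ell}$ belongs to $C^0([0,T_0])$, and consequently $h \in C^0([0,T_0], C^\infty(B_R))$, as required. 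The only place where real care is needed is verifying that the substitutions in the first paragraph genuinely yield \emph{smooth} (not merely continuous) dependence on $u$, which is immediate once one notes that $W^0$, $w^2$, and $\det(a^0{}_{ij})$ are polynomials in the components of $u$ (with $t$-independent coefficients after the $\mathring U$-shift, since $\mathring U$ contributes only the constant $V_*^0$ in the fluid slot). The rest is routine algebraic bookkeeping with no essential obstacle.
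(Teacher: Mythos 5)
Your proof is correct and follows exactly the route the paper intends: the authors omit the proof, stating only that it is "straightforward to establish using \eqref{eq:defw1234}, \eqref{Wi-exp} and \eqref{eq:deta0exp}," and those are precisely the three facts you use to show the substitutions $w^{-1}$, $(W^0)^{-1}$, $w^{-(1+c_s^2)/c_s^2}$, $(\det(a^0{}_{ij}))^{-1}$ are smooth and $t$-independent near $u=0$, after which the multinomial expansion in the $\mathring U(t)$-shift does the rest. (One minor imprecision: $\det(a^0{}_{ij})$ is a rational, not polynomial, function of $u$ because of the $W^IW_I/w^2$ terms in \eqref{eq:B0fluid}, but since you already established $w^2$ is bounded away from zero this does not affect the argument.)
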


Before proceeding, we establish the following technical lemma that will be employed in the proof of Lemma~\ref{lem:sourceterm}. 

\begin{lem}
  \label{lem:DerivTerm}
Let $\eta=(\eta_{ij})$, $\Omega_1$ be a non-empty open subset of $\Rbb^{\udim}$, $\Omega_2$ be an open subset of $\Rbb^{n^2}$ that contains $\eta$, and suppose that $\ftt(U,g)$  is map smooth real-valued map for $(U,g)\in \Omega_1\times \Omega_2$
and is independent of $(U_7)^\Lambda_I=f^\Lambda_I$, $(U_8)_I{}^k{}_{J}=\psi_I{}^k{}_{J}$, $(U_9)_{Qjl}=\tauac_{Qjl}$, $(U_{10})_{Qjlm}=\gac_{Qjlm}$ and $(U_{13})^s_Q=\acute{\Utt}^s_Q$. Then the derivative
  \begin{equation*}
    \mathtt f_Q =\Dc_Q (\mathtt f(U,g))\bigl|_{g=\eta}=e_Q (\mathtt f(U,g))\bigl|_{g=\eta}
  \end{equation*}
can be expanded via the chain rule as
  \begin{equation}
    \label{eq:DerDecomCR}
    \mathtt f_Q=
    \sum_{\Sc=1}^{13} \ftt_{\Sc}^{(0)}(U) \Dc_Q U_{\Sc}    
    + \mathtt f^{(1)ij}(U) g_{Qij}
  \end{equation}
  where $\ftt_{\Sc}^{(0)}\in C^\infty(\Omega_1,\Rbb)$ and $\mathtt f^{(1)ij}\in C^\infty(\Omega_1,\Rbb)$. 
  In particular, for any $T_0>0$, the map $\mathtt f_Q$ can understood as a map defined on $(0,T_0]\times\Omega_1$ given by
\begin{align}
  \mathtt f_Q 
  =&
   \; t^{-\ep_1} \ftt_{1}^{(0),KL}\Bigl(
    (t^{\ep_1}\beta^{-1}\Dc_Q\beta) k_{KL}         
    +(t^{\ep_1}\gamma_Q{}^0{}_0) k_{KL}
    +\psi_Q{}^I{}_K k_{IL}\notag \\
    &+\psi_Q{}^I{}_L k_{KI} +\gac_{Q0KL}-\gac_{QK0L}-\gac_{QL0K}
    \Bigr)\notag \\
    &+t^{-\ep_1} \ftt_{2}^{(0),M}\Bigl((t^{\ep_1}\gamma_Q{}^i{}_0)( t^{\ep_1} g_{i0M})
    + (t^{\ep_1}\gamma_Q{}^0{}_0) m_{M}         \notag \\
    &+(t^{\ep_1}\gamma_Q{}^I{}_0) (\ell_{I0M}+\ell_{M0I})
    + \psi_Q{}^I{}_M m_I\Bigr)+t^{-\ep_1} \ftt_{3}^{(0),R0M}\Bigl(\psi_Q{}^i{}_R( t^{\ep_1} g_{i0M})
       \notag \\
       &+ (t^{\ep_1}\gamma_Q{}^0{}_0) m_{M}         
         + (t^{\ep_1}\gamma_Q{}^I{}_0) (\ell_{I0M}+\ell_{M0I})
    + \psi_Q{}^I{}_M m_I\Bigr) \notag\\
  &+t^{-\ep_1} \ftt_{4}^{(0),RLM}\Bigl(\psi_Q{}^i{}_R( t^{\ep_1} g_{i0M})
         + \psi_Q{}^0{}_L m_{M}
         + \psi_Q{}^I{}_L (\ell_{I0M}+\ell_{M0I})
   \notag \\
   &+ \psi_Q{}^I{}_M m_I\Bigr) +t^{-\ep_1} \ftt_{5}^{(0),rl}\Bigl( (t^{\ep_1}\gamma_Q{}^i{}_r)\xi_{il}
         + (t^{\ep_1}\gamma_Q{}^i{}_l)\xi_{ri}\Bigr)\notag\\
  &+ t^{-\ep_1} \Bigl(\ftt_{6}^{(0)} \beta (t^{\ep_1}\beta^{-1}D_Q\beta)
  -(1+c_s^2) \ftt_{11}^{(0)}\check\beta (t^{\ep_1}\beta^{-1}D_Q\beta)
  \notag \\
  &+\ftt_{12,i}^{(0)} (t^{\ep_1}\gamma_Q{}^i{}_j) W^j
  + \mathtt f^{(1),ij} (t^{\ep_1} g_{Qij})\Bigr) + t^{-\ep_4}\ftt_{12,i}^{(0)}\Utt^i_Q \notag\\
  &
         +t^{1-\ep_0-2\ep_1}\ftt_{1}^{(0),KL}\beta \Bigl(
         (t^{\ep_1}\gamma_Q{}^I{}_0) (\ell_{IKL}-\ell_{KIL}-\ell_{LIK})
         - \psi_Q{}^0{}_K (t^{\ep_1}g_{L00})
         \notag \\
&- \psi_Q{}^0{}_L (t^{\ep_1} g_{K00})
      \Bigr) + t^{-1+\ep_0}\ftt_{2}^{(0),M}\beta^{-1} \Bigl(\gac_{Q00M}-\psi_Q{}^0{}_M \delta^{JK}k_{JK}\notag \\
&+ (t^{\ep_1}\gamma_Q{}^I{}_0) k_{IM}\Bigr) + t^{-1+\ep_0}\ftt_{3}^{(0),R0M}\beta^{-1} \Bigl(\gac_{QR0M}
         + (t^{\ep_1}\gamma_Q{}^I{}_0) k_{IM}
      \notag \\
&-\psi_Q{}^0{}_M \delta^{JK}k_{JK}\Bigr) + t^{-1+\ep_0}\ftt_{4}^{(0),RLM}\beta^{-1} \Bigl(\gac_{QRLM}
         +\beta^{-1} \psi_Q{}^I{}_L) k_{IM}
     \notag \\
&-\beta^{-1}\psi_Q{}^0{}_M \delta^{JK}k_{JK}\Bigr)+ t^{-\ep_1-2\ep_0}\ftt_{5}^{(0),rl}\xi_{Qrl},    \label{eq:DerDecom}
\end{align}
  where $t^{\ep_1}g_{Qij}$ and $t^{\ep_1}\gamma_Q{}^i{}_j$ can be replaced by a constant coefficient, linear combination of $m_Q$, $\ell_{QiJ}$ and $\psi_Q{}^i{}_J$, and  
  \begin{equation}
    \label{grad-alpha-A.N}
    t^{\ep_1}\beta^{-1}D_Q\beta=-\beta\xi_{Q0} -m_{Q}+\frac{1}{2}\delta^{JK}(2\ell_{JKQ}-\ell_{QJK}).
  \end{equation}
\end{lem}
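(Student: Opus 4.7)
The first identity \eqref{eq:DerDecomCR} is essentially the chain rule. Viewing $\ftt(U,g)$ as a smooth scalar function of its arguments and applying $e_Q$ (which equals $\Dc_Q$ on scalars), one obtains
\begin{equation*}
e_Q(\ftt(U,g)) = \sum_{\Sc=1}^{13}\frac{\partial \ftt}{\partial U_\Sc}(U,g)\, e_Q(U_\Sc) + \frac{\partial \ftt}{\partial g_{ij}}(U,g)\, e_Q(g_{ij}).
\end{equation*}
Re-expressing each $e_Q(U_\Sc)$ via $\Dc_Q U_\Sc$ and $e_Q(g_{ij})|_{g=\eta}$ via $g_{Qij}=\Dc_Q g_{ij}$ (incurring explicit Christoffel corrections for those $U_\Sc$ that are tensorial), then regrouping all $\gamma$-corrections, yields a decomposition of the desired form \eqref{eq:DerDecomCR} with smooth coefficients $\ftt^{(0)}_\Sc(U)$ and $\ftt^{(1)ij}(U)$. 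The hypothesis that $\ftt$ is independent of $U_7,U_8,U_9,U_{10},U_{13}$ forces the corresponding $\ftt^{(0)}_\Sc$ to vanish.

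For the more detailed expansion \eqref{eq:DerDecom}, the plan is to substitute the rescalings \eqref{k-def}--\eqref{Uac-def} and \eqref{betacheck-defN} into each surviving $\Dc_Q U_\Sc$ and to group the resulting terms by overall $t$-power. Since $e_Q(t) = e_Q^0 = 0$ by \eqref{e0I-fix}, every explicit $t$-weight commutes with $\Dc_Q$, and the computation reduces to covariant derivatives of the unweighted primary fields. For the metric-sector variables $\kt_{KL}$, $\ellt_{R0M}$, $\ellt_{RLM}$, $\mt_M$, which are specific components of the $(0,3)$-tensor $g_{ijk}$, one applies the tensorial identity $e_Q(g_{ijk}) = g_{Qijk} + \gamma_Q{}^p{}_i g_{pjk} + \gamma_Q{}^p{}_j g_{ipk} + \gamma_Q{}^p{}_k g_{ijp}$, whose leading term $g_{Qijk}$ re-expresses (after rescaling) as a component of $\gac$; the analogous $(0,2)$-tensor identity handles $\tau_{rl}$ in the $\xi$-sector and produces a component of $\tauac$. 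For $\betat$ (equivalently $\beta$), the identity $\betat^{-1}\Dc_Q\betat=\beta^{-1}\Dc_Q\beta$ combined with \eqref{grad-alpha-B} gives \eqref{grad-alpha-A.N}, which expresses $t^{\ep_1}\beta^{-1}\Dc_Q\beta$ purely in terms of $\xi_{Q0}$, $m_Q$ and $\ell_{JKQ}$; for $\check\beta$, logarithmic differentiation of \eqref{betacheck-defN} yields $\Dc_Q\check\beta = -(1+c_s^2)\check\beta\cdot\beta^{-1}\Dc_Q\beta$; and for $W^s$ one uses $\Dc_Q W^s = \Utt^s_Q = t^{-\ep_4}\acute{\Utt}^s_Q$, with the Christoffel correction $\gamma_Q{}^s{}_j W^j$ producing a separate contribution that carries the weight $t^{-\ep_1}\psi_Q{}^s{}_j W^j$.

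Once these substitutions are in place, the Christoffel coefficients $\gamma_Q{}^p{}_i$ are split into their temporal and spatial index cases: the genuinely spatial $\gamma_Q{}^K{}_J$ rescales directly into $t^{-\ep_1}\psi_Q{}^K{}_J$, while $\gamma_Q{}^0{}_0$, $\gamma_Q{}^K{}_0$ and $\gamma_Q{}^0{}_K$ are re-expressed via \eqref{gamma-I00}--\eqref{gamma-IJ0} in terms of $\mt$ and $\ellt$ (and hence, after rescaling, of $m$ and $\ell$); the pinned components $g_{000}$ and $g_{I00}$ are eliminated through the wave-gauge relations \eqref{gi00}. Each unrescaled $\kt$-factor that appears through a Christoffel correction of an $\mt$- or $\ellt$-type piece contributes an extra $(t\betat)^{-1}=t^{-1+\ep_0}\beta^{-1}$, which accounts for the $t^{-1+\ep_0}$, $t^{1-\ep_0-2\ep_1}$ and $t^{-\ep_1-2\ep_0}$ blocks at the end of \eqref{eq:DerDecom}, while all remaining ``dominant'' contributions combine under the uniform $t^{-\ep_1}$ prefactor. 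The main obstacle is purely bookkeeping -- carefully separating each index sum into temporal and spatial cases and tracking the precise rescaling power attached to every Christoffel correction across many terms -- with no genuine analytic difficulty. The final claim that $t^{\ep_1}g_{Qij}$ and $t^{\ep_1}\gamma_Q{}^i{}_j$ are constant-coefficient linear combinations of $m_Q$, $\ell_{QiJ}$ and $\psi_Q{}^i{}_J$ follows directly from the definitions \eqref{ell-def}--\eqref{psi-def} combined with \eqref{gi00}, which eliminates $g_{000}$ and $g_{I00}$ whenever $ij=00$ or $ij=0I$.
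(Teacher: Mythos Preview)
Your proposal is correct and follows the same approach as the paper: the paper actually omits the proof entirely, stating only that it follows directly from the relations \eqref{p-fields}, \eqref{d-fields}, \eqref{gi00}, \eqref{grad-alpha-A}, \eqref{gamma-I00}--\eqref{gamma-IJ0}, \eqref{kt-def}--\eqref{W-def}, \eqref{psit-def}, \eqref{k-def}--\eqref{Uac-def}, \eqref{betacheck-defN} and \eqref{U-def}. Your sketch explains precisely how to assemble these identities --- chain rule, Christoffel corrections via the $\gamma_Q{}^i{}_j$ formulas, the rescalings, and the wave-gauge eliminations --- and is in fact more detailed than what the paper provides.
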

\noindent We omit the proof of this lemma since it follows directly from the relations \eqref{p-fields}, \eqref{d-fields}, \eqref{gi00}, \eqref{grad-alpha-A},
\eqref{gamma-I00}-\eqref{gamma-IJ0},
  \eqref{kt-def}-\eqref{W-def}, \eqref{psit-def},
  \eqref{k-def}-\eqref{Uac-def},
  \eqref{betacheck-defN} and \eqref{U-def}.

\begin{proof}[Proof of Lemma~\ref{lem:sourceterm}]  As discussed above, the map $F$, defined by \eqref{F-def} and \eqref{F1-def}-\eqref{F13-def}, can be decomposed, non-uniquely, as \eqref{F-expansion}. Here, we take the first sum in \eqref{F-expansion}
to be composed of all terms in $F$ that are \emph{explicitly}
proportional to $t^{-1}$, while all remaining terms, which are proportional to other
powers of $t$ (in particular those depending on $n$,
$c_s^2$, $\ep_1$,\ldots, $\ep_4$), make up the second sum. Over the course of this proof, we will establish that all exponents $\sigma_\ell$ in the second sum are smaller than one. It follows from Lemma~\ref{lem:TIsourceterm} that the
second sum in \eqref{F-expansion} in our expansion of $F$ can be written as
\begin{align}
  \sum_{\ell=1}^{N^{(1)}} t^{-\sigma_\ell}f^{(1)}_\ell\Bigl(U, w^{-1}, (W^0)^{-1}, w^{-\frac{1+c_s^2}{c_s^2}},  &\det(a^0{}_{ij})^{-1}\Bigr)\notag \\
  &=\sum_{\ell=1}^{\tilde N^{(1)}} t^{-\tilde\sigma^{(1)}_\ell}\tilde f^{(1)}_\ell(u) \label{F-expansion2TI}
  \end{align}
  with 
  \[\max_{\ell\in\{1,\ldots,\tilde N^{(1)}\}} \tilde\sigma^{(1)}_\ell = \max_{\ell\in\{1,\ldots, N^{(1)}\}} \sigma_\ell\]
  and $\tilde f^{(1)}_\ell\in C^\infty\bigl(B_R(\Rbb^{\udim})\bigr)$, $\ell=1,\ldots,\tilde N^{(1)},$ provided $R>0$ is chosen
sufficiently small. It is also a consequence of Lemma~\ref{lem:TIsourceterm} together with \eqref{U-def}, \eqref{eq:Ubgdef} and \eqref{u-def} and the fact that
$\ep_0,\ldots,\ep_4$ and $V_*^0$ are all positive that the first sum \eqref{F-expansion} in our expansion of $F$ can be expressed as
  \begin{align}
    \sum_{\ell=1}^{N^{(0)}} t^{-1}f^{(0)}_\ell\Bigl(U, w^{-1},& (W^0)^{-1}, w^{-\frac{1+c_s^2}{c_s^2}}, \det (a^0{}_{ij})^{-1}\Bigr) \notag\\
    &=\sum_{\ell=1}^{\tilde N^{(0)}} t^{-1}\tilde f^{(0)}_\ell(u)
    +\sum_{\ell=1}^{\tilde N^{(2)}} t^{-\tilde\sigma^{(2)}_\ell}\tilde f^{(2)}_\ell(u) \label{F-expansion2TNI}
\end{align}
where each $\tilde\sigma^{(2)}_\ell$ is smaller than
$1$, and $\tilde f^{(0)}_\ell, \tilde f^{(2)}_{\tilde\ell} \in
C^\infty\bigl(B_R(\Rbb^{\udim})\bigr)$ for $\ell=1,\ldots,\Nt^{(0)}$ and $\tilde{\ell}=1,\ldots,\Nt^{(2)}$, so long as $R>0$ is chosen
sufficiently small. A close inspection of \eqref{F1-def}-\eqref{F13-def} together with \eqref{U-def}, \eqref{eq:Ubgdef}, \eqref{u-def} and \eqref{eq:deta0exp} reveals, in addition, that
  \begin{equation*}
    f^{(0)}_\ell\Bigl(\mathring U, (V^0_*)^{-1}, (V^0_*)^{-1},(V^0_*)^{-\frac{1+c_s^2}{c_s^2}}, c_s^2(V_*^0)^{-1}\Bigr)=0
  \end{equation*}
  from which we deduce
  \begin{equation*}
    \tilde f^{(0)}_\ell(0)=\tilde f^{(2)}_{\tilde{\ell}}(0)=0, \quad  \ell=1,\ldots,\Nt^{(0)}, \; \tilde{\ell}=1,\ldots,\Nt^{(2)}.
  \end{equation*}
  By \eqref{eq:Ftt}, {and the expansions \eqref{F-expansion2TI} and \eqref{F-expansion2TNI} for $F$}, we observe that the map $H$ defined by \eqref{eq:SourcetermVan1}
  can be written as
  \begin{equation}
    \label{eq:decompH}
    H(t,u)
    =\sum_{\ell=1}^{\tilde N^{(1)}} t^{\tilde\ep-\tilde\sigma^{(1)}_\ell} \Bigl(\tilde f^{(1)}_\ell(u)-\tilde f^{(1)}_\ell(0)\Bigr) +\sum_{\ell=1}^{\tilde N^{(2)}} t^{\tilde\ep-\tilde\sigma^{(2)}_\ell}\tilde f^{(2)}_\ell(u).
  \end{equation}

Assuming for the moment that all constants $\sigma_\ell$ in \eqref{F-expansion2TI} are smaller than one, that is, that \eqref{eq:decompH} represents the time-integrable part of $F$, we observe that it would follow from Lemma~\ref{lem:TIsourceterm} that the map  $H$ would statisfy
  \eqref{eq:kdkjdsfkjsd309} and \eqref{eq:SourcetermVan2}
  provided 
  we choose
  \begin{equation}
    \label{eq:sourcetildeep}
    \max\{\tilde\ep^{(1)},\tilde\ep^{(2)}\}\le \tilde\ep<1,\quad
    \ep^{(1)}
    = \max_{\ell\in\{1,\ldots, N^{(1)}\}} \sigma_\ell,\quad
   \ep^{(2)}=\max_{\ell\in \tilde N^{(2)}} \tilde\sigma^{(2)}_\ell.
  \end{equation}
  It would also follow from  \eqref{eq:SourcetermVan1} and \eqref{F-expansion2TNI} that
  \begin{equation}
    \label{eq:hatHtt}
    \sum_{\ell=1}^{\tilde N^{(0)}} \Pbb^\perp \tilde f^{(0)}_\ell(u)=\hat\Htt(u) ,
  \end{equation}
  and $\hat\Htt(u)$ defined this way would satisfy \eqref{eq:kdkjdsfkjsd309.3} and \eqref{eq:Gtprops} if each term of $\Pbb^\perp \tilde f^{(0)}_\ell(u)$ has at least two factors involving components of $\Pbb u$.
  Moreover, if each term of $\Pbb \tilde f^{(0)}_\ell(u)$ has at least one factor involving a component of $\Pbb u$ and one factor involving a component of $u$, then it is clear that there would exist a map $\Htt(u)$ satisfying \eqref{eq:kdkjdsfkjsd309.2}, \eqref{eq:SourcetermVan2},  
  \begin{equation}
    \label{eq:Htt}
    \sum_{\ell=1}^{\tilde N^{(0)}} \Pbb \tilde f^{(0)}_\ell(u)=\Htt(u)\Pbb u\AND \Pbb^\perp\Htt(u)=\Htt(u)\Pbb^\perp=0.
  \end{equation}
 The calculation 
  \[\Htt(u)\Pbb-\Pbb\Htt(u)=\Pbb(\Htt(u)\Pbb-\Htt(u))=-\Pbb\Htt(u)\Pbb^\perp=0\]
  shows that the map $\Htt(u)$ would also satisfy \eqref{eq:Gtprops}.

To proceed, we now turn to matching the components \eqref{F1-def}-\eqref{F13-def} of $F$ to terms of the type that appear in the sums \eqref{F-expansion2TI} and  \eqref{F-expansion2TNI}, and showing that the exponents $\sigma_\ell$ that appear in  \eqref{F-expansion2TI} are all less than $1$. 
We begin by considering the components $F_1$ to $F_{10}$ of $F$ defined by
\eqref{F1-def}-\eqref{F10-def}. 
Here, we observe that these components can be decomposed into two groups of terms: those that depend on the fluid variables and those that do not, where we note that only $F_1$, $F_2$ and $F_{10}$ have terms that depend on the fluid variables via the expressions $t^{1-2\ep_0}\beta^2{\Ttt}_{(LM)}$, $t^{\ep_1-\ep_0}\beta{\Ttt}_{0M}$, $t^{1-2\ep_0}\beta^2 (\beta \xi +\psi+l+m)*
  {\Ttt}$ and $t^{1+\ep_1-2\ep_0}\beta^2\Dc_Q\Ttt_{lm}$.
It has been established previously in \cite[\S9.6]{BeyerOliynyk:2021} that the fluid  independent terms in $F_1$ to $F_{10}$ satisfy the properties asserted by Lemma~\ref{lem:sourceterm}. Consequently, to complete our analysis of the terms $F_1$ to $F_{10}$, we need only consider the terms
\begin{align}
  \label{eq:fluidterm.1}
  t^{1-2\ep_0}\beta^2{\Ttt}_{(LM)}
  &=t^{-1+\frac{n-1}{n-2}(1-c_s^2)-\ep_3-2\ep_0}
     \beta^2\hat\Ttt_{(LM)},\\
  \label{eq:fluidterm.2}
  t^{\ep_1-\ep_0}\beta{\Ttt}_{0M}  
  &=t^{-1+\frac{n-1}{n-2}(1-c_s^2)-\ep_3+\ep_1-\ep_0-1}
     \beta \hat\Ttt_{0M}
     \intertext{and}
  t^{1-2\ep_0}\beta^2 (\beta \xi& +\psi+l+m)*
  {\Ttt}
  \notag \\
  &= t^{-1+\frac{n-1}{n-2}(1-c_s^2)-\ep_3-2\ep_0}
     \beta^2 (\beta \xi +\psi+l+m)* \hat\Ttt, \label{eq:fluidterm.3.N}
\end{align}
where
\begin{align}
  \label{eq:fluidterm382304}
  \hat\Ttt_{ij}&=t^{\frac{(n-3)+(n-1)c_s^2}{n-2}+\ep_3}\Ttt_{ij}\notag \\
  &\oset{\eqref{eq:TttconfphysW}}{=}2P_0\Bigl(\frac{1+c_s^2}{c_s^2} w^{-2} W_iW_j+\frac{1-c_s^2}{(n-2)c_s^2}g_{ij}\Bigr) \check\beta w^{-\frac{1+c_s^2}{c_s^2}},
\end{align}
and the derivative
\begin{align}
  &t^{1+\ep_1-2\ep_0}\beta^2\Dc_Q\Ttt_{lm}
  =t^{-1+\frac{n-1}{n-2}(1-c_s^2)-\ep_3-2\ep_0+\ep_1}
     \beta^2 \Dc_Q\hat\Ttt_{lm} \notag\\
     &\qquad =t^{-1+\frac{n-1}{n-2}(1-c_s^2)-\ep_3-2\ep_0+\ep_1}
        \beta^2 \bigl(e_Q(\hat\Ttt_{lm})-\gamma_{Q}{}^i{}_l \hat\Ttt_{im}
        -\gamma_{Q}{}^i{}_m \hat\Ttt_{li}\bigr).\label{eq:fluidterm.3}
\end{align}
Noting that the inequalities
\begin{align*}\frac{n-1}{n-2}(1-c_s^2)-\ep_3-2\ep_0
  &>\frac{n-1}{n-2}(1-c_s^2)-\ep_3+\frac 13\ep_1-\frac 13-\ep_0\\   
  &>\frac{n-1}{n-2}(1-c_s^2)-\ep_3+\ep_1-\ep_0-1
  \end{align*}
and
\begin{align*}\frac{n-1}{n-2}(1-c_s^2)-\ep_3+\ep_1-\ep_0-1
  &>\frac{n-1}{n-2}(1-c_s^2)-\ep_3+\ep_1-\frac 13+\frac{\ep_1}3-1\\
  &>0
\end{align*}
hold due to \eqref{eq:epscond.N}, it follows that each $t$-power to the right of the equal signs in \eqref{eq:fluidterm.1}-\eqref{eq:fluidterm.3.N} is greater than negative one. Consequently, each of the maps \eqref{eq:fluidterm.1}-\eqref{eq:fluidterm.3.N} will be time-integrable, and therefore part of the expansion \eqref{F-expansion2TI}, which in turn, will imply that it is part of the map $H$ by \eqref{eq:decompH}. Moreover, 
since $\tilde\ep$ satisfies \eqref{eq:epstildecond} by assumption, it also satisfies \eqref{eq:sourcetildeep} from which it follows that this part of the map $H$ satisfies \eqref{eq:kdkjdsfkjsd309} and \eqref{eq:SourcetermVan2}.

Now, to analyse the derivative \eqref{eq:fluidterm.3}, we observe that each component of the map $\hat\Ttt_{ij}$ satisfies the requirements for the map $\mathtt f$ in Lemma~\ref{lem:DerivTerm},
and consequently 
 the derivative \eqref{eq:fluidterm.3} can be expressed using \eqref{eq:DerDecom} where we notice that the terms corresponding to $\mathtt f_{\Sc}^{[1]}$ in the chain rule \eqref{eq:DerDecomCR} vanish for all $\Sc=1,\ldots,13$ except for $\Sc=11$ and $\Sc=12$. Given that $t^{\ep_1}\gamma_Q{}^i{}_j$ can be replaced by a constant coefficient, linear combination of $m_Q$, $\ell_{QiJ}$ and $\psi_Q{}^i{}_J$ because of \eqref{gamma-I00} and \eqref{gamma-IJ0}, the time-integrability of  the map \eqref{eq:fluidterm.3} is then an immediate consequence of \eqref{eq:epscond.N}. Thus \eqref{eq:fluidterm.3} determines a part of the expansion \eqref{F-expansion2TI} and therefore a part of the map $H$ according to \eqref{eq:decompH}. Again  since \eqref{eq:sourcetildeep} holds on account of $\tilde\ep$ satisfying \eqref{eq:epstildecond},  this part of the $H$ map satisfies \eqref{eq:kdkjdsfkjsd309} and \eqref{eq:SourcetermVan2}.
This completes the analysis of the components $F_1$, \ldots, $F_{10}$.

We now turn to analysing the components $F_{11}$, $F_{12}$ and $F_{13}$ in \eqref{F11-def}-\eqref{F13-def}. We will not consider the component $F_{11}$ further because it is straightforward to verify that it satisfies the stated properties. Moving on to $F_{12}$, we see from  \eqref{eq:WQdr}, \eqref{eq:Gjsl.2}, \eqref{eq:fluidgamma.1}, \eqref{k-def}-\eqref{Uac-def} and \eqref{F12-def} that
all terms proportional to $t^{-1}$ in $F_{12}$
can be combined into the expression
  \begin{align}    
       -\delta_{\jt}^0\Biggl[&
         \frac{(n-1)c_s^2-1}{n-2} W^QW_Q
       +\frac 12c_s^2
       \delta^{PQ}k_{PQ} W_L W^L       
       \notag \\
       &-\Bigl(\frac 12-\frac{3 c_s^2+1}{c_s^2} \frac{W^IW_I}{w^2}\Bigr) k_{PQ} W^PW^Q
       \Biggr]\notag \\
       &
        -\frac 12\delta_{\jt}^J W^0\Bigl[ k_{SJ} -
          c_s^2  
          \delta^{PQ}k_{PQ}  \delta_{SJ}        
          \Bigr]W^S.  \label{eq:F12TNI}
  \end{align}
This part of $F_{12}$ generates an expansion of the form \eqref{F-expansion2TNI} where it can be checked
using  \eqref{U-def}, \eqref{eq:Ubgdef} and \eqref{u-def} that that no contribution corresponding to the second sum on the right hand side of \eqref{F-expansion2TNI} is generated. Moreover, using \eqref{Pbb-def}, it can be verified that the terms in the first line of \eqref{eq:F12TNI} yield contributions only to the map $\hat\Htt$, see \eqref{eq:hatHtt}, which satisfy \eqref{eq:Gtprops} as a consequence of \eqref{U-def}, \eqref{u-def}, \eqref{eq:Ubgdef} and \eqref{Pbb-def} on the one hand. On the other hand, we can use the same relationships together with \eqref{eq:Htt} to establish that the second line of \eqref{eq:F12TNI}
yields well-defined constributions only to the map $\Htt$ which are consistent with \eqref{eq:kdkjdsfkjsd309.2}-\eqref{eq:Gtprops}.
This covers all time-non-integrable terms of $F_{12}$.

Considering now all the remaining terms from $F_{12}$, which, as we will show are time-integrable, a lengthy calculation involving \eqref{eq:Gjsl.2} and \eqref{eq:fluidgamma.1} reveals that these terms can be expressed as
  \begin{align*}
    \delta_{\jt} ^j\betat &\Bigl[- \delta_{(l}^{0}\delta_{s)}^{S}\delta_j^{J}(\ellt_{S0J}+\ellt_{J0S})-      
\Bigl(\frac{3 c_s^2+1}{c_s^2}\frac{W^Q W^P}{w^{2}}+g^{PQ}\Bigr) \ellt_{P0Q}\delta_{(l}^0\eta_{s)j}\\
    &- \frac{1}{2}     
        \Bigl(\frac{3 c_s^2+1}{c_s^2} \frac{W^Q W^P}{w^{2}}+g^{PQ}\Bigr) \ellt_{LP Q}\delta_{(l}^L\eta_{s)j}- 2\delta_{l}^{0}\delta_{(s}^{0}\delta_{j)}^{J}\mt_{J}
       \\
       &- \delta_{(l}^{L}\delta_{s)}^{s'}\delta_j^{j'}g_{Ls'j'} - \frac{1}{2}     
        \Bigl(\frac{3 c_s^2+1}{c_s^2} \frac{(W^0)^2}{w^{2}}-1\Bigr) g_{L 0 0}\delta_{(l}^{L}\eta_{s)j}\\
    &- \frac{3 c_s^2+1}{c_s^2} \frac{W^0 W^P}{w^{2}} \eta_{j(s} g_{l)0 P}
    -c_s^2\Bigl(\frac{c_s^2+1}{c_s^2} \delta_{(l}^0  \eta_{s)j} -\delta_j^0\eta_{sl}\Bigr)
    \betat \tau_{00}\\ 
    &-c_s^2\betat^{-1} \Bigl(\frac{c_s^2+1}{c_s^2} \delta_{(l}^P  \eta_{s)j}\notag -\delta_j^P\eta_{sl}\Bigr)\Bigl(\betat^{2}\tau_{P0}+\frac{1}{2}\betat\bigl(2\mt_{P}-\delta^{JK}(2\ellt_{JKP} \\
    &-\ellt_{PJK})\bigr)\Bigr)\Bigr] W^sW^l - \delta_{\jt} ^j\betat \Bigl[-{a^0}_{j K}\delta^{KP}\ellt_{P0L}\delta_l^L-{a^0}_{j 0} \betat \tau_{0L}\delta_l^L
    \\
    &-{a^0}_{jK}\delta^{KP}(\betat \tau_{0P}+\mt_{P})\delta_l^0    
    +{a^I}_{j 0}(\mt_{I}-\frac{1}{2}\delta^{JK}(2 \ellt_{JKI}-\ellt_{IJK}) )\delta_l^0\notag\\
    &+{a^I}_{j 0}\psit_I{}^0{}_L\delta_l^L
    -{a^I}_{j K} (\delta^{KP}\ellt_{I0P} - \delta^{KP}\psit_{I}{}^0{}_P)\delta_l^0
      +{a^I}_{j K}\psit_I{}^K{}_L\delta_l^L\Bigr]W^l.
  \end{align*}
It then follows from \eqref{k-def}-\eqref{Uac-def}, \eqref{U-def}, \eqref{eq:Ubgdef}, \eqref{u-def}  and \eqref{eq:epscond.N} that these can be expanded as \eqref{F-expansion2TI} where all exponents $\sigma_\ell$ are smaller than $1$, and consequently, it contributes to the map $H$. As above, since \eqref{eq:sourcetildeep} holds due to  $\tilde\ep$ satisfying \eqref{eq:epstildecond} by assumption, this contribution to the map $H$ will satisfy \eqref{eq:kdkjdsfkjsd309} and \eqref{eq:SourcetermVan2}.

Finally, to complete the proof, we consider $F_{13}$, see \eqref{F13-def}, and observe that it can be written as
  \begin{align}   
      F_{13}= &\delta_{\jt} ^j \delta^{\Qt Q} \Biggl[     
          \bigl(2t^{-\ep_1}{a^0}_{j m}\beta \xi_{0Q}
        -2t^{-\ep_1}{a^I}_{j m}\psi_I{}^0{}_Q\bigr) (a^{0})^{-1}{}^{mn}\Bigl( t^{-\ep_0}\beta a^I{}_{nk}\acute{\Utt}^k_I\notag \\
        &-t^{\ep_4-\ep_0}\beta G_{nsl}W^sW^l\Bigr)  
 -t^{-\ep_0-\ep_1}\beta{a^0}_{j k}\delta^{PL}\ell_{Q0L}\acute{\Utt}_P ^k
 \notag \\
 &+t^{-\ep_0-\ep_1}\beta {a^I}_{j k}\psi_I{}^P{}_Q \acute{\Utt}_P ^k
-t^{\ep_4} (\beta^{-1}\Dc_Q\beta) (\betat G_{jsl}W^sW^l) \notag \\
&+ t^{-\ep_0}\beta \Dc_Q{a^0}_{j m}(a^{0})^{-1}{}^{mn} a^I{}_{nk}\acute{\Utt}^k_I          
           -t^{-\ep_0}\beta \Dc_Q{a^P}_{j k}\acute{\Utt}_P ^k \notag \\
      &-\frac{1}{2} t^{-1}{a^0}_{j k}\delta^{PL}k_{QL}\acute{\Utt}_P ^k \notag -t^{-\ep_0}\beta{a^i}_{j k}\gamma_i{}^k{}_l {\acute{\Utt}}^l_Q
                      +t^{-\ep_0}\beta {a^i}_{j k}\gamma_i{}^L{}_Q {\acute{\Utt}}^k_L \notag \\
           & -t^{\ep_4-\ep_0}\Dc_Q{a^0}_{j m}(a^{0})^{-1}{}^{mn}(\beta
             G_{nsl}W^sW^l)
      +t^{\ep_4-\ep_0} \Dc_Q(\beta G_{jsl}W^sW^l)\notag \\
&-t^{-1} 
  \frac{(n-1)c_s^2-1}{n-2} \Bigl(       
  \eta_{sj}+\delta_{s}^0  \delta_{j}^0
  \Bigr)\,{\acute{\Utt}}^s_Q\,W^0
                    \Biggr].  \label{eq:F13new.5} 
  \end{align}
  Now, due to \eqref{eq:epscond.N}, we have that  $\ep_4>\ep_1$, $\ep_0+\ep_1<1$, $\ep_4>\ep_0$ and $\ep_1>\ep_0$, and so, we see from  \eqref{eq:B0fluid}, \eqref{eq:BIfluid}, \eqref{eq:Gjsl.2}, \eqref{k-def}-\eqref{Uac-def} and \eqref{grad-alpha-A.N}  that the terms
   \begin{align*} 
      \delta_{\jt} ^j \delta^{\Qt Q} \Biggl[      
      &
          \bigl(2t^{-\ep_1}{a^0}_{j m}\beta \xi_{0Q}
        -2t^{-\ep_1}{a^I}_{j m}\psi_I{}^0{}_Q\bigr) (a^{0})^{-1}{}^{mn}\Bigl( t^{-\ep_0}\beta a^I{}_{nk}\acute{\Utt}^k_I\\
        &-t^{\ep_4-\ep_0}\beta G_{nsl}W^sW^l\Bigr)    
                    -t^{-\ep_0-\ep_1}\beta{a^0}_{j k}\delta^{PL}\ell_{Q0L}\acute{\Utt}_P ^k
                          \notag \\
&+t^{-\ep_0-\ep_1}\beta {a^I}_{j k}\psi_I{}^P{}_Q \acute{\Utt}_P ^k
             -t^{\ep_4} (\beta^{-1}\Dc_Q\beta) (\betat G_{jsl}W^sW^l) \\  
           &+ t^{-\ep_0}\beta \Dc_Q{a^0}_{j m}(a^{0})^{-1}{}^{mn} a^I{}_{nk}\acute{\Utt}^k_I          
           -t^{-\ep_0}\beta \Dc_Q{a^P}_{j k}\acute{\Utt}_P ^k \Biggl]
  \end{align*}
  from \eqref{eq:F13new.5} can be expanded as \eqref{F-expansion2TI} with exponents $\sigma_\ell$ that are all smaller than $1$. Consequently, this part of $F_{13}$ is time-integrable, and therefore contributes to the map $H$. Again, since \eqref{eq:sourcetildeep} holds due to  $\tilde\ep$ satisfying \eqref{eq:epstildecond} by assumption, this contribution to the map $H$ will satisfy \eqref{eq:kdkjdsfkjsd309} and \eqref{eq:SourcetermVan2}.

Next, it is clear from  \eqref{U-def}, \eqref{eq:Ubgdef}
and \eqref{u-def}
that the term 
\begin{equation*}
\frac{1}{2}\delta_{\jt} ^j \delta^{\Qt Q}  t^{-1}{a^0}_{j k}\delta^{PL}k_{QL}\acute{\Utt}_P ^k
\end{equation*}
from \eqref{eq:F13new.5} can be expanded
as \eqref{F-expansion2TNI} without the second sum appearing on the right hand side, 
and consequently it is time-non-integrable. From the definition \eqref{Pbb-def}, it is also easy to verify that this term contributes to the map $\Htt$ in consistency with \eqref{eq:Htt} and that this contribution satisfies the properties \eqref{eq:kdkjdsfkjsd309.2}, \eqref{eq:SourcetermVan2} and \eqref{eq:Gtprops} as a consequence of
\eqref{Pbb-def},  \eqref{U-def},\eqref{eq:Ubgdef} and \eqref{u-def}.

Inspecting \eqref{eq:fluidgamma.1} and \eqref{eq:fluidgamma.3}, it is not difficult to verify using the variable definitions \eqref{k-def}-\eqref{Uac-def} that the term
\begin{equation*}
\delta_{\jt} ^j \delta^{\Qt Q}\Bigl[-t^{-\ep_0}\beta{a^i}_{j k}\gamma_i{}^k{}_l {\acute{\Utt}}^l_Q
                      +t^{-\ep_0}\beta {a^i}_{j k}\gamma_i{}^L{}_Q {\acute{\Utt}}^k_L\Bigr]
\end{equation*}
from \eqref{eq:F13new.5} contains both time-integrable and time-non-integrable terms contributing to the map $H$ and the map $\Htt$, respectively. Furthermore, since  $\tilde\ep$ satisfies  \eqref{eq:epstildecond} by assumption, which in turn implies that \eqref{eq:sourcetildeep} holds, these contribution to the maps $H$ and $\Htt$ are easily seen to satisfy the properties \eqref{eq:kdkjdsfkjsd309}-\eqref{eq:kdkjdsfkjsd309.2}. Additionally, from \eqref{Pbb-def}, \eqref{U-def}, \eqref{eq:Ubgdef} and \eqref{u-def}, we observe that these contributions to the maps
$H$ and $\Htt$
satisfy \eqref{eq:SourcetermVan2}-\eqref{eq:Gtprops}.

Next, we consider the term
\begin{equation*}
\delta_{\jt} ^j \delta^{\Qt Q}t^{\ep_4-\ep_0}\Dc_Q{a^0}_{j m}(a^{0})^{-1}{}^{mn}(\beta
             G_{nsl}W^sW^l)
\end{equation*}
from \eqref{eq:F13new.5}. With the help of  \eqref{eq:B0fluid}, \eqref{eq:Gjsl.2}, \eqref{k-def}-\eqref{Uac-def}, \eqref{eq:deta0exp}, and Lemma~\ref{lem:DerivTerm}, in particular \eqref{eq:DerDecom}, it then follows that the above expression can be separated in a collection of time-integrable and time-non-integrable terms where the former contribute to the map $H$ and the latter to $\Htt$.
Regarding the terms that contribute to $H$, it can be verified as above these terms will satisfy the properties \eqref{eq:kdkjdsfkjsd309} and \eqref{eq:SourcetermVan2} as a consequence of  \eqref{eq:sourcetildeep}, which holds since $\tilde\ep$ satisfies \eqref{eq:epstildecond} by assumption. On the other hand, the terms that contribute to $\Htt$ are, thanks to \eqref{Pbb-def}, \eqref{U-def}, \eqref{eq:Ubgdef} and \eqref{u-def}, easily shown to satisfies the properties
\eqref{eq:kdkjdsfkjsd309.2} \eqref{eq:SourcetermVan2}-\eqref{eq:Gtprops} with the possible exception 
of the term
\[-t^{-1}(a^{[1]}_{12,p})^{0}_{j m}(a^{0})^{-1}{}^{mn}\acute{\Utt}^p_Q \frac{(n-1)c_s^2-1}{n-2} \Bigl(       
  \delta_{(l}^0  \eta_{s)n}-\delta_n^0\eta_{sl}
  \Bigr) W^sW^l,\]
  which originates from the first term in \eqref{eq:Gjsl.2} after being expanded using \eqref{eq:DerDecom}.
The problem with this term is that it is not clear that its contribution to the $\Htt$ maps satisfies \eqref{eq:SourcetermVan2}.
However,
a straightforward calculation demonstrates that this term can be expressed as
\[-t^{-1}(a^{[1]}_{12,p})^{0}_{j m} (a^{0})^{-1}{}^{mn}\acute{\Utt}^p_Q \frac{(n-1)c_s^2-1}{n-2} \Bigl( \delta_n^N\delta_{NS}W^{(0}W^{S)}-\delta_n^0 W^SW_S
  \Bigr).\]
From this expression and the definitions \eqref{U-def}, \eqref{eq:Ubgdef} and \eqref{u-def},  it is then clear that this term is of the required form.

To complete the proof, we analyse the last
term
\begin{align} 
\delta_{\jt} ^j \delta^{\Qt Q}\Bigl[t^{\ep_4-\ep_0} &\Dc_Q(\beta G_{jsl}W^sW^l) \notag \\
&-t^{-1} 
  \frac{(n-1)c_s^2-1}{n-2} \Bigl(       
  \eta_{sj}+\delta_{s}^0  \delta_{j}^0
  \Bigr)\,{\acute{\Utt}}^s_Q\,W^0
                    \Bigr]\label{F13-last-term}
\end{align}
in \eqref{eq:F13new.5} starting with  $t^{\ep_4-\ep_0}\Dc_Q(\beta G_{jsl}W^sW^l)$.
Using \eqref{k-def}-\eqref{Uac-def}, we can express $G_{jsl}$, see \eqref{eq:Gjsl.2}, as
\begin{equation}
  \label{eq:FluidGbSplit}
  G_{jsl}=t^{-1}\betat^{-1}\hat G_{jsl}+t^{-\ep_1}\check G_{jsl}
\end{equation}
where $\hat G_{jsl}$ consists of all the time-non-integrable terms that are obtained from the first three lines of \eqref{eq:Gjsl.2} while all the remaining terms, which are time-integrable, are denoted by $\check G_{jsl}$.

Differentiating \eqref{eq:FluidGbSplit}, we find that
\begin{align}
  t^{\ep_4}\Dc_Q&(\betat G_{jsl}W^sW^l)
  =  t^{-1+\ep_4}e_Q (\hat G_{jsl})\,W^sW^l
     +2t^{-1} \hat G_{jsl}\,{\acute{\Utt}}^s_QW^l \notag \\
  &-t^{-1+\ep_4-\ep_1}(t^{\ep_1}\gamma_Q{}^i{}_j) \hat G_{isl}\,W^sW^l
    -t^{-1+\ep_4-\ep_1}(t^{\ep_1}\gamma_Q{}^i{}_s) \hat G_{jil}\,W^sW^l
   \notag \\
   &-t^{-1+\ep_4-\ep_1}(t^{\ep_1}\gamma_Q{}^i{}_l) \hat G_{jsi}\,W^sW^l +2t^{-\ep_1-\ep_0}\beta \check G_{jsl}\,{\acute{\Utt}}^s_QW^l
    \notag \\
    &+t^{\ep_4-\ep_1}\Dc_Q (\betat \check G_{jsl})\,W^sW^l-t^{\ep_4-2\ep_1}(t^{\ep_1}\gamma_Q{}^i{}_j) (\betat \check G_{isl})\,W^sW^l \notag \\
    &-t^{\ep_4-2\ep_1}(t^{\ep_1}\gamma_Q{}^i{}_s) (\betat \check G_{jis})\,W^sW^l.  \label{eq:skcmoevmvppp3.4}
\end{align}
Taking into account \eqref{eq:epscond.N} and the fact that $t^{\ep_1}\gamma_Q{}^i{}_j$ can be replaced by a constant coefficient, linear combination of $m_Q$, $\ell_{QiJ}$ and $\psi_Q{}^i{}_J$ as a consequence of \eqref{gamma-I00} and \eqref{gamma-IJ0},
it then straightforward, with the help of Lemma~\ref{lem:DerivTerm}, to verify that all of the terms that last three lines of \eqref{eq:skcmoevmvppp3.4} are time-integrable
and can be expanded as \eqref{F-expansion2TI}
with exponents $\sigma_\ell$ less than one. In particular, these terms contribute to the map $H$ and, for the same reasons as noted above, will satisfy the properties
\eqref{eq:kdkjdsfkjsd309} and \eqref{eq:SourcetermVan2}.

Similarly, it is also straightforward to verify, again using  \eqref{eq:epscond.N} and Lemma~\ref{lem:DerivTerm}, that the terms on the first line of the right hand side of \eqref{eq:skcmoevmvppp3.4} are time-integrable
and can be expanded as \eqref{F-expansion2TI}
with exponents $\sigma_\ell$ less than one
except for the terms 
\begin{align}
  t^{-1} (\hat G_{jsl})_{12,i}^{(0)}\acute{\Utt}^i_Q \,W^sW^l&
    + 2 t^{-1} \hat G_{jsl}\,{\acute{\Utt}}^s_QW^l
    =
    2 t^{-1} (\hat G_{j0L})_{12,i}^{(0)}\acute{\Utt}^i_Q \,W^0W^L
      \notag \\
    &+t^{-1} (\hat G_{jSL})_{12,i}^{(0)}\acute{\Utt}^i_Q \,W^SW^L 
      + 2 t^{-1}\hat G_{jsL}\,{\acute{\Utt}}^s_QW^L
      \notag \\
    &+t^{-1} (\hat G_{j00})_{12,i}^{(0)}\acute{\Utt}^i_Q \,(W^0)^2 
    + 2 t^{-1}\hat G_{js0}\,{\acute{\Utt}}^s_QW^0.  \label{eq:dkjask330303030}
\end{align}
Moreover, the time-integrable terms contribute to $H$ and satisfy \eqref{eq:kdkjdsfkjsd309} and \eqref{eq:SourcetermVan2}.
We futher note that the map $(\hat G_{jsl})_{12,i}^{(0)}$ that appears in \eqref{eq:dkjask330303030} corresponds to the map $\ftt_{12}^{(0)}$ from Lemma~\ref{lem:DerivTerm} that is generated from the derivative $e_Q(\hat{G}_{jsl})$ in the first term on the right hand side of \eqref{eq:skcmoevmvppp3.4}. 

Due to the $t^{-1}$ singular terms in \eqref{eq:dkjask330303030}, this expression can be expanded as \eqref{F-expansion2TNI}. As can be readily verified using \eqref{U-def}, \eqref{eq:Ubgdef} and \eqref{u-def}, the expansion only contains terms of the type that appear in the first sum on the right hand side of  \eqref{F-expansion2TNI}, that is, the second sum is absent. In order to verify that this yields a contribution to the map $\Htt$,
c.f.~\eqref{eq:Htt}, that satisfies \eqref{eq:kdkjdsfkjsd309.2}, \eqref{eq:SourcetermVan2} and \eqref{eq:Gtprops} we must establish that each term in \eqref{eq:dkjask330303030} has one factor of $\Pbb u$ (which is obvious by \eqref{U-def}, \eqref{Pbb-def}, \eqref{eq:Ubgdef} and \eqref{u-def}) in addition to a factor of $u$.
%
%
This is,  by \eqref{U-def}, \eqref{eq:Ubgdef} and \eqref{u-def},  obvious for the first three terms on the right side of \eqref{eq:dkjask330303030}, but not for the last two terms. Indeed, 
a direct calculation involving \eqref{eq:Gjsl.2}, \eqref{k-def}-\eqref{Uac-def}, \eqref{eq:Ubgdef} and \eqref{u-def} reveals that these two terms yield precisely one term of the form
that does \emph{not} involve a product of $\Pbb u$ and $u$, given by
\begin{equation*} t^{-1} 
  \frac{(n-1)c_s^2-1}{n-2} \Bigl(       
  \eta_{sj}+\delta_{s}^0  \delta_{j}^0
  \Bigr)\,{\acute{\Utt}}^s_Q\,W^0.
  \end{equation*}
  However, this term precisely cancels the second term in \eqref{F13-last-term}. So in total, \eqref{eq:dkjask330303030} results in a contribution to the map $\Htt$ that satisfies all the required properties. 
This completes the proof of 
Lemma~\ref{lem:sourceterm}.
\end{proof}

\section{Past global FLRW stability \label{global-sec}}

In this section, we turn to establishing the past stability of the FLRW solutions \eqref{eq:FLRWEulerSFExplSol1}-\eqref{eq:FLRWEulerSFExplSol4} and their big bang singularities. The precise statement of our past stability result is given below in Theorem \ref{glob-stab-thm}.
The proof of this theorem is carried out in two steps. The first step, detailed in Section~\ref{sec:proof_globstab_Fuchsian}, involves establishing the stability of the solution $\breve u=\breve U-\mathring U$ of the Fuchsian equation \eqref{Fuch-ev-A2}, where $\breve U$, defined by \eqref{eq:UbreveFirst}-\eqref{eq:UbreveLast}, corresponds to the FLRW solution \eqref{eq:FLRWEulerSFExplSol1}-\eqref{eq:FLRWEulerSFExplSol4} and $\mathring U$ is defined by \eqref{eq:Ubgdef}. By \eqref{Ubr-Ur}, we
can choose $t_0>0$ small enough to ensure that $\breve u(t)$ remains arbitrarily close to $u\equiv 0$ on the time interval $(0,t_0]$. The stability of the trivial  $u\equiv 0$ to the Fuchsian equation \eqref{Fuch-ev-A2} therefore implies the stability of $\breve u$, which we note, in turn, implies the stability of $\breve U$. Section~\ref{sec:proof_globstab_Fuchsian} contains both the statement of the Fuchsian stability result for the trivial solution $u\equiv 0$, see Proposition~\ref{prop:globalstability}, as well as its proof.
In the second step, which is carried out in Section~\ref{sec:proof_globstab}, we use the stability result from Proposition~\ref{prop:globalstability} in conjunction with the local-in-time existence and continuation theory from Proposition \ref{lag-exist-prop} to complete the proof of Theorem \ref{glob-stab-thm}.

\subsection{The past global stability theorem}
\label{sec:glob-stab-thm}

Theorem \ref{glob-stab-thm} below establishes the nonlinear stability of the Einstein-Euler-scalar field FLRW  solutions \eqref{eq:FLRWEulerSFExplSol1}-\eqref{eq:FLRWEulerSFExplSol4} on $M_{0,t_0}=(0,t_0]\times \Tbb^{n-1}$ for some $t_0>0$
by guaranteeing that sufficiently small perturbations of FLRW initial data, see Remark \ref{FLRW-idata-rem-A}, 
which also satisfies the gravitational and wave gauge constraints as well as the synchronization condition $\tau|_{\Sigma_{t_0}}=t_0$, will generate solutions of conformal Einstein-Euler-scalar field equations on $M_{0,t_0}$ that are asymptotically Kasner in the sense of Definition \ref{def:APKasner} provided the speed of sound parameter $c_s^2$ is bounded by $1/(n-1)<c_s^2 < 1$. 
As discussed in Section \ref{temp-synch}, if the initial data does not satisfy the synchronization condition  $\tau|_{\Sigma_{t_0}}=t_0$, then it can be evolved for short amount of time so that it does, and consequently, we lose no generality by assuming that the initial data is synchronized.

\begin{thm}[Past global stability of the FLRW solution of the Einstein-Euler-scalar field system]\label{glob-stab-thm}

Suppose that $n\in\Zbb_{\ge 3}$, $k \in \Zbb_{>(n+3)/2}$, $P_0>0$, $V_*^0>0$, $c_s^2\in ( 1/(n-1),1)$, $\sigma>0$, and let $\{\breve g_{\mu\nu},\breve\tau,\breve V^\mu\}$ denote the FLRW solution \eqref{eq:FLRWEulerSFExplSol1}-\eqref{eq:FLRWEulerSFExplSol4} determined by the constants $n$, $P_0$, $V_*^0$ and $c_s^2$. Then there exist constants $\delta_0, t_0>0$ such that for every $\delta\in (0,\delta_0]$, $\gr_{\mu\nu}\in H^{k+2}(\Tbb^{n-1},\mathbb{S}_n)$, 
$\ggr_{\mu\nu}\in H^{k+1}(\Tbb^{n-1},\mathbb{S}_n)$, $\taur=t_0$,
$\taugr\in H^{k+2}(\Tbb^{n-1})$ and $\Vr^\mu\in H^{k+1}(\Tbb^{n-1})$
satisfying
\begin{align} 
    &\norm{\gr_{\mu\nu}-\breve g_{\mu\nu}(t_0)}_{H^{k+2}(\Tbb^{n-1})}
  +\norm{\ggr_{\mu\nu}-\del{t}\breve g_{\mu\nu}(t_0)}_{H^{k+2}(\Tbb^{n-1})}
  \notag \\ 
  &\hspace{1.5cm}+\norm{\taugr-1}_{H^{k+2}(\Tbb^{n-1})}
  +\norm{\Vr^\mu-\breve V^\mu(t_0)}_{H^{k+1}(\Tbb^{n-1})}  
  <\delta \label{glob-stab-thm-idata-A}
\end{align}
and the gravitational and wave gauge constraints \eqref{grav-constr}-\eqref{wave-constr}, there exists a unique classical solution $\Wsc\in C^1(M_{0,t_0})$, see \eqref{eq:Wdef}, 
of the system of evolution equations \eqref{tconf-ford-C.1}-\eqref{tconf-ford-C.9} on $M_{0,t_0}=(0,t_0]\times \Tbb^{n-1}$ with regularity
\begin{equation}
  \label{eq:WregFin}
\Wsc \in \bigcap_{j=0}^{k}C^j\bigl((0,t_0], H^{k-j}(\Tbb^{n-1})\bigr)
\end{equation} 
that satisfies the corresponding initial
conditions \eqref{l-idata}-\eqref{hhu-idata} on $\Sigma_{t_0}=\{t_0\}\times\Tbb^{n-1}$ and
the constraints \eqref{eq:Lag-constraints} in $M_{0,t_0}$.

\medskip

\noindent Moreover, the triple $\{g_{\mu\nu}=\del{\mu}l^\alpha\ghu_{\alpha\beta}\del{\nu}l^\beta,\tau=t, V^\mu=\Jcch^\mu_\nu \Vhu^\nu\}$, which is uniquely determined by $\Wsc$, defines a solution of the conformal Einstein-Euler-scalar field equations \eqref{lag-confeqns} on $M_{0,t_0}$ that satisfies the wave gauge constraint \eqref{lag-wave-gauge} and the following properties: 
\begin{enumerate}[(a)]
  \item Let $e_0^\mu=\betat^{-1}\delta_0^\mu$ with $\betat= (-g(dx^0,dx^0))^{-\frac{1}{2}}$, and $e^\mu_I$ be the unique solution of the Fermi-Walker transport equations \eqref{Fermi-A} with initial conditions $e^\mu_I |_{\Sigma_{t_0}}=\delta^\mu_\Lambda\er^\Lambda_I$ where the functions $\er^\Lambda_I \in H^k(\Tbb^{n-1})$ are chosen to satisfy $\norm{\er^\Lambda_I-(\omega|_{t=t_0})^{-1}\delta^\Lambda_I}_{H^{k}(\Tbb^{n-1})} < \delta$
and make the frame $e^\mu_i$ orthonormal on $\Sigma_{t_0}$.
Then $e^\mu_i$ is a well defined frame in $M_{0,t_0}$ that
satisfies
$e^0_I=0$ and $g_{ij}=\eta_{ij}$
where $g_{ij}=e_i^\mu g_{\mu\nu}e^\nu_j$. 
\item There exists a tensor field $\kf_{IJ}$ in $H^{k-1}(\Tbb^{n-1},\Sbb{n-1})$ satisfying 
\begin{equation*}
    \norm{\kf_{IJ}}_{H^{k-1}(\Tbb^{n-1})}\lesssim\delta+t_0^{\frac{n-1}{n-2}(1-c_s^2)}
\end{equation*} such that 
  \begin{align}
\norm{t\betat\Dc_0 g_{00}+\delta^{JK}\kf_{JK}}_{H^{k-1}(\mathbb T^{n-1})}&\lesssim
t^{\frac 18\frac{n-1}{n-2}(1-c_s^2)-\sigma}\notag \\
&\quad +t^{2 \bigl(\frac{(n-1)c_s^2-1}{n-2}-\sigma\bigr)}, \label{eq:glostab-est.First}\\
\norm{t\betat\Dc_0 g_{JK}-\kf_{JK}}_{H^{k-1}(\mathbb T^{n-1})}
     &\lesssim  t^{\frac 18\frac{n-1}{n-2}(1-c_s^2)-\sigma}\notag \\
     & +t^{2 \bigl(\frac{(n-1)c_s^2-1}{n-2}-\sigma\bigr)},  \label{eq:glostab-est.2}\\
 \norm{\Dc_I g_{00}}_{H^{k-1}(\mathbb T^{n-1})}+\norm{\Dc_0 g_{J0}}_{H^{k-1}(\mathbb T^{n-1})}\qquad & \notag \\
 +\norm{\Dc_I g_{J0}}_{H^{k-1}(\mathbb T^{n-1})}+\norm{\Dc_I g_{JK}}_{H^{k-1}(\mathbb T^{n-1})} &\lesssim t^{-1+\frac 58\frac{n-1}{n-2}(1-c_s^2)-2\sigma}\notag \\
&  + t^{-\frac 12\frac{n-1}{n-2}(1-c_s^2)-\sigma},  \label{eq:glostab-est.3}\\
\label{eq:glostab-est.4}
\norm{t\betat\Dc_I\Dc_j g_{kl}}_{H^{k-1}(\mathbb T^{n-1})}&\lesssim t^{-1+\frac 58\frac{n-1}{n-2}(1-c_s^2)-2\sigma}\notag \\
&+ t^{-\frac 12\frac{n-1}{n-2}(1-c_s^2)-\sigma},\\
\label{eq:glostab-est.5}
\norm{\Dc_i\Dc_j\tau }_{H^{k-1}(\mathbb T^{n-1})}&\lesssim t^{-1+\frac 34\frac{n-1}{n-2}(1-c_s^2)-2\sigma}\notag\\
&+ t^{-\frac 38\frac{n-1}{n-2}(1-c_s^2)-\sigma},\\
\label{eq:glostab-est.Last}
\norm{\Dc_I\Dc_j\Dc_k\tau }_{H^{k-1}(\mathbb T^{n-1})}&\lesssim t^{-2+\frac{n-1}{n-2}(1-c_s^2)-3\sigma}\notag \\
+& t^{-1-\frac 18\frac{n-1}{n-2}(1-c_s^2)-2\sigma},
  \end{align}
for all $t\in (0,t_0]$,
where all the
fields in these estimates are expressed in terms of the frame $e_i^\mu$ and the Levi-Civita connection $\Dc$ of the flat background metric
$\gc_{\mu\nu}=\del{\mu}l^\alpha \eta_{\alpha\beta} \del{\nu}l^\beta$. 
In addition, there exist a strictly positive function $\mathfrak{b}\in H^{k-1}(\Tbb^{n-1})$, a matrix $\ef^\Lambda_J\in H^{k-1}(\Tbb^{n-1},\Mbb{n-1})$, and a constant $C>0$ such that
  \begin{align}
       \Bnorm{t^{-\kf_{J}{}^J/2}\betat
    -\mathfrak{b}}_{H^{k-1}(\Tbb^{n-1})}
 & \lesssim t^{\frac 18\frac{n-1}{n-2}(1-c_s^2)-\sigma}\notag \\
 &\quad +t^{2 \bigl(\frac{(n-1)c_s^2-1}{n-2}-\sigma\bigr)} \label{eq:improvbetaestimate-glob}
\intertext{and}
  \Bnorm{\exp\Bigl(\frac 12\ln(t)\kf_{J}{}^I\Bigr) e_I^\Lambda-\ef_J^\Lambda}_{H^{k-1}(\Tbb^{n-1})}    
  &\lesssim t^{\frac 18\frac{n-1}{n-2}(1-c_s^2)-\sigma -C\delta} \notag \\
  &\quad+t^{2 \bigl(\frac{(n-1)c_s^2-1}{n-2}-\sigma\bigr) -C\delta} \label{eq:improvframeestimate}
\end{align}
 for all $t\in (0,t_0]$, where $\kf_{L}{}^J=\kf_{LM}\delta^{MJ}$.
\item The second fundamental form $\Ktt_{\Lambda\Omega}$ induced on the constant time surface $\Sigma_{t}=\{t\}\times\Tbb^{n-1}$  by $g_{\mu\nu}$ satisfies
\begin{equation}
\label{eq:2ndFF.est}
\Bnorm{2 t\betat \Ktt_{LJ}- \kf_{LJ}}_{H^{k-1}(\mathbb T^{n-1})}
\lesssim t^{\frac 18\frac{n-1}{n-2}(1-c_s^2)-\sigma}+t^{2 \bigl(\frac{(n-1)c_s^2-1}{n-2}-\sigma\bigr)}
\end{equation}
for all $t\in (0,t_0]$, while the lapse, shift and the spatial metric on $\Sigma_{t}$ are determined by $\Ntt=\betat$,
$\btt_\Lambda=0$, and $\gtt_{\Lambda\Omega}=g_{\Lambda\Omega}$, respectively.
\item The triple $\Bigl\{\gb_{\mu\nu}=t^{\frac {2}{n -2}}g_{\mu\nu},\phi=\sqrt{\frac{n-1}{2(n-2)}}\ln(t), \Vb^\mu=V^\mu\Bigr\}$ defines a solution of the physical Einstein-Euler-scalar field equations \eqref{ESF.1}-\eqref{eq:AAA1} on $M_{0,t_0}$ that exhibits AVTD behaviour and is asymptotically pointwise Kasner on $\Tbb^{n-1}$ with Kasner exponents $r_1(x),\ldots,r_{n-1}(x)$ determined by the eigenvalues of $\kf_{L}{}^J(x)$ for each $x\in \Tbb^{n-1}$. In particular, $\kf_{L}{}^L(x)\ge 0$ for all $x\in\Tbb^{n-1}$, and $\kf_{L}{}^L(x)=0$ for some $x\in \Tbb^{n-1}$ if and only if $r_1(x)=\ldots=r_{n-1}(x)=0$. The time $t=0$ represents a crushing singularity in the sense of \cite{eardley1979}. Furthermore, the function
  \begin{equation*}
    \Ptt
=\frac{\sqrt{{2(n-1)}(n-2)}}{{2(n-1)} +(n-2) \kf_{L}{}^L}  
\end{equation*}
can be interpreted as the asymptotic scalar field strength in the sense of Section~\ref{sec:AVTDAPK}.
\item The physical solution $\bigl\{\gb_{\mu\nu},\phi,\Vb^\mu\bigr\}$ is past $C^2$ inextendible at $t=0$ and past timelike geodesically incomplete. The scalar curvature  $\Rb=\Rb_{\mu\nu}\gb^{\mu\nu}$ of the physical metric $\gb_{\mu\nu}$ 
satisfies
  \begin{align}
    \Bnorm{t^{2\frac {n-1}{n -2}+\kf_{J}{}^J}\Rb+\frac{n-1}{n-2}\mathfrak{b}^{-2}}_{H^{k-1}(\mathbb T^{n-1})}\lesssim& t^{\frac 18\frac{n-1}{n-2}(1-c_s^2)-\sigma -C\delta}\notag \\
    &+t^{2 \bigl(\frac{(n-1)c_s^2-1}{n-2}-\sigma\bigr) -C\delta},
     \label{eq:SptRicciEstimat-glob}
  \end{align}
and consequently, it blows up pointwise everywhere on the hypersurface $t=0$.
\item The fluid variables $\Vb^\mu$ of the physical solution $\bigl\{\gb_{\mu\nu},\phi, \Vb^\mu\bigr\}$ can be expressed as
\begin{equation}
  \label{eq:fluidresult1}
    \Vb^\mu=t^{\frac{n-1}{n-2}c_s^2} \betat^{c_s^2}\Bigl(W^0 \eb_0^\mu+W^I \eb_I^\mu\Bigr)
  \end{equation}
  in terms of the physical orthonormal frame $\eb_i^\mu=t^{-1/(n-2)}e_i^\mu$ where 
  \begin{equation*}
  W^0, W^I \in \bigcap_{j=0}^{k}C^j\bigl((0,t_0], H^{k-j}(\Tbb^{n-1})\bigr)
  \end{equation*}
  and there exists a positive function $\Wf^0\in H^{k-1}(\mathbb T^{n-1})$ bounded by
  \begin{equation*}
    \norm{\Wf^0}_{H^{k-1}(\mathbb T^{n-1})}\lesssim \delta+t_0^{\frac{n-1}{n-2}(1-c_s^2)}
  \end{equation*}
  such that $W^0,W^I$ satisfy
  \begin{align}
    \norm{W^0(t)-(V^0_*+\Wf^0)}_{H^{k-1}(\Tbb^{n-1})}&\lesssim t^{\frac 18\frac{n-1}{n-2}(1-c_s^2)-\sigma}\notag \\
    +&t^{2 \bigl(\frac{(n-1)c_s^2-1}{n-2}-\sigma\bigr)}, \label{eq:fluidestimate1}\\
    \norm{W^I(t)}_{H^{k-1}(\Tbb^{n-1})}&\lesssim t^{\frac 18\frac{n-1}{n-2}(1-c_s^2)-\sigma}\notag\\
    &\quad+t^{\frac{(n-1)c_s^2-1}{n-2}},\label{eq:fluidestimate2}\\
   \norm{\Dc_QW^0(t)}_{H^{k-1}(\Tbb^{n-1})}+\norm{\Dc_QW^I(t)}_{H^{k-1}(\Tbb^{n-1})}
&\lesssim t^{-1+\frac 7{16}\frac{n-1}{n-2}(1-c_s^2)-\sigma}\notag \\
+&t^{-\frac {11}{16}\frac{n-1}{n-2}(1-c_s^2)-\sigma} \label{eq:fluidestimate3} 
  \end{align}
  for all $t\in (0,t_0]$. The fluid density $\rho$ satisfies
  \begin{align}
  \Bnorm{ t^{\frac{n-1}{n-2}(1+c_s^2)+\frac{\kf_{J}{}^J(1+c_s^2)}2} \rho(t)- &\frac{P_0}{c_s^2} \mathfrak{b}^{-(1+c_s^2)}(V^0_*+\Wf^0)^{-\frac {1+c_s^2}{c_s^2}}}_{H^{k-1}(\Tbb^{n-1})}\notag \\
  &\lesssim
  t^{\frac 18\frac{n-1}{n-2}(1-c_s^2)-\sigma}+t^{2 \bigl(\frac{(n-1)c_s^2-1}{n-2}-\sigma\bigr)}  \label{eq:fluidpressureestimate}
  \end{align}
  for all $t\in (0,t_0]$. The physical normalised fluid $n$-velocity is
  \begin{equation}
    \label{eq:physical4vectorfield}
    \ub^\mu=\frac{W^0}{w} \eb_0^\mu+ \frac{W^I}{w} \eb_I^\mu,\quad w=\sqrt{(W^0)^2-W^IW_I},
  \end{equation}
  where
  \begin{align}
  \biggl\|\frac{W^0(t)}{w(t)}-1\biggr\|_{H^{k-1}(\Tbb^{n-1})}
  +\biggl\|&\frac{W^I(t)}{w(t)}\biggr\|_{H^{k-1}(\Tbb^{n-1})} \notag\\
 & \lesssim t^{\frac 18\frac{n-1}{n-2}(1-c_s^2)-\sigma}
  +t^{\frac{(n-1)c_s^2-1}{n-2}}, \label{eq:physical4vectorfield.est}
\end{align}
for all $t\in (0,t_0]$, and therefore $\ub^\mu$ agrees with $\eb_0^\mu$ asymptotically at $t=0$.
\end{enumerate}

\noindent The implicit and explicit constants in the above estimates are all independent of the choice of $\delta\in (0,\delta_0]$.
\end{thm}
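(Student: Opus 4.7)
The plan is to combine the local-in-time existence theory from Proposition~\ref{lag-exist-prop} with a global-in-time Fuchsian existence result for the rescaled system \eqref{Fuch-ev-A2}, and then unpack the resulting decay estimates to read off the geometric and physical consequences (a)--(f). First, I would fix the exponents $\ep_0,\ldots,\ep_4$ via the choice \eqref{eq:epsfluidchoice} (with some auxiliary small $\sigma>0$), which for every $n\ge 3$ and every $c_s^2\in(1/(n-1),1)$ satisfies the quantitative conditions \eqref{eq:epscond.N} required by Lemma~\ref{lem:sourceterm}. The assumption $c_s^2>1/(n-1)$ enters precisely here and in the spectral analysis of the fluid block of $\Ac$ described below. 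Given synchronized initial data close to FLRW in the sense of \eqref{glob-stab-thm-idata-A} and satisfying the gravitational and wave gauge constraints, Proposition~\ref{lag-exist-prop} produces a local-in-time Lagrangian solution $\Wsc$ on some $M_{t_1,t_0}$ with $\tau=t$ and $\chi^\mu=\delta^\mu_0$, along with a triple $\{g_{\mu\nu},\tau,V^\mu\}$ solving \eqref{lag-confeqns} and the wave gauge constraint \eqref{lag-wave-gauge}.

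Next, I would choose an orthonormal spatial frame datum $\er_I^\Lambda$ on $\Sigma_{t_0}$ close to $(\omega|_{t_0})^{-1}\delta_I^\Lambda$ and propagate it by the Fermi-Walker transport equations of Section~\ref{Fermi}, obtaining a frame $e_i^\mu$ with $g_{ij}=\eta_{ij}$ on $M_{t_1,t_0}$. Translating the primary and differentiated geometric and fluid variables from Section~\ref{sec:FirstOrderForm} into this frame and applying the rescalings \eqref{k-def}--\eqref{Uac-def} and \eqref{betacheck-defN}, I obtain a vector $U$ satisfying the symmetric-hyperbolic Fuchsian system \eqref{Fuch-ev-A2}, so that the perturbation $u=U-\mathring U$ solves the corresponding system with $u|_{\Sigma_{t_0}}$ small in $H^k(\Tbb^{n-1})$ by \eqref{glob-stab-thm-idata-A} together with \eqref{Ubr-Ur}, provided $\delta$ and $t_0$ are chosen sufficiently small.

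The core step is to invoke Proposition~\ref{prop:globalstability}, which applies the Fuchsian global existence and decay theory of \cite{BeyerOliynyk:2020,BOOS:2021} to produce a unique solution $u$ of \eqref{Fuch-ev-A2} on all of $(0,t_0]\times\Tbb^{n-1}$ together with uniform energy and decay estimates. The three structural inputs needed are the symmetric hyperbolicity of the principal part (manifest from \eqref{A0-def}--\eqref{ALambdaF-def}), the non-negativity of the eigenvalues of $\frac{1}{2}(\Ac+\Ac^{\tr})\Pbb$ with kernel precisely the $k_{IJ}$ and $W^0$ blocks (which is what permits the convergence $2k_{IJ}\to\kf_{IJ}$), and the splitting of the source term furnished by Lemma~\ref{lem:sourceterm}. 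The uniqueness assertion in Proposition~\ref{lag-exist-prop} forces this Fuchsian solution to coincide with the one built from the Lagrangian solution on $M_{t_1,t_0}$, while the uniform bounds from Proposition~\ref{prop:globalstability} verify the continuation criteria \eqref{eq:cont_crit1}--\eqref{eq:cont_crit2} and iterate the local result down to $t=0$, producing a global solution on $M_{0,t_0}$ with the regularity stated in \eqref{eq:WregFin}.

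Finally, I would derive items (a)--(f) by unpacking the decay estimates for $u$ through the rescalings \eqref{k-def}--\eqref{Uac-def}. The convergence of $2k_{IJ}$ to a limit $\kf_{IJ}\in H^{k-1}$ follows from the zero eigenvalue block of $\frac{1}{2}(\Ac+\Ac^{\tr})\Pbb$, while the mixed decay rates in \eqref{eq:glostab-est.First}--\eqref{eq:glostab-est.Last} and \eqref{eq:improvbetaestimate-glob}--\eqref{eq:improvframeestimate} come from the positive eigenvalues $\ep_0,\ldots,\ep_4$ combined with the residual rate $\tfrac{n-1}{n-2}(1-c_s^2)$ coming from $\mathring U$ via \eqref{eq:FLRWexpansions}. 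The second fundamental form bound \eqref{eq:2ndFF.est} follows from \eqref{Ktt-def}, since $\gamma_{(I}{}^0{}_{J)}$ decays at rate $t^{-\ep_1}\psi$ and is therefore subleading to $\frac{1}{2}(g_{0IJ}-g_{I0J}-g_{J0I})=\frac{1}{2}t^{-1}\betat^{-1}k_{IJ}$. The Kasner relation in (d) is obtained by combining the asymptotic value of the conformal second fundamental form with the Hamiltonian constraint evaluated in the limit $t\searrow 0$, using that the fluid and spatial gradient terms decay faster than the dominant time-derivative contributions. Curvature blow-up \eqref{eq:SptRicciEstimat-glob} is then read off from the conformal transformation of $\Rb$, while the fluid results in (f) follow from the decay of $\acute{\Utt}^s_Q$ (which controls $\Dc W$) and of the perturbation of $W$, together with the reconstruction $\rho=(P_0/c_s^2)v^{-(1+c_s^2)/c_s^2}$ re-expressed through $w$, $\check\beta$ and $\mathfrak{b}$. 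The main obstacle throughout is verifying the spectral condition on $\frac{1}{2}(\Ac+\Ac^{\tr})\Pbb$ at the fluid block \eqref{Ac-diag-5}--\eqref{Ac-diag-6}: the diagonal entry proportional to $\tfrac{(n-1)c_s^2-1}{n-2}W^0$ must dominate the off-diagonal coupling to the $k_{IJ}$ block, and this is precisely where the restriction $c_s^2>1/(n-1)$ becomes essential, as anticipated in Remark~\ref{rem:borderlineextension}.
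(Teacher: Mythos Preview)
Your proposal is correct and follows essentially the same architecture as the paper's proof: fix the exponents via \eqref{eq:epsfluidchoice}, obtain a local Lagrangian solution from Proposition~\ref{lag-exist-prop}, pass to the Fermi--Walker frame and the rescaled Fuchsian variables, apply Proposition~\ref{prop:globalstability} for global existence and decay, match the two solutions by uniqueness and extend via the continuation criterion, and then read off (a)--(f) from the decay estimates, the Hamiltonian constraint, and the conformal relations. One small sharpening: the restriction $c_s^2>1/(n-1)$ enters not through an off-diagonal coupling of the fluid block to $k_{IJ}$ in the spectral condition (the $k_{IJ}$ block lies in $\ker\Pbb$), but rather through the positivity of the diagonal fluid entries $\tfrac{(n-1)c_s^2-1}{n-2}W^0$ in $\Ac_{12\,12}$ and $\Ac_{13\,13}$, which is what makes $\kappat>0$ in Lemma~\ref{lem:posdef1}; the $k$-coupling is instead absorbed into the source-term structure of Lemma~\ref{lem:sourceterm}.
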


Before considering the proof of this theorem, we make some observations.
\begin{rem}\label{glob-stab-rem} 
$\;$

\begin{enumerate}[(i)] 
\item The constants $t_0,\delta_0$ that appear in the theorem depend on the the constants $V_*^0,P_0$ from the FLRW solution \eqref{eq:FLRWEulerSFExplSol1}-\eqref{eq:FLRWEulerSFExplSol4}. Since we do not track the explicit dependence of $t_0,\delta_0$ on the choice of these constants, all we know from Theorem \ref{glob-stab-thm} is that for a particular choice of $V_*^0,P_0$ there exist sufficiently small positive constants  $t_0,\delta_0$ that guarantee the past stability of the FLRW solutions. While this may seem to be a restriction on the choice of the time interval $(0,t_0]$, it is easy to see with the help of a Cauchy stability argument, see Proposition~\ref{lag-exist-prop}.(a), that we can take $t_0>0$ as large as we like in Theorem \ref{glob-stab-thm} as long as the initial data satisfies \eqref{glob-stab-thm-idata-A} for a suitably small choice of $\delta_0>0$. 
\item The decay rates, i.e., the $t$-exponents in the estimates from Theorem~\ref{glob-stab-thm}, are most likely not optimal. Improved decay rates can be obtained through an adaptive choice of the parameters $\ep_0,\ldots,\ep_4$  as was carried out in \cite{BeyerOliynyk:2021}, but we have decided against doing this here because is would significantly increase the complexity of the proof and to achieve optimal results would require additional arguments along the lines of those employed in \cite{BeyerOliynyk:2020}.

\item It is worth noticing that the estimate \eqref{eq:fluidpressureestimate} can be used to distinguish the blow-up of the FLRW background fluid from that of a generic perturbation. The relations \eqref{eq:fluidBG2} and \eqref{eq:FLRWexpansions} yield
    \[\rho_{bg}(t)=\rho_* t^{-\frac{n-1}{n-2}(1+c_s^2)}
      +\Ord\bigl(t^{-2\frac{n-1}{n-2}c_s^2}\bigr)\]
for the FLRW background,
where $\rho_*$ can be interpreted as the background fluid density parameter determined by $V_*^0$ via \eqref{eq:fluidBG1}. Now by\eqref{eq:fluidpressureestimate}, the density of a generic perturbation behaves like
\[\rho(t)=\tilde\rho_* t^{-\frac{n-1}{n-2}(1+c_s^2)-\frac{\kf_{J}{}^J(1+c_s^2)}2}+\Ord\Bigl(t^{\frac 18\frac{n-1}{n-2}(1-c_s^2)-\sigma}+t^{2 \bigl(\frac{(n-1)c_s^2-1}{n-2}-\sigma\bigr)}\Bigr),\]
with $\tilde\rho_*$ close to $\rho_*$. By Theorem \ref{glob-stab-thm}.(d), the trace $\kf_{J}{}^J(x)$ is a non-negative function which is zero at a point $x$ if and only if the full matrix $\kf_{I}{}^J(x)$ is zero, and since we expect that this matrix is non-zero at least somewhere for a generic perturbation of the FLRW solution, it follows that the blow-up profile of the perturbed fluid density must differ significantly from that of the FLRW density. 
\item Given that the vector field $e_0$, see Theorem \ref{glob-stab-thm}.(a), and therefore $\eb_0$, see Theorem \ref{glob-stab-thm}.(f), is orthogonal to the scalar-field synchronised $t=const$ surfaces, the formula \eqref{eq:physical4vectorfield} and the estimate \eqref{eq:physical4vectorfield.est} imply, in particular, that the spatial fluid velocity approaches zero relative to observers that are at rest with respect to $t=const$ foliation (i.e. whose wordlines are integral curves of $\eb_0$). The perturbed solutions of the Einstein-Euler-scalar field system can therefore be interpreted as \emph{asymptotically co-moving}.
\end{enumerate}
\end{rem}

\subsection{Fuchsian stability}
\label{sec:proof_globstab_Fuchsian}
As discussed above, the first step in the proof Theorem \ref{glob-stab-thm} is to establish stability of the trivial solution $u\equiv 0$ to the Fuchsian equation \eqref{Fuch-ev-A2}. Thus, we need to solve the Fuchsian global initial value problem (GIVP) 
\begin{align}
  \label{eq:givp1}
  A^0(u)\del{t}u + \frac{1}{t^{\ep_0+\ep_2}}A^\Lambda(t,u) \del{\Lambda} u &=
\frac{1}{t}\Ac\Pbb u +\frac{1}{t^{\tilde\ep}}\Ht(t) +\frac{1}{t}\Htt(u)\Pbb u \notag \\
&\hspace{0.2cm} +\frac 1t\Pbb^{\perp} \hat\Htt(u)+\frac{1}{t^{\tilde\epsilon}} H(t,u)\hspace{0.5cm}\text{in $M_{0,t_0}$,}
\\ 
  \label{eq:givp2}
  u&=u_0 \hspace{3.65cm} \text{
in $\Sigma_{t_0}$,}
\end{align}
for initial data $u_0$ that is sufficiently small.
Existence of solutions to this GIVP is obtained in the following proposition. Its proof follows from an application of the Fuchsian global existence theory established in \cite{BOOS:2021}. The actual existence result we employ is Theorem~A.2 from \cite{BeyerOliynyk:2020} together with Remark~A.3 from that same article, which, together, amount to a slight generalization of the Fuchsian global existence theory from \cite{BOOS:2021}.

\begin{prop}
  \label{prop:globalstability}
Suppose that $n\in\Zbb_{\ge 3}$,  $k \in \Zbb_{>(n+1)/2}$, $\sigma>0$, $T_0>0$, $V_*^0>0$, $c_s^2\in (1/(n-1),1)$, $P_0>0$ and that $\ep_0$, $\ep_1$, $\ep_2$, $\ep_3$ and $\ep_4$ satisfy \eqref{eq:epscond.N}.
Then there exists
a $\delta_0 > 0$ such that for every $t_0\in (0,T_0]$ and $\delta \in (0,\delta_0]$, if $u_0\in H^k(\mathbb T^{n-1})$ and $\tilde{\Ftt}(t)$, see \eqref{eq:Ftttilde}, satisfy
\begin{equation}
  \label{glob-stab-thm-idata-A-Fuchsian}
 \norm{u_0}_{H^k(\mathbb T^{n-1})}< \delta \AND \int_0^{t_0} \norm{\tilde\Ftt(s)}_{H^k(\mathbb T^{n-1})} ds < \delta_0,
\end{equation}
respectively, then the Fuchsian GIVP \eqref{eq:givp1}-\eqref{eq:givp2}  admits a unique solution 
\begin{equation*}
u \in C^0_b\bigl((0,t_0],H^k(\mathbb T^{n-1})\bigr)\cap C^1\bigl((0,t_0],H^{k-1}(\mathbb T^{n-1})\bigr)
\end{equation*}
such that $\lim_{t\searrow 0} \Pbb^\perp u(t)$, denoted $\Pbb^\perp u(0)$, exists in $H^{k-1}(\mathbb T^{n-1})$. Moreover, the solution $u$ satisfies the energy estimate
\begin{align}
  \norm{u(t)}_{H^k(\mathbb T^{n-1})}^2 + &\int^{t_0}_t \frac{1}{s} \norm{\Pbb u(s)}_{H^k(\mathbb T^{n-1})}^2\, ds\notag \\
  &\lesssim \norm{u_0}^2_{H^k(\mathbb T^{n-1})}
  +\left(\int_t^{t_0} \norm{\tilde\Ftt(s)}_{H^k(\mathbb T^{n-1})} ds\right)^2   \label{eq:resenergy}
\end{align}
and decay estimates
\begin{align}
\label{eq:PbbuDecay}
  \norm{\Pbb u(t)}_{H^{k-1}(\mathbb T^{n-1})} &\lesssim t^p+t^{\kappat-\sigma},\\
  \label{eq:PbbuDecay2}
\norm{\Pbb^\perp u(t) - \Pbb^\perp u(0)}_{H^{k-1}(\mathbb T^{n-1})} &\lesssim
 t^{p}+ t^{2(\kappat-\sigma)},
\end{align}
for all $t\in(0,t_0]$,  
where 
\begin{align}
  p&=\min\biggl\{1-3\ep_0-\ep_1, 1-\ep_0-\ep_2,\ep_0,\ep_2, \notag\\
   & \hspace{1.5cm} \frac{n-1}{n-2}(1-c_s^2)-\ep_3+\ep_1-\ep_0-1, \ep_4-\ep_1, 1-\ep_0-\ep_4\biggr\},   \label{eq:defp} \\
             \kappat&=\min\biggl\{\ep_0,\ep_2,\ep_3, \frac{(n-1)c_s^2-1}{n-2}, \ep_1-\frac{1}{1+\sqrt{3}}\biggr\},
             \label{eq:defkappat}
\end{align}
and $\Pbb^\perp = \id -\Pbb$.

\medskip

\noindent The implicit constants in the energy and decay estimates are all independent of the choice of $t_0\in (0,T_0]$ and $\delta\in (0,\delta_0]$.
\end{prop}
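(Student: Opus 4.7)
The plan is to apply the Fuchsian global existence theory developed in \cite{BOOS:2021}, specifically in the enhanced form given by Theorem~A.2 and Remark~A.3 of \cite{BeyerOliynyk:2020}, to the symmetric hyperbolic system \eqref{eq:givp1}. First I would verify the structural hypotheses: symmetry of $A^0$ and $A^\Lambda$ is read off the explicit forms \eqref{A0-def}--\eqref{ALambdaF-def} using the symmetry of $a^0{}_{jk}$ and the index-symmetries of $B^{ijk}$; uniform positivity $A^0(u)\ge\kappa I$ for small $u$ holds because $A^0(0)$ restricts to the identity on the gravitational blocks and to $\diag(1,a^0{}_{jk}(0),a^0{}_{jk}(0))$ on the fluid blocks, which is strictly positive for $V_*^0>0$ by \eqref{eq:a0exp}; the algebraic decomposition of the nonlinear source as in \eqref{eq:SourcetermVan1}, with $[\Htt,\Pbb]=0$, $\hat\Htt=\Ord(\Pbb u\otimes\Pbb u)$, $\Htt(0)=0$ and $H(t,0)=0$, is provided directly by Lemma~\ref{lem:sourceterm}; finally, the inhomogeneity $\tilde\Ftt(t)$ is $L^1$ in time since the exponent $-1+\frac{n-1}{n-2}(1-c_s^2)$ in \eqref{eq:Ftttilde} is strictly greater than $-1$, so it is accommodated by Remark~A.3 of \cite{BeyerOliynyk:2020}.

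The hard step will be the coercivity condition $\frac{1}{2}(\Ac(u)+\Ac(u)^{\tr})\Pbb\ge 0$, with a strictly positive spectral gap on $\mathrm{ran}\,\Pbb$ governing the Fuchsian decay rate. I would check it block-by-block from \eqref{AcG-def}--\eqref{Ac-offdiag-3}. The gravitational block $\Ac_G$ is essentially the one handled in \cite{BeyerOliynyk:2021}: its diagonal entries \eqref{Ac-diag-1}--\eqref{Ac-diag-4} are non-negative under \eqref{eq:epscond.N}, and the off-diagonal couplings \eqref{Ac-offdiag-1}--\eqref{Ac-offdiag-3} are absorbed via Cauchy--Schwarz, a procedure that also produces the universal factor $1/(1+\sqrt 3)$ appearing in \eqref{eq:defkappat}. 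The new ingredient is $\Ac_F=\diag(\Ac_{11\,11},\Ac_{12\,12},\Ac_{13\,13})$; positivity of $\Ac_{12\,12}$ and $\Ac_{13\,13}$ rests on the prefactor $\frac{(n-1)c_s^2-1}{n-2}W^0$ being strictly positive, which is exactly the subcritical hypothesis $c_s^2>1/(n-1)$ (cf.~Remark~\ref{rem:borderlineextension}), together with $W^0\approx V_*^0>0$ for small $\|u\|$; the weight $\ep_4$ is then chosen small enough so that the subdominant $\ep_4 a^0_{s\jt}$ term in \eqref{Ac-diag-6} cannot spoil the fluid coercivity.

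Once positivity is established, the energy estimate \eqref{eq:resenergy} follows from Theorem~A.2 of \cite{BeyerOliynyk:2020}: commuting spatial derivatives up to order $k$ through the system and testing against $u$, the term $\frac{1}{t}\langle u,\Ac\Pbb u\rangle$ generates the dissipative integral $\int s^{-1}\|\Pbb u(s)\|_{H^k}^2\,ds$; the perturbation $\frac{1}{t}\Htt(u)\Pbb u$ is of the same sign up to a factor $1+\Ord(\|u\|_{L^\infty})$ because $\Htt(0)=0$; the singular piece $\frac{1}{t}\Pbb^\perp\hat\Htt(u)$ is pointwise bounded by $s^{-1}\|\Pbb u\|_{H^k}^2$ and is absorbed into the dissipation thanks to $\hat\Htt=\Ord(\Pbb u\otimes\Pbb u)$; the time-integrable part $H(t,u)/t^{\tilde\ep}$ with $\tilde\ep<1$ is handled by standard Grönwall-type manipulations; and the inhomogeneity $\tilde\Ftt$ yields the squared $L^1$-term on the right of \eqref{eq:resenergy}. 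Smallness of $\delta$ closes all quadratic nonlinear estimates and guarantees that the existence interval extends down to $t=0$.

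The pointwise decay \eqref{eq:PbbuDecay}--\eqref{eq:PbbuDecay2} is then obtained as follows. For $\Pbb u$, I would combine the $L^2$-in-time control from \eqref{eq:resenergy} with pointwise Fuchsian decay at the rate set by the smallest strictly positive eigenvalue of $\frac{1}{2}(\Ac+\Ac^{\tr})\Pbb$ on $\mathrm{ran}\,\Pbb$, following \cite{BOOS:2021,BeyerOliynyk:2020}; this produces the exponent $\kappat-\sigma$ in \eqref{eq:defkappat}. The competing rate $t^p$ tracks the decay of the inhomogeneous pieces $\tilde\Ftt$, $\Pbb^\perp\hat\Htt(u)/t$ and $H(t,u)/t^{\tilde\ep}$, whose $t$-exponents combine to produce precisely \eqref{eq:defp}. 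For $\Pbb^\perp u$, observe that $\Pbb^\perp\Ac\Pbb=0$ and $\Pbb^\perp\Htt(u)\Pbb=0$ (by $[\Htt,\Pbb]=0$ together with $\Pbb^\perp\Pbb=0$), so projecting \eqref{eq:givp1} by $\Pbb^\perp$ removes every $t^{-1}$-singular contribution except $t^{-1}\Pbb^\perp\hat\Htt(u)$. Because $\hat\Htt=\Ord(\Pbb u\otimes\Pbb u)$ and $\Pbb u$ decays at rate $t^{\kappat-\sigma}+t^p$, the right-hand side of $\del{t}(\Pbb^\perp u)$ is integrable near $t=0$, delivering both the existence of $\lim_{t\searrow 0}\Pbb^\perp u(t)$ in $H^{k-1}$ and the quantitative bound \eqref{eq:PbbuDecay2} by direct integration.
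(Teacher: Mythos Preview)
Your overall strategy---verify the coefficient assumptions of \cite[Section~A.1]{BeyerOliynyk:2020} and invoke Theorem~A.2 with Remark~A.3---is exactly the paper's approach, and your identification of the source-term decomposition via Lemma~\ref{lem:sourceterm} and the coercivity of $\Ac$ as the key inputs is correct. However, there are two genuine problems.

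First, and most seriously, you omit the verification of the \emph{divergence map} hypothesis. Theorem~A.2 of \cite{BeyerOliynyk:2020} requires that $\Div\!\Bsc(t,u,w)=\del{t}\Bsc^0+\del{\Lambda}\Bsc^\Lambda$ (with $\del{t}u$ replaced via the evolution equation) admits a $\Pbb/\Pbb^\perp$ block decomposition of the form \eqref{eq:djf9sdfj1}--\eqref{eq:djf9sdfj}, with constants $\beta_1,\beta_2,\beta_3$ that can be made small by shrinking $R$. This is not routine: because $A^0$ depends on $W^s$ through $a^0{}_{jk}$, the term $D_{W^s}A^0\,\del{t}W^s$ contains, after substitution, a $t^{-1}$-singular contribution coming from the leading part of $F_{12}$. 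One must expand $M_s=\Asc D_{W^s}A^0$, $(a^0)^{-1}$ and $I_{\jt}$ separately, and use the structure \eqref{eq:a0exp} together with the $F_{12}$-analysis from the proof of Lemma~\ref{lem:sourceterm} to show that the singular piece respects the required $\Pbb$-structure (e.g.\ $\Pbb^\perp M_0\Pbb=\Ord(\Pbb u\otimes\Pbb u\otimes\Pbb u)$, $I_0^{(0)}=\Ord(\Pbb u\otimes\Pbb u)$, $I_Q^{(0)}=\Ord(\Pbb u)$). The paper devotes roughly a page to this; without it the hypotheses of the black-box theorem are not met.

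Second, your remark that ``$\ep_4$ is chosen small enough so that the subdominant $\ep_4 a^0_{s\jt}$ term cannot spoil the fluid coercivity'' is backwards. By \eqref{eq:epscond.N} one has $\ep_4>\ep_1$, so $\ep_4$ is not small; and since $a^0$ is positive definite, the $\ep_4 a^0_{s\jt}$ term in $\Ac_{13\,13}$ \emph{helps} coercivity. Indeed, comparing $\Ac_F$ against $A^0_F$ via \eqref{eq:a0exp} gives eigenvalue ratios $\ep_3$, $\frac{(n-1)c_s^2-1}{n-2}$, and $\ep_4$ on the three fluid blocks, and $\ep_4$ drops out of $\kappat$ in \eqref{eq:defkappat} precisely because $\ep_4>\ep_1$. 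A smaller point: the off-diagonal gravitational blocks are not absorbed by a bare Cauchy--Schwarz but by introducing the block-diagonal rescaling $\Asc=\diag(\sigma_1\id,\ldots,\sigma_{10}\id,\Asc_F)$ and applying \cite[Lemma~A.1]{BeyerOliynyk:2021} (this is the content of Lemma~\ref{lem:posdef1}); the $1/(1+\sqrt{3})$ comes separately from Lemma~\ref{lem-Bc-lbnd} applied to the internal structure of $\Ac_{10\,10}$.
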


The proof of Proposition~\ref{prop:globalstability}  makes use of  two technical lemmas, Lemmas~\ref{lem-Bc-lbnd} and \ref{lem:posdef1} below, that we will present first. The first lemma is a restatement of Lemma~3.4 from \cite{BeyerOliynyk:2021}, and we refer the reader to that article for its proof. 

\begin{lem}
  \label{lem-Bc-lbnd}
  Suppose $\ep_1>0$ and let 
\begin{align} 
\Nc_{\gac} =\gac_{Pqrs} \delta^{PQ}\delta^{rl}\delta^{sm} \Bigl(&(1+\ep_1)\delta^{qj}\gac_{Qjlm} \notag \\
& + \delta_0^q\delta_0^j  (\gac_{Qljm}+ \gac_{Qmjl}-\gac_{Qjlm})\Bigr). \label{Ncgac-def}
\end{align}
Then 
\begin{equation}
  \label{eq:comparenorms}
     \Nc_{\gac} \geq \Bigl(\ep_1-\frac{1}{1+\sqrt{3}}\Bigr)|\gac|^2
\end{equation}
for all $\gac_{Qmjl}\in \Rbb^{(n-1)n^3}$ satisfying $\gac_{Qjlm}=\gac_{Qjml}$ where
\begin{equation*}
|\gac|^2 =  \delta^{PQ}\delta^{qj}\delta^{rl}\delta^{sm} \gac_{Pqrs}\gac_{Qjlm}
\end{equation*}
is the Euclidean norm.
\end{lem}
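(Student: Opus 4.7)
The strategy is to contract the Kronecker deltas in $\Nc_{\gac}$, isolate the indefinite cross term, and then decompose $\gac_{Pjlm}$ into blocks according to which of the last three indices take the value $0$. Under this decomposition the quadratic form will become block diagonal and each block reduces to an elementary $2\times 2$ question. A direct expansion of the contractions gives
\[\Nc_{\gac}=(1+\ep_1)|\gac|^{2}+\sum_{P,r,s}\gac_{P0rs}\bigl(\gac_{Pr0s}+\gac_{Ps0r}-\gac_{P0rs}\bigr),\]
and relabelling $r\leftrightarrow s$ in the middle summand, using the symmetry $\gac_{Pjlm}=\gac_{Pjml}$, rewrites this as
\[\Nc_{\gac}=(1+\ep_1)|\gac|^{2}+2\sum_{P,r,s}\gac_{P0rs}\gac_{Pr0s}-\sum_{P,r,s}\gac_{P0rs}^{2}.\]

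For each fixed spatial index $P$, I would then introduce atoms $A=\gac_{P000}$, $B_{M}=\gac_{P00M}=\gac_{P0M0}$, $C_{M}=\gac_{PM00}$, $D_{MN}=\gac_{P0MN}$, $E_{MN}=\gac_{PM0N}=\gac_{PMN0}$ and $F_{MNR}=\gac_{PMNR}$ (with $M,N,R$ spatial and with $D$ and $F$ symmetric in their last two spatial slots), and rewrite both $|\gac|^{2}$ and the cross term entirely in terms of these atoms. A careful bookkeeping exercise then makes it manifest that $\Nc_{\gac}$ splits as a sum of four decoupled pieces: an $A$-block, one $(B,C)$-block per spatial index $M$, one $(D,E)$-block per pair $(M,N)$, and an $F$-block, with no cross contributions between distinct blocks.

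The $A$- and $F$-blocks contribute only through $(1+\ep_1)|\gac|^{2}$ and so dominate $\bigl(\ep_{1}-\tfrac{1}{1+\sqrt{3}}\bigr)$ times the corresponding portions of $|\gac|^{2}$ for free. The $(B,C)$-block delivers the quadratic $2(1+\ep_1)B^{2}+(1+\ep_1)C^{2}+2BC$ measured against the norm $2B^{2}+C^{2}$; after subtracting $\bigl(\ep_{1}-\tfrac{1}{1+\sqrt{3}}\bigr)(2B^{2}+C^{2})$, the residual $2\times 2$ form turns out to have strictly negative discriminant, and is therefore positive definite.

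The hard block, and the one that pins down the sharp value of the constant, is the $(D,E)$-block. The relevant form there is $\ep_{1}D^{2}+2(1+\ep_1)E^{2}+2DE$ measured against the norm $D^{2}+2E^{2}$. Setting $\lambda=\frac{1}{1+\sqrt{3}}=\frac{\sqrt{3}-1}{2}$ and subtracting $(\ep_{1}-\lambda)(D^{2}+2E^{2})$ reduces the desired inequality to
\[\lambda D^{2}+2(1+\lambda)E^{2}+2DE\ge 0.\]
The discriminant of this quadratic in $D$ equals $4E^{2}\bigl(1-2\lambda(1+\lambda)\bigr)$, and $\lambda=\tfrac{\sqrt{3}-1}{2}$ is precisely the positive root of $2\lambda^{2}+2\lambda-1=0$, so the discriminant vanishes and the form is a non-negative perfect square. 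This yields the claimed lower bound on $\Nc_{\gac}$ and at the same time shows that $\tfrac{1}{1+\sqrt{3}}$ is the sharp constant attainable by this block decomposition.
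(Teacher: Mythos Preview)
Your argument is correct. The paper does not give its own proof of this lemma but cites Lemma~3.4 of \cite{BeyerOliynyk:2021}; your block-diagonalisation approach, reducing the indefinite part of $\Nc_{\gac}$ to $2\times 2$ forms and identifying the $(D,E)$-block as the one saturating the constant $\tfrac{1}{1+\sqrt{3}}$ via the quadratic $2\lambda^{2}+2\lambda-1=0$, is the natural way to establish the bound. One small slip: the $A$-block does pick up an extra $+A^{2}$ from the cross term (since $2\gac_{P000}^{2}-\gac_{P000}^{2}=A^{2}$), giving $(2+\ep_1)A^{2}$ rather than $(1+\ep_1)A^{2}$, but this only strengthens the inequality and does not affect the argument.
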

\noindent It is important to note that \eqref{eq:comparenorms} yields an effective bound on the norm $|\gac|^2$ because $\ep_1-\frac{1}{1+\sqrt{3}}>0$ due to 
 \eqref{eq:epscond.N}.

Before we state the second technical lemma, Lemma~\ref{lem:posdef1}, we first define an alternative formulation of the Fuchsian equation \eqref{eq:givp1} that will be employed in our subsequent analysis.  To this end, we consider so far arbitrary positive constants $\sigma_1,\ldots,\sigma_{10}>0$ and
set
\begin{gather}
  \label{Asc-def}
  \Asc=\small\diag\bigl(\Asc_G, \Asc_F\bigr),\\
  \label{AscG-def}
  \begin{split}
  \Asc_G= \small \diag\Bigl(&\sigma_1\delta^{\Lt L}\delta^{\Mt M},
  \sigma_2\delta^{\Mt M},
  \sigma_3\delta^{\Rt R}\delta^{\Mt M},
  \\
  &\sigma_4\delta^{\Rt R} \delta^{\Lt L} \delta^{\Mt M}
  \sigma_5\delta^{\rt r}\delta^{\lt l},
  \sigma_6,
  \sigma_7\delta^{\It I}\delta_{\Lambdat \Lambda}, \\
  &\sigma_8\delta^{\It I}\delta_{\kt k} \delta^{\Jt J},
  \sigma_9\delta^{\Qt Q}\delta^{\jt j}\delta^{\lt l},
  \sigma_{10}\delta^{\Qt Q}\delta^{\jt j}\delta^{\lt l} \delta^{\mt
    m}\Bigr),
\end{split}\\
\label{AscF-def}
\Asc_F=\small\diag\bigl(1, \delta_{\jt j}, \delta_{\jt j}\delta^{\Qt Q}\bigr).
\end{gather}
Then we let
  \begin{align}    
    \Bsc^0(u)=&\Asc A^0(u), \label{Bsc0-def}\\
    \Bsc^\Lambda(t,u)=&\frac{1}{t^{\ep_0+\ep_2}}\Asc  A^\Lambda(t,u),\label{BscLambda-def}\\
    \Bc(u)=&\Asc\Ac(u)+\Asc\Htt(u) \label{eq:defBc}
    \intertext{and}
    \Hsc(t,u)=&\frac{1}{t^{\tilde\ep}}\Asc H(t,u) \label{Hsc-def}
  \end{align}
  for any $\tilde\ep$ satisfying \eqref{eq:epstildecond}.
  Using these definitions, a short calculation shows that \eqref{eq:givp1} can be expressed as 
  \begin{equation} \label{Fuch-ev-A3}
\Bsc^0(u)\del{t}u + \Bsc^\Lambda(t,u) \del{\Lambda} u =
\frac{1}{t}\Bc(u)\Pbb u +\frac {1}t\Pbb^{\perp} \hat\Htt(u)+\sigma_1 \tilde\Ftt(t)+ \Hsc(t,u),
\end{equation}
where in deriving this we have used \eqref{Fuch-ev-A2} -- \eqref{eq:Ftttilde} and \eqref{eq:SourcetermVan1}.

\begin{lem}
  \label{lem:posdef1}  
  Suppose that $V_*^0>0$, $P_0>0$, $c_s^2\in (1/(n-1),1)$,  $\ep_0,\ldots,\ep_4$ satisfy \eqref{eq:epscond.N} and $\kappat$ is given by \eqref{eq:defkappat}. Then there exist a $\bc>0$ and  for each $\eta>0$, constants $\sigma_i=\sigma_i(\eta)$, $1\leq i\leq 10$, and $R_0=R_0(\kappat,\bc,\eta)>0$ such that
  \begin{equation*}
    \Asc\Ac(u)\geq(\kappat-\bc\eta)\Bsc^0(u),
  \end{equation*}
  for all $u\in B_R(\Rbb^{\udim})$ and $R\in (0,R_0)$. 
\end{lem}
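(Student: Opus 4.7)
The plan is to establish the inequality blockwise, exploiting the fact that the $u$-dependence of $\Ac$ is confined to the two fluid blocks $\Ac_{12\,12}$ and $\Ac_{13\,13}$, where it enters smoothly through $W^0=V^0_*+u^0_{12}$ and $a^0{}_{s\jt}(u)$. For $u$ sufficiently small, $\Ac(u)$ is an $O(|u|)$ perturbation of the constant matrix $\Ac(0)$, and this perturbation can be absorbed into the error $\bc\eta\,\Bsc^0(u)$ provided $R_0$ is chosen small enough. Hence the task reduces to a quadratic-form inequality on constant matrices, with the weights $\sigma_1,\ldots,\sigma_{10}$ of $\Asc$ to be tuned as functions of $\eta$.

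First I would compute, for each diagonal block of $\Ac$, the ratio of its symmetrized part to the corresponding diagonal block of $A^0$. Blocks $6$, $7$ and $11$ contribute $\ep_0$, $\ep_2$ and $\ep_3$ respectively; the quadratic form on block $10$ is precisely $\Nc_\gac$ of Lemma~\ref{lem-Bc-lbnd}, hence bounded below by $(\ep_1-1/(1+\sqrt 3))\,|\gac|^2$; on the range of $\Pbb$, fluid block $12$ reduces to $\frac{(n-1)c_s^2-1}{n-2}W^0\delta_{\Jt J}$ which, using the expansion \eqref{eq:a0exp} together with $c_s^2>1/(n-1)$, dominates $\bigl(\frac{(n-1)c_s^2-1}{n-2}-C|u|\bigr) a^0{}_{\Jt J}$; and fluid block $13$ dominates $\ep_4\,a^0{}_{s\jt}\delta^{\Qt Q}$ because the additional term $W^0\frac{(n-1)c_s^2-1}{n-2}(\eta_{s\jt}+\delta^0_s\delta^0_{\jt})$ in \eqref{Ac-diag-6} is positive semi-definite for $W^0$ near $V^0_*>0$; note that $\ep_4>\kappat$ by \eqref{eq:defkappat} and \eqref{eq:epscond.N}. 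The remaining diagonal entries of $\Ac_G$ (blocks $2$--$5$, $8$, $9$) are $(2+\ep_1),\ep_1,\ep_1,(\ep_1-\ep_0),\ep_1,(\ep_0+2\ep_1)$, all strictly larger than $\kappat$, so they impose no constraint on the final constant.

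Next I would control the five off-diagonal couplings $\Ac_{2\,4}$, $\Ac_{2\,10}$, $\Ac_{3\,10}$, $\Ac_{4\,10}$, $\Ac_{8\,10}$ of $\Ac_G$ by Young's inequality in the form
\begin{equation*}
|\sigma_i\,u_i^{\tr}\Ac_{i\,j}u_j|\leq \tfrac12\eta\,\sigma_j\,|u_j|^2+\tfrac{1}{2\eta}\sigma_i^2\sigma_j^{-1}\|\Ac_{i\,j}\|^2\,|u_i|^2.
\end{equation*}
The $|u_j|^2$ term is absorbed into the diagonal contribution of block $j$, consuming only an $\eta$-fraction, while the $|u_i|^2$ term, bearing a prefactor $\eta$ once $\sigma_i^2/\sigma_j$ is fixed, is charged to the loss $\bc\eta\,\Bsc^0$. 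A consistent hierarchy is obtained by first picking $\sigma_{10}$ large enough that the four $(i,10)$-couplings for $i\in\{2,3,4,8\}$ are absorbed into block $10$, then picking $\sigma_4$ large relative to $\sigma_2$ to close the $(2,4)$-coupling, and setting the remaining weights to convenient $\eta$-independent values. Collecting all estimates then gives $\Asc\Ac(u)\geq(\kappat-\bc\eta)\Bsc^0(u)$ on $B_{R_0}(\Rbb^{\udim})$ for $R_0$ small.

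The main obstacle is the simultaneous book-keeping of the five off-diagonal absorptions: the Young ratios $\sigma_i^2/\sigma_j$ must all be bounded so that the eventual loss carries a universal constant $\bc$ independent of the weights. This amounts to solving a small linear ordering problem in the $\sigma_i$, trivially solvable because of the small number of blocks. A further delicate point is that the lower bound on block $10$ depends crucially on the algebraic identity of Lemma~\ref{lem-Bc-lbnd}, which in turn relies on the symmetry $\gac_{Qjlm}=\gac_{Qjml}$ built into the unknown via Remark~\ref{rem:symmetry}; without this symmetry the coefficient $\ep_1-1/(1+\sqrt 3)$ would not appear and the argument would fail.
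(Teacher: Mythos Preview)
Your approach is correct and essentially the same as the paper's: the paper delegates the Young-inequality absorption of the off-diagonal blocks of $\Ac_G$ to \cite[Lemma~A.1]{BeyerOliynyk:2021}, which packages exactly the weight-hierarchy argument you sketch, and then handles the three fluid blocks $11$--$13$ directly via the expansion \eqref{eq:a0exp}, arriving at the same $\kappat$. One small point: the inequality $\Asc\Ac(u)\geq(\kappat-\bc\eta)\Bsc^0(u)$ is stated as an unrestricted quadratic-form bound, not only on the range of $\Pbb$; your restriction of block~$12$ to that range is therefore unnecessary (and leaves the $j=0$ component and block~$1$ unaddressed), but harmless, since $\Ac_{12\,12}$ carries the same factor $\frac{(n-1)c_s^2-1}{n-2}W^0$ on the $\delta_{\jt}^0\delta_j^0$ part as on the spatial part, and $\Ac_{1\,1}$ is the identity with ratio $1>\kappat$.
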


 \begin{rem} The lower bound $c_s^2>1/(n-1)$ on the sound speed is needed to ensure that $\kappat>0$; see  \eqref{eq:epscond.N} and \eqref{eq:defkappat}.
 \end{rem}

\begin{proof}
The proof is to a large part a straightforward consequence of \cite[Lemma~A.1]{BeyerOliynyk:2021}.
To see why this is the case, we identify the matrix $\Ac$ in \cite[Lemma~A.1]{BeyerOliynyk:2021}  with the matrix $\Ac_G$ in \eqref{AcG-def}. This matrix is then partitioned by choosing $N=10$ and by identifying its diagonal blocks with the blocks $\Ac_{1\,1}$, \ldots, $\Ac_{10\,10}$ from \cite[Lemma~A.1]{BeyerOliynyk:2021} and 
similarly for the off-diagonal blocks. Noticing that $A^0_G$ in \eqref{A0G-def} is the identity matrix, the 
inequality
\begin{equation*}
    \Asc_G\Ac_G\geq (\kappat_G-\bc\eta) \Asc_G A^0_G,
\end{equation*}
where
\begin{equation*}
\Asc_G = \diag\bigl(\sigma_1\id,\sigma_2\id,\ldots,\sigma_{10}\id\bigr)
\end{equation*}
and $\sigma_1$,\ldots, $\sigma_{10}$ are positive constants that depend on the choice of $\eta>0$,
is then a direct consequence of that lemma, \eqref{Ac-diag-1}-\eqref{Ac-offdiag-3} and Lemma~\ref{lem-Bc-lbnd}  provided $\kappat_G$ is defined by
\begin{equation*}
  \kappat_G=\min\biggl\{1,2+\ep_1,\ep_1,\ep_1-\ep_0,\ep_0,\ep_2,\ep_0+2\ep_1,
  \ep_1-\frac{1}{1+\sqrt{3}}\biggr\}.
\end{equation*}
Note here that $\bc>0$ should be considered as a fixed positive constant while $\eta>0$ can be chosen arbitrarily small. We also observe that $\kappat_G$ simplifies to
\begin{equation*}
  \kappat_G=\min\biggl\{\ep_0,\ep_2,
  \ep_1-\frac{1}{1+\sqrt{3}}\biggr\}
\end{equation*}
because of
\begin{equation*}
    \ep_1-\ep_0 - \biggl(\ep_1 - \frac{1}{1+\sqrt{3}}\biggr) = \frac{1}{1+\sqrt{3}} - \ep_0 = \biggl(\frac{1}{1+\sqrt{3}} -\frac{1}{3}\biggr) 
    +\biggl(\frac{1}{3} - \ep_0\biggr)> 0
\end{equation*}
where the final inequality is due to \eqref{eq:epscond.N}.

In the same way we have from
\eqref{Ac-diag-4} -- \eqref{Ac-diag-6} and \eqref{eq:a0exp} that
\begin{align*}
  &\Asc_F \Ac_F(u)\\
  =&\diag\biggl\{\ep_3, \frac{(n-1)c_s^2-1}{n-2}W^0 \bigl(c_s^{-2}\delta_{\jt}^0\delta_j^0+\delta_{\Jt J}\delta_{\jt}^{\Jt}\delta_{j}^J\bigr),\\ 
  &\qquad\quad\Bigl(\ep_4a^0_{s\tilde j} +
  W^0\frac{(n-1)c_s^2-1}{n-2} \delta_{JK}\delta_s^{J}\delta_{\jt}^{K}      
                    \Bigr)\delta^{\Qt Q}\biggr\}\\
  =&\diag\biggl\{\ep_3, \frac{(n-1)c_s^2-1}{n-2}W^0 \bigl(c_s^{-2}\delta_{\jt}^0\delta_j^0+\delta_{\Jt J}\delta_{\jt}^{\Jt}\delta_{j}^J\bigr),
     \\
     &\qquad\quad W^0\Bigl(\frac{\ep_4}{c_s^2}\delta_s^{0}\delta_{\jt}^{0} 
  +\frac{\ep_4(n-2)+(n-1)c_s^2-1}{n-2}\delta_{JK}\delta_s^{J}\delta_{\jt}^{K}
  \Bigr) \delta^{\Qt Q}
 \biggr\} +\Ord(\Pbb u),
\end{align*}
and from \eqref{A0F-def} with \eqref{eq:a0exp} that
\begin{align*}
  &\Asc_F A^0_F(u)\\
  =&\diag\bigl\{1, a^0_{\jt j}, a^0_{\jt j}\delta^{\Qt Q}\bigr\}\\
  =&\diag\biggl\{1, W^0\Bigl(\frac{1}{c_s^2}\delta_{\jt}^{0}\delta_j^{0} 
  +\delta_{JK}\delta_{\jt}^{J}\delta_j^{K}
  \Bigr), W^0\Bigl(\frac{1}{c_s^2}\delta_{\jt}^{0}\delta_j^{0} 
  +\delta_{JK}\delta_{\jt}^{J}\delta_j^{K}
  \Bigr) \delta^{\Qt Q}
 \biggr\}\\
 &+\Ord(\Pbb u).
\end{align*}
Hence, by decreasing the size of $R$ if necessary, it follows that
\[  \Asc_F \Ac_F(u)\geq (\kappat_F-\bc\eta)\Asc_F A^0_F(u)\]
holds for all $u\in B_R(\Rbb^{\udim})$
for the same $\bc$ and $\eta$ as before provided
\[\kappat_F=\min\biggl\{\ep_3,\frac{(n-1)c_s^2-1}{n-2}, \ep_4\biggr\}.\]
The quantity $\kappat$ in \eqref{eq:defkappat} is then determined from $\min\{\kappat_G,\kappat_F\}$ given that $\ep_4>\ep_1$ as a consequence of \eqref{eq:epscond.N}.







\end{proof}

We now turn to the proof of Proposition~\ref{prop:globalstability}. As will become apparent, the proof of this proposition will follow from an application of \cite[Theorem~A.2]{BeyerOliynyk:2020}, which requires us to verify that  \eqref{Fuch-ev-A3} satisfies the \emph{coefficient assumptions} from \cite[Section~A.1]{BeyerOliynyk:2020}. The  proof of the proposition then amounts to verifying these coefficient assumptions.

\begin{proof}[Proof of Proposition~\ref{prop:globalstability}]
Suppose $n$, $k$, $\sigma$, $T_0$, $V_*^0$, $c_s$, $\ep_0$, $\ep_1$, $\ep_2$, $\ep_3$, $\ep_4$, $p$ and $\kappat$ are chosen as in the statement of Proposition~\ref{prop:globalstability}. Then by \eqref{Pbb-def}, we observe that the projection matrix $\Pbb$ satisfies
\begin{equation*}
\Pbb^2 = \Pbb,  \quad  \Pbb^{\tr} = \Pbb, \quad \del{t}\Pbb =0 \AND \del{\Lambda} \Pbb =0
\end{equation*}
while we note from \eqref{U-def}, \eqref{eq:Ubgdef} and \eqref{u-def} that kernel of $\Pbb$ is spanned by the variables $u_{1,IJ}=k_{IJ}$ and $u_{12}^0=W^0-V_*^0$.
By \eqref{eq:B0fluid}, \eqref{A0-def}-\eqref{A0F-def}, \eqref{Bsc0-def}, it is clear that the matrix-valued map $\Bsc^0(u)$ from the Fuchsian equation \eqref{Fuch-ev-A3} depends smoothly on $u$ near $u=0$, i.e.
$\Bc^0 \in C^\infty(B_{R}(\Rbb^{\udim}),\Mbb{\udim})$ for $R>0$ sufficiently small, is positive, symmetric and satisfies
\[\Pbb \Bsc^0(u) \Pbb^\perp=\Pbb^\perp \Bsc^0(u) \Pbb=\Ord(\Pbb u).\]
In addition, we note from \eqref{Pbb-def}, \eqref{eq:SourcetermVan1}-\eqref{eq:Gtprops}, \eqref{Asc-def}-\eqref{AscF-def}  and \eqref{eq:defBc} that
the matrix-valued map $\Bc(u)$ has the property
$\Bc \in C^\infty(B_{R}(\Rbb^{\udim}),\Mbb{\udim})$ for $R>0$ sufficiently small, and
satisfies 
\begin{equation} \label{Bc(0)}
    [\Pbb,\Bc]=0 \AND 
    \Bc(0) = \Asc \Ac(0).
  \end{equation}
The matrix $\Ac(0)$ can be constructed from \eqref{Ac-def}-\eqref{Ac-offdiag-3} by observing that the
only $u$-dependent blocks of $\Ac$ are $\Ac_{12\, 12}$ and $\Ac_{13\,
  13}$, which when evaluated at $u=0$, take the form 
\begin{align*}
  \Ac_{12\, 12}(0)&=\frac{(n-1)c_s^2-1}{n-2}V^0_* \bigl(c_s^{-2}\delta_{\jt}^0\delta_j^0+\delta_{\Jt J}\delta_{\jt}^{\Jt}\delta_{j}^J\bigr),\\
\Ac_{13\, 13}(0)&=\Bigr(\frac{\ep_4 V_*^0}{c_s^2}\delta_{\jt}^{0}\delta_s^{0} 
  +
  V^0_*\frac{\ep_4(n-2)+(n-1)c_s^2-1}{n-2}  \delta_{\Jt J}\delta_{\jt}^{\Jt}\delta_{s}^J     
  \Bigr)\delta^{\Qt Q},
\end{align*}
as consequence of \eqref{eq:a0exp}.

Now, from \eqref{Bc(0)} and the smooth dependence of $\Bc(u)$ on $u$, we have that
\begin{equation} \label{Bc-exp}
    \Bc(u) = \Asc \Ac(0) + \Ord(u),
\end{equation}
and for similar reasons, that 
\begin{equation} \label{Bsc0-exp}
\Bsc^0(u) = \Asc A^0(0)+\Ord(u),  
\end{equation}
where, as a consequence of \eqref{A0-def}, \eqref{A0G-def}, \eqref{A0F-def} and \eqref{eq:a0exp}, we have
\begin{align*}
A^0(0)=\diag\Biggl(&\delta^{\Lt L}\delta^{\Mt M},\delta^{\Mt M},\delta^{\Rt R}\delta^{\Mt M},\delta^{\Rt R} \delta^{\Lt L} \delta^{\Mt M},\delta^{\rt r}\delta^{\lt l},1,\delta^{\It I}\delta_{\Lambdat \Lambda},\\
&
\delta^{\It I}\delta_{\kt k} \delta^{\Jt J},\delta^{\Qt Q}\delta^{\jt j}\delta^{\lt l},\delta^{\Qt Q}\delta^{\jt j}\delta^{\lt l} \delta^{\mt m},1,\\
 &V_*^0\Bigl(\frac{1}{c_s^2}\delta_j^{0}\delta_{\jt}^{0} 
  +\delta_{JK}\delta_j^{J}\delta_{\jt}^{K}
  \Bigr), V_*^0\Bigl(\frac{1}{c_s^2}\delta_j^{0}\delta_{\jt}^{0} 
  +\delta_{JK}\delta_j^{J}\delta_{\jt}^{K}
  \Bigr)\delta^{\Qt Q}\Biggr).
\end{align*}
Then fixing $\kappa \in (0,\kappat)$ and choosing $\eta$ sufficiently small, we deduce
from the expansions \eqref{Bc-exp}-\eqref{Bsc0-exp} and Lemma \ref{lem:posdef1}  the existence of constants 
$\gamma_1,\gamma_2,R,\sigma_1,\ldots,\sigma_{10}>0$ such that
\begin{equation*}
\frac{1}{\gamma_1}\id \leq \Bsc^0(u) \leq \frac{1}{\kappa} \Bc(u) \leq \gamma_2 \id
\end{equation*}
for all $u\in B_R(\Rbb^{\udim})$.

Next, due to \eqref{U-def}, \eqref{ALambda-def}-\eqref{ALambdaF-def}, \eqref{eq:Ubgdef}-\eqref{u-def}, \eqref{Asc-def} and \eqref{BscLambda-def}, we note that 
\begin{equation*}
    t^{1-p}\Bsc^\Lambda\in C^0\bigl([0,T_0],C^\infty(\Rbb^{\udim})\bigr)
\end{equation*}
since
\begin{equation} \label{p-rest-A}
   p\leq \min\{ 1-\ep_0 -\ep_2,1-\tilde{\ep}\}
 \end{equation}
as a consequence of \eqref{eq:epscond.N}-\eqref{eq:epstildecond} and \eqref{eq:defp}.
Further, it is clear from \eqref{Hsc-def}, \eqref{p-rest-A}, \eqref{eq:Ftttilde} and  Lemma~\ref{lem:sourceterm} that 
 \begin{gather*}
     t^{1-p}\Hsc\in C^0\bigl([0,T_0],C^\infty(B_R(\Rbb^\udim),\Rbb^{\udim})\bigr),  \qquad t^{1-p}\Hsc(t,0) =0,\\
    t^{1-p}\tilde\Ftt\in C^\infty\bigl((0,T_0],\Rbb^{\udim}\bigr)\cap C^0 \bigl([0,T_0],\Rbb^{\udim}\bigr),\\
    \Pbb^{\perp} \hat\Htt \in C^0\bigl([0,T_0],C^\infty(B_R(\Rbb^\udim),\Rbb^{\udim})\bigr) \AND \Pbb^\perp \Hh(u)  = \Ordc\biggl(\frac{\lambda}{R}\Pbb u\otimes\Pbb u \biggr),
 \end{gather*}
where the constant $\lambda>0$ can be chosen arbitrarily small by further shrinking $R>0$ if necessary.

Thus far, we have verified that the coefficients of the Fuchsian equation \eqref{Fuch-ev-A3} satisfy all of the assumptions from \cite[Section~A.1]{BeyerOliynyk:2020} except for the assumptions regarding the divergence map $\Div\Bsc(t,u,w)$, which we now consider.
According to item (4) of Definition~2.1 from \cite{BeyerOliynyk:2020}, the divergence map $\Div\Bsc(t,u,w)$ is defined by first computing
 \begin{align}
  \del{t}&(\Bsc^0(u))+\del{\Lambda}(\Bsc^\Lambda(t,u))
  =D_{W^s}\Bsc^0(u) \del{t} W^s\notag\\
  &+t^{-\ep_0-\ep_2} D_\beta \Bsc^\Lambda(t,u) \partial_\Lambda\beta+t^{-\ep_0-\ep_2} D_{f_I^\Omega} \Bsc^\Lambda(t,u) \partial_\Lambda f_I^\Omega\notag\\
  &+t^{-\ep_0-\ep_2} D_{W^s} \Bsc^\Lambda(t,u) \partial_\Lambda W^s,
  \label{eq:divergencemap}
\end{align}
where in deriving this we have used \eqref{U-def},  \eqref{A0-def}--\eqref{ALambdaF-def}, \eqref{eq:Ubgdef}--\eqref{u-def}, \eqref{Asc-def}--\eqref{BscLambda-def} and the definitions for $A^0(u)$ and $A^\Lambda(t,u)$ that were introduced directly after \eqref{Fuch-ev-A2}. The variable $w$ is taken to be the spatial derivative of $u$, that is,
\begin{equation} \label{w-def}
w=(w_\Lambda)=(\del{\Lambda}u).
\end{equation}
The divergence map $\Div\Bsc(t,u,w)$ is then defined as the right hand side of \eqref{eq:divergencemap} where the time derivative $\del{t}W^s$ is replaced using its  evolution equation, see \eqref{eq:20934klsjd} below,
and after that replacing all remaining spatial derivatives $\del{\Lambda}u$ with $w_\Lambda$.  

To verify the coefficient assumption for $\Div\Bsc(t,u,w)$, we need to show, for some $R>0$, that $\Div\Bsc(t,u,w)$ depends continuously on $t$ and smoothly on $(u,w)$ for $(t,u,w)\in (0,T_0]\times B_R(\Rbb^\udim)\times B_R(\Rbb^{(n-1)\udim})$ and that there exist positive constants $\theta$, $\beta_1$, $\beta_2$ and $\beta_3$ such that\footnote{The order notation $\Ord(\cdot)$ and $\Ordc(\cdot)$ is defined in Section~\ref{prelim}.}
\begin{gather}
  \label{eq:djf9sdfj1}
\Pbb \Div\Bsc(t,u,w)\Pbb = 
\Ordc\bigl(t^{-(1-p)}\theta + t^{-1}\beta_1\bigr), \\
\label{eq:djf9sdfj2}
\Pbb  \Div\Bsc(t,u,w) \Pbb^\perp = \Pbb^\perp  \Div\Bsc(t,u,w) \Pbb\notag\\
 =
\Ordc\biggl(t^{-(1-p)}\theta
+ \frac{t^{-1}\beta_2}{R}\Pbb u\biggr)
\intertext{and}
\Pbb^\perp \Div\Bsc(t,u,w) \Pbb^\perp =\Ordc\biggl(t^{-(1-p)}\theta+ \frac{t^{-1}\beta_3}{R^2}\Pbb u\otimes\Pbb u \biggr).\label{eq:djf9sdfj}
\end{gather}
To this end, we note, since $p$ satisfies \eqref{p-rest-A}, that the second line of \eqref{eq:divergencemap}, that is,
\begin{gather*}
t^{-\ep_0-\ep_2} D_\beta \Bsc^\Lambda(t,u) \partial_\Lambda\beta+t^{-\ep_0-\ep_2} D_{f_I^\Omega} \Bsc^\Lambda(t,u) \partial_\Lambda f_I^\Omega\\
+t^{-\ep_0-\ep_2} D_{W^s} \Bsc^\Lambda(t,u) \partial_\Lambda W^s,
\end{gather*}
defines, for $R>0$ small enough, a map $\Csc(t,u,w)$ satisfying
\begin{equation}\label{Csc-map}
 t^{1-p}\Csc\in C^0\bigl([0,T_0], C^\infty\bigl(B_{R}(\Rbb^{\udim}) \times B_{R}(\Rbb^{(n-1)\udim}),\Rbb^\udim\bigr)\bigr).
\end{equation}
Considering now the first line of \eqref{eq:divergencemap}, we observe, with the help of \eqref{eq:B0fluid}, \eqref{for-Euler.1.2} and \eqref{F12-def}, that 
\begin{align}
  &D_{W^s}\Bsc^0(u) \del{t} W^s\\
  = &D_{W^s}\Bsc^0(u) (a^{0})^{-1}{}^{sj}\Bigl(\betat G_{js'l}W^{s'}W^l- \betat {a^i}_{j k}\gamma_i{}^k{}_lW^l-{a^I}_{j k}\betat e_I^\Lambda\del{\Lambda} W^k\Bigr)\notag\\
  = &\Asc D_{W^s}A^0(u) (a^{0})^{-1}{}^{s\jt}\Bigl(F_{12\jt}+t^{-1}\frac{(n-1)c_s^2-1}{n-2} W^0\delta_{\jt}^{\Jt}\delta_j^J\delta_{\Jt J}W^j\notag\\
  &\hspace{2cm}-t^{-\ep_0-\ep_2}{a^I}_{j k}\beta f_I^\Lambda\del{\Lambda} W^k\Bigr),\notag
\end{align}
which we express more compactly as
\begin{equation}
D_{W^s}\Bsc^0(u) \del{t} W^s= M_s(u)(a^{0})^{-1}{}^{s\jt}(u) I_{\jt} (t,u,w)\label{eq:20934klsjd}
\end{equation}
where 
\begin{equation}\label{M-def}
    M_s(u)=\Asc D_{W^s}A^0(u)
\end{equation}
and $I_{\jt} (t,u,w)$ is defined by
\begin{equation*}
F_{12\jt}+t^{-1}\frac{(n-1)c_s^2-1}{n-2} W^0\delta_{\jt}^{\Jt}\delta_j^J\delta_{\Jt J}W^j-t^{-\ep_0-\ep_2}{a^I}_{j k}\beta f_I^\Lambda\del{\Lambda} W^k.
\end{equation*}
With the help of \eqref{eq:BIfluid}, \eqref{U-def}, \eqref{eq:Ubgdef}-\eqref{u-def}, \eqref{w-def} and Lemma \ref{lem:sourceterm}, it is then not difficult to verify that  $I_{\jt}$ can be decomposed as
\begin{equation} \label{I-decomp}
  I_{\jt}(t,u,w)=t^{-1} I_{\jt}^{(0)}(u)+t^{-1+p} I^{(1)}_{\jt}(t,u,w)
\end{equation}
where 
\begin{equation} \label{I-reg}
    I_{\jt}^{(0)}\in C^\infty(B_R(\Rbb^\udim),\Rbb^\udim) \AND I^{(1)}_{\jt}\in C^0\bigl([0,T_0],C^\infty(B_R(\Rbb^\udim),\Rbb^\udim)\bigr).
\end{equation}
We futher observe from \eqref{eq:B0fluid} and \eqref{A0G-def}-\eqref{A0F-def} that $M_s$, defined above by \eqref{M-def}, satisfies
\begin{gather}
    \Pbb  M_0(u) \Pbb=\Ord(1), \quad 
    \Pbb^\perp  M_0(u) \Pbb=\Pbb  M_0(u) \Pbb^\perp=\Ord(\Pbb v\otimes\Pbb v\otimes\Pbb v),\notag\\
  \Pbb^\perp   M_0(u) \Pbb^\perp =\Ord(1), \label{M-exp.1}\\
    \Pbb  M_Q(u) \Pbb=\Ord(\Pbb u), \quad \Pbb^\perp  M_Q(u) \Pbb=\Pbb  M_Q(u) \Pbb^\perp=\Ord(1)\notag\\
    \AND \Pbb^\perp   M_Q(u) \Pbb^\perp =\Ord(\Pbb u), \label{M-exp.2}
  \end{gather}
and note from \eqref{eq:a0exp}
that $(a^{0})^{-1}$ can
be expanded as
\begin{equation}\label{a0inv-exp}
(a^{0})^{-1}{}^{\,jk}(u)
=\frac1{W^0}\bigl({c_s^2}   \delta^j_{0}\delta^k_{0}
  +\delta^{JK}\delta^j_{J}\delta^k_{K}
  \bigr)
  +\Ord(\Pbb u).
\end{equation}
From the analysis of $F_{12}$ in the proof of Lemma~\ref{lem:sourceterm}, we deduce that  $I^{(0)}$, defined above by \eqref{I-decomp}, satisfies
\begin{equation}\label{I(0)-exp}
I_0^{(0)}(u)=\Ord(\Pbb u\otimes\Pbb u) \AND I_Q^{(0)}(u)=\Ord(\Pbb u).
\end{equation}

Taken together, the expansions \eqref{Csc-map}, \eqref{I-decomp}-\eqref{M-exp.2} and
\eqref{I(0)-exp} along with the fact that $\Asc$, defined by \eqref{Asc-def}-\eqref{AscF-def}, and $\Pbb$, defined by \eqref{Pbb-def}, commute, imply  the existence of the constants $\theta$, $\beta_1$, $\beta_2$ and $\beta_3$ such that \eqref{eq:djf9sdfj1}-\eqref{eq:djf9sdfj} hold. Moreover, 
because the constants $\sigma_1$,\ldots, $\sigma_{10}$ used to define $\Asc$, see \eqref{Asc-def}-\eqref{AscF-def}, can be chosen independently  of $R>0$ for $R$ sufficiently small, see Lemma \ref{lem:posdef1}, we can arrange that the constants $\beta_1$, $\beta_2$ and $\beta_3$ are as small as we like by choosing $R$ small enough. This completes the verification of the coefficient assumptions from \cite[Section~A.1]{BeyerOliynyk:2020}.

Given that the Fuchsian equation \eqref{Fuch-ev-A3} satisfies all of the coefficient assumption from \cite[Section~A.1]{BeyerOliynyk:2020}, we conclude directly from Theorem~A.2 and Remark~A.3 of \cite{BeyerOliynyk:2020}
the existence of a constant $\delta_0 > 0$ such that if $t_0\in (0,T_0]$,
 $\int_0^{t_0} \norm{\tilde\Ftt(s)} ds < \delta_0$
and $u_0 \in H^k(\Tbb^{n-1})$ satisfies
$\norm{u_0}_{H^k(\Tbb^{n-1})}< \delta$ for any $\delta\in (0,\delta_0]$, then
there exists a unique solution 
\begin{equation*}
u \in C^0_b\bigl((0,t_0],H^k(\mathbb \Tbb^{n-1})\bigr)\cap C^1\bigl((0,t_0],H^{k-1}(\Tbb^{n-1})\bigr)
\end{equation*}
of the initial value problem 
 \begin{align*}
\Bsc^0(u)\del{t}u + \Bsc^\Lambda(t,u) \del{\Lambda} u &=
\frac{1}{t}\Bc(u)\Pbb u +\frac {1}t\Pbb^{\perp} \hat\Htt(u)+\sigma_1 \tilde\Ftt(t)+ \Hsc(t,u)\\
&\qquad\qquad \text{in $M_{0,t_0}=(0,t_0]\times \Tbb^{n-1}$,} \\ 
u&=u_0\\ 
&\qquad\qquad\text{in $\Sigma_{t_0}=\{t_0\}\times \Tbb^{n-1}$,}
\end{align*}
such that the limit $\lim_{\searrow 0} \Pbb^\perp u(t)$, denoted $\Pbb^\perp u(0)$, exists in $H^{k-1}(\mathbb \Tbb^{n-1})$ and $u$ satisfies the energy and decay estimates given by
\eqref{eq:resenergy} and \eqref{eq:PbbuDecay}-\eqref{eq:PbbuDecay2}, respectively.
Since the above initial value problem is equivalent to \eqref{eq:givp1}-\eqref{eq:givp2}, the proof of the proposition is complete. 
\end{proof}

\subsection{Proof of Theorem~\ref{glob-stab-thm}}
\label{sec:proof_globstab}

Equipped with Proposition~\ref{prop:globalstability}, we are now in a position to prove  Theorem~\ref{glob-stab-thm}. We begin by fixing $n\in\Zbb_{\ge 3}$, $k \in \Zbb_{>(n+3)/2}$, $V_*^0,P_*>0$, $c_s^2\in (1/(n-1),1)$, $T_0>0$, $\sigma>0$ and choosing constants $\ep_0,\ldots,\ep_4$ that satisfy \eqref{eq:epscond.N}.

\bigskip

\noindent \underline{Small Fuchsian initial data:} Denoting the constant $\delta_0>0$ from Proposition~\ref{prop:globalstability} as $\tilde\delta_0$, we assume that $t_0\in (0,T_0]$ and use $\breve U(t_0)$ to denote the FLRW background solution at $t=t_0$ defined by \eqref{eq:UbreveFirst}-\eqref{eq:UbreveLast}. We also set
\begin{equation*}
\breve u_0 =\breve U(t_0)-\mathring U(t_0),
\end{equation*}
where $\mathring U(t_0)$ is defined above by \eqref{eq:Ubgdef}.
Then, due to \eqref{Ubr-Ur} from Lemma~\ref{lem:bgpertsol}, we can, by choosing $t_0\in (0,T_0]$ sufficiently small, arrange that
 \begin{equation}
 \label{glob-stab-thm-idata-FLRW}
 \norm{\breve u_0}_{H^k(\mathbb T^{n-1})}< \frac{\tilde\delta_0}{2}.
\end{equation}
Additionally, by \eqref{eq:Ftttilde}, we can, by shrinking $t_0$ further if necessary, ensure that
\begin{equation}
  \label{glob-stab-thm-idata-DFtt}
\int_0^{t_0} \norm{\tilde\Ftt(s)}_{H^k(\mathbb T^{n-1})} ds\lesssim t_0^{\frac{(n-1)(1-c_s^2)}{n-2}} < \tilde \delta_0.
\end{equation}

Next, we suppose $\delta_0>0$ and choose $\delta\in (0,\delta_0]$ and initial data $\gr_{\mu\nu}\in H^{k+2}(\Tbb^{n-1},\mathbb{S}_n)$, 
$\ggr_{\mu\nu}\in H^{k+1}(\Tbb^{n-1},\mathbb{S}_n)$, $\taur=t_0$,
$\taugr\in H^{k+2}(\Tbb^{n-1})$ and $\Vr^\mu\in H^{k+1}(\Tbb^{n-1})$ that satisfies \eqref{glob-stab-thm-idata-A} as well as the constraint equations \eqref{grav-constr}-\eqref{wave-constr}. 
We then construct an  orthonormal frame on the initial hypersurface $\Sigma_{t_0}=\{t_0\}\times \Tbb^{n-1}$ as follows: recalling that $\{\gr_{\mu\nu},\ggr_{\mu\nu},\taur=t_0,\taugr, \Vr^\mu\}$ determines initial data $\{g_{\mu\nu}|_{\Sigma_{t_0}},\del{0}g_{\mu\nu}|_{\Sigma_{t_0}},\tau=t_0,\del{0}\tau =1, V^\mu |_{\Sigma_{t_0}}\}$ for the metric $g_{\mu\nu}$, the scalar field $\tau$ and the fluid vector $V^\mu$ in Lagrangian coordinates on $\Sigma_{t_0}$ via \eqref{V-idata}-\eqref{dt-g-idata},
we set 
\begin{equation*}
e_0^\mu=(-|\chi|_g^2)^{-\frac{1}{2}}\chi^\mu    
\end{equation*} 
and note that it can be computed from the Lagrangian initial data on $\Sigma_{t_0}$ by \eqref{chi-idata}. We further fix spatial frame initial data $e^\mu_I|_{\Sigma_{t_0}}=\delta^\mu_I\er^\Lambda_\mu$ where the functions
$\er^\Lambda_I \in H^k(\Tbb^{n-1})$ are chosen to satisfy
\begin{equation*} 
\norm{\er^\Lambda_I-(\omega|_{t=t_0})\delta^\Lambda_I}_{H^{k}(\Tbb^{n-1})} < \delta,
\end{equation*}
with $\omega$ defined by \eqref{eq:FLRWEulerSFExplSol4},
and make the frame $e^\mu_i$ orthonormal on $\Sigma_{t_0}$ with respect to the metric $g_{\mu\nu}$ given there, see \eqref{g-idata}.

Then we use the prescription outlined in Section~\ref{frame-idata}, see also \eqref{k-def}-\eqref{Uac-def}, \eqref{eq:Ubgdef} and \eqref{u-def}, to construct of complete set of initial data $u_0\in H^k(\Tbb^{n-1})$ for the Fuchsian system \eqref{eq:givp1}. It is straightforward to verify, with the help of the Sobolev and Moser inequalities (see Propositions 2.4., 3.7.~and 3.9.~from   \cite[Ch.~13]{TaylorIII:1996}), that the this initial data satisfies $\norm{u_0-\breve u_0}_{H^k(\Tbb^{n-1})} \leq C_0\delta$
where $C_0=C_0(t_0,t_0^{-1},\delta)>0$,
which allows us, by \eqref{glob-stab-thm-idata-FLRW}, to conclude that
\begin{equation} \label{idata-bnd}
\norm{u_0}_{H^k(\Tbb^{n-1})}\le \norm{\breve u_0}_{H^k(\Tbb^{n-1})} + C_0\delta\le \frac{\tilde\delta_0}2 + C_0\delta. 
\end{equation}
Given the value of $t_0$ found above by imposing \eqref{glob-stab-thm-idata-FLRW} and \eqref{glob-stab-thm-idata-DFtt}, we can now adjust the value  $\delta_0$ to ensure that $\delta<\tilde \delta_0/(2C_0)$ for all $\delta\in (0,\delta_0]$. This guarantees that the hypotheses of Proposition~\ref{prop:globalstability} are satisfied for appropriate choices of $\delta_0$ and $t_0$, and all $\delta\in (0,\delta_0]$.

\bigskip

\noindent \underline{Fuchsian stability:} 
By \eqref{glob-stab-thm-idata-DFtt} and \eqref{idata-bnd}, we obtain from Proposition~\ref{prop:globalstability}
a unique solution    
\begin{equation}\label{eq:symhypreg_appl}
u \in C^0_b\bigl((0,t_0],H^k(\Tbb^{n-1})\bigr)\cap C^1\bigl((0,t_0],H^{k-1}(\Tbb^{n-1})\bigr)
\end{equation}
to the Fuchsian GIVP \eqref{eq:givp1}-\eqref{eq:givp2} that extends continuously to $t=0$ in $H^{k-1}(\Tbb^{n-1})$ 
and
satisfies the energy 
and decay estimates given by \eqref{eq:resenergy} and  \eqref{eq:PbbuDecay}-\eqref{eq:PbbuDecay2}, respectively, 
for all $t\in (0,t_0]$.

By definition $\Pbb^\perp=\id-\Pbb$, and so we have by \eqref{Pbb-def} that
  \begin{equation}\label{Pbb-perp-def}
  \Pbb^\perp = \diag\Bigl(\delta_{\Lt}^L\delta_{\Mt}^M,0,0,0,0,0,0,0,0,0,0,\delta^{\jt}_0,0\Bigr).
\end{equation}
Using this, we can label the components of $\Pbb^\perp u(0) \in H^{k-1}(\Tbb^{n-1})$ as
\begin{equation} \label{Pbb-perp-u(0)}
\Pbb^\perp u(0)=\bigl(\kf_{IJ},0,0,0,0,0,0,0,0,0,\Wf^0 \delta^{\jt}_0,0\bigr)
\end{equation}
where $\kf_{IJ}\in H^{k-1}(\mathbb T^{n-1},\Sbb{n-1})$ and $\Wf^0\in H^{k-1}(\mathbb T^{n-1})$.
Noting, by the triangle inequality, that
\begin{equation*}
\norm{\Pbb^\perp u(0)}_{H^{k-1}(\Tbb^{n-1})}\leq 
\norm{u(t)-\Pbb^\perp u(0)}_{H^{k-1}(\Tbb^{n-1})} + \norm{u(t)}_{H^{k-1}(\Tbb^{n-1})},
\end{equation*}
we conclude from
the initial data bound $\norm{u_0}_{H^k(\Tbb^{n-1})}<\delta$, the energy and decay estimates \eqref{eq:resenergy} and  \eqref{eq:PbbuDecay}-\eqref{eq:PbbuDecay2},
and the estimate \eqref{glob-stab-thm-idata-DFtt}
that 
\begin{equation*}
\norm{\Pbb^\perp u(0)}_{H^{k-1}(\Tbb^{n-1})}\lesssim 
 t^{p}+t^{\kappat-\sigma}+\delta+t_0^{\frac{(n-1)(1-c_s^2)}{n-2}}.
\end{equation*}
Choosing $\sigma$ so that $0<\sigma < \kappat$ and letting $t\searrow 0$ in the above expression yields
\begin{equation}
  \label{eq:kijsmallness}
  \norm{\kf_{IJ}}_{H^{k-1}(\mathbb T^{n-1})}+\norm{\Wf^0}_{H^{k-1}(\mathbb T^{n-1})}\lesssim \delta+t_0^{\frac{(n-1)(1-c_s^2)}{n-2}}.
\end{equation}

\bigskip

\noindent \underline{Limits of the frame variables:} The frame $e_i^\mu$ is obtained from the components of $u$ via \eqref{beta-def}, \eqref{f-def}, \eqref{U-def}, \eqref{eq:Ubgdef}-\eqref{u-def} and the relations 
\begin{equation} \label{e0i-fix}
    e_0^\mu=\betat^{-1}\delta_0^\mu \AND e_I^0=0, 
\end{equation}
which hold by \eqref{e0-mu} and \eqref{e0I-fix}.
Since $\betat$ is initially positive at $t=t_0$
and the evolution equation \eqref{for-O.1.S2} for $\betat$ implies that $\betat$ cannot cross zero, we conclude that $\betat$ is positive on $M_{0,t_0}$. 
While \eqref{eq:PbbuDecay}, which we showed above that $u$ satisfies, yields a decay estimate for $\betat$, a stronger result can be obtained from
noting that \eqref{for-O.1.S2} can be expressed as
  \begin{equation*}
    \del{t}\bigl(t^{-\frac{1}{2}\kf_{J}{}^J}\betat\bigr) = \Bigl(-\beta^2 t^{-\ep_0-\ep_1}\xi_{00}+\frac{1}{2} t^{-1} (k_{J}{}^J-\kf_{J}{}^J)\Bigr) t^{-\frac{1}{2}\kf_{J}{}^J}\betat.
  \end{equation*}
Since $\betat>0$, we can integrate this in time to get
  \begin{align*}
    &\ln\bigl(t^{-\frac{1}{2}\kf_{J}{}^J}\betat(t)\bigr)
    -\ln\bigl({\tilde t}^{-\frac{1}{2}\kf_{J}{}^J}\betat(\tilde t)\bigr)\\
    = &\int_{\tilde t}^t\Bigl(-s^{-\ep_0-\ep_1}
    \beta^2 \xi_{00}(s)
       +\frac{1}{2} s^{-1} (k_{J}{}^J(s)-\kf_{J}{}^J)\Bigr)ds,    
  \end{align*}
which holds for all $t,\tilde t\in (0,t_0]$. From the  Sobolev and Moser inequalities (see Propositions 2.4., 3.7.~and 3.9.~from   \cite[Ch.~13]{TaylorIII:1996}), we have
  \begin{align*}
    &\bnorm{\ln\bigl(t^{-\frac{1}{2}\kf_{J}{}^J}\betat(t)\bigr)
    -\ln\bigl({\tilde t}^{-\frac{1}{2}\kf_{J}{}^J}\betat(\tilde t)\bigr)}_{H^{k-1}(\Tbb^{n-1})}\\
    \le& \int_{\tilde t}^t\Bigl(s^{-\ep_0-\ep_1}\norm{\beta(s)}_{H^{k-1}(\Tbb^{n-1})}^2 \norm{\xi_{00}(s)}_{H^{k-1}(\Tbb^{n-1})}\\
       &+\frac{1}{2} s^{-1} \norm{k_{J}{}^J(s)-\kf_{J}{}^J}_{H^{k-1}(\Tbb^{n-1})}\Bigr)ds
  \end{align*}
  provided $0<\tilde t\le t\le t_0$. From the decay estimates \eqref{eq:PbbuDecay}-\eqref{eq:PbbuDecay2} and \eqref{eq:kijsmallness}, the positivity of the constants $p$ and $\kappat$,
  and the fact that $\sigma>0$ can be chosen arbitrarily small, we deduce from the above estimate that $\ln(t^{-\frac{1}{2}\kf_{J}{}^J}\betat(t))$ converges in $H^{k-1}(\Tbb^{n-1})$ as $t\searrow 0$ to a limit, denoted $\tilde{\bfr}$, and that
\begin{equation*}
    \bnorm{\ln\bigl(t^{-\frac{1}{2}\kf_{J}{}^J}\betat(t)\bigr)
    -\tilde{\bfr}}_{H^{k-1}(\Tbb^{n-1})}
    \lesssim t^{1-\ep_0-\ep_1}(t^p+t^{\kappat-\sigma})^3
       + t^{p}+ t^{2(\kappat-\sigma)}    
\end{equation*}
for $0<t\leq t_0$.
From this inequality, it follows that the limit $\tilde{\bfr}$ is a strictly positive function in $H^{k-1}(\Tbb^{n-1})$, which we express as $\tilde{\bfr}=\ln(\bfr)$ where $\bfr\in H^{k-1}(\Tbb^{n-1})$. Using  $\tilde{\bfr}=\ln(\bfr)$ and noting that $1-\ep_0-\ep_1>0$, we observe that the above estimate simplifies to
\begin{equation}
       \label{eq:improvbetaestimate.pre}
       \bnorm{\ln\bigl(t^{-\frac{1}{2}\kf_{J}{}^J}\betat(t)\bigr)
    -\ln(\mathfrak{b})}_{H^{k-1}(\Tbb^{n-1})}
  \lesssim t^{p}+ t^{2(\kappat-\sigma)}.   
\end{equation}
Estimate \eqref{eq:improvbetaestimate-glob} is then a direct consequence of 
\eqref{eq:epscond.N}, \eqref{eq:defp}, \eqref{eq:defkappat} and the above estimate. 

Considering now the spatial frame components $e^\Lambda_I$, we see, with the help of \eqref{k-def}, \eqref{beta-def} and \eqref{psi-def},
that the evolution equation \eqref{for-M.1.S2} for the spatial frame components can be expressed in matrix form as
\begin{equation*} 
\del{t} e^\Lambda = \frac{1}{t}(\Ksc + t\Lsc)e^\Lambda    
\end{equation*}
where 
\begin{gather} \label{frame-lim-A}
  e^\Lambda=(e^\Lambda_I), \quad    \Ksc=\Bigl(-\frac{1}{2}\delta^{JL}k_{IL}(0)\Bigr),
  \intertext{and}
 \Lsc =-\Bigl(\frac{1}{2}t^{-1}\delta^{JL}(k_{IL}-k_{IL}(0))\notag\\
 \qquad\qquad\qquad\qquad+t^{-\ep_1+k_{J}{}^J(0)/2} \delta^{JL}(t^{-k_{J}{}^J(0)/2}\betat)\psi_{I}{}^0{}_L\Bigr). \label{frame-lim-A2}
\end{gather}
Letting\footnote{Given a square matrix $A$, we frequently use the notation $t^A$ instead of $\exp(\ln(t)A)$.}
\begin{equation} \label{frame-lim-B} 
\ec^\Lambda = (\ec^\Lambda_\mu) :=  t^{-\Ksc} e^\Lambda,
\end{equation}
a short calculation then shows that $\ec^\Lambda$ satisfies
\begin{equation} \label{frame-lim-C}
    \del{t} \ec^\Lambda = \Msc \ec^\Lambda
\end{equation}
where
\begin{equation*} \label{frame-lim-D}
    \Msc = t^{-\Ksc}\Lsc t^{\Ksc}.
\end{equation*}
By differentiating \eqref{frame-lim-C} repeatedly in space, we obtain from standard $L^2$-energy estimates and the Sobolev and Moser inequalities the differential energy inequality
\begin{equation*}
    \del{t}\norm{\ec^\Lambda(t)}_{H^{k-1}(\Tbb^{n-1})}^2 \lesssim
    \norm{\Msc(t)}_{H^{k-1}(\Tbb^{n-1})}\norm{\ec^\Lambda(t)}^2_{H^{k-1}(\Tbb^{n-1})},
  \end{equation*}
 which in turn, yields
 \begin{equation*}
    \del{t}\norm{\ec^\Lambda(t)}_{H^{k-1}(\Tbb^{n-1})} \lesssim
    \norm{\Msc(t)}_{H^{k-1}(\Tbb^{n-1})}\norm{\ec^\Lambda(t)}_{H^{k-1}(\Tbb^{n-1})}.
  \end{equation*}
Applying  Gr\"onwall's lemma to this differential inequality gives
  \[
    \norm{\ec^\Lambda(t)}_{H^{k-1}(\Tbb^{n-1})}\lesssim \norm{\ec^\Lambda(t_0)}_{H^{k-1}(\Tbb^{n-1})}e^{-\frac 12\int_{t_0}^t \norm{\Msc(s)}_{H^{k-1}(\Tbb^{n-1})}ds},\quad 0<t\leq t_0.
  \]
Integrating \eqref{frame-lim-C} in time, we see, with the help of the above inequality, that
\begin{align}
  &\norm{\ec^\Lambda(t)-\ec^\Lambda(\tilde t)}_{H^{k-1}(\Tbb^{n-1})}  \notag\\  
  \label{eq:frame_convest}
  \lesssim &\norm{\ec^\Lambda(t_0)}_{H^{k-1}(\Tbb^{n-1})}e^{\frac
    12\int_{t_0}^0
    \norm{\Msc(s)}_{H^{k-1}(\Tbb^{n-1})}ds}\int_{\tilde t}^t
  \norm{\Msc(s)}_{H^{k-1}(\Tbb^{n-1})}ds
\end{align}
for all $0<\tilde t\le t\le t_0$.

Next, by \eqref{eq:PbbuDecay}, \eqref{eq:PbbuDecay2} and \eqref{frame-lim-A2}, we observe that the matrix $\Lsc$ is bounded by
\begin{align} 
    &\norm{\Lsc(t)}_{H^{k-1}(\Tbb^{n-1})}\notag\\
    \label{frame-lim-G}
     &\lesssim t^{-1}(t^{p}+ t^{2(\kappat-\sigma)})+t^{-\ep_1}(t^p+t^{\kappat-\sigma}) 
  \end{align}
for all $0<t\leq t_0$. Also, since $t^{\pm \Ksc}=e^{\pm \ln(t)\Ksc}$, the estimate
\begin{equation} \label{frame-lim-H}
\norm{t^{\pm\Ksc}}_{H^{k-1}(\Tbb^{n-1})}\le C e^{C \norm{\ln(t)\Ksc}_{H^{k-1}(\Tbb^{n-1})}} \lesssim t^{-C \norm{\kf_{IJ}}_{H^{k-1}(\Tbb^{n-1})}}, 
\end{equation}
$0<t\leq t_0$, is a direct consequence of the analyticity of the exponential $e^{X}$, the definition \eqref{frame-lim-A}
of $\Ksc$, and the fact that $H^{k-1}(\Tbb^{n-1})$ is a Banach algebra by virtue of the assumption that $k-1>(n-1)/2$.  
By \eqref{frame-lim-D}, \eqref{frame-lim-G}, and \eqref{frame-lim-H}, we can then bound the matrix
$\Msc$ by
\begin{align}
  \label{frame-lim-Last}
&\norm{\Msc(t)}_{H^{k-1}(\Tbb^{n-1})} \notag\\
\lesssim &\Bigl(t^{-1}(t^{p}+ t^{2(\kappat-\sigma)})+t^{-\ep_1}(t^p+t^{\kappat-\sigma})\Bigr) t^{-C \norm{\kf_{IJ}}_{H^{k-1}(\Tbb^{n-1})}}
\end{align}
for all $0<t\leq t_0$. By choosing $\delta$ and $\sigma$ sufficiently small, it follows from \eqref{eq:epscond.N}, \eqref{eq:defkappat} and \eqref{frame-lim-Last} that $\Msc$ is $H^{k-1}(\Tbb^{n-1})$-integrable in time, that is, $\int_{0}^{t_0}\norm{\Msc(s)}_{H^{k-1}(\Tbb^{n-1})}\, ds <\infty$. We then deduce from this integrability and the inequality \eqref{eq:frame_convest} that $\ec^\Lambda(t)$ converges as $t\searrow 0$ in $H^{k-1}(\Tbb^{n-1})$ to a limit, which we will denote by $\ef^\Lambda_I(0)$. In order to derive the estimate \eqref{eq:improvframeestimate}, we let $\tilde t\searrow 0$ in \eqref{eq:frame_convest}. Doing so, we see, with the help of \eqref{frame-lim-B} and \eqref{frame-lim-Last}, that, for $\delta$ and $\sigma$ chosen sufficiently small, that the estimate 
\begin{align*}
  &\norm{e^{-\ln(t)\Ksc}e^\Lambda(t)-\ec^\Lambda(0)}_{H^{k-1}(\Tbb^{n-1})}\\    
  \lesssim &\Bigl(t^{p}+ t^{2(\kappat-\sigma)}+t^{1-\ep_1}(t^p+t^{\kappat-\sigma})\Bigr) t^{-C \norm{\kf_{JM}}_{H^{k-1}(\Tbb^{n-1})}},
\end{align*}
holds for all $0<t\leq t_0$. Choosing $\ep_0$,\ldots,$\ep_4$ according to \eqref{eq:epsfluidchoice} then yields the estimate \eqref{eq:improvframeestimate}.

\bigskip

\noindent \underline{Conformal Einstein-Euler-scalar field past stability:} Proposition~\ref{lag-exist-prop} implies, for some $t_1\in (0,t_0]$, the existence of a unique solution $\Wsc$ with regularity \eqref{eq:Wreg} on $M_{t_1,t_0}=(t_1,t_0]\times \Tbb^{n-1}$ of 
the system
\eqref{tconf-ford-C.1}-\eqref{tconf-ford-C.8} that satisfies
the initial conditions \eqref{l-idata}-\eqref{hhu-idata}. Moreover, since the conformal Einstein-Euler-scalar field initial data is assumed to satisfy the gravitational and wave gauge constraint equations, it follows from  Proposition~\ref{lag-exist-prop} that this solution satisfies the constraints \eqref{eq:Lag-constraints}, and determines a solution
$\{g_{\mu\nu},\tau,V^\mu\}$
of the conformal Einstein-Euler-scalar field equations \eqref{lag-confeqns} in Lagrangian coordinates that satisfies the wave gauge constraint
\eqref{lag-wave-gauge} and where $\tau$ is
given by
\begin{equation} \label{tau=t}
\tau=t.
\end{equation}
By construction of the Fuchsian system \eqref{eq:givp1}, the solution $\Wsc$ determines a solution $\ut$ on $M_{t_1,t_0}$ of the Fuchsian IVP \eqref{eq:givp1}-\eqref{eq:givp2} with the same  initial data $u_0$ as above. By the uniqueness statement of Proposition~\ref{prop:globalstability}, we conclude
that 
\begin{equation} \label{ut=u}
    \ut = u|_{M_{t_1,t_0}}
\end{equation}
provided the parameters $\ep_0$, $\ep_1$, $\ep_2$, $\ep_3$ and $\ep_4$ are chosen to be the same for both solutions.

From the equality \eqref{ut=u}, it follows that the energy estimate \eqref{eq:resenergy} together with the Sobolev inequality yields the bound
\begin{equation*} 
    \sup_{t_1<t<t_0}\norm{\ut(t)}_{W^{2,\infty}(\Tbb^{n-1})} < \infty.
\end{equation*}
From this bound on $\ut$, we see, with the help of  \eqref{p-fields}, \eqref{gi00},   \eqref{kt-def}-\eqref{tau-def}, \eqref{psit-def}, \eqref{k-def}-\eqref{Uac-def}, \eqref{eq:Ubgdef}-\eqref{u-def} and \eqref{e0i-fix}, that
\begin{align}
&\sup_{t_1<t<t_0}\Bigl(\norm{e_j^\mu(t)}_{W^{2,\infty}(\Tbb^{n-1})}+\norm{\Dc_i g_{jk}(t)}_{W^{2,\infty}(\Tbb^{n-1})}+
\norm{\betat(t)}_{W^{2,\infty}(\Tbb^{n-1})}
\notag \\
&\hspace{0.7cm}
+ 
\norm{ \kt_{IJ}(t)}_{W^{2,\infty}(\Tbb^{n-1})}
+\norm{\psit_I{}^k{}_J(t)}_{W^{2,\infty}(\Tbb^{n-1})}\notag \\
&\hspace{0.7cm}
+
\norm{\gamma_I{}^k{}_J(t)}_{W^{2,\infty}(\Tbb^{n-1})}
+\norm{\Dc_i\Dc_j\tau}_{W^{2,\infty}(\Tbb^{n-1})}\notag \\
&\hspace{0.7cm}
+\norm{W^s}_{W^{2,\infty}(\Tbb^{n-1})}\Bigr)< \infty.
\label{glob-cont-bnd-A}
\end{align}
With the help of this bound, it then follows from the evolution equations
\eqref{for-O.1.S2}-\eqref{for-M.1.S2} for $\betat$ and $e^\Lambda_I$, and \cite[Lemma~A.2]{BeyerOliynyk:2021} that
\begin{equation} \label{glob-cont-bnd-B}
    \inf_{M_{t_1,t_0}}\bigr\{\betat,\det(e^\Lambda_I)\bigl\} > 0,
\end{equation}
which in turn, implies via
 \eqref{e0i-fix} that
\begin{equation}\label{glob-cont-bnd-C}
    \inf_{M_{t_1,t_0}}\det(e^\mu_j) > 0 \AND \sup_{M_{t_1,t_0}}\det(g_{\mu\nu})<0.
\end{equation}
In addition to this,  we observe from \eqref{p-fields}, \eqref{eq:Eulerfchoice}, \eqref{eq:defw1234}, \eqref{k-def}-\eqref{Uac-def}, \eqref{eq:Ubgdef} and \eqref{u-def} that
\begin{align}
    &\sup_{M_{t_1,t_0}}\norm{V^\mu(t)}_{W^{1,\infty}(\Tbb^{n-1})}  =\sup_{M_{t_1,t_0}}\norm{V^i(t)e_i^\mu(t)}_{W^{1,\infty}(\Tbb^{n-1})}\notag\\
    &\lesssim \sup_{M_{t_1,t_0}} t^{\frac{(n-1)c_s^2-1}{n-2}-\ep_2}\norm{\betat(t)^{c_s^2}}_{H^{k}(\Tbb^{n-1})}\norm{W^i(t)}_{H^{k}(\Tbb^{n-1})}\norm{f_i^\mu(t)}_{H^{k}(\Tbb^{n-1})}\notag\\
    \label{glob-cont-bnd-C.1}
    &<\infty, \\
    &\sup_{M_{t_1,t_0}}|V|_{g}^2\notag\\
    \label{glob-cont-bnd-C.2}
    &=\sup_{M_{t_1,t_0}}(-\mathtt f^2 w^2)
    =-\inf_{M_{t_1,t_0}}t^{2\frac{(n-1)c_s^2-1}{n-2}}\bigl(\betat^{2c_s^2}((V_*^0)^2+\Ord(u))\bigr)<0,
\end{align}
where in deriving the above inequalities we have used the fact that $V_*^0>0$, the bounds \eqref{eq:improvbetaestimate-glob}, \eqref{eq:resenergy}, and \eqref{glob-cont-bnd-B}, and the Sobolev and the Moser inequalities.

Since the frame $e_i^\mu$ is orthonormal by construction, the components of the conformal metric in the Lagrangian coordinates are determined by 
\begin{equation} \label{g-Lag-components}
    g_{\mu\nu}=e_\mu^i \eta_{ij} e_\nu^j,
\end{equation}
and so by \eqref{glob-cont-bnd-A}, we have
\begin{equation} \label{glob-cont-bnd-D}
\sup_{t_1<t<t_0}\norm{g_{\mu\nu}(t)}_{W^{2,\infty}(\Tbb^{n-1})} < \infty.
\end{equation}
From the calculation
\begin{align*}
    e^\mu_i e^\nu_j \del{t}g_{\mu\nu} &= \Ld_{\del{t}}g_{ij}
    \oset{\eqref{e0i-fix}}{=}\Ld_{\betat e_0}g_{ij}\\
    &=\betat e_0^k \Dc_{k}g_{ij}+\Dc_i (\betat e_0^k)\eta_{kj}
    + \Dc_j (\betat e_0^k)\eta_{jk} \\
    & =\betat \delta_0^k \Dc_{k}g_{ij}+ e_i(\betat)\eta_{0j}+\betat \gamma_i{}^k{}_0 \eta_{kj}
    + e_j(\betat)\eta_{0i}+\betat \gamma_j{}^k{}_0\eta_{ik},
\end{align*}
we also observe that the bound
\begin{equation} \label{glob-cont-bnd-E}
\sup_{t_1<t<t_0}\norm{\del{t}g_{\mu\nu}(t)}_{W^{1,\infty}(\Tbb^{n-1})} < \infty
\end{equation}
is a direct consequence of
\eqref{glob-cont-bnd-A}-\eqref{glob-cont-bnd-C},
the relations  \eqref{p-fields}, \eqref{gamma-000}-\eqref{gamma-0K0}, \eqref{gamma-I00}-\eqref{gamma-IJ0} and \eqref{e0i-fix}, and the evolution
equation \eqref{for-O.1.S2}. Moreover, by employing similar arguments, it is also not difficult to 
verify that $\Dc_\nu \chi^\mu$, where $\chi^\mu$ is defined by \eqref{chi-def}, satisfies
\begin{equation} \label{glob-cont-bnd-F}
\sup_{t_1<t<t_0}\Bigl(\norm{\Dc_\nu \chi^\mu(t)}_{W^{2,\infty}(\Tbb^{n-1})}+\norm{\del{t}(\Dc_\nu \chi^\mu)(t)}_{W^{1,\infty}(\Tbb^{n-1})}\Bigr) < \infty.
\end{equation}
We conclude from \eqref{alpha-def} and the bounds \eqref{glob-cont-bnd-A}, \eqref{glob-cont-bnd-C}, \eqref{glob-cont-bnd-C.1}, \eqref{glob-cont-bnd-C.2}, \eqref{glob-cont-bnd-D}, \eqref{glob-cont-bnd-E} and \eqref{glob-cont-bnd-F} that the solution $\Wsc$ satisfies the continuation criteria \eqref{eq:cont_crit1}-\eqref{eq:cont_crit2}. Hence, by Proposition \ref{lag-exist-prop} the solution $\Wsc$ can be continued beyond $t_1$, and consequently, the solution $\Wsc$ exists on $M_{0,t_0}$. This solution continues to satisfy the constraints \eqref{eq:Lag-constraints} and determine a solution
$\{g_{\mu\nu},\tau,V^\mu\}$
of the conformal Einstein-Euler-scalar field equations \eqref{lag-confeqns} in Lagrangian coordinates that verifies the wave gauge constraint \eqref{lag-wave-gauge}.

\bigskip

\noindent \underline{Second fundamental form estimate:} Using \eqref{alpha-def}, \eqref{e0i-fix} and \eqref{tau=t},  it follows easily from the formula \eqref{g-Lag-components} for the conformal metric that on the  $t=const$-surfaces the lapse $\Ntt$ is given by
\begin{equation} \label{Ntt=betat}
    \Ntt=\betat,
\end{equation} the shift vanishes, and the induced spatial metric is $\gtt_{\Lambda\Omega}=g_{\Lambda\Omega}$. Furthermore, by \eqref{Ccdef}, \eqref{gamma-IJ0}, \eqref{k-def}, \eqref{psi-def} and \eqref{tau=t}, the 
second fundamental form induced on the $t=const$-surfaces by the conformal metric is 
\begin{equation*}
  \Ktt_{LJ}
  {=} \frac 12\kt_{LJ}           +\gamma_{(L}{}^0{}_{J)}
  =\frac 12t^{-1}\betat^{-1} k_{LJ}+t^{-\ep_1} \psi_{(L}{}^0{}_{J)},
\end{equation*}       
which using \eqref{beta-def},  we note can be expressed as
\begin{equation*}
  2t\betat \Ktt_{LJ}=k_{LJ}
  +2t^{1-\ep_1-\ep_0}\beta \psi_{(L}{}^0{}_{J)}.
\end{equation*}
The bound
\begin{align*}
  \norm{ 2t\betat \Ktt_{LJ}(t)-\kf_{LJ}}_{H^{k-1}(\mathbb T^{n-1})} 
  &\lesssim t^{p}+ t^{2(\kappat-\sigma)} 
  +t^{1-\ep_0-\ep_1}(t^p+t^{\kappat-\sigma})^2 \\
  &\lesssim t^{p}+ t^{2(\kappat-\sigma)}
\end{align*}
is then a direct consequence of the above expression,  the energy and decay estimates \eqref{eq:resenergy}-\eqref{eq:PbbuDecay2}, and the positivity of the constants $p$, $\kappat-\sigma$ and $1-\ep_0-\ep_1$.
Since the right hand side here  equals that of \eqref{eq:improvbetaestimate.pre}, the same choice \eqref{eq:epsfluidchoice} of the parameters $\ep_0,\ldots,\ep_4$
yields the estimate \eqref{eq:2ndFF.est}.

\bigskip

\noindent \underline{Fluid estimates:} We begin by observing that the estimate
\eqref{eq:fluidestimate1} follows from \eqref{U-def}, \eqref{eq:Ubgdef}, \eqref{u-def}, \eqref{Pbb-perp-def}-\eqref{Pbb-perp-u(0)} and the decay estimates \eqref{eq:PbbuDecay2} using the choice \eqref{eq:epsfluidchoice} for the parameters $\ep_0$, \ldots, $\ep_4$ while \eqref{eq:fluidestimate2} follows from \eqref{Pbb-def} and the decay estimate \eqref{eq:PbbuDecay} in a similar fashion. Also, we note that \eqref{eq:fluidestimate3} is a direct consequence of \eqref{d-fields}, \eqref{Uac-def}, \eqref{Pbb-def}  and \eqref{eq:PbbuDecay}. We further observe from \eqref{eq:defw1234} and the fact that $H^{k-1}(\Tbb^{n-1})$ is a Banach algebra (since $k-1>(n-1)/2$) that 
\begin{align}
  &\norm{w^2-(V_*^0+\Wf^0)^2}_{H^{k-1}(\Tbb^{n-1})}\notag\\
  =&\norm{(W^0)^2-\delta_{IJ}W^I W^J-(V_*^0+\Wf^0)^2}_{H^{k-1}(\Tbb^{n-1})}\notag\\
  \lesssim& \norm{W^0-(V_*^0+\Wf^0)}_{H^{k-1}(\Tbb^{n-1})}(V^0_*+\norm{\Wf^0}_{H^{k-1}(\Tbb^{n-1})}+\norm{W^0}_{H^{k-1}(\Tbb^{n-1})})\notag\\
  &\hspace{1.4cm}+\norm{W^I}_{H^{k-1}(\Tbb^{n-1})}^2\notag\\
  \lesssim& t^{\frac 18\frac{n-1}{n-2}(1-c_s^2)-\sigma}+t^{2 \bigl(\frac{(n-1)c_s^2-1}{n-2}-\sigma\bigr)},
    \label{eq:w2estimate}
\end{align}
where the last inequality holds by \eqref{eq:fluidestimate1} and \eqref{eq:fluidestimate2}.

To proceed, we define, for any $\gamma\in\Rbb$, a function $\Phi_\gamma(x,y)$ via
\begin{equation}
\label{eq:defPhi}
  \Phi_\gamma: I_1\times I_2\longrightarrow\Rbb\: :\: (x,y)\longmapsto (x+y)^{\gamma}-y^{\gamma},
\end{equation}
where $I_2\ssubset (0,\infty)$ is an open interval and $I_1$ is an open interval around $0$ that is sufficiently small to ensure that $\Phi_\gamma$ is well-defined and smooth. Since $\Phi_\gamma(0,y)=0$ for all $y\in I_2$, we deduce from Moser's inequality that 
\begin{equation}
  \label{eq:EstPhi}
  \norm{\Phi_\gamma(v_1,v_2)}_{H^\ell(\Tbb^{n-1})}\le C(\norm{v_1}_{H^\ell(\Tbb^{n-1})}, \norm{v_2}_{H^\ell(\Tbb^{n-1})}) \norm{v_1}_{H^\ell(\Tbb^{n-1})}
\end{equation}
for any $v_1, v_2\in H^\ell(\Tbb^{n-1})$
satisfying $v_1(x)\in I_1$ and $v_2(x)\in I_2$ for all $x\in \Tbb^{n-1}$ provided that $\ell>(n-1)/2$.
Next, we use $\Phi_\gamma$ along with \eqref{eq:physicsquantitiesfluid}, \eqref{eq:conffluidrel}, \eqref{eq:alphaByRtt}, \eqref{eq:Eulerfchoice} and \eqref{eq:defw1234} to express the fluid density as 
\begin{align*}
  \rho =&\frac{P_0}{c_s^2} \vb^{-\frac {c_s^2+1}{c_s^2}}
  =\frac{P_0}{c_s^2} t^{-\frac{n-1}{n-2}(1+c_s^2)-\kf_{J}{}^J(1+c_s^2)/2} (t^{-\kf_{J}{}^J/2}\betat)^{-(1+c_s^2)} (w^2)^{-\frac {1+c_s^2}{2c_s^2}}\\
       &=\frac{P_0}{c_s^2}t^{-\frac{n-1}{n-2}(1+c_s^2)-\kf_{J}{}^J(1+c_s^2)/2} \Bigl(\Phi_{-(1+c_s^2)} \bigl(t^{-\kf_{J}{}^J/2}\betat-\mathfrak{b}, \mathfrak{b}\bigr)+\mathfrak{b}^{-(1+c_s^2)}\Bigr)\\
         &\qquad\times\Bigl(\Phi_{-\frac {1+c_s^2}{2c_s^2}} \bigl(w^2-(V_*^0+\Wf^0)^2, (V_*^0+\Wf^0)^2\bigr)+(V_*^0+\Wf^0)^{-\frac {1+c_s^2}{c_s^2}}\Bigr).
\end{align*}
From this expression and the estimates \eqref{eq:improvbetaestimate-glob}, \eqref{eq:w2estimate},
\eqref{eq:EstPhi}, and the positivity of $\bfr$ and $V_*+\Wf^0$, we see that estimate \eqref{eq:fluidpressureestimate} holds provided $\delta_0$ and $t_0$ are chosen sufficiently small. We also observe that the fluid variable $\bar{V}{}^\mu$ can be represented as \eqref{eq:fluidresult1} due to \eqref{eq:conffluidrel}, \eqref{p-fields} and \eqref{eq:Eulerfchoice} where $\eb_i^\mu$ is the $\gb$-orthonormal frame $\eb_i^\mu=t^{-1/(n-2)}e_i^\mu$, and, with the help of  \eqref{eq:18jsdkfjkdsjfksdf}, \eqref{eq:physicsquantitiesfluid}, \eqref{eq:conffluidrel}, \eqref{eq:alphaByRtt} and \eqref{eq:defw1234},
that the physical normalised fluid $n$-velocity field is given by \eqref{eq:physical4vectorfield}.
Finally, as $\mathfrak{b}$ and $V_*^0+\Wf^0$ are strictly positive on $\Tbb^{n-1}$ for $\delta_0$ and $t_0$ chosen sufficiently small, the estimates
\eqref{eq:fluidestimate1}, \eqref{eq:fluidestimate2}, \eqref{eq:w2estimate} and \eqref{eq:EstPhi} 
imply that \eqref{eq:physical4vectorfield.est} holds.

\bigskip

\noindent \underline{Asymptotic pointwise Kasner property:} Since $\{g_{\mu\nu},\tau, V^\mu\}$ is a solution of the conformal Einstein-Euler-scalar field equations, it follows from \eqref{eq:conf2phys} and \eqref{tau=t} that the triple
\begin{equation}
  \label{eq:conf2phys-a}
  \biggl\{\gb_{ij}=t^{\frac {2}{n -2}}g_{ij},\,\phi=\sqrt{\frac{n-1}{2(n-2)}}\ln(t), \Vb^\mu=V^\mu\biggr\}
\end{equation}
 determines a solution of the physical Einstein-Euler-scalar field system \eqref{eq:EFE.0}-\eqref{eq:AAA2}.  As a consequence, the spatial metric
  \begin{equation}
    \label{eq:spatmetrrel}
    \bar{\gtt}_{\Lambda\Omega}=t^{\frac {2}{n -2}}\gtt_{\Lambda\Omega}
  \end{equation}
  and second fundamental form 
  \begin{equation}
    \label{eq:phys2ndff}
    \bar{\Ktt}_{\Lambda\Omega}=t^{\frac {1}{n -2}}\Bigl(\Ktt_{\Lambda\Omega}
    +\frac {1}{n -2}t^{-1}\betat^{-1}{\gtt}_{\Lambda\Omega}\Bigr)
  \end{equation}
  induced by the physical metric $\gb_{\mu\nu}$ on the $\tau=t=const$-surfaces must satisfy the Hamiltonian constraint, which we write in the rescaled form
  \begin{equation}
         \label{eq:Hamilton1}
         t^{\frac {2}{n -2}}\betat^{2}t^{2}\overline\Rtt+t^{\frac {2}{n -2}}\betat^{2}t^{2}\bigl((\bar{\Ktt}_\Lambda{}^\Lambda)^2-\bar{\Ktt}_\Lambda{}^\Sigma \bar{\Ktt}_\Sigma{}^\Lambda\bigr)-\Tb=0
  \end{equation}
where $\overline\Rtt$ is the scalar curvature of the spatial metric $\bar{\gtt}_{\Lambda\Omega}$ and
  \begin{align*}
    \Tb= -2t^{\frac{2}{n-2}}\betat^2 t^2 \frac{\nablab_i t \nablab_j t }{|\nablab t|^2_{\gb}} \Tb^{ij}     
\end{align*}
with $\Tb^{ij}=\Tb_{ij}^{\text{SF}}+\Tb_{ij}^{\text{Fl}}$ the Euler-scalar field energy momentum tensor, see \eqref{Tb-ij-def}. Using \eqref{p-fields}-\eqref{eq:Eulerfchoice}, \eqref{eq:defw1234}, \eqref{g-Lag-components},  \eqref{eq:conf2phys-a} and \eqref{e0i-fix}, we can express $\Tb$ as
\begin{equation*}
\Tb=\frac {n-1}{n-2}
         +2P_0 t^{\frac{(n-1)(1-c_s^2)}{n-2}}\betat^{1-c_s^2} w^{-\frac{1+c_s^2}{c_s^2}}\Bigl(\frac{1+c_s^2}{c_s^2} \frac{W^IW_I}{w^{2}}
         +\frac{1}{c_s^2}\Bigr).
\end{equation*}
Taking the pointwise limit of the above expression as $t\searrow 0$, we see from \eqref{eq:improvbetaestimate-glob}, \eqref{eq:fluidestimate2}, \eqref{eq:w2estimate}, \eqref{eq:EstPhi} and the Sobolev inequalities that
\begin{equation}
  \label{eq:skcn33331}
  \lim_{t\searrow 0} \Tb(t,x)=\frac {n-1}{n-2}
\end{equation}
for each $x\in \Tbb^{n-1}$.

Next, since the conformal factor $t^{\frac {2}{n -2}}$ in \eqref{eq:spatmetrrel} is constant on the $t$=const-surfaces, it follows from  \eqref{tau=t} and \eqref{eq:spatmetrrel} that 
  \begin{equation*}
  \tau^{\frac {2}{n -2}}\overline\Rtt=\Rtt,
  \end{equation*} where $\Rtt$ is the scalar curvature of the spatial conformal metric $\gtt_{\Lambda\Omega}$.
  Noting that 
 \begin{align*} 
 \Rtt
    =& e_J(\Gamma_I{}^J{}_K)\delta^{IK}
    - e_I(\Gamma_J{}^J{}_K) \delta^{IK}
    +\delta^{IK}\Gamma_I{}^M{}_K\Gamma_J{}^J{}_M
    -\Gamma_J{}^M{}_K \delta^{IK}\Gamma_I{}^J{}_M\\
&+2 \delta^{IK}\Gamma_{[I}{}^M{}_{J]}\Gamma_M{}^J{}_K,
\end{align*}
where the $\Gamma_M{}^J{}_K$ are the spatial components of the connection coefficients of the conformal metric $g_{\mu\nu}$ with respect to the frame $e_i^\mu$,
we find from \eqref{Ccdef}, \eqref{p-fields}-\eqref{d-fields},  \eqref{kt-def}-\eqref{mt-def}, \eqref{gt-def}, \eqref{psit-def}, \eqref{k-def}-\eqref{m-def}, \eqref{psi-def}-\eqref{gac-def}, \eqref{e0i-fix}, the formula
\begin{align*}
g_{IJKL} =& e_I(g_{JKL})-\gamma_I{}^0{}_Jg_{0KL}-\gamma_I{}^M{}_J g_{MKL}-\gamma_I{}^0{}_Kg_{J0L}-\gamma_I{}^M{}_K g_{JML}\\
&-\gamma_I{}^0{}_Lg_{JK0}-\gamma_I{}^M{}_K g_{JKM}
\end{align*}
for the covariant derivative
$g_{IJKL}=\Dc_I g_{JKL}=\Dc_I\Dc_J g_{KL}$, and a straightforward calculation that
\begin{align*}
  \Rtt
  =& e*\del{}\psit + \gt +(\ellt+\psit)* (\kt+\psit+\ellt)\\ 
  =& t^{-\ep_2-\ep_1}f*\del{}\psi + t^{-1-\ep_1}\betat^{-1}\gac\\
  &+t^{-1-\ep_1}\betat^{-1} (\ell+\psi)* (k+t^{1-\ep_1}\betat\psi+t^{1-\ep_1}\betat\ell),
\end{align*}
where $e=(e^\Lambda_I)$, $\del{}=(\del{\Lambda})$ and we are employing the $*$-notation from Section \ref{sec:NonlinDecomp}.
Multiplying the above expression by $\betat^2 \tau^2$ yields
\begin{align*}
  \betat^{2}\tau^{2}\Rtt
  =& t^{2-\ep_2-\ep_1-2\ep_0}\beta^{2} f*\del{}\psi
      + t^{1-\ep_1-\ep_0}\beta\gac\\
   &+t^{1-\ep_1-\ep_0}\beta (\ell+\psi)* (k+t^{1-\ep_1-\ep_0}\beta\psi+t^{1-\ep_1-\ep_0}\beta\ell)
\end{align*}
where in deriving this we have used \eqref{beta-def}.
It is then a straightforward consequence of \eqref{eq:epscond.N}, \eqref{eq:symhypreg_appl} and the Sobolev inequalities that 
\begin{equation}
\label{eq:skcn3333}
\lim_{t\searrow 0}t^{2/(n-2)}\betat(t,x)^{2}t^{2}\Rtt(t,x)= 0
\end{equation} 
for each $x\in \Tbb^{n-1}$.

Given \eqref{eq:skcn33331} and \eqref{eq:skcn3333}, the Hamiltonian constraint \eqref{eq:Hamilton1} simplifies to 
\begin{equation*}
  \lim_{t\searrow 0} t^{\frac {2}{n -2}}\betat^{2}t^{2}\bigl((\bar{\Ktt}_\Lambda{}^\Lambda)^2-\bar{\Ktt}_\Lambda{}^\Sigma \bar{\Ktt}_\Sigma{}^\Lambda\bigr)=0.
\end{equation*}
It is now a straightforward consequence of the above expression, 
\eqref{Ntt=betat}, \eqref{eq:phys2ndff}, and the fact that
\begin{equation}
  \label{eq:asymptptwKasner2-a}
  \lim_{t\searrow 0} 2t\,\Ntt(t,x)\, \Ktt_{I}{}^J (t,x)=\kf_{I}{}^J(x),
\end{equation}
for each $x\in \Tbb^{n-1}$, which follows from \eqref{eq:2ndFF.est} and the Sobolev inequality, that
\begin{equation}
    \label{eq:HamSecFFTrafo}
    (\bar{\Ktt}_\Lambda{}^\Lambda)^2-\bar{\Ktt}_\Lambda{}^\Sigma \bar{\Ktt}_\Sigma{}^\Lambda 
    =
       t^{-\frac {2}{n -2}}\Bigl( (\Ktt _{I}{}^{I})^2
       -\Ktt _{I}{}^{J}\Ktt _{J}{}^{I}
       +2\Ktt _{I}{}^{I}t^{-1}\betat^{-1}
       +\frac {n-1}{n -2}t^{-2}\betat^{-2}                                                   
       \Bigr),
     \end{equation}
     and therefore that
\begin{equation}
  \label{eq:asymptptwKasner-a}
  (\kf_{I}{}^{I})^2 -
  \kf_{I}{}^J \kf_{J}{}^I 
  +4\kf_{I}{}^{I}=0 \quad \text{in $\Tbb^{n-1}$}.
\end{equation}
Solving \eqref{eq:asymptptwKasner-a} for $\kf_I{}^I$, we obtain two solutions $\kf_I{}^I=\pm\sqrt{4+\kf_{I}{}^J \kf_{J}{}^I}
  -2$.
But, by \eqref{eq:kijsmallness} and the Sobolev inequality, we can choose $\delta$ and $t_0$ small enough to ensure that 
$\norm{\kf_I{}^I}_{L^\infty(\Tbb^{n-1})}< 4$.
Doing so, we conclude
that $\kf_I{}^I$ must satisfy $\kf_I{}^I=\sqrt{4+\kf_{I}{}^J \kf_{J}{}^I}
  -2$, which in particular, implies that
\begin{equation} \label{kfII}
\kf_I{}^I \geq 0 \quad \text{in $\Tbb^{n-1}$}.
\end{equation}
Taken together, \eqref{eq:asymptptwKasner2-a}-\eqref{kfII} imply that the solution $\{g_{\mu\nu},\tau, V^\mu\}$  verifies all the conditions of Definition \ref{def:APKasner}, and hence, is asymptotically pointwise Kasner.

\bigskip

\noindent \underline{Crushing singularity:}
 Taking the trace of \eqref{eq:phys2ndff} with respect to the physical metric \eqref{eq:spatmetrrel}, we observe that the physical mean curvature can be expressed as
  \begin{equation}
    \label{eq:physM}
    \bar\Ktt_\Lambda{}^\Lambda=\frac{1}{2\betat t^{-\frac {n-1}{n -2}}}\Bigl(2t\betat\Ktt_I{}^I
    +\frac{2(n-1)}{n -2}\Bigr).
  \end{equation}
Recalling that $\betat>0$, we deduce from \eqref{eq:improvbetaestimate-glob}, \eqref{eq:2ndFF.est}, \eqref{kfII} and Sobolev's inequality the existence of a constant $C>0$ 
such that the pointwise estimates
\begin{equation*}
0<\betat(t,x) \leq \frac{1}{C} \AND 2t\betat\Ktt_I{}^I + \frac{2(n-1)}{n-2} \geq \frac{n-1}{n-2} 
\end{equation*}
hold for all $(t,x)\in M_{0,t_0}=(0,t_0]\times\Tbb^{n-1}$ provided $t_0$ is chosen sufficiently small. These inequalities together with \eqref{eq:physM} imply the pointwise lower bound
\begin{equation*}
\bar{\Ktt}_\Lambda{}^\Lambda \geq \frac{C(n-1)}{2(n-2)}\frac{1}{t^{\frac{n-1}{n-2}}}
\end{equation*}
on $M_{0,t_1}$,  and hence, that $\bar{\Ktt}_\Lambda{}^\Lambda$ blows up uniformly as $t\searrow 0$. By definition, see  \cite{eardley1979}, this uniform blow up of the physical mean curvature implies that the hypersurface $t=0$ is a \textit{crushing singularity}. 

\bigskip

\noindent \underline{Geometric estimates for the conformal metric and scalar field:} We now turn to deriving the estimates \eqref{eq:glostab-est.First}-\eqref{eq:glostab-est.Last}. Before doing so, we first summarise the relationships between the following  geometric variables 
\begin{gather*}
    \bigl\{ t\betat\Dc_0 g_{00},\Dc_I g_{00},\Dc_0 g_{J0},\Dc_I g_{J0},t\betat\Dc_0 g_{JK},\Dc_I g_{JK}, t \betat\Dc_I\Dc_j g_{kl},\Dc_i\Dc_j\tau,\\
    \Dc_I\Dc_j\Dc_k\tau\bigr\}
\end{gather*}
and the corresponding Fuchsian variables determined from the components of $u$. 
From  \eqref{p-fields}-\eqref{d-fields}, \eqref{gi00}, \eqref{kt-def}-\eqref{psit-def} and \eqref{k-def}-\eqref{tauac-def}, it straightforward to check that
  \begin{align*}    
    t\betat\Dc_0 g_{00}=&-\delta^{JK} k_{JK},\\
    \Dc_I g_{00}=&2t^{-\ep_1}m_{I}-t^{-\ep_1}\delta^{JK}(2\ell_{JKI}-\ell_{IJK}),\\
    \Dc_0 g_{J0}=&t^{-\ep_1}m_{J},\\
    \Dc_I g_{J0}=&t^{-\ep_1}\ell_{I0J},\\
    t\betat\Dc_0 g_{JK}=&k_{JK}+ t^{1-\ep_1+\frac{1}{2}\kf_{J}{}^J}(t^{-\frac{1}{2}\kf_{J}{}^J}\betat)(\ell_{K0J}+\ell_{J0K}),\\
    \Dc_I g_{JK}=&t^{-\ep_1}\ell_{IJK},\\    
    t \betat\Dc_I\Dc_j g_{kl}=&t^{-\ep_1}\gac_{Ijkl},\\        
    \Dc_i\Dc_j\tau=&t^{\ep_0-\ep_1}\xi_{ij},\\    
    \Dc_I\Dc_j\Dc_k\tau=&t^{-\ep_0-2\ep_1}\tauac_{Ijk}.
  \end{align*}  
From these relations, it then follows via the definitions 
\eqref{U-def}, \eqref{Pbb-def}, \eqref{Pbb-perp-def}  as well as the 
estimates \eqref{eq:PbbuDecay}-\eqref{eq:PbbuDecay2} and \eqref{kfII}
that the estimates
  \begin{align}
    \label{eq:glostab-est.First.pre}
      \norm{t\betat\Dc_0 g_{00}+\delta^{JK}\kf_{JK}}_{H^{k-1}(\mathbb T^{n-1})}&\lesssim t^{p}+ t^{2(\kappat-\sigma)}, \\
    \label{eq:glostab-est.2.pre}
      \norm{t\betat\Dc_0 g_{JK}-\kf_{JK}}_{H^{k-1}(\mathbb T^{n-1})}
      &\lesssim t^{p}+ t^{2(\kappat-\sigma)}\notag \\
      &\quad+t^{1-\ep_1}(t^p+t^{\kappat-\sigma}),\\   
    \norm{\Dc_I g_{00}}_{H^{k-1}(\mathbb T^{n-1})}+\norm{\Dc_0 g_{J0}}_{H^{k-1}(\mathbb T^{n-1})} \quad &\notag\\
    \label{eq:glostab-est.3.pre}
                                                                                +\norm{\Dc_I g_{J0}}_{H^{k-1}(\mathbb T^{n-1})}+\norm{\Dc_I g_{JK}}_{H^{k-1}(\mathbb T^{n-1})}&\lesssim  t^{p-\ep_1}+t^{\kappat-\sigma-\ep_1},\\      
\label{eq:glostab-est.4.pre}
    \norm{t \betat\Dc_I\Dc_j g_{kl}}_{H^{k-1}(\mathbb T^{n-1})}&\lesssim t^{p-\ep_1}+t^{\kappat-\sigma-\ep_1},\\    
\label{eq:glostab-est.5.pre}
    \norm{\Dc_i\Dc_j\tau }_{H^{k-1}(\mathbb T^{n-1})}&\lesssim t^{\ep_0-\ep_1}(t^p+t^{\kappat-\sigma}),\\
\label{eq:glostab-est.Last.pre}
    \norm{\Dc_I\Dc_j\Dc_k\tau }_{H^{k-1}(\mathbb T^{n-1})}&\lesssim t^{-\ep_0-2\ep_1}(t^p+t^{\kappat-\sigma}),
  \end{align}
hold for $0<t\leq t_0$.  The estimates \eqref{eq:glostab-est.First}-\eqref{eq:glostab-est.Last} then follow from \eqref{eq:glostab-est.First.pre}-\eqref{eq:glostab-est.Last.pre} by fixing the parameters $\ep_0,\ldots,\ep_4$ as in \eqref{eq:epsfluidchoice} and choosing $\sigma$ sufficiently small.

\bigskip

\noindent\underline{AVTD property:} The solution $\{g_{\mu\nu},\tau, V^\mu\}$ to the conformal Einstein-Euler-scalar field equations on $M_{0,t_0}$ constructed above 
satisfies the AVTD property in the sense of Section~\ref{sec:AVTDAPK}. To see why, we observe that the VTD equation corresponding to the Fuchsian equation \eqref{eq:givp1} is given by
\[A^0\del{t}u =
 \frac{1}{t}\Ac\Pbb u +\tilde\Ftt(t)+\frac{1}{t}\Htt(u)\Pbb u+\frac 1t\Pbb^{\perp} \hat\Htt(u)+\frac{1}{t^{\tilde\epsilon}} H(t,u),\]
and hence, any solution $u$ of \eqref{eq:givp1} is a solution of this VTD equation up to the error term ${t^{-\ep_0-\ep_2}}A^\Lambda(t,u) \del{\Lambda} u$. Given the definition \eqref{ALambda-def}, it then follows immediately from \eqref{eq:symhypreg_appl}, $\ep_0+\ep_2<1$, see \eqref{eq:epscond.N}, and the Sobolev and Moser inequalities (Propositions 2.4, 3.7~and 3.9~from   \cite[Ch.~13]{TaylorIII:1996}) that
\begin{equation*}
\int^{t_0}_0\bnorm{{s^{-\ep_0-\ep_2}}A^\Lambda(s,u(s)) \del{\Lambda} u(s)}_{H^{k-1}(\Tbb^{n-1})}\,ds <\infty,
\end{equation*}
which establishes the AVTD property. 

\bigskip

\noindent \underline{$C^2$-inextendibility of the physical metric:} We now establish the $C^2$-in\-ex\-tend\-i\-bil\-i\-ty of the physical metric by  verifying that the scalar curvature $\Rb=\gb^{\mu\nu}\Rb_{\mu\nu}$ of the physical spacetime metric $\gb_{\mu\nu}$ blows up as $t\searrow 0$. To this end, we observe from \eqref{ESF.1}-\eqref{bTtt-ij-def}, \eqref{eq:Tttconfphys}, \eqref{eq:TttconfphysW}-\eqref{eq:defw1234}, \eqref{tau=t},  and the properties \eqref{e0i-fix} of the orthonormal frame $e_i^\mu$ that the scalar curvature can be expressed as
\begin{align*}
  \Rb&=2\gb^{ij}\nablab_i\phi\nablab_j \phi+\gb^{ij}\bar\Ttt_{ij}
       =2t^{\frac {-2}{n -2}}\eta^{ij}\nabla_i\phi\nabla_j \phi
       +t^{-\frac {2}{n -2}}\eta^{ij}\Ttt_{ij}\\
       &=-\frac{n-1}{n-2}(t^{-\frac{1}{2}\kf_{J}{}^J}\betat)^{-2}t^{-2\frac {n-1}{n -2}-\kf_{J}{}^J}\\
       &-4P_0\frac{(n-1)c_s^2+1}{(n-2)c_s^2} t^{-(1+c_s^2)\frac{n-1}{n-2}}\betat^{-(1+c_s^2)} w^{-\frac{1+c_s^2}{c_s^2}},
\end{align*}
from which we get that
\begin{align*}
  &t^{2\frac {n-1}{n -2}+\kf_{J}{}^J}\Rb+\frac{n-1}{n-2}\mathfrak{b}^{-2}\\
  =&-\frac{n-1}{n-2}\bigl((t^{-\frac{1}{2}\kf_{J}{}^J}\betat)^{-2}-\mathfrak{b}^{-2}\bigr)\\
       &-4P_0\frac{(n+1)c_s^2+1}{(n-2)c_s^2} t^{(1-c_s^2)\frac {n-1}{n -2}+(1-c_s^2)\kf_{J}{}^J} (t^{-\frac{1}{2}\kf_{J}{}^J}\betat)^{-(1+c_s^2)} w^{-\frac{1+c_s^2}{c_s^2}}.
\end{align*}
Estimate \eqref{eq:SptRicciEstimat-glob} then follows from a combination of  \eqref{eq:improvbetaestimate-glob} and \eqref{eq:w2estimate}-\eqref{eq:EstPhi} together with the calculus inequalities that, so long as $t_0$ and $\delta_0$ are sufficiently small  and $V^0_*>0$.

\bigskip

\noindent \underline{Past timelike geodesic incompleteness:}
Recalling that the physical mean curvature $\bar\Ktt_\Lambda{}^\Lambda$ is related to the conformal mean curvature $\Ktt_\Lambda{}^\Lambda$
via \eqref{eq:physM}, we note that when the initial data agrees with FLRW initial data (Section~\ref{sec:FLRWEulerSFExplSol})  the conformal mean curvature $\Ktt_\Lambda{}^\Lambda$ can be made arbitrarily close to zero on the inital hypersurface $\Sigma_{t_0}=\{t_0\}\times \Tbb^{n-1}$ by choosing $t_0$ sufficiently small, which in turn, implies that $\bar\Ktt_\Lambda{}^\Lambda$ can be made close to $\frac {n-1}{n -2} t_0^{-\frac {1}{n -2}-1}$ there.  Thus by choosing $\delta>0$ sufficiently small, we can by \eqref{glob-stab-thm-idata-A} and the Sobolev's inequality ensure that $\bar\Ktt_\Lambda{}^\Lambda$ is close, in a pointwise sense,
as we like to  $\frac {n-1}{n -2} t_0^{-\frac {1}{n -2}-1}$ everywhere on $\Sigma_{t_0}$ for the perturbed initial data. In particular, for $\delta>0$ small enough, we have that $\Ktt_\Lambda{}^\Lambda>{n-1}{2(n -2)} t_0^{-\frac {1}{n -2}-1}$ on $\Sigma_{t_0}$.
Past timelike geodesic incompleteness is then a consequence of Hawking's singularity theorem \cite[Chapter~14, Theorem~55A]{oneill1983a}, that is, all past directed timelike geodesics starting on the $\Sigma_{t_0}$ reach $\{0\}\times\Tbb^{n-1}$ in finite proper time.

This concludes the proof of  Theorem~\ref{glob-stab-thm}.

\bibliographystyle{imsart-number}
\bibliography{refs}

\end{document}